\title{On the Complexity of Verifying Regular Properties on Flat Counter Systems\thanks{Work
    partially supported by the EU Seventh Framework Programme under grant 
   agreement No. PIOF-GA-2011-301166 (DATAVERIF).}}
\author{St\'ephane Demri$^{2,3}$ \and Amit Kumar Dhar$^1$ \and Arnaud Sangnier$^1$}
\institute{$^{1}$ LIAFA, Univ Paris Diderot, Sorbonne Paris Cité, CNRS, France \\  $^2$ New York University, USA,  $^{3}$ LSV, CNRS, France}
\newcommand{\nat}{\mathbb{N}}
\newcommand{\rel}{\mathbb{Z}}
\newcommand{\interval}[2]{[#1,#2]}
\newcommand{\Nat}{\mathbb{N}}
\newcommand{\Zed}{\mathbb{Z}}
\newcommand{\Bool}{\mathbb{B}}
\newcommand{\Boolplus}{\mathbb{B}^+}
\newcommand{\tuple}[1]{\langle #1 \rangle}
\newcommand{\pair}[2]{\langle #1, #2 \rangle}
\newcommand{\triple}[3]{\langle #1, #2,#3 \rangle}
\newcommand{\set}[1]{\{ #1 \}}
\newcommand{\defeq}{\overset{\textsf{def}}{\Leftrightarrow}}
\newcommand{\defstyle}[1]{{\em #1}}
\newcommand{\until}{\mathtt{U}}
\newcommand{\mynext}{\mathtt{X}}
\newcommand{\sometimes}{\mathtt{F}}
\newcommand{\myprevious}{\mathtt{X}^{-1}}
\newcommand{\since}{\mathtt{S}}
\newcommand{\logspace}{\textsc{LogSpace}}
\newcommand{\nlogspace}{\textsc{NLogSpace}}
\newcommand{\ptime}{\textsc{PTime}}
\newcommand{\np}{\textsc{NP}}
\newcommand{\pspace}{\textsc{PSpace}}
\newcommand{\aformula}{\phi} 
\newcommand{\aformulabis}{\psi} 
\newcommand{\aformulater}{\varphi} 
\newcommand{\true}{\mathit{true}}
\newcommand{\Interp}[1]{\llbracket #1 \rrbracket}
\newcommand{\aalphabet}{\mathtt{\Sigma}}
\newcommand{\aletter}{a}
\newcommand{\aletterbis}{b}
\newcommand{\vect}[1]{\mathtt{\textbf{#1}}}
\newcommand{\counters}{{\tt C}}
\newcommand{\varset}{{\tt Z}}
 \newcommand{\acounter}{\avariable}
\newcommand{\afactor}{a}
\newcommand{\aconstant}{b}
\newcommand{\guards}{{\tt G}}
\newcommand{\aguard}{{\tt g}}
\newcommand{\anupdate}{\vect{u}}
\newcommand{\avect}{\vect{v}}
\newcommand{\asys}{S}
\newcommand{\transsysof}[1]{TS(#1)}
\newcommand{\states}{Q}
\newcommand{\astate}{q}
\newcommand{\edges}{\Delta}
\newcommand{\anedge}{\delta}
\newcommand{\source}[1]{\mathit{source}(#1)}
\newcommand{\target}[1]{\mathit{target}(#1)}
\newcommand{\guard}[1]{\mathit{guard}(#1)}
\newcommand{\update}[1]{\mathit{update}(#1)}
\newcommand{\trans}{\rightarrow}
\newcommand{\labtrans}[1]{\xrightarrow{#1}}
\newcommand{\confs}{C}
\newcommand{\aconf}{c}
\newcommand{\aword}{w}
\newcommand{\arun}{\rho}
\newcommand{\wordof}[1]{\mathit{trans}(#1)}
\newcommand{\statesof}[1]{\mathit{st}(#1)}
\newcommand{\mc}[2]{{\rm MC}(#1,#2)}
\newcommand{\logicfrag}{{\rm L}}
\newcommand{\flatcs}{\mathcal{CFS}}
\newcommand{\kps}{\mathcal{KPS}}
\newcommand{\cps}{\mathcal{CPS}}
\newcommand{\pps}{\mathcal{PPS}}
\newcommand{\aseg}{p}
\newcommand{\first}[1]{\mathit{first}(#1)}
\newcommand{\last}[1]{\mathit{last}(#1)}
\newcommand{\aloop}{l}
\newcommand{\aschema}{P}
\newcommand{\apps}{\mathfrak{P}}
\newcommand{\languageof}[1]{\mathcal{L}(#1)}
\newcommand{\loopsof}[2]{iter_{#1}(#2)}
\newcommand{\avariable}{{\sf x}}
\newcommand{\avariablebis}{{\sf y}}
\newcommand{\avariableter}{{\sf z}}
\newcommand{\alang}{{\rm L}}
\newcommand{\card}[1]{{\rm card}(#1)}
\newcommand{\egdef}{\stackrel{\mbox{\begin{tiny}def\end{tiny}}}{=}} 
\newcommand{\equivdef}{\stackrel{\mbox{\begin{tiny}def\end{tiny}}}{\equivaut}} 
\newcommand{\equivaut}{\;\Leftrightarrow\;}
\newcommand{\aset}{X}
\newcommand{\asetbis}{Y}
\newcommand{\asetter}{Z}
\newcommand{\aconstraintsystem}{\aformula(\avariable_1,  \ldots, \avariable_{k-1})}
\newcommand{\equivEF}[1]{\equiv_{#1}^{ }}
\newcommand{\equivrel}[1]{\approx_{#1}^{ }}
\newcommand{\alabelling}{\mathbf{l}}
\newcommand{\blabelling}{f}
\newcommand{\powerset}[1]{2^{#1}}
\newcommand{\atransition}{\anedge}
\newcommand{\step}[1]{\xrightarrow{\!\!#1\!\!}}
\newcommand{\amap}{f}
\newcommand{\length}[1]{{\rm len}(#1)}
\newcommand{\lengthpathschema}[1]{{\rm len}(#1)} 
\newcommand{\nbloops}[1]{{\rm nbloops}(#1)}
\newcommand{\varprop}{{\rm AT}}
\newcommand{\avarprop}{\mathtt{p}}
\newcommand{\avarpropbis}{\mathtt{q}}
\newcommand{\size}[1]{{\rm size}(#1)}
\newcommand{\cut}[1]{}
\newcommand{\fsa}{\mathcal{A}}
\newcommand{\bauto}{\mathcal{B}}
\newcommand{\autoedge}{\delta}
\newcommand{\buchi}{B\"uchi}
\newcommand{\avar}{{\sf z}}
\newcommand{\qheight}[1]{\mathit{qh}(#1)}
\newcommand{\astruct}{\aword}
\newcommand{\bstruct}{\aword'}
\newcommand{\astrategy}{\sigma}
\newcommand{\aninv}{\mathfrak{I}}
\newcommand{\lmc}{\mu {\rm TL}}
\newcommand{\basis}{B}
\newcommand{\abasis}{\vect{b}}
\newcommand{\aperiod}{\vect{P}}
\newcommand{\play}[3]{\Pi_{#1}^{#2}(#3)}
\newcommand{\lef}[1]{\mathit{left}({#1})}
\newcommand{\rig}[1]{\mathit{right}({#1})}
\newcommand{\cpschema}{\aseg_1 \allowbreak(\aloop_1)^*\allowbreak \cdots \allowbreak\aseg_{k-1}\allowbreak (\aloop_{k-1})^*\allowbreak \aseg_k\allowbreak (\aloop_k)^{\omega}}
\newcommand{\aplay}{\Pi}
\newcommand{\acps}{\mathtt{cps}}
\newcommand{\apcps}{\mathtt{pcps}}
\newcommand{\aspecification}{A}
\newcommand{\prob}[2]{
\vspace{-0.8em}
\begin{center}
\framebox{
\begin{minipage}{11.5cm}
\begin{description}
\itemsep 0 cm
\item[Input:] #1
\item[Output:] #2
\end{description}
\end{minipage}
}
\end{center}
}
\newcommand{\aspeclanguage}{\mathcal{L}}
\newcommand{\mcpb}[2]{{\rm MC(}#1,#2{\rm )}}
\newcommand{\FO}{{\rm FO}}
\newcommand{\BA}{{\rm BA}}
\newcommand{\ABA}{{\rm ABA}}
\newcommand{\ETL}{{\rm ETL}}
\newcommand{\aautomaton}{\mathcal{A}}
\newcommand{\transitions}{\delta}
\newcommand{\mytuple}[2]{\langle #1, \ldots,#2 \rangle}
\newcommand{\auto}{Auto}
\newif \ifconference \conferencetrue 
\newif \iftechreport \techreportfalse 
\let\c@definition\c@theorem
\let\c@lemma\c@theorem
\let\c@corollary\c@theorem
\let\c@proposition\c@theorem
\begin{document}
\maketitle%
%
%
%
%
\begin{abstract}

Among the approximation methods for the verification of counter systems, one of them 
consists in model-checking
their flat unfoldings. 
\iftechreport
That is why, optimal algorithms for model-checking flat counter systems
are being designed since this may represent the core of a verification process.
\fi 
Unfortunately, the complexity characterization of model-checking problems for such operational
models is not always well studied except for 
reachability queries or for Past LTL. In this paper, we characterize the complexity of model-checking problems on flat counter
systems for the specification languages including first-order logic, linear mu-calculus,
infinite \ifconference automata, \else (such as alternating  B\"uchi automata), \fi and related formalisms.  
Our results span different complexity classes (mainly from \ptime \ to \pspace) and they apply
to languages in which arithmetical constraints on counter values are systematically allowed. 
As far as the proof techniques are concerned, we provide a uniform approach
that focuses on the main issues. 
\end{abstract}

\section{Introduction}
\label{section-introduction}
\paragraph{Flat counter systems.}

Counter systems, finite-state automata equip\-ped with program variables (counters) interpreted over non-negative integers, 
are known to be ubiquitous in formal verification. 
Since counter systems can actually simulate Turing machines \cite{minsky-computation-67}, it is  undecidable to check the existence of a run satisfying a given (reachability, temporal, etc.) property. However it is possible to approximate the behavior of counter systems by looking at a subclass of witness runs 
for which an analysis is feasible. A standard method consists
in considering a finite union of path schemas for abstracting the whole bunch of runs, as done in~\iftechreport \cite{Leroux03,Leroux&Sutre05} \else \cite{Leroux&Sutre05}\fi. More precisely, 
given a finite set of transitions $\edges$, a \defstyle{path schema} is an $\omega$-regular expression over $\edges$ of the
form $\alang = \aseg_1 (\aloop_1)^* \cdots \aseg_{k-1} (\aloop_{k-1})^* \aseg_k (\aloop_k)^{\omega}$  where both $\aseg_i$'s and $\aloop_i$'s
are paths in the control graph and moreover, the $\aloop_i$'s are loops. A path schema defines a set of infinite runs that respect
a sequence of transitions that belongs to $\alang$. We write $\mathtt{Runs}(c_0, \alang)$ to denote such a set of runs 
starting at the initial configuration $c_0$ whereas
$\mathtt{Reach}(c_0, \alang)$ denotes the set of configurations occurring in the runs of $\mathtt{Runs}(c_0, \alang)$.
A counter system is \defstyle{flattable} whenever the set of configurations reachable from $c_0$ 
is equal to $\mathtt{Reach}(c_0, \alang)$ for some finite union of path schemas $\alang$\iftechreport , i.e.
there exists a flat unfolding of the counter system with identical reachability sets\fi. 
Similarly, a \defstyle{flat counter
system}, a system in which each control state belongs to at most one
simple loop,  verifies that the set of runs from $c_0$ is equal to $\mathtt{Runs}(c_0, \alang)$ for some 
finite union of path schemas $\alang$. 
Obviously, flat counter systems are flattable. Moreover,  reachability sets of flattable counter systems are known 
to be Presburger-definable, see e.g.~\cite{Boigelot98,Comon&Jurski98,Finkel&Leroux02b}. 
That is why, verification of flat counter systems belongs to the core of methods for model-checking arbitrary counter systems
and it is desirable to  characterize the computational  complexity of model checking problems on this kind of
systems (see e.g. results about loops in~\cite{Bozga&Iosif&Konecny10}). 
Decidability results for verifying safety and reachability properties on flat counter systems have been obtained
in~\cite{Comon&Jurski98,Finkel&Leroux02b,Bozga&Iosif&Konecny10}.
For the verification of temporal properties, it is much more difficult to get sharp complexity characterization. For instance,
it is known that verifying flat counter systems with CTL$^{\star}$ enriched with arithmetical constraints is decidable~\cite{demri-model-10}
whereas it is only \np-complete with Past LTL~\cite{DemriDharSangnier12} (\np-completeness already holds with flat Kripke structures~\cite{Kuhtz&Finkbeiner11}). 

\paragraph{Our motivations.}
%
Our objectives  are to provide a thorough  classification of
 model-checking problems on flat counter systems
when linear-time properties are considered. So far complexity is known with 
Past LTL~\cite{DemriDharSangnier12} but even the decidability status
with linear $\mu$-calculus is unknown. 
Herein, we wish to consider several formalisms specifying linear-time properties (FO, linear $\mu$-calculus, infinite automata)
and to determine the complexity of model-checking problems on flat counter systems. 
Note that FO is as expressive as Past LTL but much more concise whereas linear $\mu$-calculus is strictly 
more expressive than Past LTL, which motivates
the choice for these formalisms dealing with linear properties.
\iftechreport
Needless to say, the formalisms should admit arithmetical
constraints about counter values in the specifications and we aim at proposing a general method that could be re-used with other formalisms
in order to guarantee  robustness of our solution. 
\fi

\paragraph{Our contributions.}
We characterize the computational complexity of model-checking problems on flat counter systems 
for several prominent linear-time specification languages whose alphabets are related to atomic propositions but also to linear constraints on counter values. We obtain the following results: 
\begin{itemize}
\itemsep 0 cm 
\item The problem of {\bf model-checking first-order formulae on flat counter systems is \pspace-complete} 
(Theorem~\ref{theorem-fo}).
          Note that model-checking  classical first-order formulae over arbitrary Kripke structures is already known to be non-elementary. 
%
%
However the flatness assumption allows to drop the complexity to \pspace \ even though linear
constraints on counter values are used in the specification language.
\item {\bf Model-checking  linear $\mu$-calculus  formulae on flat counter systems is \pspace-complete} (Theorem \ref{theorem-pspace-all}).
   Not only linear $\mu$-calculus is known to be more expressive than first-order logic (or than Past LTL) but also
   the decidability status of the problem on flat counter systems was open \cite{demri-model-10}. So, we establish decidability and
   we provide a complexity characterization. 
\item {\bf Model-checking  B\"uchi automata over flat counter systems is \np-complete} 
(Theorem~\ref{theorem-ba-np-complete}). 
\ifconference
\item {\bf Global model-checking is 
possible} for all the above mentioned formalisms (Corollary~\ref{corollary-global-model-checking}).
\fi
\end{itemize}
\ifconference
\noindent The omitted proofs can be found in the Appendix.
\else
For all the above-mentioned formalisms, we show that global model-checking is 
possible (Corollary~\ref{corollary-global-model-checking}).  
\fi 
\iftechreport
In order to characterize runs of flat counter systems, we introduce the notion of \defstyle{constrained path schema} 
and related decision problems
whose decision procedures are invoked in some essential way to solve the model-checking problems. A constrained path schema is essentially 
a pair of the form 
$$\pair{\aseg_1 (\aloop_1)^* \cdots \aseg_{k-1} (\aloop_{k-1})^* \aseg_k (\aloop_k)^{\omega}}{\aformula(\avariable_1, 
\ldots, \avariable_{k-1})}$$  
where the first component is an $\omega$-regular expression over a finite alphabet 
$\aalphabet$ with $\aseg_i$'s in $\aalphabet^*$ and $\aloop_i$'s in $\aalphabet^{+}$ whereas $\aformula(\avariable_1, \ldots, \avariable_{k-1})$ is a 
linear constraint. 
Note the similarities with the $\omega$-regular expressions involved to define flattable counter systems and observe that 
the first component of constrained path schemas does not deal with counters at all. 
The intended interpretation of such a structure is a set of $\omega$-words in $\aalphabet^{\omega}$ 
such that the number of times the loops $\aloop_i$'s are visited is constrained by the condition expressed by the second component. 
Note also that constrained path schema can be viewed as a subclass of 
CQDD (\defstyle{constrained queue-content decision diagrams})~\cite{Bouajjani&Habermehl99} except that we are also dealing with infinite sequences. 
CQDDs have been introduced for symbolically representing infinite sets of
configurations in FIFO automata -- our use will be different. Parikh automata introduced in~\cite{Klaedtke&Rueb03}
also specify arithmetical constraints about numbers of letter occurrences without assuming flatness. 
The typical problem we consider on constrained path schemas consists in checking whether the intersection between
a language defined by a constrained path schema and a language defined by another formalism (first-order logic, linear $\mu$-calculus, etc.) 
is non-empty.

Let us explain  how the algorithms for solving model-checking problems on flat counter systems are structured and how they use
constrained path schemas. For any specification language, the algorithm is divided as follows (inputs are an initial configuration,
a flat counter system $\asys$ and a specification $A$):
\begin{enumerate}
\itemsep 0 cm
\item Guess a constrained path schema 
$\pair{\aseg_1 (\aloop_1)^* \cdots \aseg_{k-1} (\aloop_{k-1})^* \aseg_k (\aloop_k)^{\omega}}{\aformula}$ 
in polynomial time. Its alphabet is made of atomic propositions and guards from $\asys$ and $\aformulabis$. 
\item Guess $n_1, \ldots, n_{k-1}$ such that 
$\aformula(n_1,  \ldots, n_{k-1})$ holds true in polynomial time.
Bounds on the $n_i$'s depend on the size of  $\aformulabis$ and $\aformula$. 
\item Check whether $\aseg_1 (\aloop_1)^{n_1} \cdots \aseg_{k-1} (\aloop_{k-1})^{n_k} \aseg_k (\aloop_k)^{\omega}$ 
satisfies the input specification $A$ when arithmetical constraints are read symbolically only. 
\end{enumerate}
Step 1. can be done in non-deterministic polynomial time 
and it amounts to reduce model-checking on flat counter systems to problems
on constrained path schemas. It uses both the fact that runs of flat counter systems can be characterized 
by Boolean combinations of linear constraints  and the fact that visits of simple loops in the runs 
can be decomposed in such a way that satisfaction of guards verifies some regularity that simplifies the analysis. 
Step 2. can be done too in  non-deterministic polynomial time by using the fact that linear constraints
have small solutions of polynomial size~\cite{Pottier91} and by establishing a stuttering property for the logical language. 
Step 3. will be done in  non-deterministic polynomial time or in  non-deterministic polynomial
space depending on the specification language.
For instance, for first-order logic or for linear $\mu$-calculus, Step 3. is in non-deterministic polynomial space.
The main difficulty in obtaining optimal bounds for Step 3. rests on the fact that
the $n_i$'s have exponential values: in the best case, this step can be performed in polynomial time as for Past LTL 
(see e.g.~\cite{DemriDharSangnier12})  thanks to a strong stuttering theorem.
\fi

\section{Preliminaries}
\label{section-definitions}
\iftechreport
We write $\nat$ [resp. $\rel$] to denote the set of natural numbers [resp. integers] and $\interval{i}{j}$ to denote the set $\set{k \in \Zed: i \leq
  k \ {\rm and} \ k \leq j}$. In the sequel, integers are encoded with a binary representation. For $\vec{v}\in\Zed^n$, $\vec{v}[i]$ denotes the
$i^{th}$ element of $\vec{v}$ for every $i\in \interval{1}{n}$. 
\iftechreport
For a finite alphabet $\aalphabet$, $\aalphabet^*$ represents the set of finite words over $\aalphabet$, $\aalphabet^+$ the set
of finite non-empty words over $\aalphabet$ and $\aalphabet^\omega$ the set of infinite words or $\omega$-words over $\aalphabet$. For a finite word
$\aword=\aletter_1\ldots \aletter_k$ over $\aalphabet$, we write $\length{\aword}$ to denote its \defstyle{length} $k$.  For $0 \leq i <
\length{\aword}$, $\aword(i)$ represents the $(i+1)$-th letter of the word, here $\aletter_{i+1}$.
\else
For a finite alphabet $\aalphabet$, we denote the set of finite words, finite non-empty words and infinite words (or $\omega$-words) over $\aalphabet$ by $\aalphabet^*$, $\aalphabet^+$ and $\aalphabet^\omega$ respectively. For a finite word
$\aword=\aletter_1\ldots \aletter_k$ over $\aalphabet$, we write $\length{\aword}$ to denote its \defstyle{length} $k$.  For $0 \leq i <
\length{\aword}$, $\aword(i)$ represents the $(i+1)$-th letter of the word.
\fi
\fi 
\subsection{Counter Systems}
Counter constraints are defined below as a subclass of Presburger formulae whose free variables are understood as counters.  Such constraints are used
to define guards in counter systems but also to define arithmetical constraints in temporal formulae.  Let $\counters = \set{\acounter_1, \acounter_2,
  \ldots}$ be a countably infinite set of \defstyle{counters} (variables interpreted over non-negative integers) and $\varprop = \set{\avarprop_1,
  \avarprop_2, \ldots}$ be a countable infinite set of propositional variables (abstract properties about program points).  We write $\counters_n$ to
denote the restriction of $\counters$ to $\set{\acounter_1, \acounter_2, \ldots, \acounter_n}$.
The set of  \defstyle{guards} $\aguard$ using the counters from $\counters_n$, written $\guards(\counters_n)$,
is made of Boolean combinations of \defstyle{atomic guards} of the form 
$\sum_{i=0}^n \ \afactor_i \cdot \acounter_i \sim \aconstant$ 
where  the $\afactor_i$'s are in $\rel$, $\aconstant \in \nat$ and $\sim \in \set{=,\leq,\geq,<,>}$.
%
For $\aguard \in \guards(\counters_n)$ and a vector $\vect{v} \in \nat^n$, we say that $\avect$ satisfies $\aguard$, written $\vect{v}
\models \aguard$, if the formula obtained by replacing each $\acounter_i$ by $\vec{v}[i]$ holds.
\iftechreport
\begin{definition}[Counter system] For a natural 
number $n \geq 1$, a \defstyle{counter system} of dimension $n$ (shortly a counter system) $\asys$ is a tuple
 $\tuple{\states,\counters_n,\edges,\alabelling}$ where:\\
\iftechreport
\begin{itemize}
\itemsep 0 cm 
\item $\states$ is a finite set of \defstyle{control states}.
\item $\alabelling: \states \rightarrow \powerset{\varprop}$ is a \defstyle{labeling function}.
\item $\edges \subseteq \states \times \guards(\counters_n) \times \Zed^n \times \states$ is a finite set of edges labeled by guards and updates of
  the counter values (\defstyle{transitions}).
\end{itemize}
\else
$\states$ is a finite set of \defstyle{control states}, $\alabelling: \states \rightarrow \powerset{\varprop}$ is a \defstyle{labeling function}, $\edges \subseteq \states \times \guards(\counters_n) \times \Zed^n \times \states$ is a finite set of transitions labeled by guards and updates.
\fi
\end{definition}
For  $\anedge=\tuple{\astate,\aguard,\anupdate,\astate'}$ in $\edges$, we  use the following notations:
$\source{\anedge}=\astate$, $\target{\anedge}=\astate'$, $\guard{\anedge}=\aguard$ and $\update{\anedge}=\anupdate$.
\else
 For  $n \geq 1$, a \defstyle{counter system} of dimension $n$ (shortly a counter system) $\asys$ is a tuple
 $\tuple{\states,\counters_n,\edges,\alabelling}$ where:
$\states$ is a finite set of \defstyle{control states}, 
$\alabelling: \states \rightarrow \powerset{\varprop}$ is a \defstyle{labeling function}, 
$\edges \subseteq \states \times \guards(\counters_n) \times \Zed^n \times \states$ is a finite set of 
transitions labeled by guards and updates.
\fi 
%
%
As usual, to a counter system $\asys=\tuple{\states,\counters_n,\edges,\alabelling}$, we associate a labeled transition system
$\transsysof{\asys}=\tuple{\confs,\trans}$ where $\confs=\states \times \nat^n$ is the set of \defstyle{configurations} and $\trans \subseteq \confs
\times \Delta \times \confs$ is the \defstyle{transition relation} defined by: $\triple{\pair{\astate}{\avect}}{\delta}{\pair{\astate'}{\avect'}} \in
\trans$ (also written $\tuple{\astate,\avect} \labtrans{\delta} \tuple{\astate',\avect'}$) iff
\iftechreport
the conditions below are satisfied:
\begin{itemize}
\itemsep 0 cm 
\item $\astate=\source{\anedge}$ and $\astate'=\target{\anedge}$,
\item $\avect \models \guard{\anedge}$ and $\avect'=\avect + \update{\anedge}$.
\end{itemize}
\else
$\anedge=\tuple{\astate,\aguard,\anupdate,\astate'} \in\edges$, $\avect \models \aguard$ and $\avect'=\avect + \anupdate$.
\fi
 Note that in such a transition system, the counter values are non-negative since $\confs=\states \times \nat^n$.
%
%
%
%

Given an initial configuration $\aconf_0 \in \states \times \nat^n$, a \defstyle{run} $\arun$ starting from 
$\aconf_0$ in $\asys$ is an infinite path
in the associated transition system $\transsysof{\asys}$ denoted as:
\ifconference
$
\arun := \aconf_0 \labtrans{\anedge_0} \cdots \labtrans{\anedge_{m-1}} 
\aconf_{m} \labtrans{\anedge_{m}} \cdots 
$
\else
$$
\arun := \aconf_0 \labtrans{\anedge_0} \cdots \labtrans{\anedge_{m-1}} 
\aconf_{m} \labtrans{\anedge_{m}} \cdots 
$$
\fi where $\aconf_i \in \states \times \nat^n$ and $\anedge_i \in \edges$ for all $i \in \nat$.  
%
We say that a counter system is \defstyle{flat} if every node in the underlying graph belongs to at most one simple cycle (a cycle being simple if no
edge is repeated twice in it)~\cite{Comon&Jurski98,Leroux&Sutre05,DemriDharSangnier12}.
\iftechreport 
In a flat counter system, simple cycles can be organized as a DAG where two simple cycles are in
the relation whenever there is path between a node of the first cycle and a node of the second cycle
%
%
\fi
We denote by $\flatcs$ the class of flat
counter systems.
A \defstyle{Kripke structure} $\asys$ can be seen as a counter system without 
counter and is denoted by
$\tuple{\states,\edges,\alabelling}$ where $\edges \subseteq \states \times \states$ and $\alabelling : \states \rightarrow \powerset{\varprop}$. 
Standard notions on counter systems, as configuration, run or flatness, naturally apply
to Kripke structures.

\subsection{Model-Checking Problem}
\label{section-model-checking-problem}

We define now our main model-checking problem on flat counter
systems parameterized by a specification language 
$\aspeclanguage$. 
First, we need to introduce the notion of constrained
alphabet whose letters should be understood as Boolean combinations of
atomic formulae (details follow). 
A \defstyle{constrained alphabet} is a triple of the form $\triple{at}{ag_n}{\aalphabet}$ where
\ifconference
$at$ is a finite subset of $\varprop$,
$ag_n$ is a finite subset of atomic guards from $\guards(\counters_n)$ and 
$\aalphabet$ is a subset of $\powerset{at \cup ag_n}$.
\else
\begin{itemize}
\itemsep 0 cm 
\item $at$ is a finite subset of $\varprop$,
\item $ag_n$ is a finite subset of atomic guards from $\guards(\counters_n)$,
\item $\aalphabet$ is a subset of $\powerset{at \cup ag_n}$.
\end{itemize}
\fi 
The size of a constrained alphabet is given by $\size{\triple{at}{ag_n}{\aalphabet}}=\card{at}+\card{ag_n}+\card{\aalphabet}$ where $\card{\aset}$ denotes the cardinality of the set $\aset$. Of course, any standard alphabet (finite set of letters) can be easily viewed as a constrained alphabet
(by ignoring the structure of letters). 
Given an infinite run $\arun := \tuple{\astate_0,\avect_0} \trans  \tuple{\astate_1,\avect_1} \cdots$
from a  counter system with $n$ counters and 
an $\omega$-word over a constrained alphabet $\aword = \aletter_0, \aletter_1, \ldots \in \aalphabet^{\omega}$, 
we say that
$\arun$ \defstyle{satisfies} $\aword$, written $\arun \models \aword$, whenever for $i \geq 0$, we have
\iftechreport
\begin{enumerate}
\itemsep 0 cm
\item for every $\avarprop \in (\aletter_i \cap at)$ [resp. $\avarprop \in (at \setminus \aletter_i)$], 
      $\avarprop \in \alabelling(\astate_i)$ [resp. $\avarprop \not \in \alabelling(\astate_i)$], 
\item for every $\aguard \in (\aletter_i \cap ag_n)$ [resp. $\aguard \in (ag_n \setminus \aletter_i)$], 
      $\avect_i \models \aguard$ [resp. $\avect_i \not \models \aguard$].
\end{enumerate}
\else
$\avarprop \in \alabelling(\astate_i)$ [resp. $\avarprop \not \in \alabelling(\astate_i)$] for every $\avarprop \in (\aletter_i \cap at)$ [resp. $\avarprop \in (at \setminus \aletter_i)$] and $\avect_i \models \aguard$ [resp. $\avect_i \not \models \aguard$] for every $\aguard \in (\aletter_i \cap ag_n)$ [resp. $\aguard \in (ag_n \setminus \aletter_i)$].
\fi

A \defstyle{specification language} $\aspeclanguage$ over a constrained alphabet $\triple{at}{ag_n}{\aalphabet}$  is a set of \defstyle{specifications} 
$\aspecification$, each of it defining a set $\alang(\aspecification)$ of $\omega$-words over $\aalphabet$. 
\iftechreport
In this paper, our working specification languages are related to first-order logic (over $\omega$-sequences),
nondeterministic B\"uchi automata,  alternating B\"uchi automata, linear $\mu$-calculus and
ETL. 
Note that we explicit introduce the notions of specification languages and specifications
for the sake of clarity in forthcoming definitions, but they cover what is commonly accepted. 
\fi 
We will also sometimes consider specification languages over (unconstrained) 
standard finite alphabets (as usually defined). 
We now define the \defstyle{model-checking problem} over flat counter systems with 
specification language $\aspeclanguage$  (written $\mcpb{\aspeclanguage}{\flatcs}$):
\iftechreport

\prob{A flat counter system $\asys$, 
               a configuration $\aconf$ and a specification $\aspecification$ from $\mathcal{L}$
               }{Is there a run $\arun$ starting at $\aconf$ and 
                 $\aword \in \aalphabet^{\omega}$ in $\alang(\aspecification)$ such that
                 $\arun \models \aword$?}
\else
it takes as input a flat counter system $\asys$, 
 a configuration $\aconf$ and a specification $\aspecification$ from $\mathcal{L}$
and asks whether there is 
 a run $\arun$ starting at $\aconf$ and 
                 $\aword \in \aalphabet^{\omega}$ in $\alang(\aspecification)$ such that
                 $\arun \models \aword$.
\cut{
{\bf Input:}A flat counter system $\asys$, 
               a configuration $\aconf$ and a specification $\aspecification$ from $\mathcal{L}$\\
{\bf Output:}Is there a run $\arun$ starting at $\aconf$ and 
                 $\aword \in \aalphabet^{\omega}$ in $\alang(\aspecification)$ such that
                 $\arun \models \aword$?\\
}
\fi      
We write $\arun \models \aspecification$ whenever there is  $\aword \in \alang(\aspecification)$ such that
                 $\arun \models \aword$.
%
%

\subsection{A Bunch of Specification Languages}
\label{spec-def} 
\paragraph{Infinite Automata.} 
\iftechreport
We define now the specification languages BA and ABA using
non deterministic B\"uchi automata and with alternating B\"uchi automata respectively.
These two formalisms are known to have the same expressive power but not the same conciseness 
and therefore it makes sense to distinguish them.
An obvious way to define a specification in BA over the constrained alphabet $\triple{at}{ag_n}{\aalphabet}$
would be to consider a B\"uchi automaton over the alphabet $\aalphabet$.
In order to be a bit 
concise, we allow that transitions are labelled by Boolean combinations of atoms from
$at \cup ag_n$.
\else
Now let us define the specification languages BA and ABA, respectively
with nondeterministic B\"uchi automata and with alternating B\"uchi automata.
We consider here transitions labeled by Boolean combinations of atoms from
$at \cup ag_n$.
\fi
\iftechreport
 In that way, a specification $\aspecification$ in BA is of
the form $\triple{Q}{E}{q_0,F}$ where $Q$ is a finite set of states,
$q_0 \in Q$ is the initial state, $F \subseteq Q$ is the set of accepting states
and $E$ is a finite subset of $Q \times \Bool(at \cup ag_n) \times Q$ where
$\Bool(at \cup ag_n)$ denotes the set of Boolean combinations built over 
$at \cup ag_n$. Specification $\aspecification$ is just a concise representation for
the B\"uchi automaton $\bauto_{\aspecification} = \triple{Q}{\transitions}{q_0,F}$
where $\transitions$ is a subset of $Q \times \powerset{at \cup ag_n} \times Q$ and
$q \step{\aletter} q' \in \transitions$ $\equivdef$
there is $q \step{\aformulabis} q' \in E$ such that $\aletter \models \aformulabis$ in the
propositional sense (an atom in $(at \cup ag_n) \setminus \aletter$ is interpreted by false).
We say that $\aspecification$ is over the constrained alphabet $\triple{at}{ag_n}{\aalphabet}$,
whenever, for all edges $q \step{\aformulabis} q' \in E$, $\aformulabis$ holds at most
for valuations/letters from $\aalphabet$. 
Hence, the language $\alang(\aspecification) \subseteq \aalphabet^{\omega}$ 
is defined as the language $\alang(\bauto_{\aspecification})$ with the standard notion for B\"uchi
automata
 (accepting runs start at $q_0$ and visit infinitely some accepting state in $F$). 
So, strictly speaking, specifications and B\"uchi automata are not identical objects but
specifications in BA are concise representations of B\"uchi automata
(an edge in $\aspecification$
can lead to an exponential number of transitions in $\bauto_{\aspecification}$).
Note also that any B\"uchi automaton built over the alphabet $\aalphabet$  
can be transformed in polynomial time into a specification over $\aalphabet$.
For each letter $\aletter$, transitions with $\aletter$ in the specification are labelled by
the Boolean formula $\aformulater_{\aletter}$  defined as a conjunction made of positive
literals from $\aletter$ and negative literals from $(at \cup ag_n) \setminus \aletter$.  
By way of example, below, we present a specification in BA for the property discussed earlier about the
flat counter system presented in Figure~\ref{figure-example}.
\begin{center}
\scalebox{0.75}{
\begin{tikzpicture}[shorten >=1pt,node distance=3cm,auto,>=stealth']

  \node[state,initial] (q_1)                  {};
  \node[state]         (q_2)  [right=of q_1]  {};
  \node[state,accepting]         (q_3)  [right=of q_2]  {};
   \path[->] (q_1) edge [out=125,in=55,loop]  node {$\top$} (q_1); 
   \path[->] (q_1) edge   node {$\avarprop \wedge (\acounter_1 - \acounter_2 = 0)$} (q_2);
   \path[->] (q_2) edge  [swap,bend right]   node {$\top$} (q_3);
   \path[->] (q_3) edge  [swap,bend right]   node {$\avarprop \wedge (\acounter_1 - \acounter_2 = 0)$} (q_2);
\end{tikzpicture}
}
\end{center}
Similarly,  a 
\else
A
\fi
specification $\aspecification$ in ABA is a structure of
the form $\triple{Q}{E}{q_0,F}$ where  
$E$ is a finite subset of $Q\times\Bool(at \cup ag_n) \times \Boolplus(Q) $ and
$\Boolplus(Q)$ denotes the set of positive Boolean combinations built over 
$Q$. 
\iftechreport Again, specification 
\else Specification
\fi
$\aspecification$ is a concise representation for
the alternating B\"uchi automaton $\bauto_{\aspecification} = \triple{Q}{\transitions}{q_0,F}$
where $\transitions: Q \times \powerset{at \cup ag_n} \rightarrow \Boolplus(Q)$ and
$\transitions(q,\aletter) \egdef \bigvee_{\tuple{q,\aformulabis,\aformulabis'} \in E, \ \aletter \models \aformulabis} 
\ \aformulabis'$.  We say that 
$\aspecification$ is over the constrained alphabet $\triple{at}{ag_n}{\aalphabet}$,
whenever, for all edges $\tuple{q,\aformulabis,\aformulabis'} \in E$, 
$\aformulabis$ holds at most
for letters from $\aalphabet$ (i.e. the transition relation of $\bauto_{\aspecification}$ belongs to $Q \times \aalphabet \rightarrow \Boolplus(Q)$
).
We have then $\alang(\aspecification) = \alang(\bauto_{\aspecification})$
with the usual acceptance criterion for alternating B\"uchi automata.  \ifconference The specification language BA is defined in a similar way using 
\buchi~automata. Hence the transition relation $E$ of $\aspecification=\triple{Q}{E}{q_0,F}$ in BA is included in $Q \times \Bool(at \cup ag_n) \times Q$ and the transition relation of the B\"uchi automaton $\bauto_{\aspecification}$ is then included in $Q\times\powerset{at \cup ag_n} \times Q$.  
\fi
\paragraph{Linear-time Temporal Logics.} Below, we present briefly three logical languages that are tailored
to specify runs of counter systems, namely ETL (see e.g.\cite{Wolper83,Piterman00}),
Past LTL (see e.g.~\cite{Sistla&Clarke85}) and linear $\mu$-calculus (or $\lmc$), see e.g.~\cite{Vardi88}. 
A specification in one of these logical specification languages is just a  formula. 
The differences with their standard versions in which models are $\omega$-sequences of propositional valuations
are listed below:
\iftechreport
\begin{enumerate}
\itemsep 0 cm
\item Models are infinite runs of counters systems.
\item Atomic formulae are either propositional variables in $\varprop$
  or atomic guards.
\item Given an infinite run 
      $\arun = \tuple{\astate_0,\avect_0} \tuple{\astate_1,\avect_1} \ldots$,
      $\arun, i \models \avarprop$ $\equivdef$ $\avarprop \in \alabelling(\astate_i)$
      and $\arun, i \models \aguard$ $\equivdef$ $\avect_i \models \aguard$. The temporal
operators, fixed point operators and automata-based operators are interpreted then as usual. 
\end{enumerate}
\else
models are infinite runs of counters systems; 
atomic formulae are either propositional variables in $\varprop$
  or atomic guards; 
given an infinite run 
      $\arun := \tuple{\astate_0,\avect_0} \trans  \tuple{\astate_1,\avect_1} \cdots$, we will have $\arun, i \models \avarprop$ $\equivdef$ $\avarprop \in \alabelling(\astate_i)$
      and $\arun, i \models \aguard$ $\equivdef$ $\avect_i \models \aguard$. The temporal
operators, fixed point operators and automata-based operators are interpreted then as usual. 
\fi 
A formula $\aformula$ built over the propositional variables in $at$ and
the atomic guards in $ag_n$ defines a language $\alang(\aformula)$ over  
$\triple{at}{ag_n}{\aalphabet}$ with $\aalphabet = \powerset{at \cup ag_n}$.
\iftechreport
So, $\arun \models \aformula$ 
in the sense of constrained alphabet 
(i.e. there is $\aword \in \alang(\aformula) \subseteq \aalphabet^{\omega}$
such that $\arun \models \aword$) iff \cut{$\arun \models \aformula$ where }$\models$ is the (temporal) satisfaction relation.
\fi
There is no need to recall here the syntax and semantics of ETL, Past LTL and linear $\mu$-calculus
since with their standard definitions and with the  above-mentioned differences, their variants
for counter systems are defined unambiguously (see a lengthy presentation of Past 
LTL for counter systems in~\cite{DemriDharSangnier12}). However, we may recall a few definitions on-the-fly 
if needed.
Herein the size of formulae is understood as the number of subformulae. 
\iftechreport
We just recall the syntax of the formulae\iftechreport for fixing the notations\else , for semantics we use the usual definitions adapted to the differences\fi:
\begin{itemize}
\item ETL:
$
\aformula::= \avarprop \ \mid  \ \aguard \ \mid \
\neg \aformula \ \mid \ \aformula \wedge \aformula \ \mid \
\aautomaton(\aformula, \ldots, \aformula)
$
where $\aautomaton$ is a finite-state automaton,
$\avarprop \in \varprop$ and $\aguard \in \guards(\counters_n)$ for some $n$.
\item $\lmc$:
$
\aformula::= \avarprop \ \mid  \ \aguard \ \mid \
\neg \aformula \ \mid \  \aformula \wedge \aformula \ \mid \
\mynext \aformula \ \mid \ \myprevious \aformula \ \mid \
\mu \avar \cdot \aformula 
$
for some variable $\avar$ that occurs only positively in $\mu \avar \cdot \aformula$ that uses
a least fixed point operator, see e.g.~\cite{Vardi88}. 
\item Past LTL: 
$
\aformula::= \avarprop \ \mid  \ \aguard \ \mid \
\neg \aformula \ \mid \  \aformula \wedge \aformula \ \mid \
\mynext \aformula \ \mid \ \myprevious \aformula \ \mid \
\aformula \until \aformula \ \mid \ \aformula \since \aformula$. 
\end{itemize}
\fi 
\paragraph{Example.}
In \ifconference adjoining figure\else Figure~\ref{figure-example}\fi, we present a \iftechreport simple \fi  
flat counter system with two counters
and with labeling function $\alabelling$ such that $\alabelling(\astate_3)=\set{\avarprop,\avarpropbis}$ and
$\alabelling(\astate_5)=\set{\avarprop}$. We would like  to characterize the set of configurations $\aconf$ with control state $\astate_1$ such that
there is some infinite run from $\aconf$ for which after some position $i$, all future even positions $j$ (i.e. $i \equiv_2 j$)
satisfy that $\avarprop$ holds and the first counter is equal to the second counter.  

\begin{minipage}{3.8cm}

\scalebox{0.55}{
\begin{tikzpicture}[shorten >=1pt,node distance=1.7cm,auto,>=stealth']

  \node[state,initial] (q_1)                  {$\astate_1$};
  \node[state]         (q_2)  [below=of q_1]  {$\astate_2$};
  \node[state]         (q_3)  [below=of q_2]  {$\astate_3$};
  \node[state]         (q_4)  [left=of q_2]  {$\astate_{4}$};
  \node[state]         (q_5)  [below=of q_3]  {$\astate_5$};
  \path[->] (q_1)   edge              node  {$\top,(0,0)$} (q_2)
                    edge        node  {$\top,(0,0)$} (q_4)
            (q_2)   edge              node  {$\top,(0,0)$} (q_3)
	    	    edge [out=35,in=325,loop]       node  {$\top,(-3,0)$} ()
            (q_3)   edge [swap,bend right] node  {$\aguard'(\acounter_1,\acounter_2),(1,0)$} (q_5)
            (q_5)   edge [swap,bend right] node  {$\aguard(\acounter_1,\acounter_2),(0,1)$} (q_3)
            (q_4)   edge [swap]       node  {$\top,(0,0)$} (q_3)
	    	    edge [out=125,in=55,loop]       node  {$\top,(0,-2)$} ();
\end{tikzpicture}
\label{figure-example}
}
\end{minipage}
\begin{minipage}{7.8cm}
This can be specified in linear $\mu$-calculus using as atomic formulae either propositional
variables or atomic guards. The corresponding formula in linear $\mu$-calculus is: 
$
\mu\avar_1.(\mynext(\nu\avar_2.(\avarprop \wedge (\acounter_1 - \acounter_2 = 0) \wedge \mynext \mynext \avar_2) \vee \mynext \avar_1)
$.
Clearly, such a position $i$ occurs in any run after reaching the control state $\astate_3$ with the same value for both counters. 
Hence, the configurations $\pair{\astate_1}{\avect}$ satisfying these properties have counter values $\avect \in \Nat^2$ verifying the Presburger formula
below:
\end{minipage}
$$
\exists \ \avariablebis \
 (
((\acounter_1 = 3 \avariablebis + \acounter_2)  \wedge 
(\forall \ \avariablebis' \ \aguard( \acounter_2 + \avariablebis',  \acounter_2  + \avariablebis') 
                            \wedge  
                            \aguard'( \acounter_2 + \avariablebis',  \acounter_2  + \avariablebis' + 1)))
\vee
$$
$$ 
((\acounter_2 = 2 \avariablebis + \acounter_1) \wedge
(\forall \ \avariablebis' \ \aguard(\acounter_1  + \avariablebis',  \acounter_1    + \avariablebis') 
                            \wedge  
                            \aguard'(\acounter_1  + \avariablebis',  \acounter_1   + \avariablebis' + 1)))
)
$$
In the paper, we shall establish how to compute systematically such formulae (even without universal quantifications)
\iftechreport
for linear $\mu$-calculus or related specification languages.
\else
for different specification languages.
\fi
\cut{
Below, we just recall the grammars for formulae and 
pinpoint a few features:
\begin{itemize}
\item ETL formulae are of the form below
$$
\aformula::= \avarprop \ \mid  \ \aguard \ \mid \
\neg \aformula \ \mid \ \aformula \wedge \aformula \ \mid \
\aautomaton(\aformula, \ldots, \aformula)
$$
where $\aautomaton$ is a finite-state automaton. 
Each $\aautomaton$ with $m$ letters defines a temporal operator of arity $m$, see e.g.~\cite{Wolper83}. 
\item $\lmc$ formulae are of the form below:
$$
\aformula::= \avarprop \ \mid  \ \aguard \ \mid \
\neg \aformula \ \mid \  \aformula \wedge \aformula \ \mid \
\mynext \aformula \ \mid \ \myprevious \aformula \ \mid \
\mu \avar \cdot \aformula 
$$
for some variable $\avar$ that occurs only positively in $\mu \avar \cdot \aformula$ that uses
a least fixed point operator, see e.g.~\cite{Vardi88}. 
\item 
Past LTL formula are of the form below:
$$
\aformula::= \avarprop \ \mid  \ \aguard \ \mid \
\neg \aformula \ \mid \  \aformula \wedge \aformula \ \mid \
\mynext \aformula \ \mid \ \myprevious \aformula \ \mid \
\aformula \until \aformula \ \mid \ \aformula \since \aformula \
$$
where $\avarprop \in \varprop$ and $\aguard \in \guards(\counters_n)$ for some $n$.
\end{itemize}
}

\cut{
  \subsection{Parikh Path Schemas}

We define minimal path schemas in the line of the respective definitions in~\cite{DemriDharSangnier12}.  Let
$\asys=\tuple{\states,\edges,\alabelling}$ be a Kripke structure, where $\states$ is the set of nodes, $\edges\subseteq\states\times\states$ is
the transition relation and $\alabelling:\states\rightarrow\powerset{\varprop}$ is the labelling fuction. For a transition
$\anedge=\tuple{\astate_1,\astate_2}$, we denote as $\target{\anedge}=\astate_1$ and $\target{\anedge}=\astate_2$. A \defstyle{path segment} $\aseg$
in $\asys$ is a finite sequence of transitions from $\edges$ such that $\target{\aseg(i)}=\source{\aseg(i+1)}$ for all $0 \leq i <
\length{\aseg}-1$. We write $\first{\aseg}$ [resp.  $\last{\aseg}$] to denote the first [resp. last] control state of a path segment, in other words
$\first{\aseg}=\source{\aseg(0)}$ and $\last{\aseg}=\target{\aseg(\length{\aseg}-1)}$. A path segment $\aseg$ is said to be \defstyle{simple} if
$\length{\aseg} >0$ and for all $0 \leq i,j < \length{\aseg}$, $\aseg(i)=\aseg(j)$ implies $i=j$ (no repetition of transitions).  A \defstyle{loop} is
a simple path segment $\aseg$ such that $\first{\aseg}=\last{\aseg}$.  If a path segment is not a loop it is called a \emph{non-loop segment}.  A
\defstyle{path schema} $\aschema$ is an $\omega$-regular expression built over the alphabet of transitions.
More precisely, a path schema $\aschema$ is of the form $\aseg_1 \aloop_1^+ \aseg_2 \aloop_2^+ \ldots \aseg_k \aloop_k^{\omega}$ verifying the following conditions:
\begin{enumerate}
\itemsep 0 cm
\item $\aloop_1$, \ldots, $\aloop_k$ are loops,
\item $\aseg_1 \aloop_1 \aseg_2 \aloop_2 \ldots \aseg_k \aloop_k$ 
is a path segment.
\end{enumerate}

We write $\lengthpathschema{\aschema}$ to denote $\length{\aseg_1 \aloop_1 \aseg_2 \aloop_2 \ldots \aseg_k \aloop_k}$ and $\nbloops{\aschema}$ its
number $k$ of loops.  Let $\languageof{\aschema}$ denote the set of infinite words in $\edges^\omega$ which belong to the language defined by
$\aschema$. Given $\aword \in \languageof{\aschema}$, we write $\loopsof{\aschema}{\aword}$ to denote the unique tuple in $(\nat \setminus \set{0})^{k-1}$ such that
$\aword=\aseg_1 \aloop_1^{\loopsof{\aschema}{\aword}[1]}\aseg_2 \aloop_2^{\loopsof{\aschema}{\aword}[2]}\ldots \aseg_k\aloop_k^\omega$. So, for every
$i \in \interval{1}{k-1}$, $\loopsof{\aschema}{\aword}[i]$ is the number of times the loop $\aloop_i$ is taken. 
So far, a Kripke structure may have an infinite set of path schemas.  To see this, it is sufficient to unroll loops in path segments.  However, we
can impose minimality conditions on path schemas without sacrificing completeness.  A path schema $\aseg_1 \aloop_1^+ \aseg_2 \aloop_2^+ \ldots
\aseg_k \aloop_k^{\omega}$ is \defstyle{minimal} whenever
\begin{enumerate}
\itemsep 0 cm
\item $\aseg_1 \cdots \aseg_k$ is  either the empty word or a simple non-loop segment,
\item $\aloop_1$, \ldots, $\aloop_k$ are loops with disjoint sets of transitions.
\end{enumerate} 

\begin{lemma}\cite{DemriDharSangnier12}
\label{lemma-schemata-finite}
Given a Kripke structure $\asys=\tuple{\states,\edges,\alabelling}$, the total number of minimal path schemas of $\asys$ is finite and
is smaller than $\card{\edges}^{(2 \times \card{\edges})}$.
\end{lemma}
This lemma is a simple consequence of the fact that in a minimal path schema, each transition occurs at most twice.
In Figure~\ref{figure-example-minimal-path-schemas}, we present a Kripke structure $\asys$ with a unique counter and one of its minimal path
schemas. The minimal path schema shown in Figure~\ref{figure-example-minimal-path-schemas} corresponds to the $\omega$-regular expression
$\atransition_1 (\atransition_2 \atransition_3)^+ \atransition_4 \atransition_5 (\atransition_6 \atransition_5)^{\omega}$.  In order to avoid
confusions between path schemas and flat counter systems that look like path schemas, simple loops in the representation are labelled by
\fbox{$\omega$} or \fbox{$\geq 1$} depending whether the simple loop is the last one or not.  Note that in the representation of path schemas, a state
may occur several times, as it is the case for $\astate_3$ (this cannot occur in the representation of counter systems).
\begin{figure}
\begin{center}
\scalebox{0.8}{
\begin{tikzpicture}[->,>=stealth',shorten >=1pt,
node distance=1cm, thick,auto,bend angle=60]

  \tikzstyle{every state}=[fill=white,draw=black,text=black]
  \node[state] (q0) [below]    {$\astate_0$};
  \node[state] (q1) [right= of q0]    {$\astate_1$};
  \node[state] (q2) [above= of q1]    {$\astate_2$};
  \node[state] (q3) [right=2cm of q1]    {$\astate_3$};
  \node[state] (q4) [above= of q3]    {$\astate_4$};
  \node[state] (q02) [right=2.2cm of q3]    {$\astate_0$};
  \node[state] (q12) [right= of q02]    {$\astate_1$};
  \node[state] (q22) [above= of q12]    {$\astate_2$};
  \node[state] (q32) [right= of q12]    {$\astate_3$};
  \node[state] (q42) [right=1.4cm of q32]    {$\astate_4$};
  \node[state] (q52) [above= of q42]    {$\astate_3$};
  \node (q62) [above=0.3cm of q12] {{\small \fbox{$\mathbf{\geq 1}$}}};
  \node (q72) [above=0.3cm of q42] {{\small \fbox{$\mathbf{\omega}$}}};

  \path[->] (q0) edge node {} (q1);
  \path[->] (q1) edge [bend left] node [right] {} (q2);
  \path[->] (q2) edge [bend left] node  [right] {} (q1);
  \path[->] (q1) edge node  [above] {} (q3);
  \path[->] (q3) edge  [bend left] node  [left] {} (q4);
  \path[->] (q4) edge  [bend left] node  [right] {} (q3);     
  \path[->] (q02) edge node {} (q12);
  \path[->] (q12) edge  [bend left] node {} (q22);
  \path[->] (q22) edge  [bend left] node {} (q12);
  \path[->] (q12) edge  node [below] {} (q32);
  \path[->] (q32) edge node  [above] {} (q42);
  \path[->] (q42) edge  [bend left] node  [left] {} (q52);
  \path[->] (q52) edge  [bend left] node  [right] {} (q42);

\end{tikzpicture}
}
\end{center}
\caption{A Kripke structure and one of its minimal path schemas}
\label{figure-example-minimal-path-schemas}
\end{figure}
Minimal path schemas play a crucial role in the sequel, mainly because of the properties stated below.%
\begin{lemma} \cite{DemriDharSangnier12}
Let $\aschema$ be a path schema. There is a minimal path schema $\aschema'$ such that every run respecting $\aschema$ respects $\aschema'$ too.
\end{lemma}
The proof of the above lemma is by an easy verification. Indeed, whenever a maximal number of copies of a simple loop is identified as a factor of
$\aseg_1 \aloop_1 \cdots \aseg_k \aloop_k$, this factor is replaced by the simple loop unless it is already present in the path schema.

Recall that a run $\arun$ starting from a state $\astate_0$ in $\asys$ is an infinite path
in the Kripke structure denoted as:
$$
\arun := \astate_0 \labtrans{\anedge_0} \cdots \labtrans{\anedge_{m-1}} 
\astate_{m} \labtrans{\anedge_{m}} \cdots 
$$ 
where $\astate_i \in \states$ and $\anedge_i \in \edges$ for all $i \in \nat$.  Let $\wordof{\arun}$ be the $\omega$-word $\anedge_0
\anedge_1 \ldots$ and $\statesof{\arun}=\astate_0\astate_1\ldots$.
Finally, we say that a run $\arun$ \defstyle{respects} a path schema $\aschema$
if $\wordof{\arun} \in \languageof{\aschema}$ and for such a run, we write $\loopsof{\aschema}{\arun}$ to denote
$\loopsof{\aschema}{\wordof{\arun}}$.  Note that by definition, if $\arun$ respects $\aschema$, then each loop $\aloop_i$ is visited at least once,
and the last one infinitely.

A \defstyle{Parikh Path Schema} is defined as the tuple $\apps=\tuple{\aschema,\aconstraintsystem}$, where $\aschema$ is a minimal path schema and
$\aconstraintsystem$ is a quatifier-free Presburger formula over the variables $y_1,y_2,\cdots,y_{k-1}$ where $k$ is the number of loops in
$\aschema$. A run $\arun$ in $\apps$ is an infinite run such that $\arun$ respects $\aschema$ and $\loopsof{\aschema}{\arun}$ is a solution to
$\aconstraintsystem$.

\subsection{Model-checking problem}
In this work, we wish to study model-checking problems for Parikh path schema 
taking into account formal specification which can speak about the
atomic propositions present in the seen state. 
In our framework, a specification will be given either in
form of a logic formula or of an automata. More formally to a specification $A$, we will associate $\languageof{A}$ a set
of finite or infinite words over the alphabet $2^{\varprop}$. 

The verification problem we are interested in is then the model-checking problem for different formal specification over counter systems. We write this problem
$\mc{\logicfrag}{\pps}$, where $\logicfrag$ is a formalism to define finite infinite runs and $\pps$ is the class of Parikh path schema.
$\mc{\logicfrag}{\pps}$ can be defined formally as follows:
\begin{description}
\itemsep 0 cm 
\item[Input:] A Parikh path schema $\apps \in \pps$, a state $\astate_0$ and a specification $A \in \logicfrag$;
\item[Output:] Does there exist a infinite (or finite) run $\arun$ in $\apps$ starting from $\astate_0$ such that
  $\statesof{\arun} \in\languageof{A}$?
\end{description}
If the answer is "yes", we will write $\apps,\astate_0 \models A$. 

It is known from \cite{DemriDharSangnier12}, that the model-checking problem for LTL with past on Parikh path schema is \np-complete.  Although the
decidability and complexity classification for many other logics remained open.

}
%
%
%
\cut{
\subsection{Characterizing Runs by Quantifier-free Presburger Formulae}
We will show and important technique of constructing a quantifier-free Presburger formula which characterizes exactly the runs respecting a given path
schema from a counter system \textit{without disjunction} in the guards. Given a path schema
$\aschema=\aseg_1\aloop_1^+\cdots\aseg_k\aloop_k^{\omega}$ in a counter system $\asys$ without disjunction in the guards, we can construct a
constraint system $\aconstraintsystem$ consisting of conjunction of several quantifier-free Presburger formula with $k-1$ variables. The variables in
$\aconstraintsystem$ corresponds to each of the loops in $\aschema$. Thus, a solution $(m_1,m_2,\ldots,m_{k-1})$ to $\aconstraintsystem$, actually
represents a run $\arun$ such that $\loopsof{\aschema}{\arun}=(m_1,m_2,\ldots,m_{k-1})$. From \cite{DemriDharSangnier12}, we get the following
theorem,
\begin{theorem}\cite{DemriDharSangnier12}
\label{thm:constraint}
Let $\aschema$ be a path schema from a flat counter system $\asys$, then there exists a constraint system $\aconstraintsystem$ satisfying the following properties
\begin{itemize}
  \item $k-1$ variables.
  \item at most $\length{\aschema}\times 2 \times\size{\aschema}^2$ conjuncts
  \item the greatest absolute value of the constants are bounded by $2n\times K(K+2)\times(\length{\aschema}+1)$, where $K$ is the maximum absolute
    value of the constants occurring in $\aschema$.
  \item $(m_1,m_2,\ldots,m_{k-1})$ is a solution iff there exists a run $\arun$ respecting $\aschema$, the guards and the counter updates, such
    that $\loopsof{\aschema}{\arun}=(m_1,m_2,\ldots,m_{k-1})$.
\end{itemize}
\end{theorem}
As described in \cite{DemriDharSangnier12} constructing each of the conjuncts in $\aconstraintsystem$ requires atmost one pass over $\aschema$. Thus, the construction of $\aconstraintsystem$
can be done in polynomial time in $\size{\aschema}$. Also, the conditions in Theorem~\ref{thm:constraint} ensures that the size of
$\aconstraintsystem$ is polynomial in $\size{\aschema}$.

\subsection{On how to deal with arithmetical specifications and disjunctions}
\label{subsec:arithmetic-to-classic}
As said before, we consider in this work formal specifications dealing with both atomic propositions and arithmetical constraints and we will see here that it is possible to consider the model-checking problem without interpreting the cardinal constraints by adding to the label of the states of the considered counter systems  cardinal constraints. For all the classes of specification we will consider in our work, we will have the following property: given a specification $A$ defining a language over $\powerset{\varprop}\times \powerset{\guards(\counters_n)}$, there will be a finite subset of arithmetical constraints $\guards(A)$ such that $\languageof{A}$ will be defined over $\powerset{\varprop}\times \powerset{\guards(A)}$. Following a construction proposed in \cite{DemriDharSangnier12}, this allows us from a path schema with labels in $\powerset{\varprop}$ to build a set of path schemas with labels in $\powerset{\varprop} \times \powerset{\guards(A)}$ and thus to answer the model-checking problem without interpreting the constraints of the formula. Furthermore, this construction also allows to remove the disjunctions present in the guards of the path schema which is crucial to apply the technique introduced in the previous subsection.
More precisely, given a specification $A$ with cardinal constraints and  a path schema $\aschema$ from a flat counter system $\asys=\tuple{\states,\counters_n,\edges,\alabelling}$ where $\alabelling:\states\rightarrow\powerset{AP}$, it is possible to transform $\aschema$ to a set of path
schemas $Y_{\aschema}$ such that every $\aschema'\in Y_{\aschema}$ is a path schema from a flat counter system
$\asys'=\tuple{\states',\counters_n,\edges',\alabelling'}$ where $\alabelling':\states\rightarrow \powerset{\varprop}\times \powerset{\guards(A)}$ and for every
$\anedge'=\tuple{\astate,\aguard,\anupdate,\astate'}$ in $\edges'$, $\aguard$ is a conjunction of arithmetical constraints. In fact, from \cite{DemriDharSangnier12} we have the following result.
\begin{theorem}\cite{DemriDharSangnier12}
\label{thm:disjunction}
Given a specification $A$ with $\languageof{A}$ over $\powerset{\varprop} \times \powerset{\guards}$, a path schema $\aschema$ in a flat counter system $\asys$ and a configuration $\tuple{\astate_0,\avect_0}$, one can construct in polynomial time (in the sizes of $\aschema$  and $\guards(A)$) a set $Y_{\aschema}$ such that:
\begin{itemize}
\item The control states present in the path schemas of  $Y_{\aschema}$ have their associated label in $\powerset{\varprop} \times \powerset{\guards(A)}$.
\item No path schema in $Y_{\aschema}$ contains guards with disjunction in it.
\item For every path schema $\aschema'\in Y_{\aschema}$, its length is polynomial in the length of $\aschema$ and in the size of $\guards(A)$.
\item For every run $\arun$ starting in $\tuple{\astate_0,\avect_0}$ and respecting $\aschema$, there is a run $\arun'$ respecting some $\aschema'\in Y_{\aschema}$ and starting with values $\avect_0$ such that
  $\statesof{\arun} \in\languageof{\Interp{A}}$ iff $\blabelling(\statesof{\arun'}) \in \languageof{A}$.
\item For every run $\arun'$ respecting some $\aschema' \in Y_{\aschema}$ starting with values $\avect_0$, there is a run $\arun$ starting in $\tuple{\astate_0,\avect_0}$ such that
  $\statesof{\arun} \in\languageof{\Interp{A}}$ iff $\blabelling(\statesof{\arun'}) \in \languageof{A}$.
\end{itemize}
\end{theorem}

The consequence of this theorem is that we can consider the set of atomic propositions to be $\varprop \cup \guards(A)$ for a given specification $A$ and the model-checking can then be thought of as without counter constraints and  be performed as classical model-checking with specification over a finite alphabet.  Thus, in this way we can transform an arithmetical
specification over a path schema from a flat counter system to another equivalent problem of checking classical specification (without guards) over
path schema from flat counter system. Finally we point the attention on the fact that Theorem~\ref{thm:disjunction} gives us a way to eliminate disjunction from the guards of path
schema. This is useful for characterizing runs using constraint system as explained before.

}

\section{Constrained Path Schemas}
\iftechreport \subsection{Abstracting Path Schemas} \fi
In~\cite{DemriDharSangnier12} we introduced minimal path schemas for 
flat counter systems. 
\iftechreport
A path schema is an 
$\omega$-regular expression over the alphabet of transitions of the form $\aseg_1 (\aloop_1)^* \cdots \aseg_{k-1} (\aloop_{k-1})^* \aseg_k (\aloop_k)^{\omega}$ 
where each $\aseg_i$ is a finite path and each $\aloop_i$ is a simple cycle of the given flat counter system. A path schema is said to be minimal when no transition of the counter systems is used more than twice in the expression. Minimal path schemas enjoy the following properties: for a given flat counter system there are a finite (exponential) number of minimal paths and the word of transitions of each run of a flat counter system is accepted by a minimal path schema. 
Furthermore, we can associate to each path schema an arithmetical constraint characterizing how many times each loop is taken and thus represent in a 
finite manner all the possible runs of a flat counter system. 
\fi
Now, we introduce \defstyle{constrained path schemas} that are more abstract than
path schemas. 
%
%
%
A \defstyle{constrained path schema} $\acps$ is  a pair 
$\pair{\aseg_1 (\aloop_1)^* \cdots \aseg_{k-1} (\aloop_{k-1})^* \aseg_k (\aloop_k)^{\omega}}{\linebreak[0]\aformula(\avariable_1,\linebreak[0] 
\ldots, \avariable_{k-1})}$
where the first component is an $\omega$-regular expression over a constrained alphabet 
$\tuple{at,ag_n,\aalphabet}$ with $\aseg_i, \aloop_i$'s in $\aalphabet^*$, and $\aformula(\avariable_1, \ldots, \avariable_{k-1}) \linebreak[0]
\in \guards(\counters_{k-1})$.
\iftechreport
Each constrained path schema defines a language 
$\alang(\acps) \subseteq \aalphabet^{\omega}$ that is a subset 
of the language defined by the $\omega$-regular expression by taking into account the constraints on the repetition of the
$l_i$'s. More specifically:
$$
\alang(\acps) \egdef \set{
\aseg_1 (\aloop_1)^{n_1} \cdots \aseg_{k-1} (\aloop_{k-1})^{n_{k-1}} \aseg_k (\aloop_k)^{\omega}: 
\aformula(n_1, \ldots, n_{k-1}) \ {\rm holds \ true}
}
$$ 
\else
Each constrained path schema defines a language 
$\alang(\acps) \subseteq \aalphabet^{\omega}$ given by
$\alang(\acps) \egdef \set{
\aseg_1 (\aloop_1)^{n_1} \cdots \aseg_{k-1} (\aloop_{k-1})^{n_{k-1}} \aseg_k (\aloop_k)^{\omega}: 
\aformula(n_1, \ldots, n_{k-1}) \ {\rm holds \ true}
}
$. 
\fi 
The size of $\acps$, written $\size{\acps}$, is equal to
$2k + \length{\aseg_1 \aloop_1 \cdots \aseg_{k-1} \aloop_{k-1} \aseg_k \aloop_k} +
\size{\aformula(\avariable_1,\ldots, \linebreak[0] \avariable_{k-1})}$. 
\iftechreport
Observe that in general constrained path schemas
are defined under constrained alphabet and all the decision problems we  consider
also assume that the specifications are over constrained alphabet unless stated otherwise.
\else
Observe that in general constrained path schemas
are defined under constrained alphabet and so will the associated specifications unless stated otherwise. 
\fi
%
%

\iftechreport \subsection{Decision Problems} \label{subsection:dec-prob} \fi
Let us consider below the three decision problems on constrained path schemas that are useful in the rest of the paper. 
\iftechreport
\emph{Consistency problem} is defined as follows:\\
\iftechreport
\prob{A constrained path schema $\acps$}{Is $\alang(\acps) \neq \emptyset$?}
Note that consistency amounts to check that the second argument of the constrained path schema is satisfiable. Let us first recall simple consequences
of the classical result~\cite[Corollary 1]{Pottier91} which will be useful to state the complexity of the consistency problem but also that we will
take advantage of later.
\else
{\bf Input:}A constrained path schema $\acps$\\
{\bf Output:}Is $\alang(\acps) \neq \emptyset$?\\
\fi
\else
\defstyle{Consistency problem} checks whether $\alang(\acps)$ is non-empty.
It amounts to verify the satisfiability status of the second component. 
Let us recall the result below. 
\fi 
\begin{theorem}~\cite{Pottier91}
\label{theorem-pottier91}
\iftechreport
There exist polynomials $\mathtt{pol}_1(\cdot)$, $\mathtt{pol}_2(\cdot)$ and $\mathtt{pol}_3(\cdot)$ such that for every guard $\aguard$, say in
$\guards(\counters_n)$, of size $N$, we have
\begin{description}
\itemsep 0 cm 
\item[(I)] there exist $\basis \subseteq  \interval{0}{2^{\mathtt{pol}_1(N)}}^n$ and 
           $\aperiod_1, \ldots, \aperiod_{\alpha} \in \interval{0}{2^{\mathtt{pol}_1(N)}}^n$ with 
            $\alpha \leq 2^{\mathtt{pol}_2(N)}$ such that for every $\vect{y} \in \Nat^n$, $\vect{y} \models \aguard$
           iff there are $\abasis \in \basis$ and $\vect{a} \in \Nat^{\alpha}$ such that
           $\vect{y} = \abasis + \vect{a}[1] \aperiod_1 + \cdots + \vect{a}[\alpha] \aperiod_{\alpha}$;
\item[(II)] if $\aguard$ is satisfiable, then there is  $\vect{y} \in \interval{0}{2^{\mathtt{pol}_3(N)}}^n$ such that
             $\vect{y} \models \aguard$. 
\end{description}
\else
There are polynomials $\mathtt{pol}_1(\cdot)$, $\mathtt{pol}_2(\cdot)$ and $\mathtt{pol}_3(\cdot)$ 
such that for every guard $\aguard$, say in
$\guards(\counters_n)$, of size $N$, we have
(I) there exist $\basis \subseteq  \interval{0}{2^{\mathtt{pol}_1(N)}}^n$ and 
           $\aperiod_1, \ldots, \aperiod_{\alpha} \in \interval{0}{2^{\mathtt{pol}_1(N)}}^n$ with 
            $\alpha \leq 2^{\mathtt{pol}_2(N)}$ such that for every $\vect{y} \in \Nat^n$, $\vect{y} \models \aguard$
           iff there are $\abasis \in \basis$ and $\vect{a} \in \Nat^{\alpha}$ such that
           $\vect{y} = \abasis + \vect{a}[1] \aperiod_1 + \cdots + \vect{a}[\alpha] \aperiod_{\alpha}$;
(II) if $\aguard$ is satisfiable, then there is  $\vect{y} \in \interval{0}{2^{\mathtt{pol}_3(N)}}^n$ s.t.
             $\vect{y} \models \aguard$. 
\fi
\end{theorem}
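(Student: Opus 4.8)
The plan is to reduce the statement to the classical structure theorem for the set of non-negative integer solutions of a single system of linear (in)equalities, and then to lift the per-system bounds to an arbitrary guard by a disjunctive normal form argument. First I would normalise $\aguard$: pushing negations inward and using that over $\Zed$ a strict inequality $\aterm > \aconstant$ is equivalent to $\aterm \geq \aconstant + 1$ (and $\neg(\aterm \leq \aconstant)$ to $\aterm \geq \aconstant+1$, etc.), so that every atomic guard and its negation is again a non-strict atomic guard. Writing $\aguard$ in disjunctive normal form then yields at most $2^{N}$ disjuncts, each a conjunction of at most $N$ atomic guards, i.e.\ a system $S_d$ of the shape $A_d \vect{y} \leq \vect{c}_d$, $\vect{y} \geq \vect{0}$ over $\Nat^n$ whose coefficients still have $O(N)$ bits. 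The solution set of $\aguard$ is the union over $d$ of the solution sets of the $S_d$.

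For a single system $S_d$ I would invoke the integer analogue of the Minkowski--Weyl decomposition: the set of solutions equals $\{\abasis + \sum_j \vect{a}[j]\,\vect{h}_j : \abasis \in B_d,\ \vect{a} \in \Nat^{\# H_d}\}$, where $H_d = \{\vect{h}_j\}$ is a Hilbert basis of the recession cone $\{\vect{y} \geq \vect{0} : A_d \vect{y} \leq \vect{0}\}$ and $B_d$ is the finite set of $\leq$-minimal solutions. This already has exactly the shape claimed in (I), with a common period set \emph{per disjunct}.

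The crux is the explicit singly-exponential size bounds, which is the genuine content of Pottier's result. Here I would argue via Cramer's rule and Hadamard's inequality: the extreme rays of the recession cone and the minimal solutions of $S_d$ are obtained as solutions of square subsystems of $A_d$, hence their coordinates are values or quotients of subdeterminants of $A_d$; since each entry of $A_d$ is bounded by $2^{N}$, a $k\times k$ subdeterminant is at most $k!\,2^{kN} \leq 2^{\mathtt{pol}_1(N)}$. Every Hilbert-basis element lies in the fundamental parallelepiped spanned by extreme rays, so it too is bounded by $2^{\mathtt{pol}_1(N)}$; consequently the $\vect{h}_j$ and the elements of $B_d$ all lie in $\interval{0}{2^{\mathtt{pol}_1(N)}}^n$, and their number is at most the number of lattice points in that box, namely $(2^{\mathtt{pol}_1(N)}+1)^n \leq 2^{\mathtt{pol}_2(N)}$ (recall $n \leq N$). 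Taking $\basis = \bigcup_d B_d$ and letting $\aperiod_1,\ldots,\aperiod_\alpha$ enumerate $\bigcup_d H_d$ preserves these bounds up to enlarging $\mathtt{pol}_2$ (there are at most $2^N$ disjuncts), which establishes (I). Finally, (II) is immediate: if $\aguard$ is satisfiable then some disjunct $S_d$ is, and any minimal solution in $B_d$ already witnesses a solution in $\interval{0}{2^{\mathtt{pol}_1(N)}}^n$, so one may take $\mathtt{pol}_3 = \mathtt{pol}_1$.

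I expect the main obstacle to be the determinant-based derivation of the uniform bound $2^{\mathtt{pol}_1(N)}$ on the Hilbert basis of the recession cone, controlling both the magnitude and the number of minimal generators; this is exactly where Pottier's combinatorial analysis (a Dickson/pigeonhole argument enforcing minimality) does the real work. A secondary subtlety worth flagging is that, after the DNF step, membership in the solution set of $\aguard$ is witnessed disjunct-by-disjunct, so the global object is genuinely a finite union of linear sets rather than one linear set with a single common period set; pooling all bases and periods keeps the quantitative bounds of (I) intact, which is all that the later algorithm (guessing small loop counts $n_i$) actually relies on.
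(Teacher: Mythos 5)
First, a point of comparison: the paper does not prove this statement at all --- it is imported verbatim as~\cite[Corollary~1]{Pottier91}, with the only in-paper remark being that (II) follows immediately from (I). So you are reconstructing a cited black box, and the overall architecture you propose (normalise, take a DNF, apply a structure theorem per conjunctive system, bound magnitudes by subdeterminants) is indeed the standard route to such bounds.

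There is, however, a genuine error in your key decomposition step. You claim that for a system $S_d$ of inequalities $A_d\vect{y}\leq \vect{c}_d$ over $\Nat^n$, the solution set equals $\{\abasis+\sum_j \vect{a}[j]\vect{h}_j\}$ where $B_d$ is the set of $\leq$-minimal solutions and $H_d$ a Hilbert basis of the recession cone. This fails already for $n=1$ and the system $\avariable_1\leq 5$: the unique $\leq$-minimal solution is $0$, the recession cone is $\{0\}$ with empty Hilbert basis, so your presentation yields $\{0\}$ while the solution set is $\{0,\dots,5\}$. The completeness direction breaks because $A_d\vect{y}\leq\vect{c}_d$ and $A_d\abasis\leq\vect{c}_d$ with $\abasis\leq\vect{y}$ do not imply $A_d(\vect{y}-\abasis)\leq \vect{0}$. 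The standard repair --- and what Pottier actually analyses --- is to introduce slack variables so that each disjunct becomes a system of \emph{equations} over $\Nat$; there the minimal solutions of the inhomogeneous system together with the minimal solutions of the homogeneous system do give a complete presentation (the difference of two solutions of an equation system is a solution of the homogeneous one), and your Cramer/Hadamard bounds then apply to the enlarged matrix. With that fix the magnitude and cardinality bounds go through as you describe, and (II) follows as you say. Separately, your closing caveat about pooling periods across disjuncts is well taken but understated: pooling destroys the ``if'' direction of the stated equivalence (e.g.\ for $(\avariable_1=0)\vee(\avariable_2=0)$ no single shared period set can work), so (I) as literally written only holds for conjunctions of atomic guards --- which is how the paper uses it, since the constrained path schemas of Theorem~\ref{theorem-reduction} are built with disjunction-free guards.
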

\iftechreport
Note that (II) is an immediate consequence of (I). 
Now, we can obtain the \np~upper bound for the consistency problem (the lower bound being obtained directly by reducing SAT for instance), 
indeed, a constrained path schema  defines a non-empty language iff its formula can be satisfied by a tuple 
in $\interval{0}{2^{\mathtt{pol}_3(N)}}^{k-1}$ where $N$ is its size. This allows us to state the next lemma.
\begin{lemma} Consistency problem is \np-complete.
\end{lemma}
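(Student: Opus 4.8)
The plan is to establish membership in \np\ and \np-hardness separately, both resting on the observation that $\alang(\acps)$ is non-empty precisely when the second component of $\acps$ is satisfiable over the non-negative integers. Concretely, if $\acps = \pair{\aseg_1 (\aloop_1)^* \cdots \aseg_{k-1} (\aloop_{k-1})^* \aseg_k (\aloop_k)^{\omega}}{\aformula(\avariable_1, \ldots, \avariable_{k-1})}$, then any tuple $(n_1, \ldots, n_{k-1}) \in \Nat^{k-1}$ with $\aformula(n_1, \ldots, n_{k-1})$ true yields the word $\aseg_1 (\aloop_1)^{n_1} \cdots \aseg_k (\aloop_k)^{\omega} \in \alang(\acps)$, and conversely every word of $\alang(\acps)$ arises in this way. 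Hence $\alang(\acps) \neq \emptyset$ iff $\aformula$ admits a solution in $\Nat^{k-1}$, reducing consistency to guard satisfiability.

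For the upper bound, set $N = \size{\acps}$, so that $\size{\aformula} \leq N$. By Theorem~\ref{theorem-pottier91}(II), whenever $\aformula$ is satisfiable it has a solution $\vect{y} \in \interval{0}{2^{\mathtt{pol}_3(N)}}^{k-1}$, whose binary encoding uses at most $\mathtt{pol}_3(N)$ bits per component and is therefore of size polynomial in $N$. The nondeterministic procedure thus guesses such a $\vect{y}$ and verifies $\vect{y} \models \aformula$. Since $\aformula$ is a Boolean combination of atomic guards $\sum_i \afactor_i \avariable_i \sim \aconstant$, evaluating it on a fixed integer vector reduces to binary arithmetic and Boolean evaluation, both polynomial; this places consistency in \np.

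For \np-hardness I would reduce from SAT. Given a propositional formula $\aformulabis$ over variables $p_1, \ldots, p_m$, replace each occurrence of $p_i$ by the atomic guard $\avariable_i \geq 1$, producing a guard $\aformula \in \guards(\counters_m)$ of size linear in that of $\aformulabis$. As each $\avariable_i$ ranges over $\Nat$, the atom $\avariable_i \geq 1$ can independently be forced true (taking $\avariable_i = 1$) or false (taking $\avariable_i = 0$), so $\aformula$ is satisfiable over $\Nat^m$ iff $\aformulabis$ is propositionally satisfiable. Pairing $\aformula$ with any fixed $\omega$-regular expression having exactly $m = k-1$ starred loops (for instance with single-letter segments and loops over a one-letter alphabet) gives, in polynomial time, a constrained path schema $\acps$ with $\alang(\acps) \neq \emptyset$ iff $\aformulabis$ is satisfiable.

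There is no serious obstacle here: the whole argument hinges on the small-solution bound of Theorem~\ref{theorem-pottier91}, and the only points needing care are that the guessed witness and the translated guard both remain polynomial in size, which hold immediately. The single mild subtlety is confirming that the first component of $\acps$ plays no role in consistency, which is exactly what the equivalence of the first paragraph records.
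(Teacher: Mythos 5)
Your proof is correct and follows essentially the same route as the paper: consistency reduces to satisfiability of the guard, the \np\ upper bound comes from the small-solution bound of Theorem~\ref{theorem-pottier91}(II) (guess a witness in $\interval{0}{2^{\mathtt{pol}_3(N)}}^{k-1}$ and verify), and hardness is by reduction from SAT. The paper leaves the SAT reduction implicit ("for instance"), whereas you spell out a concrete encoding via atoms $\avariable_i \geq 1$, but this is the intended argument.
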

\else
Consequently, the consistency problem is \np-complete (the hardness being obtained by reducing SAT). 
\fi 
\iftechreport
The \defstyle{intersection non-emptiness problem}, defined below and clearly related to model-checking problem,
is defined as follows:
\prob{A constrained path schema $\acps$ and a 
specification $\aspecification \in \mathcal{L}$}{Is $\alang(\acps) \cap \alang(\aspecification) \neq \emptyset$ ? }
\else
The \defstyle{intersection non-emptiness problem}, clearly related to model-checking problem,
takes as  input a constrained path schema $\acps$ and a specification $\aspecification \in \mathcal{L}$
and asks whether $\alang(\acps) \cap \alang(\aspecification) \neq \emptyset$. 
\fi 
\cut{
The main problem of interest for us since as we shall see it is strongly connected with the model-checking problem is the
 \defstyle{intersection non-emptiness problem} for a specification language $\mathcal{L}$ over a finite alphabet $\Sigma$:\\
\iftechreport
\prob{A constrained path schema $\acps$ and a 
specification $\aspecification \in \mathcal{L}$}{Is $\alang(\acps) \cap \alang(\aspecification) \neq \emptyset$ ? }
\else
{\bf Input:} A constrained path schema $\acps$ and a specification $\aspecification \in \mathcal{L}$\\
{\bf Output:} Is $\alang(\acps) \cap \alang(\aspecification) \neq \emptyset$?
\fi
}
Typically, for several specification languages  $\mathcal{L}$\iftechreport(first-order logic, B\"uchi automata, etc.)\fi, we
establish the existence of a computable map $\amap_{\aspeclanguage}$ (at most exponential) such that 
whenever  $\alang(\acps) \cap \alang(\aspecification) \neq \emptyset$ there is 
$\aseg_1 (\aloop_1)^{n_1} \cdots \aseg_{k-1} (\aloop_{k-1})^{n_{k-1}} \aseg_k (\aloop_k)^{\omega}$ belonging
to the intersection and for which each $n_i$ is bounded by $\amap_{\mathcal{L}}(\aspecification, \acps)$. 
\iftechreport 
A way to solve an instance of the intersection non-emptiness problem is to guess $n_1, \ldots, n_{k-1}$ bounded by 
 $\amap_{\mathcal{L}}(\aspecification, \acps)$, assuming that it is sufficient, and then to test that 
$\aseg_1 (\aloop_1)^{n_1} \cdots \aseg_{k-1} (\aloop_{k-1})^{n_{k-1}} \aseg_k (\aloop_k)^{\omega}$  indeed belongs to 
$\alang(\acps) \cap \alang(\aspecification)$.\fi 
\iftechreport
This motivates the introduction of a last decision problem over constrained path schemas namely \emph{the 
membership problem} for a specification language $\mathcal{L}$  over a finite alphabet $\Sigma$:\\
\iftechreport
\prob{A constrained path schema $\acps$, a specification $\aspecification \in \mathcal{L}$ 
       and $n_1, \ldots, n_{k-1} 
      \in \Nat$}{Is 
      $\aseg_1 (\aloop_1)^{n_1} \cdots \aseg_{k-1} (\aloop_{k-1})^{n_{k-1}} \aseg_k (\aloop_k)^{\omega}
      \in  \alang(\aspecification)$ ? }
\else
{\bf Input:} A constrained path schema $\acps$,  $\aspecification \in \mathcal{L}$ 
       and $n_1,\ldots, n_{k-1} 
      \in \Nat$\\
{\bf Output:} Is 
      $\aseg_1 (\aloop_1)^{n_1} \cdots \aseg_{k-1} (\aloop_{k-1})^{n_{k-1}} \aseg_k (\aloop_k)^{\omega}
      \in  \alang(\aspecification)$?
\fi
\else
This motivates the introduction of the \defstyle{membership problem} for $\aspeclanguage$ 
that takes as input a constrained path schema $\acps$, a specification $\aspecification \in \aspeclanguage$ 
       and $n_1,\ldots, n_{k-1} 
      \in \Nat$ and checks whether 
      $\aseg_1 (\aloop_1)^{n_1} \cdots \linebreak[0] \aseg_{k-1} (\aloop_{k-1})^{n_{k-1}} \aseg_k (\aloop_k)^{\omega}
      \in  \alang(\aspecification)$.
\fi 
Here the $n_i$'s are understood to be encoded in binary and we do not require them to satisfy
the constraint of the path schema\iftechreport (it is just irrelevant for our future developments)\fi. 
\cut{
Complexity of the membership problem with standard specification languages will be established
in the rest of the paper. For example, we know from~\cite{DemriDharSangnier12} that 
the membership problem with Past LTL
can be solved in polynomial time whereas Corollary~\ref{corollary-intersection-fo}
states that the membership problem for first-order logic is \pspace-complete.
By  contrast, the membership problem with B\"uchi automata will be shown in \ptime. 
}

\iftechreport
\subsection{Reduction From Model-Checking to Intersection Non-Emptiness}
\label{section-main-reduction}
\fi 

Since constrained path schemas are abstractions of path schemas used in~\cite{DemriDharSangnier12}, from this work we can show that runs from flat counter systems can be represented
by a finite set of constrained path schemas as stated below.


%
%
%
\begin{theorem}
\label{theorem-reduction} 
Let 
$at$ be a finite set of atomic propositions,
$ag_n$ be a finite set of atomic guards from  $\guards(\counters_n)$,
$\asys$ be a flat counter system whose atomic propositions and atomic guards are from $at \cup ag_n$ and
$\aconf_0 = \pair{\astate_0}{\avect_0}$ be an initial configuration. 
One can construct in exponential time a set $\aset$ of constrained path schemas s.t.:
\iftechreport
\begin{itemize}
\itemsep 0 cm 
\item Each constrained path schema $\acps$ in $\aset$ has an alphabet of the form 
      $\triple{at}{ag_n}{\aalphabet}$ ($\aalphabet$ may vary) and 
     $\acps$ is of polynomial size.
\item Checking  whether a constrained path schema belongs to $\aset$ can be done in polynomial time. 
\item For every run $\arun$ from $\aconf_0$, there is a constrained path schema $\acps$ in $\aset$ and $\aword \in \alang(\acps)$ such that
      $\arun \models \aword$. 
\item  For every constrained path schema $\acps$ in $\aset$ and for every $\aword \in \alang(\acps)$,  
       there is a run $\arun$ from $\aconf_0$ such that $\arun \models \aword$. 
\end{itemize}
\else
(I) Each constrained path schema $\acps$ in $\aset$ has an alphabet of the form 
      $\triple{at}{ag_n}{\aalphabet}$ ($\aalphabet$ may vary) and 
     $\acps$ is of polynomial size.
(II) Checking  whether a constrained path schema belongs to $\aset$ can be done in polynomial time. 
(III) For every run $\arun$ from $\aconf_0$, there is a constrained path schema $\acps$ in $\aset$ and $\aword \in \alang(\acps)$ such that
      $\arun \models \aword$. 
(IV) For every constrained path schema $\acps$ in $\aset$ and for every $\aword \in \alang(\acps)$,  
       there is a run $\arun$ from $\aconf_0$ such that $\arun \models \aword$. 
\fi 
\end{theorem}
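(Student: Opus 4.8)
The plan is to reduce the statement to the path-schema machinery already developed in~\cite{DemriDharSangnier12} and then to lift it from the transition alphabet $\edges$ to the constrained alphabet $\triple{at}{ag_n}{\aalphabet}$. From~\cite{DemriDharSangnier12} we know that every run of the flat counter system $\asys$ from $\aconf_0$ follows one of finitely many (at most exponentially many) \emph{minimal} path schemas over $\edges$, each of the shape $\aseg_1 (\aloop_1)^* \cdots \aseg_{k-1}(\aloop_{k-1})^*\aseg_k(\aloop_k)^{\omega}$ in which every transition occurs at most twice; moreover, for each such schema one obtains a quantifier-free arithmetical constraint on the loop-iteration counts $(n_1,\ldots,n_{k-1})$ whose solutions are exactly the tuples realized by an actual run respecting the schema, taking guards and updates into account. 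Enumerating these schemas is the outer loop of the construction: it is responsible for the exponential running time while keeping each individual object of polynomial size.

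The core of the argument is to turn each transition-level minimal path schema into a constrained path schema over $\triple{at}{ag_n}{\aalphabet}$, i.e.\ to decide, for every position of a run respecting the schema, which atomic propositions of $at$ and which atomic guards of $ag_n$ hold there. The atomic propositions are immediate: they depend only on the control state via $\alabelling$, hence are constant along each segment and each loop. The atomic guards are the source of difficulty, because their truth depends on the current counter values, which evolve linearly as a loop is iterated. The key regularity is that along a fixed loop the counter vector is an affine function of the iteration index, so any atomic guard $\sum_i \afactor_i\cdot\acounter_i \sim \aconstant$ evaluated at a fixed position inside the loop is a single linear (in)equality in that index and therefore changes truth value at most once. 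Consequently the letter emitted at each position stabilises after a data-dependent (possibly exponential) threshold, and one may split each loop $(\aloop_i)^*$ into a bounded number of copies of the same underlying cycle, each carrying a fixed valuation of the guards in $ag_n$. Since the thresholds may be large and may depend on $\avect_0$ and on the earlier loop counts, they are not unrolled but recorded through additional iteration variables, the splitting points being pinned down by extra conjuncts added to the arithmetical constraint; this is exactly the disjunction-elimination and guard-labelling transformation of~\cite{DemriDharSangnier12}, which produces from a single schema a polynomial-size family of refined schemas of polynomial length.

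Assembling the pieces, for each minimal path schema and each refinement produced above I read off the first component of the constrained path schema as the $\omega$-regular expression over $\aalphabet \subseteq \powerset{at\cup ag_n}$ obtained by labelling every position with the propositions given by $\alabelling$ and the now-fixed guard valuation, and I take as second component $\aformula(\avariable_1,\ldots,\avariable_{k'-1})$ the conjunction of the realizability constraint of~\cite{DemriDharSangnier12} with the splitting constraints. The set $\aset$ is the union of these constrained path schemas over all minimal schemas. Properties (I) and (II) then follow from the polynomial size bounds of the transformation and from the fact that membership in $\aset$ reduces to checking that the transition projection of the candidate is a minimal path schema of $\asys$ (each transition used at most twice, a verifiable syntactic condition) carrying the correct labels and constraint. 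Property (III) follows because every run respects some minimal schema and, after refinement, its sequence of emitted letters is a word of the corresponding $\alang(\acps)$; property (IV) follows from the soundness direction of the combined constraint, which guarantees that every solution of $\aformula$ is witnessed by a genuine run $\arun$ from $\aconf_0$ whose actual guard truth-values match those encoded in the letters.

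The main obstacle is precisely the splitting step of the second paragraph: keeping the encoding polynomial while the guard-flip thresholds are exponential and depend on the counts of the earlier loops. This is resolved by never materialising the thresholds and instead delegating them to fresh iteration variables constrained inside $\aformula$, relying on the small-solution property of Theorem~\ref{theorem-pottier91} only later, when these constraints are actually solved.
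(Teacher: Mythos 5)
Your proposal follows essentially the same route as the paper's own proof: enumerate the exponentially many minimal path schemas of~\cite{DemriDharSangnier12}, apply the guard-labelling/disjunction-elimination unfolding (their Theorem~14) so that satisfaction of each atomic guard becomes constant on each piece, and package each resulting schema with the realizability constraint on loop counts (their Lemma~12) to obtain the constrained path schemas of $\aset$. One minor imprecision in your justification of the splitting step: an atomic guard with $\sim$ equal to $=$ can flip truth value twice (false--true--false) along an iterated loop rather than at most once, but this only changes the constant bounding the number of splitting points and does not affect the correctness or the polynomial size bounds.
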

%
%
\iftechreport
Below, we provide the main steps of the proof, details can be found in~\cite{DemriDharSangnier12}. 

\begin{proof} (sketch)
Let us explain how to build the set $\aset$.
\begin{enumerate}
\itemsep 0 cm
\item Given a flat counter system $\asys$ and a state $\astate$ from $\aconf$, 
      there is at most an exponential number of minimal path schemas
      starting at $\astate$ in the sense of~\cite[Lemma 4]{DemriDharSangnier12}. 
      Let $\asetbis_1$ be this set of minimal path schemas.
\item For each path schema $P$ in $\asetbis_1$, there is a set of path schemas $\asetbis_{P}$  
      such that the path schemas in $\asetbis_{P}$   have no disjunctions in guards and satisfaction
      of guards can be concluded from the states, see~\cite[Theorem 14]{DemriDharSangnier12}. 
      Let $\asetbis_2$ be this set of unfolded path schemas
      and it is of cardinality at most exponential.
\item Following~\cite[Lemma 12]{DemriDharSangnier12}, every path schema from $\asetbis_2$ is equivalent 
      to a constrained path schema. The set $\aset$ is precisely the set of constrained path schemas
      obtained from all the unfolded path schemas from $\asetbis_2$.
\end{enumerate}
Completeness of the set $\aset$ is a consequence of~\cite[Lemma 12]{DemriDharSangnier12} 
and~\cite[Theorem 14(4--6)]{DemriDharSangnier12}. Satisfaction of the size constraints is
a consequence of~\cite[Lemma 12]{DemriDharSangnier12} 
and~\cite[Theorem 14(2--3)]{DemriDharSangnier12}.
\qed
\end{proof}
\fi
%
%
In order to  take advantage of Theorem~\ref{theorem-reduction}
for the verification of flat counter systems, we need to introduce an additional property: 
$\aspeclanguage$  has the 
\defstyle{nice subalphabet property} iff for all 
specifications $\aspecification \in \aspeclanguage$ over $\triple{at}{ag_n}{\aalphabet}$ and
for all constrained alphabets $\triple{at}{ag_n}{\aalphabet'}$,  one can build a specification
$\aspecification'$ over $\triple{at}{ag_n}{\aalphabet'}$ in polynomial time in the sizes of 
$\aspecification$ and
$\triple{at}{ag_n}{\aalphabet'}$ such that 
\iftechreport
$$\alang(\aspecification) \cap (\aalphabet')^{\omega} = \alang(\aspecification')$$ 
Let us justify this definition. In this paper, we will have two ways of building specification languages for flat counter systems from standard 
specification languages (like first-order logic or B\"uchi automata). The first way consists in representing each letter from a constrained alphabet $\triple{at}{ag_n}{\aalphabet}$ explicitly (i.e. letters will belong to $\powerset{at \cup ag_n}$). The second way consists in representing set of letters by a Boolean combination of atomic propositions and atomic guards. In the latter case, it is not always obvious that the specification language will have the nice subalphabet property however we
\else
$\alang(\aspecification) \cap (\aalphabet')^{\omega} = \alang(\aspecification')$.  We
\fi 
need this property to build from $A$ and a constraint path schema over $\triple{at}{ag_n}{\aalphabet'}$, the specification $A'$. This property will also be used to transform a specification over $\triple{at}{ag_n}{\aalphabet}$ into a specification over the finite alphabet $\aalphabet'$.
\cut{
We will now see how we can take advantage of Theorem~\ref{theorem-reduction} in order to propose a general algorithm for model-checking flat counter systems. 
But before that we need an additional property on the class of specification languages we will handle in this paper.  
We  say that a specification language $\aspeclanguage$  has the 
\defstyle{nice subalphabet property} iff for all specifications $\aspecification$, say  defined over $\triple{at}{ag_n}{\aalphabet}$,
for all constrained alphabets of the form $\triple{at}{ag_n}{\aalphabet'}$,  one can build a specification
$\aspecification'$ over $\triple{at}{ag_n}{\aalphabet'}$ in polynomial time in the size of $\aspecification$ and in the size of
$\triple{at}{ag_n}{\aalphabet'}$, such that 
\iftechreport
$$\alang(\aspecification) \cap (\aalphabet')^{\omega} = \alang(\aspecification')$$ 

Let us justify this definition. In this paper, we will have two ways of building specification languages for flat counter systems from standard 
specification languages (like first-order logic or B\"uchi automata). The first way consists in representing each letter from a constrained alphabet $\triple{at}{ag_n}{\aalphabet}$ explicitly (i.e. letters will belong to $\powerset{at \cup ag_n}$). The second way consists in representing set of letters by a Boolean combination of atomic propositions and atomic guards. In the latter case, it is not always obvious that the specification language will have the nice subalphabet property however we
\else
$\alang(\aspecification) \cap (\aalphabet')^{\omega} = \alang(\aspecification')$. We
\fi need this property to relate the intersection non-emptiness problem (for which letters are represented explicitely) and the model-checking problem with specification for which the letters can be represented by a Boolean combination of atomic formulae. 
}

\iftechreport
\begin{lemma} \label{lemma-nice-subalphabet-property}
For every specification language $\aspeclanguage$ among BA, ABA, $\lmc$, ETL or Past LTL, 
the language $\aspeclanguage$ has the nice subalphabet property.
\end{lemma}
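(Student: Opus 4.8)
The plan is to exhibit, for each formalism, a single ``alphabet guard'' which we conjoin either to the transition labels (for the automata-based formalisms) or to the formula itself (for the logical ones). First I would fix, for the target constrained alphabet $\triple{at}{ag_n}{\aalphabet'}$, the Boolean combination over $at \cup ag_n$
$$
\aformulater_{\aalphabet'} \egdef \bigvee_{\aletter \in \aalphabet'} \Bigl( \bigwedge_{\alpha \in \aletter} \alpha \ \wedge \bigwedge_{\alpha \in (at \cup ag_n) \setminus \aletter} \neg \alpha \Bigr),
$$
where $\alpha$ ranges over the atoms of $at \cup ag_n$. By construction a letter $\aletterbis \in \powerset{at \cup ag_n}$ satisfies $\aformulater_{\aalphabet'}$ (in the propositional sense) iff $\aletterbis \in \aalphabet'$, and $\aformulater_{\aalphabet'}$ has size $O(\card{\aalphabet'} \cdot (\card{at} + \card{ag_n}))$, hence polynomial in $\size{\triple{at}{ag_n}{\aalphabet'}}$. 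This formula is the only new ingredient, and everything then reduces to conjoining it in the right place.

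For BA and ABA, I would take $\aspecification = \triple{Q}{E}{q_0,F}$ and replace every edge $\tuple{q,\aformulabis,\ast} \in E$ by $\tuple{q,\aformulabis \wedge \aformulater_{\aalphabet'},\ast}$ (with $\ast = q'$ for BA and $\ast = \aformulabis'$ for ABA), keeping $Q$, $q_0$ and $F$ unchanged; this is clearly polynomial. On a letter $\aletterbis \in \aalphabet'$ the induced transition relation of $\bauto_{\aspecification'}$ coincides with that of $\bauto_{\aspecification}$, since $\aletterbis \models \aformulabis \wedge \aformulater_{\aalphabet'}$ iff $\aletterbis \models \aformulabis$; whereas on a letter $\aletterbis \notin \aalphabet'$ no modified edge fires (for ABA, $\transitions'(q,\aletterbis)$ becomes the empty disjunction $\false$, so no run can proceed). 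Hence $\bauto_{\aspecification'}$ is exactly the restriction of $\bauto_{\aspecification}$ to input letters in $\aalphabet'$, giving $\alang(\aspecification') = \alang(\aspecification) \cap (\aalphabet')^{\omega}$; moreover every new label implies $\aformulater_{\aalphabet'}$, so $\aspecification'$ is over $\triple{at}{ag_n}{\aalphabet'}$.

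For the logical languages $\lmc$, ETL and Past LTL, I would instead set $\aspecification' \egdef \aspecification \wedge \always \aformulater_{\aalphabet'}$, where $\aformulater_{\aalphabet'}$ is read as a formula (its connectives $\neg,\wedge$ and the derived $\vee$ are available in each grammar, and the atoms $\alpha$ are propositions or guards) and $\always$ is expressed natively: $\always \aformulabis \equiv \neg(\true \until \neg \aformulabis)$ in Past LTL, as the greatest-fixed-point formula $\nu \avar.(\aformulabis \wedge \mynext \avar)$ (obtained from $\mu$ and negation by duality) in $\lmc$, and via a two-state automaton connective in ETL. Since satisfaction is evaluated at position $0$ of an infinite word, $\arun,0 \models \always \aformulater_{\aalphabet'}$ holds iff every position carries a letter of $\aalphabet'$, i.e. iff the word lies in $(\aalphabet')^{\omega}$, and on such words the truth value of $\aspecification$ is unaffected; thus $\alang(\aspecification') = \alang(\aspecification) \cap (\aalphabet')^{\omega}$, with only a constant-size temporal prefix added to a polynomial-size subformula.

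The step that needs the most care is the \textbf{ABA case}, where one must verify that a letter outside $\aalphabet'$ really blocks acceptance because its transition obligation collapses to $\false$, and hence no accepting run tree survives; the remaining verifications (correctness of $\aformulater_{\aalphabet'}$, the polynomial size bound, and the unchanged value of $\aspecification$ on $(\aalphabet')^{\omega}$) are routine once $\aformulater_{\aalphabet'}$ is in place.
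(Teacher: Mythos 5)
Your proof is correct. For BA and ABA it coincides with the paper's own argument: the paper conjoins $\bigvee_{\aletter \in \aalphabet'} \aformulater_{\aletter}$ (exactly your $\aformulater_{\aalphabet'}$) to every edge label, and your extra remark that for ABA a letter outside $\aalphabet'$ collapses $\transitions'(q,\aletterbis)$ to the empty disjunction is the right point to check. For the logical languages you diverge: the paper does not add a temporal guard but instead \emph{substitutes} every atomic formula $\aformulabis$ by $\bigvee_{\set{\aletter \in \aalphabet' \mid \aformulabis \in \aletter}} \aformulater_{\aletter}$, so that each atom becomes ``the current letter is in $\aalphabet'$ and contains $\aformulabis$''. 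The trade-off is this. The paper's substitution is purely atomic-level, so it needs no assumption that $\always$ (hence a greatest fixed point, or an extra automaton connective) is expressible; but read literally it only guarantees $\alang(\aformula') \cap (\aalphabet')^{\omega} = \alang(\aformula) \cap (\aalphabet')^{\omega}$ --- under a negation, a word with a letter outside $\aalphabet'$ can still satisfy the substituted formula --- which suffices for the intended use (the path schemas intersected with $\aspecification'$ only produce words in $(\aalphabet')^{\omega}$) but does not by itself make $\aspecification'$ a specification whose language sits inside $(\aalphabet')^{\omega}$. Your conjunct $\always \aformulater_{\aalphabet'}$ pins the language inside $(\aalphabet')^{\omega}$ exactly as the stated equality demands, at the modest cost of having to express $\always$ in each formalism ($\neg(\true \until \neg \cdot)$ in Past LTL, $\nu\avar.(\cdot \wedge \mynext\avar)$ obtained by duality in $\lmc$, an automaton connective in ETL), all of which is available and keeps the construction polynomial. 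Both routes are sound; yours is slightly cleaner on the semantic side, the paper's slightly more economical on the syntactic side.
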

\fi
\ifconference
\begin{lemma} \label{lemma-nice-subalphabet-property}
BA, ABA, $\lmc$, ETL, Past LTL have the nice subalphabet property.
\end{lemma}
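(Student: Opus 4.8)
The plan is to prove the nice subalphabet property uniformly by observing that in all five formalisms, a specification is built over atoms from $at \cup ag_n$, and restricting the accepted language to words over a subalphabet $\aalphabet' \subseteq \aalphabet \subseteq \powerset{at \cup ag_n}$ amounts to adding, at every position of the word, the side constraint ``the current letter lies in $\aalphabet'$.'' The key point is that membership of a letter $\aletter$ in $\aalphabet'$ is itself expressible as a Boolean combination over $at \cup ag_n$: indeed, $\aletter \in \aalphabet'$ iff $\bigvee_{\aletterbis \in \aalphabet'} \bigl( \bigwedge_{\avarprop \in \aletterbis} \avarprop \wedge \bigwedge_{\avarprop \in (at \cup ag_n) \setminus \aletterbis} \neg \avarprop \bigr)$, a formula whose size is polynomial in $\size{\triple{at}{ag_n}{\aalphabet'}}$. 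Call this Boolean formula $\chi_{\aalphabet'}$.

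First I would treat the automata-based formalisms BA and ABA together. Since transitions are already labelled by Boolean combinations $\aformulabis \in \Bool(at \cup ag_n)$, I simply replace each transition label $\aformulabis$ by $\aformulabis \wedge \chi_{\aalphabet'}$, leaving the state set, initial state, accepting set, and (for ABA) the positive Boolean combinations in $\Boolplus(Q)$ untouched. This guarantees that a transition is enabled on letter $\aletter$ only when $\aletter \in \aalphabet'$, so the new automaton accepts exactly $\alang(\aspecification) \cap (\aalphabet')^{\omega}$, and its size grows by at most $\card{E} \cdot \size{\chi_{\aalphabet'}}$, hence polynomially. Next, for the temporal/fixed-point logics $\lmc$, ETL, and Past LTL, I would relativise using the always-operator: replace $\aformula$ by $\aformula \wedge \always \, \chi_{\aalphabet'}$, where $\always$ denotes ``at every position'' and is itself definable in each of these logics (in $\lmc$ as a greatest fixed point $\nu \avar.(\chi_{\aalphabet'} \wedge \mynext \avar)$, in ETL by a two-state safety automaton, and in Past LTL via $\neg(\true \until \neg \chi_{\aalphabet'})$). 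Since $\chi_{\aalphabet'}$ uses only atoms already available and $\always$ contributes a constant number of additional subformulae, the resulting specification has size polynomial in $\size{\aformula}$ and $\size{\triple{at}{ag_n}{\aalphabet'}}$, and it accepts precisely those models of $\aformula$ whose every letter lies in $\aalphabet'$, i.e. $\alang(\aformula) \cap (\aalphabet')^{\omega}$.

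The only genuinely delicate point, and the one I would argue carefully, is the semantic matching between the propositional formula $\chi_{\aalphabet'}$ evaluated on a \emph{letter} $\aletter \in \powerset{at \cup ag_n}$ and the constrained-alphabet satisfaction relation $\arun \models \aword$: one must check that the convention of the paper, where an atom in $(at \cup ag_n) \setminus \aletter$ is interpreted as false, makes $\aletter \models \chi_{\aalphabet'}$ equivalent to $\aletter \in \aalphabet'$, and that this is compatible with how guards and propositions are read along a run. This is a routine but necessary verification; once it is in place, the equality $\alang(\aspecification) \cap (\aalphabet')^{\omega} = \alang(\aspecification')$ follows immediately in each case, and the polynomial size bound is clear from the explicit constructions above, completing the proof for all five formalisms simultaneously.
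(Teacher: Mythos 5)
Your treatment of BA and ABA coincides with the paper's: the paper conjoins to every transition label $\aformulabis$ the formula $\bigl(\bigvee_{\aletter \in \aalphabet'} \aformulater_{\aletter}\bigr)$, where $\aformulater_{\aletter}$ is the conjunction of positive literals from $\aletter$ and negative literals from $(at \cup ag_n) \setminus \aletter$ --- your $\chi_{\aalphabet'}$ is exactly that disjunction, so this half is the same construction. For the logics ($\lmc$, ETL, Past LTL) you take a genuinely different route: the paper performs a \emph{per-atom substitution}, replacing every atomic formula $\aformulabis$ by $\bigvee_{\set{\aletter \in \aalphabet' \mid \aformulabis \in \aletter}} \aformulater_{\aletter}$, whereas you leave $\aformula$ untouched and conjoin a single global safety constraint $\always\, \chi_{\aalphabet'}$. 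Both are correct and polynomial-size, but they buy slightly different things. Your relativisation makes the equality $\alang(\aformula') = \alang(\aformula) \cap (\aalphabet')^{\omega}$ hold literally as languages over the full alphabet $\powerset{at \cup ag_n}$, with no case analysis: a word outside $(\aalphabet')^{\omega}$ is rejected by the extra conjunct, and on $(\aalphabet')^{\omega}$ the conjunct is vacuous. The paper's substitution, by contrast, forces every substituted atom to be false at a position whose letter lies outside $\aalphabet'$, so under negation the substituted formula can still accept such words; the equality there is really meant relative to $(\aalphabet')^{\omega}$, a point your version sidesteps. The price you pay is having to express $\always$ inside each formalism, which you do correctly ($\nu \avar.(\chi_{\aalphabet'} \wedge \mynext \avar)$ is definable from $\mu$ by dualisation in $\lmc$, a two-state automaton connective suffices for ETL, and $\neg(\true \until \neg\chi_{\aalphabet'})$ works for Past LTL); the semantic matching between $\aletter \models \chi_{\aalphabet'}$ and $\aletter \in \aalphabet'$ that you flag as the delicate point is indeed the same convention the paper uses to define $\aformulater_{\aletter}$, so it goes through. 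No gap.
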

\fi
\iftechreport
\begin{proof} 
Let $\aspecification = \triple{Q}{E}{q_0,F}$ be a specification in BA. 
over the alphabet $\triple{at}{ag_n}{\aalphabet}$ and
$\aalphabet' \subseteq \aalphabet$. The specification $\aspecification' =  \triple{Q}{E'}{q_0,F}$
such that $\alang(\aspecification') = \alang(\aspecification) \cap (\aalphabet')^{\omega}$
is defined as follows: for every $q \step{\aformulabis} q' \in E$, we include in
$E'$ the edge 
$q \step{(\bigvee_{\aletter \in \aalphabet'} \ \aformulabis_{\aletter}) \wedge \aformulabis} q'$
where  $\aformulater_{\aletter}$ is defined as a conjunction made of positive
literals from $\aletter$ and negative literals from $(at \cup ag_n) \setminus \aletter$. 
A similar transformation can be performed with specifications in ABA.

Let $\aformula$ be a formula for $\aspeclanguage$ among linear $\mu$-calculus, ETL or Past LTL
built over atomic formulae in $at \cup ag_n$ and  $\triple{at}{ag_n}{\aalphabet'}$ be a constrained alphabet.
The formulae $\aformula'$ such that $\alang(\aformula') = \alang(\aformula) \cap (\aalphabet')^{\omega}$
is obtained from $\aformula$ by replacing every atomic formula $\aformulabis$ by
$\bigvee_{\set{\aletter \in \aalphabet' \mid \aformulabis \in \aletter}} \aformulater_{\aletter}$.
\qed
\end{proof}
\fi
\iftechreport
\begin{algorithm}
{\footnotesize
\caption{Solving   $\mcpb{\aspeclanguage}{\flatcs}$}
\label{algorithm:abstact}
\begin{algorithmic}[1]
\REQUIRE A flat counter system $\asys$, a configuration $\aconf_0$ and a specification $A$ from $\mathcal{L}$
\STATE Guess $\acps$ in $\aset$ with an alphabet of the form 
      $\triple{at}{ag_n}{\aalphabet'}$ [see Theorem \ref{theorem-reduction}]
\STATE Build $\aspecification'$ such that $\alang(\aspecification) \cap (\aalphabet')^{\omega} = \alang(\aspecification')$
\STATE Return $\alang(\acps)\cap\alang(\aspecification') \neq \emptyset$
\end{algorithmic}
}
\end{algorithm}
Abstract algorithm~\ref{algorithm:abstact} 
is the one used to solve $\mcpb{\aspeclanguage}{\flatcs}$ when $\aspeclanguage$ has the nice
subalphabet property. As one can notice it take fully advantage of Theorem \ref{theorem-reduction} in the first guess of a constrained path schema and of the nice subalphabet property of the specification language to build the specification $A'$. 
\else
The abstract Algorithm 1 which performs the following steps (1) to (3) takes as input  $\asys$, 
a configuration $\aconf_0$ and $\aspecification \in \aspeclanguage$ and solves  $\mcpb{\aspeclanguage}{\flatcs}$:
(1) Guess $\acps$ over  $\triple{at}{ag_n}{\aalphabet'}$ in $\aset$;
(2) Build $\aspecification'$ such that $\alang(\aspecification) \cap (\aalphabet')^{\omega} = \alang(\aspecification')$;
(3) Return $\alang(\acps)\cap\alang(\aspecification') \neq \emptyset$.
\fi 
\ifconference 
\else
We now show that this Algorithm is correct.
\begin{lemma}
Algorithm \ref{algorithm:abstact} on inputs $\asys$, $\aconf_0$ and $A$ has an accepting run iff there exists a run $\arun$ of $\asys$ starting at $\aconf_0$ such that $\arun \models A$.
\end{lemma}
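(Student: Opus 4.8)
**The plan is to prove correctness by establishing the two directions of the "iff" separately, using Theorem~\ref{theorem-reduction} for the run-to-word correspondence and the nice subalphabet property for the specification translation.**

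First I would unfold what "accepting run of Algorithm~\ref{algorithm:abstact}" means: the nondeterministic algorithm accepts iff there is \emph{some} guessed $\acps \in \aset$ (over an alphabet $\triple{at}{ag_n}{\aalphabet'}$) together with a specification $\aspecification'$ built in Step~2 satisfying $\alang(\aspecification') = \alang(\aspecification) \cap (\aalphabet')^{\omega}$, such that Step~3 returns that $\alang(\acps) \cap \alang(\aspecification') \neq \emptyset$. So the algorithm accepts iff there exist $\acps \in \aset$ and $\aword \in \alang(\acps) \cap \alang(\aspecification')$.

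\medskip

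\noindent\textbf{($\Leftarrow$)} Suppose there is a run $\arun$ from $\aconf_0$ with $\arun \models \aspecification$, i.e. there is $\aword \in \alang(\aspecification)$ with $\arun \models \aword$. By Theorem~\ref{theorem-reduction}(III), there is $\acps \in \aset$, over some $\triple{at}{ag_n}{\aalphabet'}$, and $\aword' \in \alang(\acps)$ with $\arun \models \aword'$. The key observation is that the word a run satisfies is uniquely determined: for each position $i$, the letter is forced to be exactly the set of propositions in $\alabelling(\astate_i)$ together with the guards satisfied by $\avect_i$ (this is the content of the satisfaction relation $\arun \models \cdot$ over a constrained alphabet, where membership and non-membership are both constrained). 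Hence $\aword = \aword'$, so $\aword \in \alang(\acps)$. Since $\aword \in \alang(\acps) \subseteq (\aalphabet')^{\omega}$ and $\aword \in \alang(\aspecification)$, we get $\aword \in \alang(\aspecification) \cap (\aalphabet')^{\omega} = \alang(\aspecification')$ by the nice subalphabet property. Thus $\aword \in \alang(\acps) \cap \alang(\aspecification')$, the intersection is non-empty, and the algorithm has an accepting run with this guess.

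\medskip

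\noindent\textbf{($\Rightarrow$)} Conversely, suppose the algorithm accepts via some $\acps \in \aset$ over $\triple{at}{ag_n}{\aalphabet'}$, so there is $\aword \in \alang(\acps) \cap \alang(\aspecification')$. By Theorem~\ref{theorem-reduction}(IV), from $\aword \in \alang(\acps)$ we obtain a run $\arun$ from $\aconf_0$ with $\arun \models \aword$. Since $\aword \in \alang(\aspecification') = \alang(\aspecification) \cap (\aalphabet')^{\omega} \subseteq \alang(\aspecification)$, we have $\aword \in \alang(\aspecification)$, and therefore $\arun \models \aspecification$ by definition. This yields the desired run.

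\medskip

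\noindent The main obstacle, and the step I would spell out most carefully, is the uniqueness argument in the ($\Leftarrow$) direction: one must check that $\arun \models \aword$ and $\arun \models \aword'$ force $\aword = \aword'$, which relies on the fact that the constrained-alphabet satisfaction relation pins down \emph{both} presence and absence of each proposition and each guard at every position. Everything else is a routine chaining of the four clauses of Theorem~\ref{theorem-reduction} with the defining equation $\alang(\aspecification') = \alang(\aspecification) \cap (\aalphabet')^{\omega}$ guaranteed by Lemma~\ref{lemma-nice-subalphabet-property}.
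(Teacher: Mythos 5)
Your proof is correct and follows essentially the same route as the paper's: forward direction via Theorem~\ref{theorem-reduction}(III) plus the nice subalphabet property, converse via Theorem~\ref{theorem-reduction}(IV). The only difference is that you explicitly justify the step ``$\arun \models \aword'$ and $\arun \models \aword$ imply $\aword = \aword'$, hence $\aword' \in \alang(\aspecification)$'' by noting that the constrained-alphabet satisfaction relation determines the letter at each position uniquely; the paper simply writes ``consequently we deduce that $\aword \in \alang(A)$'' and leaves this implicit, so your added care is a genuine (minor) improvement rather than a deviation.
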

\begin{proof}
First assume there exists a run $\arun$ of $\asys$ starting at $\aconf_0$ such that $\arun \models A$. By Theorem \ref{theorem-reduction},  there is a constrained path schema $\acps$ with an alphabet of the form 
      $\triple{at}{ag_n}{\aalphabet'}$ in $\aset$ and $\aword \in \alang(\acps)$ such that $\arun \models \aword$. Consequently we deduce that  $\aword \in \alang(A)$ and that $\alang(\acps) \cap \alang(A) \neq \emptyset$. Since $\alang(\acps) \subseteq (\aalphabet')^{\omega}$ and since $\alang(\aspecification) \cap (\aalphabet')^{\omega} = \alang(\aspecification')$, we deduce that $\alang(\acps)\cap\alang(\aspecification') \neq \emptyset$. Hence the Algorithm has an accepting run. 

Now if the Algorithm~\ref{algorithm:abstact} 
has an accepting run, we deduce that there exists a constrained path schema $\acps$ with an alphabet of the form $\triple{at}{ag_n}{\aalphabet'}$ in $\aset$ such that there exists a word $\aword$ in $\alang(\acps) \cap \alang(A')$. Using the nice subalphabet property we deduce that $\aword \in \alang(A)$ and by the last point of Theorem \ref{theorem-reduction}, we know that there exists a run $\arun$ from $\asys$ starting at $\aconf_0$ such that $\arun \models \aword$. This allows us to conclude that $\arun \models A$. \qed
\end{proof}
\fi
Thanks to Theorem~\ref{theorem-reduction},  
the first guess 
\iftechreport in Algorithm~\ref{algorithm:abstact}
\else 
\fi  can be performed in polynomial time and with the nice subalphabet property,  we can build $\aspecification'$ 
in polynomial time too. 
\iftechreport
This together with the previous lemma allows us to conclude the following lemma which make the link between the model-checking problem and the intersection non-emptiness problem for specification languages having the nice subalphabet property.
\else This allows us to conclude the following lemma which is a consequence of the correctness of the above algorithm (Appendix~\ref{section-proof-correctness-algo1}).
%
\fi 
\begin{lemma}
\label{lemma-complexity-subalphabet}
If $\mathcal{L}$ has the nice subalphabet property and its intersection non-emptiness problem is in \np [resp. \pspace], then
 $\mcpb{\mathcal{L}}{\flatcs}$ is in \np [resp. \pspace]
\end{lemma}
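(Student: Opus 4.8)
The plan is to show that the abstract algorithm described above --- which (1) guesses a constrained path schema $\acps$ over some $\triple{at}{ag_n}{\aalphabet'}$ in $\aset$, (2) builds a specification $\aspecification'$ with $\alang(\aspecification') = \alang(\aspecification) \cap (\aalphabet')^{\omega}$, and (3) tests whether $\alang(\acps) \cap \alang(\aspecification') \neq \emptyset$ --- runs within the claimed resource bound. Its soundness and completeness are exactly the correctness statement already established for it, which relies only on Theorem~\ref{theorem-reduction} and the nice subalphabet property; hence it suffices to account for the complexity of each step under the two hypotheses.

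First I would bound step~(1). By point~(I) of Theorem~\ref{theorem-reduction} every $\acps \in \aset$ has size polynomial in the input, and by point~(II) membership of a candidate in $\aset$ is decidable in deterministic polynomial time. Thus a single schema can be produced with polynomially many nondeterministic bits and then validated in polynomial time; the decisive observation is that, although $\aset$ is of exponential cardinality, the algorithm never enumerates it but merely guesses one element of polynomial description length.

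Next I would bound steps~(2) and~(3). Because $\mathcal{L}$ has the nice subalphabet property, $\aspecification'$ is built from $\aspecification$ and the alphabet $\triple{at}{ag_n}{\aalphabet'}$ of the guessed schema in deterministic polynomial time, so $\aspecification'$ is itself of polynomial size and the instance $(\acps, \aspecification')$ passed to step~(3) is a polynomial-size instance of the intersection non-emptiness problem. By hypothesis this instance is then decided in \np\ (resp.\ \pspace).

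It remains to assemble the pieces, which is where the only real care is needed --- keeping every intermediate object of polynomial size despite the exponential size of $\aset$, a point guaranteed precisely by the polynomial size and polynomial-time recognizability of individual schemas in Theorem~\ref{theorem-reduction}. In the \np\ case the polynomially many bits of step~(1) and the nondeterministic certificate for the \np\ test of step~(3) together form a single polynomial-size certificate checkable in polynomial time, so $\mcpb{\mathcal{L}}{\flatcs} \in \np$. In the \pspace\ case the whole procedure is nondeterministic but uses only polynomial work space (the guess, its validation, the construction of $\aspecification'$, and the \pspace\ subroutine each run in polynomial space), hence it is an \npspace\ procedure, and Savitch's theorem ($\npspace = \pspace$) yields $\mcpb{\mathcal{L}}{\flatcs} \in \pspace$.
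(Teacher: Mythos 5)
Your proposal is correct and follows essentially the same route as the paper: the paper also reduces the lemma to the correctness of the abstract three-step algorithm and then observes that the guess of a constrained path schema is polynomially bounded and checkable by Theorem~\ref{theorem-reduction}(I)--(II), that $\aspecification'$ is built in polynomial time by the nice subalphabet property, and that the final test is the assumed \np\ or \pspace\ intersection non-emptiness check. Your explicit appeal to Savitch's theorem in the \pspace\ case merely makes precise what the paper leaves implicit.
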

We know that the membership problem for Past LTL 
is in \ptime  \
and the intersection non-emptiness problem is in \np~
(as a consequence of~\cite[Theorem 3]{DemriDharSangnier12}). 
By Lemma~\ref{lemma-complexity-subalphabet}, we are able to conclude the main result
from~\cite{DemriDharSangnier12}: $\mcpb{{\rm Past LTL}}{\flatcs}$ is in \np. 
This is not surprising at all since in this paper we present a general 
method for different specification languages that rests on Theorem~\ref{theorem-reduction} 
(a consequence of technical developments from~\cite{DemriDharSangnier12}). 
\section{Taming First-Order Logic and Flat Counter Systems}

In this section, we consider first-order logic as a specification language. 
\ifconference
By Kamp's  Theorem,
\else Using Kamp's  Theorem~\cite{Kamp68}, 
\fi 
first-order logic 
has the same expressive power as Past LTL and hence model-checking first-order logic over flat counter systems is decidable 
too~\cite{DemriDharSangnier12}. However this does not provide us an optimal upper bound for the
 model-checking problem. In fact, it is known that
the satisfiability problem for first-order logic formulae is 
non-elementary  and consequently the 
translation into Past LTL  leads to a significant blow-up in the size of the formula.

%
%
\subsection{First-Order Logic in a Nutshell}
\label{section-FO-definition} 

For defining first-order logic formulae, we consider a countably infinite set of variables $\varset$ 
and a finite (unconstrained) alphabet $\aalphabet$. 
The syntax of \defstyle{first-order logic} over
atomic propositions FO$_{\aalphabet}$ is then given by the following grammar:
\iftechreport
$$
\begin{array}{lcl}
\aformula & ::= &  
\aletter(\avar)~\mid~S(\avar,\avar')~\mid~ \avar < \avar' \mid~ \avar = \avar' \mid~  \neg \aformula~ \mid~ 
\aformula \wedge \aformula'~ \mid~ 
\exists\avar~\aformula(\avar) 
\end{array}
$$
\else
$
\begin{array}{lcl}
\aformula & ::= &  
\aletter(\avar)~\mid~S(\avar,\avar')~\mid~ \avar < \avar' \mid~ \avar = \avar' \mid~  \neg \aformula~ \mid~ 
\aformula \wedge \aformula'~ \mid~ 
\exists\avar~\aformula(\avar) 
\end{array}
$
\fi 
where 
$\aletter \in \aalphabet$ 
and $\avar,\avar' \in \varset$. For a formula $\aformula$, we will denote by $free(\aformula)$ its set of free variables defined as usual.
%
%
A formula with no free variable
is called a \defstyle{sentence}. As usual, we define the \defstyle{quantifier height} $\qheight{\aformula}$ of a formula $\aformula$ as
the maximum nesting depth of the operators $\exists$ in $\aformula$.
Models for FO$_{\aalphabet}$  are  $\omega$-words over the alphabet 
$\aalphabet$ 
and variables are interpreted by positions in the word. A \defstyle{position assignment} is a 
partial function $\amap:\varset\rightarrow\nat$.
Given a model $\aword \in \aalphabet^{\omega}$, 
a FO$_{\aalphabet}$ formula $\aformula$ and a position assignment $\amap$ such that 
$\amap(\avar)\in\nat$ for every variable $\avar\in free(\aformula)$, the
satisfaction relation $\models_{\amap}$ is defined as \ifconference usual. \else follows:
$$
\begin{array}{lcl}
\aword\models_{\amap} \aletter(\avar) & \mbox{iff} & \aletter = \aword(\amap(\avar))\\
\aword\models_{\amap} S(\avar,\avar') & \mbox{iff} & \amap(\avar')= \amap(\avar)+1\\
\aword\models_{\amap} \avar<\avar' & \mbox{iff} & \amap(\avar) < \amap(\avar')\\
\aword\models_{\amap} \avar=\avar'& \mbox{iff} & \amap (\avar) = \amap(\avar') \\
\aword\models_{\amap} \neg \aformula & \mbox{iff} & \aword \nvDash_{\amap} \aformula\\
\aword\models_{\amap} \aformula \wedge \aformula' & \mbox{iff} & \aword\models_{\amap} \aformula\mbox{ and }\aword \models_{\amap} \aformula'\\
\aword\models_{\amap} \exists\avar~\aformula(\avar) & \mbox{iff} &\mbox{ there exists }j \in\nat\mbox{ such that } \aword 
\models_{\amap[\avar\rightarrow j]} \aformula(\avar)\\
\end{array}
$$ 
\fi
Given a  FO$_{\aalphabet}$ sentence $\aformula$, we write $\aword \models \aformula$ when $\aword\models_{\amap} \aformula$ for an arbitrary position 
assignment $\amap$. 
The language  of $\omega$-words $\aword$ over 
$\aalphabet$ associated to a sentence $\aformula$ is then $\languageof{\aformula}= 
\{\aword\in \aalphabet^{\omega}~\mid~\aword\models\aformula\}$. For $n \in \nat$, we  define the equivalence relation 
$\equivrel{n}$ between $\omega$-words over 
$\aalphabet$
as:
 $\astruct\equivrel{n}\bstruct$ when for every sentence
$\aformula$ with $\qheight{\aformula}\leq n$, $\astruct\models\aformula$ iff $\bstruct\models\aformula$.

\paragraph{FO on CS.}
FO formulae interpreted over infinite runs of counter systems are defined as FO formulae over a finite alphabet
except that atomic formulae of the form $\aletter(\avar)$ are replaced by atomic formulae of the form 
$\avarprop(\avar)$  or $\aguard(\avar)$ where $\avarprop$ is an atomic formula or $\aguard$ 
is an atomic guard from $\guards(\counters_n)$. Hence, a formula $\aformula$ built over atomic formulae
from a finite set $at$ of atomic propositions and from a finite set $ag_n$ of atomic guards
from $\guards(\counters_n)$ defines a specification for the constrained
alphabet $\triple{at}{at_n}{\powerset{at \cup ag_n}}$. Note that the alphabet can be of exponential size
in the size of $\aformula$ and $\avarprop(\avar)$ actually corresponds to a disjunction 
$\bigvee_{\avarprop \in \aletter} \aletter(\avar)$.
\begin{lemma}
\label{lemma-fo-subalphabet-property}
FO  has the nice subalphabet property.
\end{lemma}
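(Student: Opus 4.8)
The plan is to exploit the fact that, for first-order formulae interpreted over runs, the truth value of an atomic subformula at a position depends only on which atoms of $at \cup ag_n$ belong to the letter sitting at that position, and not on the ambient alphabet. Since any constrained alphabet $\triple{at}{ag_n}{\aalphabet'}$ has $\aalphabet' \subseteq \powerset{at \cup ag_n}$, and an $\mathrm{FO}$ specification $\aformula$ is by definition over the canonical constrained alphabet $\triple{at}{ag_n}{\powerset{at \cup ag_n}}$, we always have $\aalphabet' \subseteq \aalphabet$. I would therefore first take $\aformula'$ to be literally $\aformula$, now read as a specification over $\triple{at}{ag_n}{\aalphabet'}$. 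Its language is $\set{\aword \in (\aalphabet')^{\omega} : \aword \models \aformula}$, and because $(\aalphabet')^{\omega} \subseteq \aalphabet^{\omega}$ and satisfaction is pointwise, this is exactly $\alang(\aformula) \cap (\aalphabet')^{\omega}$. This construction is trivially polynomial.

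Second, I would record the shape of $\aformula'$ actually needed downstream, where the target alphabet $\aalphabet'$ is treated as an unconstrained finite alphabet whose letters are atomic symbols; there the atoms are the letter predicates $\aletter(\avar)$ of $\mathrm{FO}_{\aalphabet'}$ rather than the membership predicates $\avarprop(\avar), \aguard(\avar)$ of $\mathrm{FO}$ on counter systems. To bridge this, I would replace, in $\aformula$, every atomic subformula $\aformulabis(\avar)$ with $\aformulabis \in at \cup ag_n$ by the disjunction $\bigvee_{\aletter \in \aalphabet',\ \aformulabis \in \aletter} \aletter(\avar)$, which is the exact converse of the identity $\avarprop(\avar) \equiv \bigvee_{\avarprop \in \aletter} \aletter(\avar)$ already noted, leaving the Boolean connectives and quantifiers untouched. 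Since each atom expands to a disjunction of at most $\card{\aalphabet'}$ letter predicates, we get $\size{\aformula'} = O(\size{\aformula} \cdot \card{\aalphabet'})$ and $\qheight{\aformula'} = \qheight{\aformula}$, so the construction is polynomial and preserves quantifier height, the latter being important for the Ehrenfeucht--Fra\"iss\'e arguments used later.

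The correctness is established by a straightforward structural induction on $\aformula$: for any $\aword \in (\aalphabet')^{\omega}$ and position assignment $\amap$, I claim $\aword \models_{\amap} \aformula'$ iff $\aword \models_{\amap} \aformula$. The only non-trivial case is the base case: for $\aword(\amap(\avar)) \in \aalphabet'$, the disjunction $\bigvee_{\aletter \in \aalphabet',\ \aformulabis \in \aletter} \aletter(\avar)$ holds iff the letter at position $\amap(\avar)$ equals some $\aalphabet'$-letter containing $\aformulabis$, i.e. iff $\aformulabis \in \aword(\amap(\avar))$, which is precisely the meaning of the membership atom $\aformulabis(\avar)$; the Boolean and quantifier cases propagate the equivalence verbatim since the range of positions is unchanged. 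Specialising to sentences gives $\alang(\aformula') = \alang(\aformula) \cap (\aalphabet')^{\omega}$. The only real subtlety, and hence the main obstacle, is the base-case bookkeeping: one must use that the letters of $\aword$ already lie in $\aalphabet'$ to collapse the disjunction of letter predicates down to the single membership test, and must keep the two readings of $\mathrm{FO}$ (membership predicates over a constrained alphabet versus letter predicates over a finite alphabet) carefully apart so that the final language equality is stated over the intended alphabet.
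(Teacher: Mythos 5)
Your proposal is correct and follows essentially the same route as the paper: the paper's proof also replaces each atomic predicate $\avarprop(\avar)$ (resp.\ $\aguard(\avar)$) by the disjunction $\bigvee_{\set{\aletter \in \aalphabet' \mid \avarprop \in \aletter}} \aletter(\avar)$ and concludes the language equality by construction. You merely spell out the structural induction and the size/quantifier-height bookkeeping that the paper leaves implicit.
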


\iftechreport
\begin{proof}
Consider a FO formula $\aformula$ that defines a specification
over the  constrained alphabet $\triple{at}{ag_n}{\aalphabet}$ with $\aalphabet = \powerset{at \cup ag_n}$.
Consider a subalphabet $\aalphabet'\subseteq\aalphabet$.
Let $\aformula''$ be the formula obtained from $\aformula$
by replacing  every occurrence of  $\avarprop(\avar)$ by 
      $\bigvee_{\set{\aletter \in \aalphabet' \mid \avarprop \in \aletter}} \aletter(\avar)$
and  every occurrence of  $\aguard(\avar)$ is replaced by 
      $\bigvee_{\set{\aletter \in \aalphabet' \mid \aguard \in \aletter }} \aletter(\avar)$.
It is easy to see that, by construction, $\alang(\aformula'')= \alang(\aformula')\cap(\aalphabet)^{\omega}$
\qed
\end{proof}

\fi
We have taken time to properly define first-order logic
for counter systems (whose models are runs of counter systems, see also Section~\ref{section-model-checking-problem})
but below, we will mainly operate with  FO$_{\aalphabet}$ over a standard (unconstrained) alphabet.

\iftechreport  

\subsection{Ehrenfeucht-Fra\"iss\'e Games}
\label{section-definition-efgames}
Ehrenfeucht-Fra\"iss\'e (EF) game is a well known technique to determine whether two structures are equivalent with respect to a set of formulae. We recall here the definition of a EF game adapted to our context. Given $N \in \nat$ and two $\omega$-words $\aword,\aword'$ over 
$\aalphabet$, 
the main idea of the corresponding EF game is that two players, the Spoiler and the Duplicator, plays in a turn based manner. 
The Spoiler begins by choosing a word between $\aword$ and $\aword'$ and a position in this word, then the Duplicator aims at finding a position in the other word which is \emph{similar} and this during $N$ rounds. At the end, the Duplicator wins if the set of chosen positions respects some 
isomorphism. We now move to the formal definition of such a game.

Let $\aword$ and $\aword'$ be two $\omega$-words over 
$\aalphabet$. 
We define a play as a finite sequence of triples 
$(p_1,a_1,b_1)(p_2,a_2,b_2)\cdots(p_i,a_i,b_i)$ in $(\{0,1\}\times\nat^2)^{*}$ where for each triple the first element describes 
which word has been chosen by the Spoiler ($0$ for the word $\aword$), then the second element corresponds to the position chosen in $\aword$ and the third element the position chosen in $\aword'$ by the Spoiler or the Duplicator according to the word chosen by the Spoiler. For instance if $p_1=1$, this means that at the first turn Spoiler has chosen the position $b_1$ in $\aword'$ and Duplicator the position $a_1$ in $\aword$. A play of size $i \in \nat$ is called an 
$i$-round play (a $0$-round play being an empty sequence). A strategy for the Spoiler is a mapping $\astrategy_{S}:
(\{0,1\}\times\nat^2)^{*}\rightarrow\{0,1\}\times\nat$ which takes as input a play and outputs $0$ or $1$ for words $\astruct$ or
$\bstruct$ respectively and a position in the word. Similarly, a strategy for the Duplicator is a  mapping $\astrategy_{D}:
(\{0,1\}\times\nat^2)^{*} \times (\{0,1\}\times\nat)\rightarrow\nat$ with the difference being that Duplicator takes into account the position played by the Spoiler in the current round. For all $i \in \nat$, a strategy $\astrategy_S$ for the Spoiler and a strategy $\astrategy_D$ for the Duplicator, the $i$-round play over $\aword$ and $\aword'$ following $\astrategy_S$ and $\astrategy_D$ is defined inductively as follows: $\play{i}{\astrategy_S,\astrategy_D}{\aword,\aword'}= \play{i-1}{\astrategy_S,\astrategy_D}{\aword,\aword'}(p,a,b)$ where if $p=0$, $(0,a)=\sigma_D(\play{i-1}{\astrategy_S,\astrategy_D}{\aword,\aword'})$ and $b=\sigma_S(\play{i-1}{\astrategy_S,\astrategy_D}{\aword,\aword'},(0,a))$  and if $p=1$, $(1,b)=\sigma_D(\play{i-1}{\astrategy_S,\astrategy_D}{\aword,\aword'})$ and $a=\sigma_S(\play{i-1}{\astrategy_S,\astrategy_D}{\aword,\aword'},(1,b))$.

For $N \in \nat$, a $N$-round play $(p_1,a_1,b_1)(p_2,a_2,b_2)\cdots(p_N,a_N,b_N)$ over $\aword$ and $\aword'$ is winning for Duplicator iff the following conditions are satisfied for all $i,j \in \interval{1}{N}$:
\begin{itemize}
\itemsep 0 cm
\item $a_i=a_j$ iff $b_i=b_j$,
\item $a_{i}+1=a_{j}$ iff $b_{i}+1=b_{j}$,
\item $a_{i}<a_{j}$ iff $b_{i}<b_{j}$,
\item $\astruct(a_i) = \bstruct(b_i)$.
\end{itemize}
A $N$-round EF game over the $\omega$-words $\astruct,\bstruct$, denoted as $EF_N(\astruct,\bstruct)$, is said to be winning if there exists a strategy
$\astrategy_{D}$ for Duplicator such that for all strategies $\astrategy_{S}$ of spoiler, the play $\play{N}{\astrategy_S,\astrategy_D}{\aword,\aword'}$ is winning for Duplicator.  We write $\astruct\equivEF{N}\bstruct$ iff the game $EF_N(\astruct,\bstruct)$ is winning.
Theorem~\ref{theorem:EF} below states that   two $\omega$-words, the $N$-round game is winning iff these two $\omega$-words satisfy the same set of 
first-order formulae of quantifier height smaller than $N$.
\begin{theorem}[EF Theorem, see e.g.~\cite{Libkin04}]
\label{theorem:EF}
For any two $\omega$-words $\astruct,\bstruct$ over 
$\aalphabet$, 
$\astruct\equivEF{N}\bstruct$ iff $\astruct\equivrel{N}\bstruct$.
\end{theorem}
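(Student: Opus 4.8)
The plan is to prove the statement by induction on the number of rounds $N$, but in the strengthened form that keeps track of positions already played. Concretely, I extend the game so that it may start from position assignments $\amap,\amap'$ defined on $\avar_1,\ldots,\avar_k$, meaning the positions $\amap(\avar_i)$ in $\aword$ and $\amap'(\avar_i)$ in $\aword'$ count as already chosen; write $(\aword,\amap)\equivEF{N}(\aword',\amap')$ when Duplicator wins this $N$-round game, and $(\aword,\amap)\equivrel{N}(\aword',\amap')$ when $\aword\models_{\amap}\aformula$ iff $\aword'\models_{\amap'}\aformula$ for every $\aformula(\avar_1,\ldots,\avar_k)$ with $\qheight{\aformula}\leq N$. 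The theorem is the special case $k=0$, so it suffices to prove $(\aword,\amap)\equivEF{N}(\aword',\amap')$ iff $(\aword,\amap)\equivrel{N}(\aword',\amap')$ for all $k$ and $N$.

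For the base case $N=0$, Duplicator wins the $0$-round game exactly when the four winning conditions of the excerpt hold for the pre-selected positions, i.e. when $\amap(\avar_i)\mapsto\amap'(\avar_i)$ preserves $=$, successor, $<$ and every letter predicate $\aletter(\cdot)$. Since the formulae of quantifier height $0$ are precisely the Boolean combinations of the atoms $\aletter(\avar)$, $S(\avar,\avar')$, $\avar<\avar'$ and $\avar=\avar'$, this partial-isomorphism condition is equivalent to $(\aword,\amap)\equivrel{0}(\aword',\amap')$. For the inductive step I assume the equivalence at $N$ and argue at $N+1$. For the direction $\equivEF{N+1}\Rightarrow\equivrel{N+1}$, it is enough to treat formulae of the shape $\exists\avar\,\aformulabis$ with $\qheight{\aformulabis}\leq N$, since every formula of quantifier height $\leq N+1$ is a Boolean combination of such formulae. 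If $\aword\models_{\amap}\exists\avar\,\aformulabis$, let Spoiler play a witnessing position in $\aword$; Duplicator's winning response yields matched positions from which she still wins the remaining $N$-round game, and the induction hypothesis transfers $\aformulabis$, giving $\aword'\models_{\amap'}\exists\avar\,\aformulabis$. The converse implication and the symmetric case where Spoiler plays in $\aword'$ are handled the same way.

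For the remaining direction $\equivrel{N+1}\Rightarrow\equivEF{N+1}$, I exhibit Duplicator's strategy. Suppose Spoiler plays a position $a$ in $\aword$ (the case of $\aword'$ being symmetric). Consider the $N$-type of $a$ over $\amap$, namely the set of formulae $\aformulabis(\avar_1,\ldots,\avar_k,\avar)$ with $\qheight{\aformulabis}\leq N$ that hold once $\avar$ is interpreted by $a$. The key point is that, over the finite alphabet $\aalphabet$ and with $k+1$ free variables, there are only finitely many formulae of quantifier height $\leq N$ up to logical equivalence, so this type is captured by a single characteristic formula $\chi$ with $\qheight{\chi}\leq N$. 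Then $\aword\models_{\amap}\exists\avar\,\chi$, a sentence-relative formula of quantifier height $\leq N+1$, so by hypothesis $\aword'\models_{\amap'}\exists\avar\,\chi$, and Duplicator answers with a position $b$ realizing $\chi$. Because $\chi$ pins down the entire $N$-type, the extended assignments are $\equivrel{N}$-equivalent, and the induction hypothesis furnishes a winning continuation of the game.

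The main obstacle is precisely the last direction, and within it the combinatorial fact that for each fixed $N$ and each fixed finite set of free variables there are only finitely many inequivalent formulae of quantifier height $\leq N$; this is what legitimises collapsing the $N$-type into the single formula $\chi$. Once this finiteness is established, the two back-and-forth arguments are routine and the induction closes.
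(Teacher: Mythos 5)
Your proof is correct and follows the standard back-and-forth argument: strengthening to partial assignments, the partial-isomorphism base case, and the use of characteristic (Hintikka-style) formulas—relying on the finiteness of inequivalent formulas of bounded quantifier height over a finite vocabulary—for the hard direction. The paper does not prove this theorem itself but cites it from the literature (Libkin's textbook), where essentially this same argument is given, so there is nothing to compare beyond noting that your reconstruction matches the cited proof.
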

We will use EF games for FO$_{\aalphabet}$
formulae to prove a \emph{stuttering theorem} which will allow us to bound the number of times each loop needs to be taken in a path schema in 
order to satisfy a 
 FO$_{\aalphabet}$ 
formula. Note that in \cite{Etessami&Wilke00}, EF games have  been introduced for the specific case of LTL specifications here also to show 
some small model properties. 
 \subsection{Stuttering Theorem for FO$_{\aalphabet}$}
\label{section-stuttering-fo}

In this section, we prove that if in an $\omega$-sequence $\aword$, a subword $\mathbf{s}$ is repeated consecutively a large number of times, then 
this $\omega$-word and other $\omega$-words obtained by removing some of the repetitions of $\mathbf{s}$ satisfy the same set of 
FO$_{\aalphabet}$ sentences, this is what we call the stuttering theorem for 
FO$_{\aalphabet}$. 
Such a result will allow us to bound the repetition of iteration of loops in path schema and thus to obtain a model-checking 
algorithm for the logic 
FO$_{\aalphabet}$ optimal in complexity. In order to prove the stuttering theorem, we will use EF games.

In the sequel we consider a natural $N \geq 1$ and two $\omega$-words over 
$\aalphabet$ of the following form $\aword = \aword_1 \mathbf{s}^M \aword_2, \aword'= \aword_1 \mathbf{s}^{M+1} \aword_2 \in \aalphabet^{\omega}$ with 
$M > 2^{N+1}$, $\aword_1 \in \aalphabet^*,  \mathbf{s} \in \aalphabet^+$ and $\aword_2 \in \aalphabet^{\omega}$.  
We will now show that the game $EF_N(\astruct,\bstruct)$ is winning. The strategy for Duplicator 
will work as follows: at the $i$-th round (for $i \leq N$), if the point chosen by the Spoiler is close to another previously 
chosen position then the Duplicator will choose a point in the other word at the exact same distance from the corresponding position and if the point 
is far  from any other position  then in the other word the Duplicator will chose a position also far 
away from any other position. 

Before providing a winning strategy for the Duplicator we  define some invariants on any $i$-round play (with $i \leq N$) that will be maintained by 
the Duplicator's strategy. In order to define this invariant and the Duplicator's strategy, let introduce a few notations:
\begin{itemize}
\itemsep 0 cm
\item $a_{-3} = b_{-3} = 0$; $a_{-2} = b_{-2} = \length{\aword_1}$;
\item      $a_{-1}= \length{\aword_1\mathbf{s}^{M}}$ and $b_{-1}= \length{\aword_1\mathbf{s}^{M+1}}$;
\item $a_{0} = b_{0} = \omega$.  
\end{itemize}
We extend the substraction and addition operations in order to deal with $\Nat \cup \set{-\omega,\omega}$ such that:
$\alpha - \omega =  -\omega$, $\omega - \alpha = \omega$ and $\omega + \alpha = \omega$ if $\alpha \in \Nat$ 
(no need to define the other cases for what follows). 
The relation $<$ on $\Nat \cup \set{-\omega,\omega}$ is extended in the obvious way. 
Given a $i$-round play $\aplay_i=(p_1,a_1,b_1)(p_2,a_2,b_2) \cdots (p_i,a_i,b_i)$, we  say that  $\aplay_i$ respects the invariant 
$\aninv$ iff the following conditions are satisfied for all $j,k \in \interval{-3}{i}$:
\begin{enumerate}
\itemsep  0 cm
\item $a_{j} \leq a_{k}$ iff $b_{j} \leq b_{k}$,
\item $\mid a_j - a_k \mid <  2^{N+1-i} \length{\mathbf{s}}$ iff  $\mid b_j - b_k \mid <  2^{N+1-i} \length{\mathbf{s}}$,
\item $\mid a_j - a_k \mid <  2^{N+1-i} \length{\mathbf{s}}$ implies $a_j - a_k = b_j - b_k$,
\item $a_j \leq a_{-2}$ or $b_j \leq b_{-2}$  implies $b_j = a_j$,
\item $a_j \geq a_{-1}$ or $b_j \geq b_{-1}$ implies $b_j = a_j + \length{\mathbf{s}}$,
\item $a_{-2} < a_j < a_{-1}$ or $b_{-2} < b_j < b_{-1}$ implies
      $\mid a_j - b_j \mid = 0 \ {\rm mod} \ \length{\mathbf{s}}$. 
\end{enumerate}
First we remark the invariant $\aninv$ is a sufficient condition for a play to be winning as stated by the following lemma.
\begin{lemma}
\label{lem-inv}
If a $N$-round play over $\aword$ and $\aword'$ respects $\aninv$, then it is a winning play for the Duplicator.
\end{lemma}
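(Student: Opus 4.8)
The plan is to instantiate the invariant $\aninv$ at the final round $i = N$ and then read off, one clause at a time, the four winning conditions of the Duplicator listed in Section~\ref{section-definition-efgames}, quantified over $i,j \in \interval{1}{N}$. First I would dispatch the two order conditions. The equality clause ($a_i = a_j$ iff $b_i = b_j$) and the strict-order clause ($a_i < a_j$ iff $b_i < b_j$) both follow from clause~(1) of $\aninv$, namely $a_j \leq a_k$ iff $b_j \leq b_k$: indeed $a_i = a_j$ is just $a_i \leq a_j$ together with $a_j \leq a_i$, and $a_i < a_j$ is $a_i \leq a_j$ together with $\neg(a_j \leq a_i)$, so applying clause~(1) in both directions transfers these statements to the $b$-side and back.

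Next I would treat the successor condition. Since $\mathbf{s} \in \aalphabet^+$ we have $\length{\mathbf{s}} \geq 1$, hence the threshold at round $N$ satisfies $2^{N+1-N}\length{\mathbf{s}} = 2\length{\mathbf{s}} > 1$. Assuming $a_i + 1 = a_j$, we get $\mid a_i - a_j \mid = 1 < 2^{N+1-N}\length{\mathbf{s}}$, so clause~(2) of $\aninv$ yields $\mid b_i - b_j \mid < 2^{N+1-N}\length{\mathbf{s}}$, and then clause~(3) gives $b_i - b_j = a_i - a_j = -1$, i.e. $b_i + 1 = b_j$. The converse is symmetric, establishing $a_i + 1 = a_j$ iff $b_i + 1 = b_j$.

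The remaining, and main, obstacle is the letter clause $\astruct(a_i) = \bstruct(b_i)$, which genuinely uses the block decompositions $\aword = \aword_1 \mathbf{s}^M \aword_2$, $\aword' = \aword_1 \mathbf{s}^{M+1} \aword_2$ and the markers $a_{-2} = b_{-2} = \length{\aword_1}$, $a_{-1} = \length{\aword_1 \mathbf{s}^M}$, $b_{-1} = \length{\aword_1 \mathbf{s}^{M+1}}$. I would split on the region of $a_i$, which lies in exactly one of $a_i \leq a_{-2}$, $a_{-2} < a_i < a_{-1}$, or $a_i \geq a_{-1}$; by clause~(1) (using the markers $a_{-2}, a_{-1}, b_{-2}, b_{-1}$) the position $b_i$ automatically lies in the corresponding region of $\aword'$, so no conflict between the three cases can arise. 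In the first region clause~(4) gives $b_i = a_i \leq \length{\aword_1}$, a position inside the common prefix $\aword_1 \mathbf{s}^M$ of $\aword$ and $\aword'$, whence $\astruct(a_i) = \bstruct(a_i) = \bstruct(b_i)$. In the third region clause~(5) gives $b_i = a_i + \length{\mathbf{s}}$, and a direct computation shows $a_i - a_{-1} = b_i - b_{-1}$, so both index the same offset of the common suffix $\aword_2$ and the letters agree. In the middle region clause~(6) gives $a_i \equiv b_i \pmod{\length{\mathbf{s}}}$; since $a_{-2} = b_{-2}$ and both positions sit strictly inside their respective blocks of copies of $\mathbf{s}$, they select the same letter of $\mathbf{s}$, so once more $\astruct(a_i) = \bstruct(b_i)$.

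The only delicate bookkeeping I anticipate is checking that the three positional regions for $a_i$ translate faithfully to matching regions for $b_i$ and that the hypotheses of clauses~(4)--(6) are indeed triggered without overlap; as noted above this reduces to a couple of applications of clause~(1), so the conceptual content is entirely in the case split, and the threshold computation $2^{N+1-N}\length{\mathbf{s}} > 1$ is exactly what makes the successor clause go through at the last round.
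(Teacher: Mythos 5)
Your proposal is correct and follows essentially the same route as the paper, which simply asserts that clauses (1)--(3) of $\aninv$ yield the order and successor conditions and that clauses (4)--(6) yield $\astruct(a_i)=\bstruct(b_i)$; you have merely filled in the details (the threshold computation $2^{N+1-N}\length{\mathbf{s}}>1$ for the successor clause and the three-region case split for letter agreement) that the paper leaves as "easy to see".
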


\begin{proof}
Let $(p_1,a_1,b_1)(p_2,a_2,b_2)\cdots(p_N,a_N,b_N)$ be a $N$-round play over $\aword$ and $\aword'$ respecting $\aninv$.
 Let $i,j \in \interval{1}{N}$. It is easy to see that satisfaction of $\aninv$ implies that 
 $a_i=a_j$ iff $b_i=b_j$,  $a_i < a_j$ iff $b_i < b_j$, and  $a_i+1=a_j$ iff $b_i+1=b_j$. Moreover, Condition
 $\aninv$(4--6) obviously guarantees that $\astruct(a_j) = \bstruct(b_j)$. \qed
\end{proof}

Given an $(i-1)$-round play $\aplay_{i-1}=(p_1,a_1,b_1)(p_2,a_2,b_2) \cdots (p_{i-1},a_{i-1},b_{i-1})$ and $a_i \in \Nat$ such that
$a_i \not \in \set{a_{-3},a_{-2}, \ldots, a_{i-2}, a_{i-1}}$, we define $\lef{a_i}=\max(a_k \mid k \in \interval{-3}{i-1} \mbox{ and } a_k < a_i)$ and 
$\rig{a_i}=\min(a_k \mid k \in \interval{-3}{i-1} \mbox{ and } a_i< a_k)$ (i.e.~$\lef{a_i}$ and $\rig{a_i}$ are the closest neighbor of $a_i$). 
We define similarly $\lef{b_i}$ and $\rig{b_i}$.

We define now a strategy $\hat\astrategy_D$ for the Duplicator that respects at each round the invariant $\aninv$ and this no matter what the 
Spoiler plays. By Lemma~\ref{lem-inv}, we can conclude that this strategy is winning  for the Duplicator.
Let $i \in \interval{1}{N}$ and $\aplay_{i-1}=(p_1,a_1,b_1)(p_2,a_2,b_2)\cdots(p_{i-1},a_{i-1},b_{i-1})$ be a $(i-1)$-round play. First, we define 
$b_i=\astrategy_D(\aplay_{i-1},\tuple{0,a_i})$ that is what Duplicator answers if the Spoiler chooses position $a_i$ in the $\omega$-word 
$\aword$. We have $b_i=\hat\astrategy_D(\aplay_{i-1},\tuple{0,a_i})$ defined as follows:
\begin{itemize}
\itemsep 0 cm
\item If $a_i = a_j$ for some $j \in \interval{-3}{i-1}$, then $b_i \egdef b_j$;
\item Otherwise, let $a_l=\lef{a_i}$ and $a_r=\rig{a_i}$:
\begin{itemize}
\itemsep 0 cm 
\item  If $a_i-a_l \leq a_r-a_i$, we have $b_i \egdef b_l+(a_i-a_l)$
\item  If $a_r-a_i < a_i-a_l$, we have $b_i \egdef b_r-(a_r-a_i)$
\end{itemize}
\end{itemize}
Similarly we have $a_i=\hat\astrategy_D(\aplay_{i-1},\tuple{1,b_i})$ defined as follows:
\begin{itemize}
\itemsep 0 cm
\item If $b_i = b_j$ for some $j \in \interval{-3}{i-1}$, then $a_i \egdef a_j$;
\item Otherwise, let $b_l=\lef{b_i}$ and $b_r=\rig{b_i}$:
\begin{itemize}
\itemsep 0 cm 
\item  If $b_i-b_l \leq b_r-b_i$, we have $a_i \egdef a_l+(b_i-b_l)$
\item  If $b_r-b_i < b_i-b_l$, we have $a_i \egdef a_r-(b_r-b_i)$
\end{itemize}
\end{itemize}

\begin{lemma}
\label{lem-strat-win}
For any Spoiler's strategy $\astrategy_S$ and for all $i \in \interval{0}{N}$, we have that $\play{i}{\astrategy_S,\hat\astrategy_D}{\aword,\aword'}$ respects $\aninv$.
\end{lemma}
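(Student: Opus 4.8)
The plan is to prove the statement by induction on $i$, showing that the strategy $\hat\astrategy_D$ re-establishes the invariant $\aninv$ after every round, whatever the Spoiler does. For the base case $i=0$ the play is empty, so only the anchors $a_{-3},\dots,a_0$ (and their counterparts) are present, and each of the six conditions is checked directly. The one point to watch is that the block $\mathbf{s}^M$ has length $M\length{\mathbf{s}}\geq 2^{N+1}\length{\mathbf{s}}$ on the $\astruct$ side and $(M+1)\length{\mathbf{s}}$ on the $\bstruct$ side, so the pair $(a_{-2},a_{-1})$ is ``far'' at every round: this is exactly where the hypothesis $M>2^{N+1}$ is used. The identities $b_{-2}=a_{-2}$ and $b_{-1}=a_{-1}+\length{\mathbf{s}}$ then make $\aninv$(4) and $\aninv$(5) hold at the anchors, and $\aninv$(6) is vacuous since no anchor lies strictly inside the block.

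For the inductive step I would assume $\play{i-1}{\astrategy_S,\hat\astrategy_D}{\aword,\aword'}$ respects $\aninv$ and analyse round $i$; by symmetry of $\hat\astrategy_D$ it suffices to treat a Spoiler move $a_i$ in $\aword$. The first observation is that the threshold in $\aninv$(2)--(3) halves from $2^{N+2-i}\length{\mathbf{s}}$ to $2^{N+1-i}\length{\mathbf{s}}$; hence the conditions for pairs among the old points are inherited, because any pair that is close at the smaller threshold was close at the larger one, where $\aninv$(3) forces equal signed distances on both sides and thus makes $\aninv$(2) automatic. So the real work is to verify $\aninv$(1)--(6) for each pair $(i,k)$ containing the new point. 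If $a_i=a_j$ for some existing $j$, then $b_i\egdef b_j$ makes the new point a duplicate and every condition transfers verbatim from $j$.

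The substantive case is a fresh $a_i$ bracketed by $a_l=\lef{a_i}$ and $a_r=\rig{a_i}$, where $\hat\astrategy_D$ matches the exact distance to the nearer neighbour. I would first dispatch the structural conditions $\aninv$(4)--(6): since $b_i-a_i$ equals the shift $b_l-a_l$ (resp. $b_r-a_r$) of the matched neighbour, and the anchors satisfy $b_{-2}=a_{-2}$, $b_{-1}=a_{-1}+\length{\mathbf{s}}$, one gets $b_i=a_i$ when $a_i$ is in the prefix, $b_i=a_i+\length{\mathbf{s}}$ when $a_i$ is in the suffix, and $b_i-a_i\equiv 0 \bmod \length{\mathbf{s}}$ when $a_i$ is strictly inside the block. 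For the metric conditions in the \emph{close} case ($a_i-a_l<2^{N+1-i}\length{\mathbf{s}}$ in subcase B1) the distance to $l$ matches exactly by construction, and for any other $k$ the triangle inequality gives $\mid a_l-a_k\mid<2^{N+2-i}\length{\mathbf{s}}$, so old $\aninv$(3) yields $a_l-a_k=b_l-b_k$ and hence $a_i-a_k=b_i-b_k$; this settles $\aninv$(2)--(3) and then $\aninv$(1).

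The main obstacle will be the \emph{far} case, where $a_i$ is at distance at least $2^{N+1-i}\length{\mathbf{s}}$ from every chosen point and I must show $b_i$ still lands strictly between $b_l$ and $b_r$ and stays far from every $b_k$. The key auxiliary fact I would isolate is that the shift $b_j-a_j$ is \emph{non-decreasing} in the position $a_j$: this holds at the anchors ($0,0,\length{\mathbf{s}},\length{\mathbf{s}}$) and is preserved because each new point inherits the shift of its nearer neighbour, which lies between its left and right neighbours' shifts. Monotonicity gives $b_r-b_l\geq a_r-a_l$ for every bracketing pair, so from $a_i-a_l\leq a_r-a_i$ (subcase B1) one gets $b_i=b_l+(a_i-a_l)<b_r$, settling order. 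Moreover, both gaps being at least $2^{N+1-i}\length{\mathbf{s}}$ forces $a_r-a_l\geq 2^{N+2-i}\length{\mathbf{s}}$, which via old $\aninv$(2) transfers farness to the $\bstruct$ side; combined with $b_r-b_i\geq (a_r-a_l)/2\geq 2^{N+1-i}\length{\mathbf{s}}$ this shows $b_i$ is far from both neighbours, hence from all $b_k$. This closes the induction, and the factor-two slack from the halved threshold together with shift monotonicity is precisely what the far case hinges on.
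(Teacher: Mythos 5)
Your proof follows the same overall scheme as the paper's: the same six-part invariant $\aninv$, the same Duplicator strategy $\hat\astrategy_D$ (copy the exact offset of the nearer neighbour), and induction on the number of rounds, with the crucial observation that the threshold $2^{N+1-i}\length{\mathbf{s}}$ halves at each round, so the induction hypothesis is available at the doubled threshold $2^{N+2-i}\length{\mathbf{s}}$. Where you genuinely diverge is in how you discharge the ordering condition $\aninv$(1) and the metric conditions $\aninv$(2)--(3) for the fresh point: the paper runs an exhaustive case analysis (Cases 1.1, 1.2.1, 1.2.2 and their mirror images), whereas you isolate a single auxiliary invariant --- the offset $b_j - a_j$ is non-decreasing in the position $a_j$ --- and derive both $b_l < b_i < b_r$ and the far-case lower bounds from it. This buys you something real: the paper simply asserts $b_l < b_i < b_r$ without justification, while your monotonicity lemma (which holds at the anchors and is preserved because a fresh point inherits the offset of one of its two bracketing neighbours, whose offsets bound it on either side) is precisely what proves it; the paper's Case 1.2.2 reconstructs the corresponding inequality $b_r - b_i \geq a_r - a_i$ only in an ad hoc way.

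One spot in your close case is stated too strongly. Writing $a_l = \lef{a_i}$, when $a_i - a_l < 2^{N+1-i}\length{\mathbf{s}}$ you claim that for \emph{any other} $k$ the triangle inequality yields $\mid a_l - a_k \mid < 2^{N+2-i}\length{\mathbf{s}}$; this only follows when $\mid a_i - a_k \mid$ is itself below the round-$i$ threshold. For a $k$ with $\mid a_i - a_k \mid \geq 2^{N+1-i}\length{\mathbf{s}}$ you must instead show $\mid b_i - b_k \mid \geq 2^{N+1-i}\length{\mathbf{s}}$ directly; this does follow from the tools you already have (offset monotonicity when $a_k$ lies beyond $\rig{a_i}$, and the induction hypothesis for $\aninv$(2)--(3) at the doubled threshold applied to the pair $(a_l,a_k)$ when $a_k \leq a_l$), but it is a separate subcase that needs to be written out, exactly as the paper does in its Cases 1.1 and 1.2. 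With that subcase added, your argument is complete.
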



\begin{proof}
The proof proceeds by induction on $i$. The base case for $i=0$ is obvious since the empty play respects $\aninv$. 
However, we need to use the fact that $M > 2^{N+1}$ (otherwise condition $\aninv.2$ might not hold). 

Let $\astrategy_S$ be a Spoiler's strategy and for every $i \in \interval{1}{N-1}$, we assume that 
$\play{i-1}{\astrategy_S,\hat\astrategy_D}{\aword,\aword'}$ respects $\aninv$. Suppose that 
$\astrategy_S(\play{i-1}{\astrategy_S,\hat\astrategy_D}{\aword,\aword'})=\tuple{0,a_i}$ and let 
$b_i=\hat\astrategy_D(\play{i-1}{\astrategy_S,\hat\astrategy_D}{\aword,\aword'},\tuple{0,a_i})$. 

\noindent 
(1.) Let $j,k \in \interval{-3}{i}$. If  $j,k \in \interval{-3}{i-1}$, then by the induction hypothesis, 
$a_{j} \leq a_{k}$ iff $b_{j} \leq b_{k}$. Otherwise, let suppose $j= i$ and $k \neq i$ (only remaining interesting case). If
$a_i = a_{j'}$ for some $j' \in \interval{-3}{i-1}$, then $b_i = b_{j'}$ and therefore
$a_{i} \leq a_{k}$ iff $b_{i} \leq b_{k}$ by the induction hypothesis. Otherwise, $a_l < a_i < a_r$ and  $b_l < b_i < b_r$
which entails that $a_{i} \leq a_{k}$ iff $b_{i} \leq b_{k}$.

\noindent
(4.)  The case $j \in \interval{-3}{i-1}$ is immediate from the induction hypothesis. 
Now, suppose that $a_{-3} \leq a_i \leq a_{-2}$. 
If $a_i = a_{j}$ for some $j \in \interval{-3}{i-1}$, then $b_i = b_{j}$  and  $a_{-3} \leq a_j \leq a_{-2}$.
By induction hypothesis, $b_i = b_j = a_j = a_i$. Otherwise,  $a_l < a_i < a_r$ and $a_l = b_l$ and $a_r = b_r$ 
by induction hypothesis.  Either 
 $a_i-a_l \leq a_r-a_i$ or $a_r-a_i < a_i-a_l$ implies that $b_i = a_i$.

\noindent
(5.)  
 The case $j \in \interval{-3}{i-1}$ is immediate from the induction hypothesis. 
Now, suppose that $a_{-1} \leq a_i$. 
If $a_i = a_{j}$ for some $j \in \interval{-3}{i-1}$, then $b_i = b_{j}$  and  $a_{-1} \leq a_j$.
By induction hypothesis, $b_i = b_j = a_j + \length{\mathbf{s}} = a_i + \length{\mathbf{s}}$. 
Otherwise,  $a_l < a_i$ and $b_l = a_l  + \length{\mathbf{s}}$ 
by induction hypothesis.  
Since $a_r = \omega$, we have $b_i = b_l + (a_i - a_l) = a_i  + \length{\mathbf{s}}$. 

\noindent
(6.)  The case $j \in \interval{-3}{i-1}$ is immediate from the induction hypothesis. 
Now, let us deal with $j = i$. 
Satisfaction of (4.) and (5.) implies that 
 $a_{-2} < a_i < a_{-1}$  iff $b_{-2} < b_i < b_{-1}$.
Suppose that $a_{-2} < a_i < a_{-1}$. So,  $a_{l} < a_i < a_{r}$
and by induction hypothesis $\mid a_l - b_l \mid = 0 \ {\rm mod} \ \length{\mathbf{s}}$
and  $\mid a_r - b_r \mid = 0 \ {\rm mod} \ \length{\mathbf{s}}$.
If  $a_i-a_l \leq a_r-a_i$, then $b_i = b_l + (a_i -a_l)$ and 
 $\mid a_i - b_i \mid = \mid a_l - b_l \mid$, whence $\mid a_i - b_i \mid = 0 \ {\rm mod} \ \length{\mathbf{s}}$.
Similarly, if  $a_r-a_i < a_i-a_l$, then $b_i = b_r - (a_r -a_i)$
and $\mid a_i - b_i \mid = \mid a_r - b_r \mid$, whence $\mid a_i - b_i \mid = 0 \ {\rm mod} \ \length{\mathbf{s}}$.

\noindent 
(2.--3.)  Let $j,k \in \interval{-3}{i}$. If  $j,k \in \interval{-3}{i-1}$, then by the induction hypothesis, it is easy to verify
that
\begin{itemize}
\itemsep 0 cm 
\item $\mid a_j - a_k \mid <  2^{N+1-i} \length{\mathbf{s}}$ iff  $\mid b_j - b_k \mid <  2^{N+1-i} \length{\mathbf{s}}$,
\item $\mid a_j - a_k \mid <  2^{N+1-i} \length{\mathbf{s}}$ implies $a_j - a_k = b_j - b_k$.
\end{itemize}
Indeed, it is a consequence of the stronger properties below satisfied by induction hypothesis:
\begin{itemize}
\itemsep 0 cm 
\item $\mid a_j - a_k \mid <  2^{N+2-i} \length{\mathbf{s}}$ iff  $\mid b_j - b_k \mid <  2^{N+2-i} \length{\mathbf{s}}$,
\item $\mid a_j - a_k \mid <  2^{N+2-i} \length{\mathbf{s}}$ implies $a_j - a_k = b_j - b_k$.
\end{itemize}
Otherwise, let suppose $j= i$ and $k \neq i$ (only remaining interesting case).
If $a_i = a_{j'}$ for some $j' \in \interval{-3}{i-1}$, then by the induction hypothesis, we have
\begin{itemize}
\itemsep 0 cm 
\item $\mid a_{j'} - a_k \mid <  2^{N+2-i} \length{\mathbf{s}}$ iff  $\mid b_{j'} - b_k \mid <  2^{N+2-i} \length{\mathbf{s}}$,
\item $\mid a_{j'} - a_k \mid <  2^{N+2-i} \length{\mathbf{s}}$ implies $a_{j'} - a_k = b_{j'} - b_k$.
\end{itemize}
Again, this implies that 
\begin{itemize}
\itemsep 0 cm 
\item[(I)] $\mid a_i - a_k \mid <  2^{N+1-i} \length{\mathbf{s}}$ iff  $\mid b_i - b_k \mid <  2^{N+1-i} \length{\mathbf{s}}$,
\item[(II)] $\mid a_i - a_k \mid <  2^{N+1-i} \length{\mathbf{s}}$ implies $a_i - a_k = b_i - b_k$.
\end{itemize}
Now, suppose that there is no $j' \in \interval{-3}{i-1}$ such that $a_i = a_{j'}$.

\noindent
{\em Case 1}:  $a_i-a_l \leq a_r-a_i$ and  $b_i = b_l+(a_i-a_l)$.

\noindent
{\em Case 1.1}: $a_k \leq a_l$. 
\begin{itemize}
\itemsep 0 cm  
\item $a_k = a_l$:  $\mid a_i - a_k \mid  =  \mid b_i - b_k \mid$  and therefore (I)-(II) holds.
\item $a_k < a_l$:  If $\mid a_l - a_k \mid \geq 2^{N+1-i} \length{\mathbf{s}}$, by induction hypothesis 
                     $\mid b_l - b_k \mid \geq 2^{N+1-i} \length{\mathbf{s}}$ and $b_k < b_l$.
                   So,  $\mid a_i - a_k \mid \geq  2^{N+1-i} \length{\mathbf{s}}$ and  $\mid b_i - b_k \mid \geq  2^{N+1-i}$.

                    If $\mid a_i - a_l \mid \geq 2^{N+1-i} \length{\mathbf{s}}$, by definition of $b_l$,
                     $\mid a_i - a_l \mid = \mid b_i - b_l \mid \geq 2^{N+1-i} \length{\mathbf{s}}$ and $b_k < b_l$.
                   So,  $\mid a_i - a_k \mid \geq  2^{N+1-i} \length{\mathbf{s}}$ and  $\mid b_i - b_k \mid \geq  2^{N+1-i}$.

                   If  $\mid a_l - a_k \mid \leq 2^{N+1-i} \length{\mathbf{s}}$ and $\mid a_i - a_l \mid \leq 2^{N+1-i} \length{\mathbf{s}}$,
                   then by induction hypothesis  $\mid b_l - b_k \mid = \mid a_l - a_k \mid$, 
                    $\mid a_i - a_l \mid = \mid b_i - b_l \mid$, $a_k < a_l < a_i$ and $b_k < b_l < b_i$. 
                   So  $\mid a_i - a_k \mid = \mid b_i - b_k \mid$, whence (I)--(II)
                   holds. 
\end{itemize}

\noindent
{\em Case 1.2}: $a_r \leq a_k$. 

\begin{itemize}
\item $\mid a_k - a_r \mid \geq 2^{N+1-i} \length{\mathbf{s}}$: By induction hypothesis, $\mid b_k - b_r \mid \geq 2^{N+1-i} \length{\mathbf{s}}$.
      So,  $\mid a_k - a_i \mid \geq 2^{N+1-i} \length{\mathbf{s}}$ and $\mid b_k - b_i \mid \geq 2^{N+1-i} \length{\mathbf{s}}$ since
      $a_i < a_r \leq a_k$ and $b_i < b_r \leq b_k$.
\item $\mid a_k - a_r \mid \leq 2^{N+1-i} \length{\mathbf{s}}$: By induction hypothesis, $\mid b_k - b_r \mid = \mid a_k - a_r \mid$. 
      
      \noindent
      {\em Case 1.2.1.} $\mid a_r - a_l \mid \leq 2^{N+2-i} \length{\mathbf{s}}$. By induction hypothesis, 
       $\mid a_r - a_l \mid = \mid b_r - b_l \mid$ and therefore  $\mid b_r - b_i \mid = \mid a_r - a_i \mid$.
      Whence,  $\mid b_k - b_i \mid = \mid a_k - a_i \mid$ so (I)-(II) holds.

      \noindent
      {\em Case 1.2.2}  $\mid a_r - a_l \mid \geq 2^{N+2-i} \length{\mathbf{s}}$. By induction hypothesis, 
       $\mid b_r - b_l \mid \geq 2^{N+2-i} \length{\mathbf{s}}$. Moreover, since  $a_i-a_l \leq a_r-a_i$, 
      $a_r - a_i \geq 2^{N+1-i} \length{\mathbf{s}}$.  Since $b_i - b_l = a_i -b_l$, we have $b_r - b_i \geq 2^{N+1-i} \length{\mathbf{s}}$ too.
      So,  $a_k - a_i \geq 2^{N+1-i} \length{\mathbf{s}}$ and  $b_k - b_i \geq 2^{N+1-i} \length{\mathbf{s}}$, which 
      guarantees (I)--(II). 
 \end{itemize}

\noindent
{\em Case 2}: $a_r-a_i < a_i-a_l$ and  $b_i = b_r-(a_r-a_i)$. 

\noindent
{\em Case 2.1}: $a_k \geq a_r$. Similar to Case 1.1  by replacing $a_l$ by $a_r$, $b_l$ by $b_r$ and, by permuting '$<$' by '$>$' and '$\leq$' by '$\geq$'  
about positions.

\noindent
{\em Case 2.2}: $a_k \leq a_l$. Similar to Case 1.2 by replacing $a_r$ by $a_l$, $b_r$ by $b_l$ and, by permuting '$<$' by '$>$' and '$\leq$' by '$\geq$' 
about positions.  \qed
\end{proof}
%

Using Lemma~\ref{lem-inv} and~\ref{lem-strat-win}, we deduce that Duplicator has a winning strategy against any strategy of the Spoiler in $EF_N(\astruct,\bstruct)$, so by Theorem \ref{theorem:EF}, we can conclude \ifconference \textbf{Theorem~\ref{theorem-stuttering-fo} [Stuttering Theorem]}.\else the main theorem of this subsection.

\begin{theorem}[Stuttering Theorem]
\label{theorem-stuttering-fo}
Let $\aword = \aword_1 \mathbf{s}^M \aword_2, \aword'= \aword_1 \mathbf{s}^{M+1} \aword_2 \in \aalphabet^{\omega}$  such that $N \geq 1$, 
$M> 2^{N+1}$ and $\mathbf{s} \in \aalphabet^+$. Then $\astruct\equivrel{N}\bstruct$.
\end{theorem}
\fi

\else
 \iftechreport
Ehrenfeucht-Fra\"iss\'e (EF) game is a well known technique to determine whether two structures are equivalent with respect to a set of formulae. We recall here the definition of a EF game adapted to our context. Given $N \in \nat$ and two $\omega$-words $\aword,\aword'$ over 
$\aalphabet$, 
the main idea of the corresponding EF game is that two players, the Spoiler and the Duplicator, plays in a turn based manner. 
The Spoiler begins by choosing a word between $\aword$ and $\aword'$ and a position in this word, then the Duplicator aims at finding a position in the other word which is \emph{similar} and this during $N$ rounds. At the end, the Duplicator wins if the set of chosen positions respects some 
isomorphism.
Using EF games we can prove the stuttering theorem for FO$_{\aalphabet}$.
\else
Let us state our first result about $\FO_{\aalphabet}$ which  allows us to bound the number of times each loop is taken 
in a constrained path schema in order to satisfy a formula. We provide a stuttering theorem equivalent for $FO_{\aalphabet}$ formulas as is done in \cite{DemriDharSangnier12} for PLTL and in \cite{Kucera&Strejcek05} for LTL. The lengthy proof of 
Theorem~\ref{theorem-stuttering-fo}  uses Ehrenfeuch-Fra\"iss\'e game (Appendix~\ref{appendix:ef-games}).
\fi
\begin{theorem}[Stuttering Theorem]
\label{theorem-stuttering-fo}
Let $\aword = \aword_1 \mathbf{s}^M \aword_2, \aword'= \aword_1 \mathbf{s}^{M+1} \aword_2 \in \aalphabet^{\omega}$  such that $N \geq 1$, 
$M> 2^{N+1}$ and $\mathbf{s} \in \aalphabet^+$. Then $\astruct\equivrel{N}\bstruct$.
\end{theorem}

\fi
%
\subsection{Model-Checking Flat Counter Systems with FO}
\label{section-pspace-fo}
Let us characterize the complexity of $\mcpb{\FO}{\flatcs}$. 
First, we will state the complexity of the intersection non-emptiness problem. Given  
a constrained path schema $\acps$ and 
a $\FO$ sentence $\aformulabis$, Theorem~\ref{theorem-pottier91} provides two polynomials 
$\mathtt{pol}_1$ and $\mathtt{pol}_2$ to represent succinctly the solutions of the guard in $\acps$.
Theorem~\ref{theorem-stuttering-fo} allows us to bound the number of times loops are visited. 
Consequently, we can compute a value $\amap_{\FO}(\aformulabis,\acps)$ exponential in the size of 
$\aformulabis$ and $\acps$, as explained  earlier,  which allows us to find a witness for the intersection 
non-emptiness problem where each loop is taken a number of times smaller than $\amap_{\FO}(\aformulabis,\acps)$.

\begin{lemma}\label{lemma:fo-cps-small-loop}
Let $\acps$ be a constrained path schema and $\aformulabis$ be a FO$_{\aalphabet}$ sentence. Then 
$\alang(\acps) \cap \alang(\aformulabis)$ is non-empty iff there is an $\omega$-word in 
$\alang(\acps) \cap \alang(\aformulabis)$ in which each loop is taken at most
$
2^{(\qheight{\aformulabis} + 2) + \mathtt{pol}_1(\size{\acps}) + \mathtt{pol}_2(\size{\acps})}$ times.
\end{lemma}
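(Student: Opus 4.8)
The plan is to prove the nontrivial ("only if") direction; the converse is immediate. Assume $\alang(\acps) \cap \alang(\aformulabis)$ contains a word $\aword = \aseg_1 (\aloop_1)^{n_1} \cdots \aseg_{k-1}(\aloop_{k-1})^{n_{k-1}} \aseg_k (\aloop_k)^{\omega}$, so that $\aformula(n_1,\ldots,n_{k-1})$ holds and $\aword \models \aformulabis$. Writing $N = \qheight{\aformulabis}$, I would exhibit another witness $\aword'$ in the intersection all of whose loop exponents are at most $B := 2^{(N+2)+\mathtt{pol}_1(\size{\acps})+\mathtt{pol}_2(\size{\acps})}$. The two ingredients are Theorem~\ref{theorem-pottier91}(I), which gives a semilinear description of the solution set of the guard $\aformula$, and the Stuttering Theorem~\ref{theorem-stuttering-fo}, which lets me change a loop exponent without leaving the $\equivrel{N}$-class \emph{provided that exponent remains strictly above $2^{N+1}$}. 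The heart of the argument is to lower the exponents while respecting both constraints simultaneously.

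Concretely, I would fix a decomposition $(n_1,\ldots,n_{k-1}) = \abasis + \sum_{j=1}^{\alpha} \vect{a}[j]\,\aperiod_j$ furnished by Theorem~\ref{theorem-pottier91}(I), where the entries of $\abasis$ and of the periods $\aperiod_j$ lie in $\interval{0}{2^{\mathtt{pol}_1(\size{\acps})}}$ and $\alpha \le 2^{\mathtt{pol}_2(\size{\acps})}$ (using $\size{\aformula}\le\size{\acps}$). Since every nonnegative combination $\abasis + \sum_j c_j \aperiod_j$ again satisfies $\aformula$ by part (I), I have full freedom to shrink the multipliers. I would set $c_j := \min(\vect{a}[j],\, 2^{N+1}+1)$ and let $(n_1',\ldots,n_{k-1}') := \abasis + \sum_j c_j \aperiod_j$, which is again a solution of $\aformula$, so that $\aword' := \aseg_1(\aloop_1)^{n_1'}\cdots\aseg_k(\aloop_k)^{\omega} \in \alang(\acps)$.

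I would then verify three properties of this clamping, all hinging on the cap $2^{N+1}+1$. First, every exponent with $n_i \le 2^{N+1}+1$ is untouched: if $\aperiod_j[i]\ge 1$ then $\vect{a}[j]\le n_i \le 2^{N+1}+1$, so the cap is inactive on coordinate $i$ and $n_i'=n_i$. Second, every exponent with $n_i > 2^{N+1}+1$ stays strictly above $2^{N+1}$: either no multiplier touching $i$ was capped, whence $n_i'=n_i$, or some capped multiplier contributes a term $\ge 2^{N+1}+1$. Third, the upper bound $n_i' \le 2^{\mathtt{pol}_1(\size{\acps})} + (2^{N+1}+1)\,\alpha\,2^{\mathtt{pol}_1(\size{\acps})} \le B$ follows from $\alpha \le 2^{\mathtt{pol}_2(\size{\acps})}$ by a routine estimate. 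Hence for every $i$ we have either $n_i' = n_i$, or both $n_i, n_i' > 2^{N+1}$, and always $n_i' \le B$.

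It remains to deduce $\aword \equivrel{N} \aword'$, which I would obtain by altering the exponents one coordinate at a time. Passing through the intermediate words that agree with $\aword'$ on the first $i$ loops and with $\aword$ afterwards, consecutive words are either identical (when $n_i'=n_i$) or related by iterating the Stuttering Theorem with $\mathbf{s}=\aloop_i \in \aalphabet^+$ (when both exponents exceed $2^{N+1}$, so all intermediate exponents stay $>2^{N+1}$). As $\equivrel{N}$ is an equivalence relation, transitivity gives $\aword \equivrel{N} \aword'$; since $\qheight{\aformulabis}=N$ and $\aword\models\aformulabis$, we get $\aword'\models\aformulabis$, so $\aword'$ witnesses the intersection with every loop taken at most $B$ times. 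The main obstacle is precisely the joint control demanded in the clamping step: Pottier's theorem governs the arithmetic constraint but is blind to FO-equivalence, whereas the Stuttering Theorem governs FO-equivalence but only above the threshold $2^{N+1}$; choosing the cap to be $2^{N+1}+1$ is what reconciles the two, freezing the small exponents (so no forbidden below-threshold change occurs) while forcing every modified exponent to remain above threshold.
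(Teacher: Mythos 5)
Your proof is correct and follows essentially the same route as the paper's: decompose the loop-count vector via Theorem~\ref{theorem-pottier91}(I), clamp each multiplier at $2^{\qheight{\aformulabis}+1}+1$ so that the result still satisfies the guard, observe that each exponent is either unchanged or stays strictly above the stuttering threshold, and conclude by iterating Theorem~\ref{theorem-stuttering-fo}. Your write-up is, if anything, slightly more careful than the paper's in justifying why clamping the multipliers preserves the per-coordinate case split and in making the coordinate-by-coordinate transitivity argument explicit.
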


Hence $\amap_{{\rm FO}}(\aformulabis,\acps)$ has the value $2^{(\qheight{\aformulabis}+2) 
+ (\mathtt{pol}_1+\mathtt{pol}_2)(\size{\acps})}$. 
\iftechreport
\begin{proof} Let $\acps = \pair{\aseg_1 (\aloop_1)^* \cdots \aseg_{k-1} (\aloop_{k-1})^* \aseg_k (\aloop_k)^{\omega}}{\aformula(\avariable_1, 
\ldots, \avariable_{k-1})}$ be a constrained path schema and $\aformulabis$ be a first-order sentence. 
Suppose that 
$$
\aseg_1 (\aloop_1)^{\vect{n}[1]} \cdots \aseg_{k-1} (\aloop_{k-1})^{\vect{n}[k-1]} \aseg_k (\aloop_k)^{\omega} 
\in \alang(\acps) \cap \alang(\aformulabis).
$$
Let  $\basis \subseteq  \interval{0}{2^{p_1(\size{\acps})}}^{k-1}$ and 
           $\aperiod_1, \ldots, \aperiod_{\alpha} \in \interval{0}{2^{p_1(\size{\acps})}}^{k-1}$ defined for the guard $\aformula$
following Theorem~\ref{theorem-pottier91}. 
Since $\vect{n} \models \aformula$,
there are $\abasis \in \basis$ and $\vect{a} \in \Nat^{\alpha}$ such that
           $\vect{n} = \abasis + \vect{a}[1] \aperiod_1 + \cdots + \vect{a}[\alpha] \aperiod_{\alpha}$.
Let  $\vect{a}' \in \Nat^{\alpha}$ defined from $\vect{a}$ such that
$\vect{a}'[i] = \vect{a}[i]$ if $\vect{a}[i] \leq 2^{\qheight{\aformulabis} +1} + 1$ otherwise 
$\vect{a}'[i] = 2^{\qheight{\aformulabis} +1} + 1$. Note that 
$\vect{n}'  = \abasis + \vect{a}'[1] \aperiod_1 + \cdots + \vect{a}'[\alpha] \aperiod_{\alpha}$ still satisfies
$\aformula$ and for every loop $i \in \interval{1}{k-1}$, 
$\vect{n}[i] >  2^{\qheight{\aformulabis} +1}$ iff $\vect{n}'[i] >  2^{\qheight{\aformulabis} +1}$.
By  Theorem~\ref{theorem-stuttering-fo}, 
$\aseg_1 (\aloop_1)^{\vect{n}'[1]} \cdots \aseg_{k-1} (\aloop_{k-1})^{\vect{n}'[k-1]} \aseg_k (\aloop_k)^{\omega} 
\in \alang(\aformulabis)$. 

Now, let us bound the values in  $\vect{n}'$. 
\begin{itemize}
\itemsep 0 cm
\item There are at most $2^{p_2(\size{\acps})}$ periods.
\item Each basis or period has values in $\interval{0}{2^{p_1(\size{\acps})}}$.
\item Each period in $\vect{n}'$ is taken at most $2^{\qheight{\aformulabis} +1} + 1$ times.
\end{itemize}
Consequently, each $\vect{n}'[i]$ is bounded by
$$
2^{p_1(\size{\acps})} + (2^{\qheight{\aformulabis} +1} + 1) 2^{p_2(\size{\acps})} \times 2^{p_2(\size{\acps})}
$$
which is itself bounded by $2^{(\qheight{\aformulabis} + 2) + p_1(\size{\acps}) + p_2(\size{\acps})}$. 
\qed
\end{proof}
\fi
Furthermore checking whether  $\alang(\acps) \cap \alang(\aformulabis)$  is non-empty amounts to guess some
$\vect{n} \in \interval{0}{2^{(\qheight{\aformulabis} + 2) + \mathtt{pol}_1(\size{\acps}) + \mathtt{pol}_2(\size{\acps})}}^{k-1}$ and verify
whether 
\iftechreport
$$\aword = \aseg_1 (\aloop_1)^{\vect{n}[1]} \cdots \aseg_{k-1} (\aloop_{k-1})^{\vect{n}[k-1]} \aseg_k (\aloop_k)^{\omega} \in 
\alang(\acps) \cap \alang(\aformulabis).
$$  
\else
$\aword = \aseg_1 (\aloop_1)^{\vect{n}[1]}\linebreak[0] \cdots\linebreak[0] \aseg_{k-1} (\aloop_{k-1})^{\vect{n}[k-1]} \aseg_k (\aloop_k)^{\omega} \in 
\alang(\acps) \cap \alang(\aformulabis).
$ 
\fi
Checking if $\aword \in \alang(\acps)$ can be done in polynomial time
in $(\qheight{\aformulabis} + 2) + \mathtt{pol}_1(\size{\acps}) + \mathtt{pol}_2(\size{\acps})$ 
(and therefore in polynomial time in $\size{\aformulabis}
+ \size{\acps}$) since this amounts to verify whether $\vect{n} \models \aformula$. 
Checking whether $\aword \in \alang(\aformulabis)$ can be done in exponential space in
$\size{\aformulabis}
+ \size{\acps}$ by using~\cite[Proposition 4.2]{Markey&Schnoebelen03}.
\iftechreport  
Indeed, checking whether $u \cdot (v)^{\omega} \models \aformula$ (where $u \cdot (v)^{\omega}$ is an ultimately
periodic word and $\aformula$ is a first-order sentence) can be done in space $\mathcal{O}(\size{uv} \size{\aformula}^2)$. 
When the ultimately periodic word is $\aword$ this provides a space bound in
$\mathcal{O}(\size{\acps} \times 2^{(\qheight{\aformulabis} + 2) + \mathtt{pol}_1(\size{\acps}) + \mathtt{pol}_2(\size{\acps})} \times  \size{\aformulabis}^2)$.
\fi
\iftechreport
Hence, this leads to a nondeterministic exponential space decision procedure for the intersection non-emptiness
problem but it is possible to get down to
nondeterministic polynomial space as explained below. 

Now, we show that the membership problem with first-order logic  can be solved in
polynomial space in $\size{\acps} + \size{\aformulabis}$. 
Let $\acps$,  $\aformulabis$ and $\vect{n} \in \Nat^{k-1}$ be an instance of the problem.
For $i \in \interval{1}{k-1}$, let $\vect{n}'[i] \egdef \mathtt{min}(\vect{n}[i],2^{\qheight{\aformulabis}+1} +1)$. 
By  Theorem~\ref{theorem-stuttering-fo},  the propositions below are equivalent:
\begin{itemize}
\item  $\aseg_1 (\aloop_1)^{\vect{n}[1]} \cdots \aseg_{k-1} (\aloop_{k-1})^{\vect{n}[k-1]} \aseg_k (\aloop_k)^{\omega} \in 
\alang(\aformulabis)$,
\item $\aseg_1 (\aloop_1)^{\vect{n}'[1]} \cdots \aseg_{k-1} (\aloop_{k-1})^{\vect{n}'[k-1]} \aseg_k (\aloop_k)^{\omega} \in 
\alang(\aformulabis)$.
\end{itemize}
Without any loss of generality, let us assume then that 
$\vect{n} \in \interval{0}{2^{\qheight{\aformulabis}+1} +1}^{k-1}$.

Let us decompose $\aword = \aseg_1 (\aloop_1)^{\vect{n}[1]} \cdots \aseg_{k-1} 
(\aloop_{k-1})^{\vect{n}[k-1]} \aseg_k (\aloop_k)^{\omega}$ as
$u \cdot (v)^{\omega}$ where $u = \aseg_1 (\aloop_1)^{\vect{n}[1]} \cdots \aseg_{k-1} (\aloop_{k-1})^{\vect{n}[k-1]} \aseg_k$
and $v = \aloop_k$. 
Note that the length of $u$ is exponential in the size of the instance.  
We write $\hat{\aformulabis}$ to denote the formula $\aformulabis$ in which every existential quantification is
relativized to positions less than $\length{u} + \length{v} \times 2^{\qheight{\aformulabis}}$. 
This means that every quantification '$\exists \ \avariable \ \cdots$' is replaced by
'$\exists \ \avariable <  (\length{u} + \length{v} \times 2^{\qheight{\aformulabis}}) \ \cdots$'. 
By~\cite{Markey&Schnoebelen03}, we know that $\aword \models \aformulabis$ iff 
$\aword \models \hat{\aformulabis}$. Now, checking $\aword \models \hat{\aformulabis}$ can be done in polynomial space 
by using a standard first-order model-checking algorithm by restricting ourselves to positions in
$\interval{0}{\length{u} + \length{v} \times 2^{\qheight{\aformulabis}}}$ for existential quantifications.
Such positions can be obviously encoded in polynomial space. Moreover, note that given 
$i \in \interval{0}{\length{u} + \length{v} \times 2^{\qheight{\aformulabis}}}$, one can check in polynomial time
what is the $i$th letter of $\aword$. Details are standard and omitted here. By way of example, the $i$th letter of $\aword$
is the first letter of $l_k$ iff 
$i \geq \alpha$ and $(i-\alpha) = 0 \ {\rm mod} \ \length{\aloop_k})$
with $\alpha = (\Sigma_{j \in \interval{1}{k-1}} (\length{p_j} + \length{l_j} \times \vect{n}[j])) + \length{p_j}$.  

\begin{algorithm}
{\footnotesize
\caption{FOSAT$(\acps,\vect{n},\aformulabis,\amap)$}
\label{algorithm:FOSAT}
\begin{algorithmic}[1]
\IF{$\aformula = \aletter(\avar)$}
   \STATE Calculate the $\amap(\avar)$th letter $\aletterbis$ of $\aword$ and  return $\aletter = \aletterbis$.
\ELSIF{$\aformulabis$ is of the form $\neg \aformulabis'$}
      \STATE return not FOSAT$(\acps,\vect{n}, \aformulabis',\amap)$ 
\ELSIF{$\aformulabis = \aformulabis_1 \wedge \aformulabis_2$}
   \STATE return FOSAT$(\acps,\vect{n},\aformulabis_1,\amap)$ and FOSAT$(\acps,\vect{n},\aformulabis_2,\amap)$. 
\ELSIF{$\aformulabis$ is of the form $\exists \avar < m \  \aformulabis'$}
   \STATE guess a position $k\in\interval{0}{m-1}$.
   \STATE return FOSAT$(\acps,\vect{n},\aformulabis',\amap[\avar \mapsto k])$.
\ELSIF{$\aformula$ is of the form $R(\avar,\avar')$ for some $R \in \set{=,<,S}$}
   \STATE return $R(\amap(\avar),\amap(\avar'))$.
\ENDIF
\end{algorithmic}
}
\end{algorithm}

Polynomial space algorithm is obtained by computing
FOSAT$(\acps,\vect{n}, \aformulabis, \amap_0)$ with the algorithm FOSAT defined below ($\amap_0$ is a zero assignment
function). Note that the polynomial space bound is obtained since the recursion depth is linear in
$\size{\aformulabis}$ and positions in $\interval{0}{\length{u} + \length{v} \times 2^{\qheight{\aformulabis}}}$
can be encoded in polynomial space in $\size{\acps} + \size{\aformulabis}$. Furthermore, since model-checking ultimately periodic words with first-order logic is \pspace-hard~\cite{Markey&Schnoebelen03}, we deduce directly the lower bound for the membership problem with FO. This allows us to state the following lemma for the lower bound being a consequence of model-checking ultimately periodic words with first-order logic which is \pspace-hard\cite{Markey&Schnoebelen03}).

\begin{lemma} 
\label{lemma:fo-membership}
\iftechreport
Membership problem with first-order logic is \pspace-complete and it can be solved in
           polynomial space in $\size{\acps} + \size{\aformula}$.
\else
Membership problem for FO$_{\aalphabet}$ is \pspace-complete.
\fi 
\end{lemma}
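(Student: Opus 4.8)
The plan is to prove the two directions separately: that the membership problem for $\FO_{\aalphabet}$ is in \pspace, solvable in space polynomial in $\size{\acps}+\size{\aformulabis}$, and that it is \pspace-hard. For the upper bound, fix an instance consisting of a constrained path schema $\acps$, a tuple $\vect{n}\in\Nat^{k-1}$ with the $\vect{n}[i]$ encoded in binary, and an $\FO_{\aalphabet}$ sentence $\aformulabis$. The object to be tested is the ultimately periodic word $\aword = \aseg_1 (\aloop_1)^{\vect{n}[1]} \cdots \aseg_{k-1} (\aloop_{k-1})^{\vect{n}[k-1]} \aseg_k (\aloop_k)^{\omega}$, which I write as $u\cdot v^{\omega}$ with $u=\aseg_1 (\aloop_1)^{\vect{n}[1]} \cdots \aseg_k$ and $v=\aloop_k$. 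The immediate obstacle is that $\length{u}$ is exponential in the input, so one cannot afford to materialize $\aword$. A harmless first step is to apply Theorem~\ref{theorem-stuttering-fo} to replace each exponent $\vect{n}[i]$ by $\min(\vect{n}[i],2^{\qheight{\aformulabis}+1}+1)$ without changing whether $\aword\in\alang(\aformulabis)$; this normalizes the exponents to a single-exponential range, so that $\length{u}$ is bounded by $2^{p(\size{\acps}+\size{\aformulabis})}$ for some polynomial $p$, and every relevant position is an integer storable in polynomial space.

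The core of the argument rests on two facts. First, by~\cite{Markey&Schnoebelen03}, satisfaction of $\aformulabis$ over $u\cdot v^{\omega}$ is preserved when every existential quantification is relativized to positions below $\length{u}+\length{v}\cdot 2^{\qheight{\aformulabis}}$; since this bound is a single exponential, such positions fit in polynomial space. Second, the letter $\aword(i)$ at any such position can be recovered in polynomial time directly from $\acps$ and $\vect{n}$, without writing down $\aword$: from $i$ one locates, by successive comparisons against the cumulative block lengths $\sum_{j}(\length{\aseg_j}+\vect{n}[j]\cdot\length{\aloop_j})$, which segment or loop-copy offset $i$ falls into, and hence which letter it carries. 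With this primitive I would evaluate $\aformulabis$ by the recursive procedure FOSAT: an existential quantifier is handled by iterating over all relativized positions and accepting if some choice succeeds, a negation flips the recursive answer, a conjunction recurses on both arguments, and the atomic order predicates $<$, $=$, $S$ are decided directly on the stored position assignment, the atom $\aletter(\avar)$ by computing $\aword(\amap(\avar))$ as above. The recursion depth is linear in $\size{\aformulabis}$ and each stack frame stores only finitely many integers bounded by a single exponential, hence representable in polynomial space; therefore the whole computation runs in deterministic polynomial space in $\size{\acps}+\size{\aformulabis}$.

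For the lower bound I would reduce from the \pspace-hard problem of model checking a first-order sentence against a fixed ultimately periodic word $u\cdot v^{\omega}$~\cite{Markey&Schnoebelen03}. Such a word is captured by the degenerate constrained path schema $\acps=\pair{u\,(v)^{\omega}}{\true}$ with a single ($\omega$-)loop and a trivially satisfied constraint, together with the empty tuple of exponents; the membership query for $\acps$ and $\aformulabis$ is then literally the question whether $u\cdot v^{\omega}\models\aformulabis$, so the reduction is immediate. The main obstacle of the whole proof is confined to the upper bound, and specifically to the argument that relativizing quantifiers to a single-exponential position bound, combined with the on-the-fly computation of $\aword(i)$, suffices to sidestep any exponential blow-up in space; once this is granted, both the \pspace\ algorithm and the matching hardness are routine.
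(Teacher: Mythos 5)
Your proposal is correct and follows essentially the same route as the paper's proof: normalize the exponents via the Stuttering Theorem, relativize quantifiers to a single-exponential position bound using~\cite{Markey&Schnoebelen03}, evaluate the relativized sentence recursively with on-the-fly computation of the $i$th letter of $\aword$, and obtain hardness from model-checking ultimately periodic words with FO. The only (immaterial) difference is that you iterate deterministically over witness positions where the paper's procedure guesses them nondeterministically.
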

Note that the membership problem for FO$_{\aalphabet}$ is for unconstrained alphabet, but due to the nice subalphabet property of FO and the conversion of instance of membership problem for FO formulas to FO$_{\aalphabet}$, the same holds true for constrained alphabet.
Thanks to this lemma and to the result stated in Lemma \ref{lemma:fo-cps-small-loop}, we  obtain the point of following theorem .
%

\begin{theorem} \label{theorem-intersection-fo}
The intersection non-emptiness problem with first-order logic is \pspace-complete. 
\end{theorem}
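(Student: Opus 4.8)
The plan is to prove both bounds, reusing Lemma~\ref{lemma:fo-cps-small-loop} for the small-witness property and Lemma~\ref{lemma:fo-membership} for the per-witness test; essentially all the genuine work has already been isolated into those two lemmas, so the theorem is obtained by assembling them.

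\emph{Upper bound.} I would give a nondeterministic polynomial-space procedure deciding $\alang(\acps)\cap\alang(\aformulabis)\neq\emptyset$. By Lemma~\ref{lemma:fo-cps-small-loop}, if the intersection is non-empty it contains a witness $\aword=\aseg_1(\aloop_1)^{\vect{n}[1]}\cdots\aseg_{k-1}(\aloop_{k-1})^{\vect{n}[k-1]}\aseg_k(\aloop_k)^{\omega}$ whose loop counts satisfy $\vect{n}[i]\leq B$ with $B=2^{(\qheight{\aformulabis}+2)+\mathtt{pol}_1(\size{\acps})+\mathtt{pol}_2(\size{\acps})}$. Since $B$ is singly exponential in $\size{\acps}+\size{\aformulabis}$, each $\vect{n}[i]$ has a binary encoding of polynomial length, so the vector $\vect{n}$ can be guessed in polynomial space. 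The procedure then verifies (i)~$\aword\in\alang(\acps)$, which by definition of a constrained path schema is exactly the condition $\vect{n}\models\aformula$ for the guard $\aformula$ of $\acps$, a single guard evaluation on an explicitly given vector and hence in polynomial time; and (ii)~$\aword\in\alang(\aformulabis)$, which is literally an instance of the membership problem and, by Lemma~\ref{lemma:fo-membership}, decidable in polynomial space in $\size{\acps}+\size{\aformulabis}$. A polynomial-space guess followed by two polynomial-space tests stays in \npspace, and Savitch's theorem collapses this to \pspace.

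\emph{Lower bound.} I would reduce the membership problem for first-order logic, shown \pspace-hard in Lemma~\ref{lemma:fo-membership}, to intersection non-emptiness. Given a membership instance consisting of $\acps$, a sentence $\aformulabis$ and loop counts $\vect{n}\in\Nat^{k-1}$ (encoded in binary), I build $\acps'$ from $\acps$ by keeping its first component and replacing its guard with $\bigwedge_{i=1}^{k-1}(\avariable_i=\vect{n}[i])$, a guard of polynomial size in $\guards(\counters_{k-1})$. Then $\alang(\acps')$ is the single word $\aseg_1(\aloop_1)^{\vect{n}[1]}\cdots\aseg_k(\aloop_k)^{\omega}$, so $\alang(\acps')\cap\alang(\aformulabis)\neq\emptyset$ holds iff that word belongs to $\alang(\aformulabis)$, i.e.\ iff the membership instance is positive. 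This logspace reduction yields \pspace-hardness, and together with the upper bound it gives \pspace-completeness.

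\emph{Main obstacle.} The only remaining risk lies in the bookkeeping of the upper bound: the witness word has exponential length in its finite prefix (each loop is iterated up to $B$ times), so neither $\aword$ nor the positions quantified in $\aformulabis$ may be materialized. The argument therefore hinges on step (ii) being carried out symbolically -- the $i$-th letter of $\aword$ is recovered on the fly from $\acps$ and $\vect{n}$ in polynomial time, and the quantifiers are relativized to a polynomially representable range -- which is precisely the content of Lemma~\ref{lemma:fo-membership}. Once that lemma is invoked, composing it with the guessing step and applying Savitch's theorem is routine.
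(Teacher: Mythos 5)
Your proposal is correct and follows essentially the same route as the paper: the upper bound combines the exponential loop-count bound of Lemma~\ref{lemma:fo-cps-small-loop} with the polynomial-space membership test of Lemma~\ref{lemma:fo-membership} (guess $\vect{n}$ in binary, check the guard, check membership, apply Savitch), and the lower bound ultimately rests on the \pspace-hardness of model-checking ultimately periodic words with FO~\cite{Markey&Schnoebelen03}, exactly as in the paper. Your explicit reduction from the membership problem via the singleton guard $\bigwedge_i(\avariable_i=\vect{n}[i])$ is a slightly more general packaging of the same hardness source, but not a different argument.
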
  

By Lemma~\ref{lemma-complexity-subalphabet}, Lemma~\ref{lemma-fo-subalphabet-property} and  Theorem~\ref{theorem-intersection-fo}, 
we get the following result. 

\begin{theorem} \label{theorem-fo-pspace-complete}
$\mcpb{\FO}{\flatcs}$ is \pspace-complete. 
\end{theorem}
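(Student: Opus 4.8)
The plan is to establish membership in \pspace\ and \pspace-hardness separately, with the upper bound being essentially a corollary of machinery already developed.

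For the upper bound I would simply assemble the pieces already in place. By Lemma~\ref{lemma-fo-subalphabet-property}, FO enjoys the nice subalphabet property, and by Theorem~\ref{theorem-intersection-fo} its intersection non-emptiness problem is in \pspace. Feeding these two facts into Lemma~\ref{lemma-complexity-subalphabet} immediately yields that $\mcpb{\FO}{\flatcs}$ is in \pspace. Concretely, this amounts to running the abstract algorithm backing Lemma~\ref{lemma-complexity-subalphabet}: guess a constrained path schema $\acps$ from the exponential-size but polynomially-checkable set $\aset$ furnished by Theorem~\ref{theorem-reduction}, translate the input sentence into a specification $\aspecification'$ over the relevant subalphabet in polynomial time using the nice subalphabet property, and then decide $\alang(\acps) \cap \alang(\aspecification') \neq \emptyset$ in polynomial space. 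Since \pspace\ is closed under the nondeterministic guessing of $\acps$, the whole procedure stays in \pspace.

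For the lower bound I would reduce from the problem of deciding whether an ultimately periodic word $u \cdot v^{\omega}$ satisfies a first-order sentence $\aformulabis$, which is \pspace-hard by~\cite{Markey&Schnoebelen03} (this is exactly the source of hardness already exploited in Lemma~\ref{lemma:fo-membership}). Given such $u$, $v$ and $\aformulabis$ over an alphabet $\aalphabet$, I would build a flat Kripke structure $\asys$ (a counter system without counters) whose control graph is a simple path of $\length{u}$ states feeding into a single simple cycle of $\length{v}$ states, with a labelling function assigning to the $i$-th state a distinguished propositional variable encoding the $i$-th letter of $u \cdot v^{\omega}$ (one propositional variable per letter of $\aalphabet$ suffices). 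Such a structure is flat and, starting from its first state, admits a \emph{unique} infinite run $\arun$, whose sequence of labels is precisely $u \cdot v^{\omega}$. Translating $\aformulabis$ into a first-order formula over counter systems by replacing each atom $\aletter(\avar)$ with the corresponding propositional atom $\avarprop(\avar)$ is polynomial and preserves satisfaction. Because $\arun$ is the only run and matches exactly the word $u \cdot v^{\omega}$, there is a run of $\asys$ satisfying the translated sentence iff $u \cdot v^{\omega} \models \aformulabis$, which establishes \pspace-hardness.

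The only delicate point is bookkeeping in the lower-bound reduction: one must set up the constrained alphabet and labelling so that the unique run matches exactly one word (so that $\arun \models \aword$ forces $\aword = u \cdot v^{\omega}$), and so that the syntactic rewriting of $\aletter(\avar)$-atoms into $\avarprop(\avar)$-atoms respects the disjunctive reading of propositional atoms over the constrained alphabet. All the genuinely hard work---the stuttering theorem, the small-solution bounds of Theorem~\ref{theorem-pottier91}, and the polynomial-space membership procedure---has already been discharged inside Theorem~\ref{theorem-intersection-fo}, so the final result follows modulo this standard word-to-structure encoding.
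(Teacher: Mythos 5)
Your proposal is correct and follows essentially the same route as the paper: the upper bound is obtained by combining the nice subalphabet property of FO (Lemma~\ref{lemma-fo-subalphabet-property}) with the \pspace\ bound on intersection non-emptiness via Lemma~\ref{lemma-complexity-subalphabet}, and the lower bound comes from the \pspace-hardness of model-checking FO on ultimately periodic words~\cite{Markey&Schnoebelen03}, encoded as a flat Kripke structure. Your explicit path-plus-cycle encoding of $u \cdot v^{\omega}$ just spells out the word-to-structure step that the paper leaves implicit.
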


\pspace-hardness is obtained by noting that 
 model-checking FO over finite flat Kripke
structures with two states is known to be \pspace-hard (see also~\cite{Markey&Schnoebelen03}). 
Furthermore, since flat Kripke structures  form a subclass of flat counter systems, we conclude this new result for flat Kripke structures
(the lower bound being a direct consequence of Theorem~\ref{theorem-intersection-fo}).

\begin{corollary}
Model-checking flat Kripke structures over FO formulae is \pspace-complete.
\end{corollary}


\else
Hence, this leads to a nondeterministic exponential space decision procedure for the intersection non-emptiness
problem but it is possible to get down to
nondeterministic polynomial space using the succinct representation of constrained path schema as stated by 
Lemma~\ref{lemma:fo-membership} below for which the lower bound is deduced by the fact that model-checking 
ultimately periodic words with first-order logic is \pspace-hard~\cite{Markey&Schnoebelen03}. 

\begin{lemma}
\label{lemma:fo-membership}
\iftechreport
 Membership problem with first-order logic is \pspace-complete and it can be solved in
           polynomial space in $\size{\acps} + \size{\aformula}$.
\else
Membership problem with $\FO_{\aalphabet}$ is \pspace-complete.
\fi 
\end{lemma}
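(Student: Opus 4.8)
The plan is to establish the two directions separately, the \pspace\ upper bound being the substantial part. Fix an instance consisting of a constrained path schema $\acps$ with underlying expression $\aseg_1 (\aloop_1)^* \cdots \aseg_{k-1} (\aloop_{k-1})^* \aseg_k (\aloop_k)^{\omega}$, an $\FO_{\aalphabet}$ sentence $\aformulabis$, and binary-encoded exponents $\vect{n} \in \Nat^{k-1}$. The word to be tested is the ultimately periodic word $\aword = u \cdot v^{\omega}$, where $u = \aseg_1 (\aloop_1)^{\vect{n}[1]} \cdots \aseg_{k-1} (\aloop_{k-1})^{\vect{n}[k-1]} \aseg_k$ and $v = \aloop_k$. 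First I would invoke Theorem~\ref{theorem-stuttering-fo} to replace each exponent by $\vect{n}'[i] = \mathtt{min}(\vect{n}[i], 2^{\qheight{\aformulabis}+1}+1)$: whenever a block $\mathbf{s}^M$ with $M > 2^{\qheight{\aformulabis}+1}$ occurs, the words with $M$ and $M+1$ copies of $\mathbf{s}$ are $\equivrel{\qheight{\aformulabis}}$, so the successive truncations preserve membership in $\alang(\aformulabis)$. I may therefore assume $\length{u}$ is bounded by $2^{\mathcal{O}(\size{\acps}+\size{\aformulabis})}$, so that every position of $\aword$ is an integer encodable in polynomial space.

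The second step is to decide $\aword \models \aformulabis$ in polynomial space without materialising $\aword$. Here I would use \cite[Proposition 4.2]{Markey&Schnoebelen03}: for an ultimately periodic word $u\cdot v^{\omega}$ and an $\FO$ sentence, all quantifiers may be relativised to positions strictly below $\length{u} + \length{v}\cdot 2^{\qheight{\aformulabis}}$ without changing the truth value. This lets me run a standard recursive satisfaction procedure carrying a position assignment $\amap$: an atom $\aletter(\avar)$ is decided by computing the $\amap(\avar)$-th letter of $\aword$, which is doable in polynomial time by simple arithmetic on the segment and loop lengths and the (truncated) exponents $\vect{n}'[i]$; order, equality and successor atoms are decided by comparing the stored positions; conjunction and negation recurse; and an existential quantifier guesses a witness position within the relativised range. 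The recursion depth is linear in $\size{\aformulabis}$ and each frame stores only polynomially many polynomial-size positions, whence the whole computation runs in polynomial space, giving the \pspace\ membership.

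For \pspace-hardness I would reduce from model-checking an ultimately periodic word against an $\FO$ sentence, which is \pspace-hard by \cite{Markey&Schnoebelen03}. Given such a word $u \cdot v^{\omega}$, one directly builds the constrained path schema whose segments spell out $u$ and whose unique $\omega$-loop is $v$, equipped with the trivial constraint $\true$; choosing the $\vect{n}[i]$ accordingly produces exactly $u\cdot v^{\omega}$, so the membership instance is positive iff the original word satisfies the sentence. Combined with the upper bound this yields \pspace-completeness.

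The main obstacle I anticipate is the space accounting in the upper bound: since the exponents are exponential the word cannot be written down, and one must verify carefully that (i) the truncation via Theorem~\ref{theorem-stuttering-fo} genuinely bounds all relevant positions, (ii) the $i$-th letter of $\aword$ is recoverable on demand from the binary exponents by arithmetic alone, and (iii) the relativisation bound of \cite{Markey&Schnoebelen03} keeps every quantified position inside the polynomially-encodable range. Once these three points are discharged, the recursive model checker and its polynomial-space bound are routine, and the hardness direction is immediate.
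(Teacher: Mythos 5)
Your proposal is correct and follows essentially the same route as the paper's proof: truncating the exponents via Theorem~\ref{theorem-stuttering-fo}, relativising quantifiers using the bound from~\cite{Markey&Schnoebelen03}, running a recursive satisfaction procedure that recomputes letters of the ultimately periodic word on demand from the binary exponents, and deriving hardness from model-checking ultimately periodic words with FO. The three ``obstacles'' you flag are exactly the points the paper's FOSAT algorithm discharges.
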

Note that the membership problem for FO is for unconstrained alphabet, but due to the nice subalphabet property of FO, the same holds for constrained alphabet since given a FO  formula over $\triple{at}{ag_n}{\aalphabet}$, we can build in polynomial time a FO formula over $\triple{at}{ag_n}{\aalphabet'}$ from which we can build also in polynomial time a formula of  $\FO_{\aalphabet'}$ (where $\aalphabet'$ is for instance the alphabet labeling a constrained path schema). We can now state the main results concerning FO.
%

\begin{theorem} \label{theorem-fo}
\iftechreport
\begin{description}
\item[(I)]
The intersection non-emptiness problem with FO is \pspace-complete.
\item[(II)] $\mcpb{\FO}{\flatcs}$ is \pspace-complete.
\item[(III)] Model-checking flat Kripke structures with FO is \pspace-complete.
\end{description} 
\else
(I) The intersection non-emptiness problem with FO is \pspace-complete.
(II) $\mcpb{\FO}{\flatcs}$ is \pspace-complete.
(III) Model-checking flat Krip\-ke structures with FO is \pspace-complete.
\fi 
\end{theorem}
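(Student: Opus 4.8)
The plan is to establish part (I) first and then to obtain (II) and (III) from it together with the machinery already set up. For the upper bound of (I), I would describe a nondeterministic polynomial-space procedure deciding, on input a constrained path schema $\acps$ and an $\FO_{\aalphabet}$ sentence $\aformulabis$, whether $\alang(\acps) \cap \alang(\aformulabis) \neq \emptyset$. By Lemma~\ref{lemma:fo-cps-small-loop} it suffices to search for a witness in which every loop is iterated at most $\amap_{\FO}(\aformulabis,\acps) = 2^{(\qheight{\aformulabis}+2)+(\mathtt{pol}_1+\mathtt{pol}_2)(\size{\acps})}$ times. Since this bound is a single exponential, each count $\vect{n}[i]$ is writable in binary using polynomially many bits, so the whole vector $\vect{n} \in \interval{0}{\amap_{\FO}(\aformulabis,\acps)}^{k-1}$ occupies only polynomial space. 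The algorithm then (i) guesses such a $\vect{n}$, (ii) checks in polynomial time that $\vect{n} \models \aformula$, where $\aformula$ is the guard of $\acps$, so that the associated word lies in $\alang(\acps)$, and (iii) checks that $\aseg_1 (\aloop_1)^{\vect{n}[1]} \cdots \aseg_{k-1}(\aloop_{k-1})^{\vect{n}[k-1]} \aseg_k (\aloop_k)^{\omega} \in \alang(\aformulabis)$. Step (iii) is exactly an instance of the membership problem, solvable in polynomial space by Lemma~\ref{lemma:fo-membership} despite the guessed word having exponential length. The whole procedure is thus nondeterministic with a polynomial-space verification, hence in $\npspace = \pspace$ by Savitch's theorem.

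For the lower bound of (I), I would reduce from FO model checking of ultimately periodic words, which is \pspace-hard by~\cite{Markey&Schnoebelen03}. Given an ultimately periodic word $\awordbis \cdot (\awordter)^{\omega}$ and a sentence $\aformulabis$, take the constrained path schema $\acps$ with $k=1$, namely $\pair{\awordbis (\awordter)^{\omega}}{\true}$, whose constraint is over no variable and is trivially satisfied. Then $\alang(\acps) = \set{\awordbis (\awordter)^{\omega}}$, so $\alang(\acps) \cap \alang(\aformulabis) \neq \emptyset$ iff $\awordbis (\awordter)^{\omega} \models \aformulabis$. This reduction is computable in polynomial time and establishes \pspace-hardness of the intersection non-emptiness problem, completing (I).

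To obtain (II), I would invoke the generic reduction from model checking to intersection non-emptiness: $\FO$ has the nice subalphabet property (Lemma~\ref{lemma-fo-subalphabet-property}) and its intersection non-emptiness problem is in \pspace by (I), so Lemma~\ref{lemma-complexity-subalphabet} yields $\mcpb{\FO}{\flatcs} \in \pspace$. The matching lower bound follows once (III) is proved, since flat Kripke structures form a subclass of flat counter systems, so hardness for the former transfers. The \pspace upper bound of (III) is then immediate, a flat Kripke structure being a counter system without counters, so (II) applies. For the lower bound of (III) I would use that FO model checking over finite flat Kripke structures is already \pspace-hard by~\cite{Markey&Schnoebelen03}: a two-state flat Kripke structure with a self-loop generates precisely ultimately periodic runs, and deciding an FO property of such a run is \pspace-hard. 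This simultaneously closes the lower bounds of (II) and (III).

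The genuinely hard technical content is entirely absorbed into the two cited lemmas, so the main obstacle in this assembly is the complexity bookkeeping for the upper bound of (I): one must ensure that, although the witness word is of exponential length, neither the guess of $\vect{n}$ nor the membership verification exceeds polynomial space. This is exactly what the single-exponential loop bound of Lemma~\ref{lemma:fo-cps-small-loop}, giving polynomially many bits per loop, and the polynomial-space membership test of Lemma~\ref{lemma:fo-membership} guarantee; everything else is routine.
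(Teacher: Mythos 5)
Your proposal is correct and follows essentially the same route as the paper: the upper bound of (I) combines the exponential loop bound of Lemma~\ref{lemma:fo-cps-small-loop} (so $\vect{n}$ is guessable in polynomially many bits) with the polynomial-space membership test of Lemma~\ref{lemma:fo-membership}, the lower bounds come from \pspace-hardness of FO model-checking on ultimately periodic words~\cite{Markey&Schnoebelen03}, and (II) and (III) follow via Lemma~\ref{lemma-complexity-subalphabet}, Lemma~\ref{lemma-fo-subalphabet-property} and the inclusion of flat Kripke structures in flat counter systems. The only cosmetic difference is that you spell out the \npspace$=$\pspace{} bookkeeping and the trivial path-schema reduction explicitly, which the paper leaves implicit.
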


\begin{proof}
(I) is a consequence of Lemma~\ref{lemma:fo-cps-small-loop} and Lemma~\ref{lemma:fo-membership}. 
We obtain (II) from (I) by applying Lemma \ref{lemma-complexity-subalphabet} and  
Lemma~\ref{lemma-fo-subalphabet-property}. (III) is obtained by observing that flat Kripke structures  
form a subclass of flat counter systems. To obtain the lower bound, we use that model-checking 
ultimately periodic words with first-order logic is \pspace-hard~\cite{Markey&Schnoebelen03}.\qed
\end{proof}
\fi
\cut{
\subsection{Model-checking flat counter systems with FO}

We will now show that MC(FO,$\flatcs$) is in  \pspace~by showing that FO formulas have the nice subalphabet property and by
applying Lemma~\ref{lemma-complexity-subalphabet} from Section~\ref{section-main-reduction}. Note that
\pspace-hardness is a consequence of the \pspace-hardness for
ultimately periodic paths. Remember also that when counter systems are involved FO includes 
guards. Let us start by briefly explaining what we mean by FO when runs from flat counter
systems are involved. 

FO formulae interpreted over infinite runs of counter systems are defined as FO formulae over a finite alphabet
from Section~\ref{section-FO-definition} 
except that atomic formulae of the form $\aletter(\avar)$ are replaced by atomic formulae of the form 
$\avarprop(\avar)$  or $\aguard(\avar)$ where $\avarprop$ is an atomic formula or $\aguard$ 
is an atomic guard from $\guards(\counters_n)$. Hence, a formula $\aformula$ built over atomic formulae
from a finite set $at$ of atomic propositions and from a finite set $at_n$ of atomic guards
from $\guards(\counters_n)$ defined a specification for the constrained
alphabet $\triple{at}{at_n}{\powerset{at \cup ag_n}}$. Note that the alphabet can be of exponential size
in the size of $\aformula$ and $\avarprop(\avar)$ actually corresponds to a disjunction 
$\bigvee_{\avarprop \in \aletter} \aletter(\avar)$ if we follow the developments
from Section~\ref{section-model-checking-problem} and 
Section~\ref{section-stuttering-fo}. 

Let $\asys$ be a flat counter system of dimension $n$, $\aconf_0$ be an initial configuration and $\aformula$ be a first-order formula. Let $N$ be the
size of this instance. $\asys$ is labelled by alphabets from $\aalphabet$, whereas the atomic propositions used in $\aformula$ are from $at\cup
ag_n$. To use the intersection non-emptiness of FO, we need to make sure, that $\acps$ from $\asys$ is labelled with the same alphabet $\aalphabet$ as
the atomic propositions of $\aformula$. Let $\aformula'$ be the formula obtained from $\aformula$ such that
\begin{itemize}
\itemsep 0 cm
\item every occurrence of  $\avarprop(\avar)$ is replaced by 
      $\bigvee_{\avarprop \in \aletter \in \powerset{at \cup ag_n}} \aletter(\avar)$,
\item every occurrence of  $\aguard(\avar)$ is replaced by 
      $\bigvee_{\aguard \in \aletter \in \powerset{at \cup ag_n}} \aletter(\avar)$.
\end{itemize}
In that way, $\aformula'$ is a first-order formula in the sense of Section~\ref{section-FO-definition} in which each position is labelled by a letter
from $\powerset{at \cup ag_n}$ and $\alang(\aformula)=\alang(\aformula')$. But obviously, $\aalphabet$ is exponential in size and in turn
$\size{\aformula'}$ is also exponential.

\begin{lemma}
FO formulas have the nice sub-alphabet property.
\end{lemma}
\begin{proof}
Consider a FO formula, $\aformula$ which defines a specification
over the  constrained alphabet $\triple{at}{ag_n}{\aalphabet}$ with $\aalphabet = \powerset{at \cup ag_n}$.
Consider a sub-alphabet $\aalphabet'\subseteq\aalphabet$.
We define $\aformula''$ from $\aformula$ but restricted to letters from $\aalphabet'$. 
Let $\aformula''$ be the formula obtained from $\aformula$ such that
 every occurrence of  $\avarprop(\avar)$ is replaced by 
      $\bigvee_{\avarprop \in \aletter \in \aalphabet'} \aletter(\avar)$
and  every occurrence of  $\aguard(\avar)$ is replaced by 
      $\bigvee_{\aguard \in \aletter \in \aalphabet'} \aletter(\avar)$.
It is easy to see that, by construction, $\alang(\aformula'')=(\aformula')\cap(\aalphabet)^{\omega}$
\end{proof}

Note that the same constructions and properties are valid for extended temporal logic and linear $\mu$-calculus.

Thus, we know that the intersection non-emptiness problem for FO is in \pspace, and FO formulas have the nice sub-alphabet property. These alongwith
Lemma~\ref{lemma-complexity-subalphabet}, lets us conclude that, MC(FO,$\flatcs$) is in \pspace.  Model checking FO over finite flat Kripke
structures, with two states is known to be \pspace-hard. Since flat Kripke structures can be seen as a restricted flat counter systems, we have the
following,

\begin{theorem} \label{theorem-fo-pspace-complete}
MC(FO, $\flatcs$) is \pspace-complete. 
\end{theorem}

Furthermore since flat Kripke structures are a special case of flat counter systems, we automatically get the following new result on flat Kripke structures, the lower bound being a direct consequence of Theorem \ref{corollary-intersection-fo}.

\begin{corollary}
Model-checking flat Kripke structures over FO formulae is \pspace-complete.
\end{corollary}

It is interesting to note that there exists a standard logspace translation from a past LTL formula to a $FO$ formula, while
preserving the semantics. In fact, as presented here, $FO$ and past LTL have the same expressive power, thanks to Kamp's theorem.
}

\cut{
\begin{algorithm}
{\footnotesize
\caption{The \pspace~algorithm with inputs Parikh path schema $\tuple{\aschema,\aconstraintsystem}$, $\aformula$}
\label{algo:FOpps}
\begin{algorithmic}[1]
\STATE guess $\vec{y} \in \interval{1}{2^{\qheight{\aformula}}}^{k-1}$
\STATE guess $\vec{y'} \in \interval{1}{2^{p^{\star}(\size{\asys} + \size{\aformula})}}^{k-1}$ 
\STATE Let $\vect{z}=(0)$. 
\STATE Construct $\hat{\aformula_{\aschema,\vect{y}}}$ from $\aformula$, $\aschema$ and $\vect{y}$ as in Lemma~\ref{lem:runFO}
\STATE continue if $FOSAT(\aschema,\vect{y},\hat{\aformula_{\aschema,\vect{y}}}, \vect{z})$ returns true.
\STATE otherwise ABORT. 
\FOR{$i = 1 \to k-1$} 
\IF{$\vec{y}[i] = 2^{ \qheight{\aformula'}}$}
   \STATE $\aformulabis_i \gets ``\avariablebis_i \geq 2^{ \qheight{\aformula'}}"$
   \ELSE
   \STATE  $\aformulabis_i \gets ``\avariablebis_i = \vec{y}[i]"$
\ENDIF
\ENDFOR
\STATE check that $\vec{y'} \models \aconstraintsystem \wedge \aformulabis_1 \wedge \cdots
       \wedge \aformulabis_{k-1}$
\end{algorithmic}
}
\end{algorithm}

\begin{lemma}
Algorithm~\ref{algo:FOpps} with input Parikh path schema $\tuple{\aschema,\aconstraintsystem}$ and $\aformula$ returns true iff
$\tuple{\aschema,\aconstraintsystem}\models\aformula$.
\end{lemma}
\begin{proof}
Let us assume that $\tuple{\aschema,\aconstraintsystem}\models\aformula$. In that case, there exists at least one run $\arun$ in
$\tuple{\aschema,\aconstraintsystem}$ such that $\arun\models\aformula$. Now, $\arun$ respects the path schema $\aschema$. Let
$\vect{y}[i]=\loopsof{\aschema}{\arun}[i]$ if $\loopsof{\aschema}{\arun}[i]\leq 2^{\qheight{\aformula}}$ otherwise 
$\vect{y}[i]=2^{\qheight{\aformula}}$. Since $\arun\models\aformula$, using Corollary~\ref{cor:runFO}, we have that $\arun'\models\aformula$ where
$\loopsof{\aschema}{\arun'}=\vect{y}$. Thus, the call to $FOSAT$ returns true as noted due to Lemma~\ref{lem:runFO}. Since, $\arun$ is a run in
$\tuple{\aschema,\aconstraintsystem}$, $\loopsof{\aschema}{\arun}$ satisfies $\aconstraintsystem$. By construction of $\vect{y}$,
$\loopsof{\aschema}{\arun}$ also satisfies $\aformulabis_1 \wedge \cdots \wedge \aformulabis_{k-1}$. Thus, $\aconstraintsystem \wedge \aformulabis_1
\wedge \cdots \wedge \aformulabis_{k-1}$ has the solution $\loopsof{\aschema}{\arun}$ and hence satisfiable. By, Theorem~\ref{thm:constraint} and
\cite{Borosh&Treybig76}, we know that there exists a solution for $\aconstraintsystem \wedge \aformulabis_1 \wedge \cdots \wedge \aformulabis_{k-1}$
where each variable has value within $\interval{1}{2^{p^{\star}(\size{\asys} + \size{\aformula'})}}$. Let the solution be the guessed vector
$\vec{y'}$. Thus, $\vec{y'}\models\aconstraintsystem \wedge \aformulabis_1 \wedge \cdots \wedge \aformulabis_{k-1}$ and hence
Algorithm~\ref{algo:FOpps} returns true.

For the other direction, let's assume the algorithm returns true with input $\tuple{\aschema,\aconstraintsystem}$ and $\aformula$. This means that there are
$\vec{y}$, $\vec{y'}$ that satisfy all the checks. Consider a run $\arun$ respecting $\aschema$ such that $\loopsof{\aschema}{\arun}=\vec{y'}$.
Since, $\vect{y'}$ is a solution for $\aconstraintsystem$ and respects $\aschema$, $\arun$ is a run in $\tuple{\aschema,\aconstraintsystem}$. Now we need to
prove that $\arun\models\aformula$. Since, $FOSAT$ returned true with input $\aschema,\vect{y}$ and $\hat{\aformula}_{\aschema,\vect{y}}$, we
know that $\aseg_1 \aloop_1^{\vec{y}[1]} \aseg_2 \aloop_2^{\vec{y}[2]} \ldots \aloop_{k-1}^{\vec{y}[k-1]} \aseg_k \aloop_k^{\omega}\models
\hat{\aformula}_{\aschema,\vect{y}}$. By Lemma~\ref{lem:runFO}, we know that $\aseg_1 \aloop_1^{\vec{y}[1]} \aseg_2 \aloop_2^{\vec{y}[2]} \ldots
\aloop_{k-1}^{\vec{y}[k-1]} \aseg_k \aloop_k^{\omega}\models\aformula$. By Corollary~\ref{cor:runFO}, we deduce that $\aseg_1 \aloop_1^{\vec{y'}[1]}
\aseg_2 \aloop_2^{\vec{y'}[2]} \ldots \aloop_{k-1}^{\vec{y'}[k-1]} \aseg_k \aloop_k^\omega\models \aformula$. Thus, $\arun$ from
$\tuple{\aschema,\aconstraintsystem}$, $\arun\models\aformula$.  
\qed
\end{proof}

\begin{theorem}
Algorithm~\ref{algo:FOpps} runs in polynomial space.
\end{theorem}
\begin{proof}
The space required by the algorithm is for storing $\aschema$, $\aschema'$, $\aformula$, $\aformula'$, $\vec{y}$, $\vec{y'}$ and
$\aconstraintsystem$. We know that $\size{\aschema}$ and $\size{\aschema'}$ is bounded by a polynomial in $\size{\asys}$. Also,
$\size{\aformula}=\size{\aformula'}$. By Claim~\ref{clm:fosat}, we know that the space required by $FOSAT$ is also bounded by a polynomial in
$\size{\aschema'}$ and $\aformula'$. As shown in the proof of Claim~\ref{clm:fosat}, $\vec{y}$ and $\vec{y'}$ can be represented using polynomially
many bits. By Theorem~\ref{thm:constraint}, $\aconstraintsystem$ can be bounded in size by $\size{\aschema'}\times\size{\aformula'}$. Thus,
Algorithm~\ref{algo:FOpps} requires space which can be bounded by a polynomial in $\size{\asys}$ and $\size{\aformula}$. Hence, the algorithm runs in
polynomial space.
\qed
\end{proof}
\pspace-hardness already occurs with finite paths of length two where there is an obvious reduction from Quantified Boolean Formula (QBF).
Thus, we have that,
\begin{theorem}
$\mc{FO}{\pps}$ is \pspace-Complete.
\end{theorem}
}

\iftechreport 
\section{Taming Linear $\mu$-calculus and Other Languages}
In this section, we consider several specification languages defining $\omega$-regular properties
on atomic propositions and arithmetical constraints. First, we deal with BA by mainly establishing 
Theorem~\ref{thm:stuttering-buchi}
and then we draw the consequences for ABA, ETL and linear $\mu$-caculus.
\cut{
In this section, we are going to provide the complexity bounds for formulas in Linear $\mu$-calculus. In fact, we will also provide the complexity
bounds for other specification languages like BA, ETL, ABA. We will provide a general treatment for the upper bound for all these
specification. Since, this is achieved by transforming each of the specification to \buchi~automata, we will first prove a property for \buchi~automata.
}
\subsection{No Need to Visit Loops Too Often}

We denote a \buchi~automaton as $\bauto=\tuple{\states,\aalphabet,\astate_0,\edges,F}$ where as usual $\states$ is the finite set of
states, $\aalphabet$ are the alphabets, $\astate_0\in\states$ is the initial state and $F\subseteq \states$ is the set of final states respectively. The set
of transitions is defined as $\edges\subseteq\states\times\aalphabet\times\states$. An accepting run $\arun$ for a word $\aword$ in $\bauto$ is
defined as an infinite sequence of states $\arun\in\states^{\omega}$ such that for every $i\in\nat$, $\tuple{\arun(i),\aword(i),\arun(i+1)}\in\edges$
and $\arun(j)\in F$ for infinitely many $j$'s.

In this section we will establish the function $\amap_{{\rm \auto}}$ as described in Section 3.3 for specifications given in the form of
\buchi~automaton. For any alphabet $\aalphabet$, we will consider a \buchi~automaton $\bauto=\tuple{\states,\aalphabet,\astate_0,\edges,F}$ and a
constrained path schema $\acps=\pair{\cpschema}{\aconstraintsystem}$. In the sequel we will denote $x=\size{\aconstraintsystem}$. 
\begin{lemma}
\label{lemma-buchi-aux}
Let $\aword\in\alang(\bauto)$, such that $\aword=\aword_1.u^{2.|\states|^k}.\aword_2$ for some $k$, then there exist an integer $K\in[1,|\states|]$
such that for all $N\in[1,|\states|^{k-2}]$, $\aword_1.u^{2.|\states|^k-(K\times N)}.\aword_2\linebreak[0]\in\alang(\bauto)$.
\end{lemma}

\ifconference
\else
\begin{proof}
Let $\bauto=\tuple{\states,\aalphabet,\astate_0,\edges,F}$. Since $\aword=\aword_1.u^{2.\card{\states}^k}.\aword_2\in\alang(\bauto)$, there exists an
accepting run $\arun\in\states^{\omega}$ for $\aword$. We will construct an accepting run for $\aword'=\aword_1.u^{2.\card{\states}^k-(K\times
  N)}.\aword_2$ in $\bauto$ using $\arun$. In $\aword$, $u$ is repeated $2.\card{\states}^k$ times. Consider the first $\card{\states}+1$ iterations of $u$. Let
the positions where the iterations of $u$ starts be $m_1,m_2,\cdots,m_{\card{\states}+1}$. By pigeon-hole principle, there exists some states
$\astate\in\states$ such that for some $i<j\in[1,\card{\states}+1]$, $\arun(m_i)=\arun(m_j)=\astate$. Let $\alpha_1=j-i+1$. We consider $\card{\states}+1$
iterations of $u$ after $m_j$. We proceed as before to obtain $\alpha_2$ and so on. Since $u$ is repeated $2.\card{\states}^k$ times, we will obtain at
least $\card{\states}^{k-1}$ (possibly different) values as $\alpha_1,\alpha_2,\cdots,\alpha_{\card{\states}^{k-1}}\in[1,\card{\states}]$ because
$\card{\states}^{k-1}\times(\card{\states}+1)\leq 2.\card{\states}^{k}$. Again, by pigeon-hole principle, we know that there exists
$j_1,j_2,\ldots,j_{\card{\states}^{k-2}}\in[1,\card{\states}^{k-1}]$ such that $\alpha_{j_1}=\alpha_{j_2}=\ldots=\alpha_{j_{\card{\states}^{k-2}}}=K$ for some
$K\in[1,\card{\states}]$ because $K\times\card{\states}^{k-2}\leq \card{\states}^{k-1}$. Note that for each such $\alpha_j$,
$j\in\{j_1,j_2,\ldots,j_{\card{\states}^{k-2}}\}$, we have a corresponding different loop structure in $\arun$ where we have positions $a$ and $b$ in
$\aword$ such that $\aword[a,b]=(u)^{K}$ and $\arun(a)=\arun(b)=\astate$ for some $\astate\in\states$ as shown in Figure~\ref{fig:run}. Hence, the run
$\arun(1)\ldots\arun(a).\arun(b)\ldots$ is still an accepting run in $\bauto$ for $\aword_1.u^{2.\card{\states}^k-(K)}.\aword_2$. Since, there are
$\card{\states}^{k-2}$ such loops, it is easy to see that for every $N\in[1,\card{\states}^{k-2}]$, we can remove the loops corresponding to
$\alpha_{j_1},\alpha_{j_2},\ldots,\alpha_{j_{N}}$ and have an accepting run for the word $\aword_1.u^{2.\card{\states}^k-(K\times N)}.\aword_2$ in
$\bauto$.  \qed
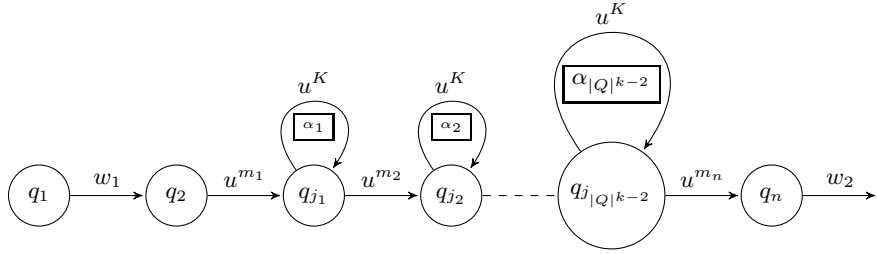
\begin{figure}
  \begin{center}
    \begin{tikzpicture}[shorten >=1pt,node distance=1cm,auto,>=stealth']

  \node[state] (q_1)                  {$\astate_1$};
  \node[state]         (q_2)  [right=of q_1]  {$\astate_{2}$};
  \node[state]         (q_3)  [right=of q_2]  {$\astate_{j_1}$};
  \node (a3)[above=0.2cm of q_3]  {\fbox{{\bf \tiny  $\alpha_1$}}};
  \node[state]         (q_4)  [right=of q_3]  {$\astate_{j_2}$};
  \node (a4)[above=0.2cm of q_4]  {\fbox{{\bf \tiny  $\alpha_2$}}};
  \node[state]         (q_6)  [right=of q_4]  {$\astate_{j_{|\states|^{k-2}}}$};
  \node (a5)[above=0.4cm of q_6]  {\fbox{{\bf \small $\alpha_{|\states|^{k-2}}$}}};
  \node[state]         (q_n)  [right=of q_6]  {$\astate_n$};
  \node                (q_f)  [right=of q_n]  {};
  \path[->] (q_1)   edge              node  {$\aword_1$} (q_2)
            (q_2)   edge              node  {$u^{m_1}$} (q_3)
            (q_3)   edge              node  {$u^{m_2}$} (q_4)
	    	    edge [out=125,in=55,loop] node  {$u^K$} ()
            (q_4)   edge [out=125,in=55,loop] node  {$u^K$} ()
            (q_6)   edge              node  {$u^{m_{n}}$} (q_n)
	    	    edge [out=125,in=55,loop] node  {$u^K$} ()
            (q_n)   edge              node  {$\aword_2$} (q_f);
  \draw[dashed] (5.9,0) -- (6.9,0);
\end{tikzpicture}
  \end{center}
  \caption{Shape of a sample run}
  \label{fig:run}
\end{figure}
\end{proof}

\fi

\begin{theorem}
\label{thm:stuttering-buchi}
Given a constrained path schema $\acps=\pair{\cpschema}{\aconstraintsystem}$, and a \buchi~automaton $\bauto$,
$\alang(\acps)\cap\alang(\bauto)\neq \emptyset$ iff there exists $\vect{y}\in[0,\powerset{\mathtt{pol}_1(x)}+2.|\bauto|^{k}\times
  \powerset{\mathtt{pol}_1(x)}\times \powerset{\mathtt{pol}_2(x)}]^{k-1}$ such that
$\aseg_1(\aloop_1)^{\vect{y}[1]}\ldots\aseg_{k-1} \linebreak[0] (\aloop_{k-1})^{\vect{y}[k-1]}\aseg_k\aloop_k^{\omega}\in\alang(\bauto)\cap\alang(\acps)$.
\end{theorem}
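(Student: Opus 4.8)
The $(\Leftarrow)$ direction is immediate, so I would dispose of it first: any $\vect{y}$ as in the statement already names a word $\aseg_1(\aloop_1)^{\vect{y}[1]}\cdots\aseg_k\aloop_k^{\omega}$ lying in $\alang(\bauto)$ and, since membership in $\alang(\acps)$ is by definition the requirement that $\aconstraintsystem(\vect{y})$ hold, this word belongs to $\alang(\acps)\cap\alang(\bauto)$. The real content is the $(\Rightarrow)$ direction, and the plan is to take an \emph{arbitrary} witness and compress its loop-iteration counts down to the stated bound without leaving either language.

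Concretely, I would start from some $\vect{n}\in\Nat^{k-1}$ with $\aconstraintsystem(\vect{n})$ and $\aword_{\vect{n}}=\aseg_1(\aloop_1)^{\vect{n}[1]}\cdots\aseg_{k-1}(\aloop_{k-1})^{\vect{n}[k-1]}\aseg_k\aloop_k^{\omega}\in\alang(\bauto)$. Because $\vect{n}\models\aconstraintsystem$, Theorem~\ref{theorem-pottier91}(I) lets me write $\vect{n}=\abasis+\sum_{i=1}^{\alpha}\vect{a}[i]\aperiod_i$ with $\abasis\in\basis$ and $\aperiod_1,\ldots,\aperiod_\alpha$ all in $\interval{0}{2^{\mathtt{pol}_1(x)}}^{k-1}$, $\alpha\le 2^{\mathtt{pol}_2(x)}$, and $\vect{a}\in\Nat^{\alpha}$. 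The goal is then to shrink the coefficient vector $\vect{a}$ so that every entry is below $|\bauto|^{k-1}$. Constraint-preservation comes for free along the way: by the same Theorem~\ref{theorem-pottier91}(I), \emph{every} vector $\abasis+\sum_i\vect{a}'[i]\aperiod_i$ with $\vect{a}'\ge\vect{0}$ satisfies the guard, so lowering coefficients never violates $\aconstraintsystem$. Once each $\vect{a}[i]<|\bauto|^{k-1}$, the magnitude estimate $\vect{y}[j]=\abasis[j]+\sum_i\vect{a}[i]\aperiod_i[j]\le 2^{\mathtt{pol}_1(x)}+2^{\mathtt{pol}_2(x)}\cdot|\bauto|^{k}\cdot 2^{\mathtt{pol}_1(x)}$ lands inside the claimed interval, so only the language-preservation of the compression remains.

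That compression is the hard part, and it is exactly a \emph{coordinated} pumping over the \buchi\ automaton: to lower one coefficient $\vect{a}[i]$ I must delete copies of $\aloop_j$ from every block $j$ simultaneously, in numbers proportional to $\aperiod_i[j]$, for otherwise the result would fall outside the Pottier solution set and break $\aconstraintsystem$. I would fix an accepting run $\arun$ on $\aword_{\vect{n}}$ and, writing $c=\vect{a}[i]$ and $\vect{r}=\vect{n}-c\,\aperiod_i\ge\vect{0}$, record for each finite block $j$ the state $q^{(j)}_s$ reached by $\arun$ at the boundary after $\aseg_1\cdots\aseg_j(\aloop_j)^{s}$. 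The key device is the combined signature $\vec{Q}(s)=\bigl(q^{(1)}_{\vect{r}[1]+s\,\aperiod_i[1]},\ldots,q^{(k-1)}_{\vect{r}[k-1]+s\,\aperiod_i[k-1]}\bigr)$ for $s=0,\ldots,c$, whose coordinates are all legal boundary indices in $[\vect{r}[j],\vect{n}[j]]$ (blocks with $\aperiod_i[j]=0$ give a constant coordinate). Since there are at most $|\bauto|^{k-1}$ such signatures (each of the $k-1$ coordinates being one of at most $|\bauto|$ states), whenever $c\ge|\bauto|^{k-1}$ the pigeonhole principle produces $s<s'\le c$ with $\vec{Q}(s)=\vec{Q}(s')$; putting $t=s'-s$, in every block the run revisits the same state at the boundaries $\vect{r}[j]+s\,\aperiod_i[j]$ and $\vect{r}[j]+s'\,\aperiod_i[j]$, so I can excise the intervening $t\,\aperiod_i[j]$ iterations of $\aloop_j$ and reconnect. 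Because each block keeps its exit state, the splices across blocks are independent and the $\omega$-tail $\aloop_k^{\omega}$---where the accepting states recur infinitely---is left untouched, so the spliced run is still accepting and witnesses $\aword_{\vect{n}-t\,\aperiod_i}\in\alang(\bauto)$ with $\vect{a}[i]$ reduced by $t\ge 1$. Iterating over all $i$ drives every coefficient below $|\bauto|^{k-1}$, completing the argument; this multi-loop pigeonhole is precisely the generalisation of the single-factor Lemma~\ref{lemma-buchi-aux} demanded here, and getting the simultaneity of the cuts right is the one genuinely delicate point.
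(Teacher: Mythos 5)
Your proof is correct, and the $(\Leftarrow)$ direction, the use of Theorem~\ref{theorem-pottier91}(I) to reduce everything to bounding the coefficients $\vect{a}[i]$ of the periods, and the observation that lowering coefficients cannot violate the constraint are all exactly as in the paper. Where you genuinely diverge is in the central combinatorial step. The paper first proves an auxiliary pumping lemma (Lemma~\ref{lemma-buchi-aux}) for a \emph{single} repeated factor: if $u$ is iterated $2\card{\states}^k$ times, one can find a common excision length $K\in[1,\card{\states}]$ together with $\card{\states}^{k-2}$ disjoint excisable segments, so that $N\cdot K$ iterations can be removed for \emph{any} $N\in[1,\card{\states}^{k-2}]$. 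It then applies this lemma block by block with $r_m=(\aloop_m)^{\aperiod_j[m]}$, obtaining one $K_m$ per block, and coordinates the removals across blocks by choosing $N_m=\prod_{m'\neq m}K_{m'}$ so that every block loses the same number $K_1\cdots K_{k-1}$ of copies of $r_m$, i.e.\ exactly one multiple of $\aperiod_j$ is subtracted a fixed number of times. Your product-state pigeonhole on the tuple of boundary states $\vec{Q}(s)$ achieves the same coordinated excision in one shot: a repeated signature directly yields a $t$ such that $t\,\aperiod_i$ can be subtracted while preserving acceptance. This avoids Lemma~\ref{lemma-buchi-aux} entirely, removes the need for the "many disjoint loops of equal length" bookkeeping, and gives a marginally sharper threshold ($\card{\states}^{k-1}$ per coefficient instead of $2\card{\states}^{k}$), which still fits comfortably inside the stated interval. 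The one delicate point you correctly identify --- that the per-block splices must be simultaneous and independent, and that the untouched $\omega$-tail carries the acceptance condition --- is handled the same way in both arguments.
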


\ifconference
\else
\begin{proof} [Theorem~\ref{thm:stuttering-buchi}]
Since $\alang(\acps)\cap\alang(\bauto)\neq \emptyset$, there exist an infinite word $\aword$ such that $\aword\in\alang(\bauto)\cap\alang(\acps)$. Let
$\vect{y}\in\nat^{k-1}$ be the vector such that
$\aword=\aseg_1(\aloop_1)^{\vect{y}[1]}\ldots\aseg_{k-1}\linebreak[0](\aloop_{k-1})^{\vect{y}[k-1]}\aseg_k\allowbreak\aloop_k^{\omega}$. We will now prove that either
$\vect{y}\in[0,\powerset{\mathit{pol}_1(\size{\acps})}+2.\size{\bauto}^{k}\times \powerset{\mathtt{pol}_1(\size{\acps})+\mathtt{pol}_2(\size{\acps})}]^{k-1}$ or we can
construct another word 
$$\aword'=\aseg_1(\aloop_1)^{\vect{y}'[1]}\ldots\aseg_{k-1}(\aloop_{k-1})^{\vect{y}'[k-1]}\linebreak[0]
\aseg_k\aloop_k^{\omega}$$ 
such that
$\vect{y}'\in[0,\powerset{\mathit{pol}_1(\size{\acps})}+2.\size{\bauto}^{k}\times \powerset{\mathtt{pol}_1(\size{\acps})+\mathtt{pol}_2(\size{\acps})}]^{k-1}$ and
$\aword'\in\alang(\acps)\cap\alang(\bauto)$. Since, $\vect{y}\models\aconstraintsystem$ and $\aconstraintsystem$ is a quantifier-free Presburger
formula, we know that there exist $\abasis,\aperiod_1,\aperiod_2\cdots\aperiod_{\alpha}\in[0,\powerset{\mathtt{pol}_1(\size{\acps})}]^{k-1}$ and $\alpha\leq
\powerset{\mathit{pol}_2(\size{\acps})}$ such that $\vect{y}=\abasis + \Sigma_{i\in[1,\alpha]}a_i.\aperiod_i$ for some $(a_1,a_2,\ldots,a_{\alpha})\in\nat^{\alpha}$. Let us
assume that $\vect{y}\notin[0,\powerset{\mathtt{pol}_1(\size{\acps})}+2.\size{\bauto}^{k}\times \powerset{\mathtt{pol}_1(\size{\acps})+\mathtt{pol}_2(\size{\acps})}]^{k-1}$ and hence there exists some $a_j$, $j\in[1,\alpha]$ such that $a_j>2.\card{\states}^{k}$. We would like to find $a$
such that $\vect{y}'=\abasis +\Sigma_{i\in[1,j-1]}a_i.\aperiod_i + (a_j-a)\aperiod_j + \Sigma_{i\in[j+1,\alpha]}a_i.\aperiod_i$ and
$\aword'=\aseg_1(\aloop_1)^{\vect{y'}[1]}\ldots\aseg_{k-1}\linebreak[0](\aloop_{k-1})^{\vect{y}'[k-1]}\aseg_k\aloop_k^{\omega}\in\alang(\bauto)\cap\alang(\acps)$.
\begin{enumerate}
\itemsep 0 cm 
  \item For any $a\leq a_j$, we have that, $\aword'$ with $\vect{y}'=\abasis +\Sigma_{i\in[1,j-1]}a_i.\aperiod_i + (a_j-a)\linebreak\aperiod_j +
    \Sigma_{i\in[j+1,\alpha]}a_i.\aperiod_i$, $\aword'\in\alang(\acps)$. Indeed by selecting any $a\in[a_j,\linebreak a_j-2.\card{\states}^k]$, we will obtain
    $\vect{y}'\in[0,\powerset{\mathtt{pol}_1(\size{\acps})}+2.\size{\bauto}^{k}\times\linebreak
      \powerset{\mathtt{pol}_1(\size{\acps})+\mathtt{pol}_2(\size{\acps})}]^{k-1}$ where $\aword'\in\alang(\acps)$.
  \item For showing that there exists a value for $a$ such that $\vect{y}'\in[0,\powerset{\mathtt{pol}_1(\size{\acps})}+2.\size{\bauto}^{k}\times
    \powerset{\mathtt{pol}_1(\size{\acps})+\mathtt{pol}_2(\size{\acps})}]^{k-1}$ and $\aword'\in\alang(\bauto)$ we will use Lemma\\\ref{lemma-buchi-aux}. For
    each $m\in[1,k-1]$, we take $r_m=(\aloop_m)^{\aperiod_j[m]}$ i.e. $r_m$ is $\aperiod_j[m]$ copies of $\aloop_m$. Note that by our assumption, for
    each $m\in[1,k-1]$ we can factor $\aword$ as $\aword=\aword^m_1.(r_m)^{a_m}.\aword^m_2$ where $a_m\geq 2.\card{\states}^k$. Thus, applying
    Lemma~\ref{lemma-buchi-aux}, we get that there exist $K_m\in[1,\card{\states}]$ such that for any $N_m\in[1,\card{\states}^{k-2}]$,
    $\aword''=\aword^m_1.(r_m)^{a_m-(N_m\times K_m)}.\aword^m_2\in\alang(\bauto)$. For each $m\in[1,k]$ we take $N_m=K_1\times K_2\cdots K_{m-1}\times
    K_{m+1}\cdots K_{k-1}$ which is less than or equal to $\card{\states}^{k-2}$. It is clear that for each $m\in[1,k-1]$, the number of iteration of $r_m$
    we reduce is $N_m\times K_m$ and is same for all $m$, $N_m\times K_m=K_1\times K_2\cdots K_{k-1}$. Combining the result of Lemma~\ref{lemma-buchi-aux}
    for every loop $\aloop_m$, $m\in[1,k-1]$ and taking $a=K_1\times K_2\cdots K_{k-1}$ in $\vect{y}'$ we obtain $\aword'$ such that
    $\aword'\in\alang(\bauto)$. 
\end{enumerate}
We can continue the process to obtain $\vect{y}'\in[0,\powerset{\mathtt{pol}_1(\size{\acps})}+2.\size{\bauto}^{k}\times \powerset{\mathtt{pol}_1(\size{\acps})+\mathtt{pol}_2(\size{\acps})}]^{k-1}$ and $\aword'\in\alang(\bauto)$.  
\qed
\end{proof}

\fi 

Thus, we obtain the exponential map as $\amap_{{\rm
    \auto}}(\bauto,\acps)=\powerset{\mathtt{pol}_1(\size{\acps})}+2.\size{\bauto}^{\size{\acps}}\times\powerset{\mathtt{pol}_1(\size{\acps})+
  \mathit{pol}_2(\size{\acps})}$, where $\mathtt{pol}_1$ and $\mathtt{pol_2}$ are polynomials in accordance with Theorem~\ref{theorem-pottier91}.
%
\subsection{\np~Algorithm for \buchi~automata and BA}
\label{buchi}
The goal of this section is to characterize the complexity of the model-checking problem
over flat counter systems with BA. Thanks to Theorem~\ref{thm:stuttering-buchi},
we have already established the existence of $\amap_{{\rm \auto}}$ for the membership problem
with \buchi~automata and $\BA$.


For some \buchi~automata $\bauto$, checking  $\alang(\acps) \cap \alang(\bauto)\neq\emptyset$  amounts to guess some
$\vect{n} \in \interval{0}{\powerset{\mathtt{pol}_1(\size{\acps})}+2.\size{\bauto}^{\size{\acps}}\times \powerset{\mathtt{pol}_1(\size{\acps})+\mathtt{pol}_2(\size{\acps})}}^{k-1}$ and check
whether
\iftechreport 
$$\aword = \aseg_1 (\aloop_1)^{\vect{n}[1]} \cdots \aseg_{k-1} (\aloop_{k-1})^{\vect{n}[k-1]} \aseg_k (\aloop_k)^{\omega} \in 
\alang(\acps) \cap \alang(\bauto).
$$
\else
$\aword = \aseg_1 (\aloop_1)^{\vect{n}[1]} \cdots \aseg_{k-1} (\aloop_{k-1})^{\vect{n}[k-1]} \aseg_k (\aloop_k)^{\omega} \in 
\alang(\acps) \cap \alang(\bauto).
$
\fi
 Checking whether $\aword \in \alang(\acps)$ can be done in time which is at most polynomial in
$\size{\acps}log(2.\size{\bauto})+2.\mathtt{pol}_1(\size{\acps})+\mathtt{pol}_2(\size{\acps})$ (and therefore in polynomial time in $log(\size{\bauto}) + \size{\acps}$) since
this amounts to verify whether $\vect{n} \models \aformula$.  Checking whether $\aword \in \alang(\bauto)$ can be done in polynomial time in
$\size{\bauto} + \size{\acps}$ using the results from \cite{Markey&Schnoebelen03}. Hence, membership problem with \buchi~automata is in polynomial time.
\begin{theorem}\label{theorem-ba-ptime}
Membership problem with $\BA$  is \ptime.
\end{theorem}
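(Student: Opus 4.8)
The membership problem with BA is in \ptime.

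The plan is to show that, given a constrained path schema $\acps = \pair{\cpschema}{\aformula}$, a B\"uchi automaton (specification) $\bauto$, and a tuple $\vect{n} = (n_1, \ldots, n_{k-1})$ of natural numbers encoded in binary, one can decide in time polynomial in $\size{\acps} + \size{\bauto} + \sum_i \log n_i$ whether the ultimately periodic word $\aword = \aseg_1 (\aloop_1)^{n_1} \cdots \aseg_{k-1} (\aloop_{k-1})^{n_{k-1}} \aseg_k (\aloop_k)^{\omega}$ lies in $\alang(\bauto)$. The crucial point is that $\aword$ has exponential length (since the $n_i$ are binary), so one cannot afford to read it letter by letter; instead one must exploit its highly repetitive structure together with the finite-state nature of $\bauto$.

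First I would view $\bauto$ as acting on the word $\aword$ segment by segment, where the segments are the blocks $\aseg_1, (\aloop_1)^{n_1}, \aseg_2, \ldots, (\aloop_k)^{\omega}$. For each non-loop segment $\aseg_i$ (of polynomial length) and for a single copy of each loop $\aloop_i$, one computes in polynomial time the reachability relation it induces on the state set $\states$ of $\bauto$: a pair $(q,q')$ belongs to the relation for a factor $w$ iff $\bauto$ has a run from $q$ to $q'$ reading $w$. I would in fact track a refined relation recording, in addition, whether an accepting state is visited along the way (a Boolean flag), since B\"uchi acceptance requires infinitely many visits to $F$. These relations are just Boolean matrices over $\states \times \states$ (or $\states \times \states \times \set{0,1}$), and each letter of $\acps$ contributes one matrix computed directly from the transition relation of $\bauto_{\aspecification}$; composing along a segment is polynomial-time matrix multiplication.

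The key step is handling the iterated blocks $(\aloop_i)^{n_i}$ with $n_i$ given in binary. Let $M_i$ be the single-copy relation matrix for $\aloop_i$. Then the relation for $(\aloop_i)^{n_i}$ is the $n_i$-th power $M_i^{n_i}$ in the appropriate idempotent semiring (Boolean composition, with the accepting-visit flag handled by a tropical-style "max" aggregation). This power is computed by fast exponentiation (repeated squaring) in $O(\log n_i)$ matrix multiplications, each of cost polynomial in $\card{\states}$; hence the whole iterated block costs time polynomial in $\card{\states} + \log n_i$. Concatenating the per-segment matrices yields the reachability relation for the finite prefix $u = \aseg_1 (\aloop_1)^{n_1} \cdots \aseg_k$ in total polynomial time. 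For the final periodic part $(\aloop_k)^{\omega}$, I would compute the idempotent power of the relation for $\aloop_k$ and check, for each state $q$ reachable at the end of $u$, whether $\bauto$ admits a cycle on $\aloop_k$ passing through $q$ and visiting some accepting state infinitely often --- equivalently, whether there is a state $q'$ with $q \to^* q'$ along $\aloop_k$-iterations, $q' \to^* q'$ along $\aloop_k$ via an accepting state, and $q'$ lies on a $\aloop_k$-cycle. This is a standard decomposition of B\"uchi acceptance on an ultimately periodic word and is again polynomial in $\card{\states}$.

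The acceptance condition then reads: $\aword \in \alang(\bauto)$ iff there is a state $q$ reachable from the initial state $q_0$ after the prefix $u$ such that the loop $\aloop_k$ admits an infinite accepting computation from $q$, in the sense just described. Since every ingredient --- letter matrices, segment compositions, binary-exponentiation of loop matrices, and the final $\omega$-analysis --- runs in polynomial time, the overall algorithm is polynomial. The main obstacle I anticipate is the careful bookkeeping of the accepting-state flag through the matrix products so that B\"uchi acceptance (infinitely many visits to $F$, concentrated necessarily in the $(\aloop_k)^{\omega}$ tail) is faithfully captured; the repeated-squaring idea itself is routine once the right semiring of relations is fixed, and the membership test for the symbolic letters of $\acps$ reduces to evaluating the Boolean guards $\aletter \models \aformulabis$ edge by edge, which is polynomial.
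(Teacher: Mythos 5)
Your proposal is correct, and at the top level it follows the same decomposition as the paper: first turn the symbolic BA specification (transitions labelled by Boolean combinations over $at\cup ag_n$) into a concrete \buchi\ automaton over the polynomial-size alphabet of $\acps$ by evaluating the guards edge by edge --- this is precisely what the paper packages as the nice subalphabet property (Lemma~\ref{lemma-nice-subalphabet-property}) --- and then decide membership of the exponentially long ultimately periodic word by a prefix-plus-lasso analysis. Where you genuinely diverge is in how the second step is justified. The paper does not spell out an algorithm: it encodes $\aword$ as a pair of straight-line programs and invokes~\cite[Corollary~5.4]{Markey&Schnoebelen03} as a black box. You instead re-derive the polynomial bound directly, by associating with each block a transition relation enriched with a Boolean flag recording visits to $F$, computing the powers $M_i^{n_i}$ by repeated squaring in $O(\log n_i)$ compositions, and finishing with the standard lasso test on the $(\aloop_k)^{\omega}$ tail (correctly allowing the accepting visit to occur strictly inside a copy of $\aloop_k$, which your flag handles). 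The two arguments rest on the same underlying idea --- compressed words are tamed by iterated squaring in the transition monoid --- so your route buys self-containedness and an explicit algorithm, at the cost of the semiring bookkeeping you yourself identify; the paper's route is shorter but opaque without the cited result. I see no gap.
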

The proof involves using both nice subalphabet property of BA and using the membership algorithm for \buchi~automata. Details can be found in Appendix.

\iftechreport
For deciding the intersection non-emptiness problem with BA,
we guess a polynomial size vector $\vect{n}$ (thanks to
Theorem~\ref{thm:stuttering-buchi}) to obtain a path in $\alang(\acps)$ which then is used as an instance of membership problem. 
 This
\fi
\ifconference
The construction of a polynomial size \buchi~automaton from a specification in BA and a $\acps$,
\fi
 gives us a
\np~procedure for the intersection non-emptiness problem. \np-hardness follows from the fact that there is an easy reduction from solving a
quantifier-free Presburger formula to the intersection problem.
\begin{corollary} \label{corollary-intersection-ba}
The intersection non-emptiness problem with BA is \np-complete. 
\end{corollary}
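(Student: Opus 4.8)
The plan is to prove membership in \np\ by a guess-and-check procedure resting on Theorem~\ref{thm:stuttering-buchi} and Theorem~\ref{theorem-ba-ptime}, and then to obtain \np-hardness by a direct reduction from satisfiability of quantifier-free Presburger formulae.

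\emph{Upper bound.} Given a constrained path schema $\acps=\pair{\aseg_1 (\aloop_1)^* \cdots \aseg_{k-1} (\aloop_{k-1})^* \aseg_k (\aloop_k)^{\omega}}{\aformula}$ and a specification $\aspecification$ in BA, both over $\triple{at}{ag_n}{\aalphabet}$, I would first turn $\aspecification$ into the B\"uchi automaton $\bauto=\bauto_{\aspecification}$ over the finite alphabet $\aalphabet$: each edge $q\step{\aformulabis}q'$ of $\aspecification$ is expanded only over the letters $\aletter\in\aalphabet$ with $\aletter\models\aformulabis$, and since $\card{\aalphabet}$ is part of the input size this produces a B\"uchi automaton of size polynomial in $\size{\aspecification}+\size{\acps}$ in polynomial time. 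By Theorem~\ref{thm:stuttering-buchi}, $\alang(\acps)\cap\alang(\bauto)\neq\emptyset$ iff there is a witness $\vect{y}$ each of whose components is bounded by $\amap_{{\rm \auto}}(\bauto,\acps)$, a value that is exponential but representable with polynomially many bits. The nondeterministic algorithm thus guesses such a $\vect{y}$ in binary, verifies $\vect{y}\models\aformula$ (so that $\aseg_1 (\aloop_1)^{\vect{y}[1]} \cdots \aseg_{k-1} (\aloop_{k-1})^{\vect{y}[k-1]} \aseg_k (\aloop_k)^{\omega}\in\alang(\acps)$), and then checks membership of this $\omega$-word in $\alang(\bauto)$ using the polynomial-time membership procedure of Theorem~\ref{theorem-ba-ptime}. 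Both checks run in time polynomial in the certificate, hence in $\size{\acps}+\size{\aspecification}$, which places the problem in \np.

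\emph{Lower bound.} For \np-hardness I would reduce from satisfiability of a quantifier-free Presburger formula $\aformula(\avariable_1,\ldots,\avariable_{k-1})$, which is \np-complete. I build a constrained path schema whose skeleton has exactly $k-1$ iterated loops followed by a terminal loop and whose second component is precisely $\aformula$, and pair it with a B\"uchi automaton accepting all of $\aalphabet^{\omega}$. Then $\alang(\acps)\cap\alang(\bauto)=\alang(\acps)$ is non-empty iff $\aformula$ admits a solution in $\Nat^{k-1}$, which yields the result.

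\emph{Main obstacle.} The essential point is not the guess-and-check skeleton, which is immediate once Theorem~\ref{thm:stuttering-buchi} and Theorem~\ref{theorem-ba-ptime} are in place, but ensuring that the B\"uchi automaton extracted from a BA specification stays of polynomial size. The naive expansion of the Boolean transition labels ranges over all of $\powerset{at\cup ag_n}$ and is exponential; soundness of restricting to the letters of $\aalphabet$ relies on the fact that $\alang(\acps)\subseteq\aalphabet^{\omega}$, so no word outside $\aalphabet^{\omega}$ can contribute to the intersection. One must also check that the bound $\amap_{{\rm \auto}}$ of Theorem~\ref{thm:stuttering-buchi}, which depends on $\size{\bauto}$, remains singly exponential (hence of polynomial bit-length) for this polynomial-size $\bauto$, so that the guessed vector $\vect{y}$ is a genuine polynomial-size certificate.
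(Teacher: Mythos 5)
Your proposal is correct and follows essentially the same route as the paper: restrict the expansion of the BA specification to the letters actually occurring in the path schema's alphabet (justified by $\alang(\acps)\subseteq\aalphabet^{\omega}$, which is exactly the paper's nice subalphabet property) to obtain a polynomial-size B\"uchi automaton, guess a loop-count vector bounded by $\amap_{{\rm BA}}$ via Theorem~\ref{thm:stuttering-buchi}, verify it against the guard and against the automaton using the polynomial-time membership procedure, and derive hardness from satisfiability of quantifier-free Presburger constraints. The point you isolate as the main obstacle — avoiding the exponential blow-up over $\powerset{at\cup ag_n}$ — is indeed the one the paper addresses with its subalphabet lemma.
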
  

\cut{
It remains to show that MC(BA,$\flatcs$) is \np-Complete too! This will again be shown using the techniques developed in
Section~\ref{section-main-reduction}. Remember also that when counter systems are involved \buchi~automata includes guards. Let us start by briefly
explaining what we mean by a \buchi~automata recognising runs from flat counter systems.

\buchi~automaton recognising infinite runs of counter systems are defined similar to \buchi~automaton over finite alphabet except that the alphabet in
this case is Boolean combinations over $at$ and $ag_n$ where $at$ is a set of atomic propositions and $ag_n$ is a set of atomic guards from
$\guards(\counters_n)$. This is because, each position in a flat counter system is labelled by a set in $\powerset{at \cup ag_n}$, a set of atomic
propositions and a set of guards if we follow the developments from Section~\ref{section-model-checking-problem}.

Let $\asys$ be a flat counter system of dimension $n$, $\aconf_0$ be an initial configuration and $\bauto$ is a \buchi~automaton. Let $N$ be the size
of this instance. Atomic propositions are built from the finite set $at$ and atomic guards are built from the finite set $ag_n$. Thus, $\asys$ and
equivalently $\acps$ selected from $\asys$ is labelled by $\aalphabet=2^{at\cup ag_n}$. On the other hand $\bauto$ is defined over the alphabet
$\mathtt{B}=Bool^+(at\cup ag_n)$. We say for any $a\in\aalphabet$ and $b\in\mathtt{B}$, $a\models b$ whenever the assignement $v$, where
$v(\avarprop)=\top$ iff $\avarprop\in a$, satisfies the boolean constraint $b$. Notice that to apply Corollary~\ref{corollary-intersection-ba}, we
need the given \buchi~automata and the constrained path schema to be defined on the same set of alphabets. For this reason we construct $\bauto'$ from
$\bauto$ where we replace the transition $\tuple{\astate_1,b,\astate_2}$ for $b\in\mathtt{B}$ in $\bauto$ with the set of transitions
$\{\tuple{\astate_1,a,\astate_2}|\forall a\in\aalphabet,a\models b\}$. 
\begin{lemma}
\label{ba-nice}
\buchi~automata (BA) has the nice sub-alphabet property.
\end{lemma}
\begin{proof}
Given a \buchi~automaton $\bauto$, over the alphabet $\triple{at}{ag_n}{\aalphabet}$, we select a sub-alphabet $\aalphabet'\subseteq\aalphabet$. We
construct the \buchi~automaton $\bauto'$ by deleting all transitions $\tuple{\astate,a,\astate'}$ from $\bauto$ when $a\not\in\aalphabet'$. Thus,
$\alang(\bauto)\subseteq(\aalphabet)^{\omega}$ and $\alang(\bauto')\subseteq(\aalphabet')^{\omega}$. It is well-known that by construction
$\alang(\bauto)\cap(\aalphabet')^{\omega}=\alang(\aalphabet)$. Hence BA has the nice sub-alphabet property.
\qed 
\end{proof}

}

By  Lemma~\ref{lemma-nice-subalphabet-property},
Lemma~\ref{lemma-complexity-subalphabet} and Corollary~\ref{corollary-intersection-ba},
we get the following result.
\begin{theorem} \label{theorem-ba-np-complete}
$\mcpb{\BA}{\flatcs}$ is \np-complete. 
\end{theorem}

\np-hardness comes from an easy reduction from the satisfiability
problem over propositional logic, similar to
\cite{DemriDharSangnier12}. We encode the satisfiability problem in
the guard of a flat counter system and use a \buchi~automata to
express the reachability of a state.
\subsection{PSPACE Upper Bound Thanks to Nice BA Property}
\label{general}
In the sequel, we will deal with different type of specifications that express $\omega$-regular properties like extended temporal logic, 
alternating \buchi~automata and linear $\mu$ calculus. Since the treatment for all these specifications are same, we will first provide a 
solution for a general specification language $\mathcal{L}$ and then present the procedures for each of the specifications underlining 
the differences for each.

Let us consider a specification language $\mathcal{L}$ built over the set of atomic propositions $\aalphabet$. We also assume the following properties for
instances of $\mathcal{L}$.

A specification language $\aspeclanguage$ has the \defstyle{nice \buchi~ property} iff for every
specification $\aspecification$ from  $\aspeclanguage$, we can build a B\"uchi automaton 
$\bauto_\aspecification$ such that 
\iftechreport
\begin{enumerate}
\itemsep 0 cm 
  \item $\alang(\aspecification)=\alang(\bauto_\aspecification)$, 
  \item each state of $\bauto_{\aspecification}$ is polynomial size, 
  \item it can be checked if a state is initial [resp. accepting] in polynomial space,
  \item the transition relation can be decided in polynomial space too. 
\end{enumerate}
Note that this means that $\bauto_\aspecification$ can be built in exponential time
in the size of $\aspecification$. This property is not new at all and it is often used to
establish that  satisfiability problem is in \pspace \ for a linear-time temporal logic;
indeed non-emptiness of $\bauto_\aspecification$ can be checked then on-the-fly in nondeterministic
polynomial space. 
\else
$\alang(\aspecification)=\alang(\bauto_\aspecification)$, 
each state of $\bauto_{\aspecification}$ is polynomial size, 
it can be checked if a state is initial [resp. accepting] in polynomial space
and the transition relation can be decided in polynomial space too. 
\fi 

\cut{
\begin{property}
\label{prop}
Given an instance $\aspecification$ of the specification language $\aspeclanguage$, we can construct a \buchi~automata $\bauto_\aspecification$ of size at most
$\powerset{\mathtt{pol}_1(\size{\aspecification})}$ for some polynomial $\mathtt{pol}_1$, such that:
\begin{enumerate}
\itemsep 0 cm 
  \item $\alang(\bauto_\aspecification)$ is exactly the set of models for $\aspecification$ ($\alang(\aspecification)=\alang(\bauto_\aspecification)$).
  \item the size of each state of $\bauto_{\aspecification}$ is polynomial in the size of $\aspecification$,
  \item it can be checked if a state is initial in space polynomial in the size of $\aspecification$,
  \item it can be checked if a state is designated in space polynomial in the size of $\aspecification$,
  \item each transition of $\bauto_{\aspecification}$ can be checked in space polynomial in the size of $\aspecification$.
\end{enumerate}
\end{property}
}

The above property and Theorem~\ref{thm:stuttering-buchi}, allows us to have the following lemma
\begin{lemma}
\label{map-ETL}
Let $\acps$ be a constrained path schema and $\aspecification$ be an instance of specification $\aspeclanguage$.
If $\alang(\acps) \cap \alang(\aspecification)$ is non-empty, then there is an $\omega$-word in 
$\alang(\acps) \cap \alang(\aspecification)$ such that each loop is taken at most a number of times
bounded by $\amap_{{\rm \auto}}(\bauto_{\aspecification},\acps)$.
\end{lemma}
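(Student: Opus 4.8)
The plan is to obtain the statement as an immediate composition of the nice B\"uchi property of $\aspeclanguage$ with Theorem~\ref{thm:stuttering-buchi}; no new combinatorics is required, since the relevant pumping argument has already been carried out for B\"uchi automata in Lemma~\ref{lemma-buchi-aux} and Theorem~\ref{thm:stuttering-buchi}.

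First I would invoke the nice B\"uchi property on the input specification $\aspecification$. It produces a B\"uchi automaton $\bauto_{\aspecification}$ over the same alphabet $\aalphabet$ on which $\acps$ and $\aspecification$ are interpreted, with $\alang(\aspecification) = \alang(\bauto_{\aspecification})$. Intersecting both sides with $\alang(\acps)$ gives $\alang(\acps) \cap \alang(\aspecification) = \alang(\acps) \cap \alang(\bauto_{\aspecification})$, so the hypothesis that $\alang(\acps) \cap \alang(\aspecification)$ is non-empty says exactly that $\alang(\acps) \cap \alang(\bauto_{\aspecification}) \neq \emptyset$.

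Next I would apply Theorem~\ref{thm:stuttering-buchi} to the pair $(\acps, \bauto_{\aspecification})$. Writing $\acps = \pair{\cpschema}{\aformula}$, the theorem yields a vector $\vect{y}$ with $\vect{y}[i] \leq 2^{\mathtt{pol}_1(\size{\acps})} + 2 \cdot \size{\bauto_{\aspecification}}^{k} \times 2^{\mathtt{pol}_1(\size{\acps}) + \mathtt{pol}_2(\size{\acps})}$ for each $i$, and such that the $\omega$-word $\aseg_1 (\aloop_1)^{\vect{y}[1]} \cdots \aseg_{k-1} (\aloop_{k-1})^{\vect{y}[k-1]} \aseg_k (\aloop_k)^{\omega}$ belongs to $\alang(\acps) \cap \alang(\bauto_{\aspecification})$. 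By the language identity above this $\omega$-word also lies in $\alang(\acps) \cap \alang(\aspecification)$, and in it the $i$-th loop is iterated $\vect{y}[i]$ times. Since $k \leq \size{\acps}$, we have $\size{\bauto_{\aspecification}}^{k} \leq \size{\bauto_{\aspecification}}^{\size{\acps}}$, so the bound on each $\vect{y}[i]$ is at most $\amap_{{\rm \auto}}(\bauto_{\aspecification}, \acps)$ by the very definition of $\amap_{{\rm \auto}}$, which concludes the argument.

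The proof is genuinely a two-step reduction, so there is no real obstacle; the only point requiring attention is to confirm that $\bauto_{\aspecification}$ is built over precisely the alphabet $\aalphabet$ (as guaranteed by the first clause of the nice B\"uchi property), so that the two intersections coincide. It is worth noting, for later use, that $\bauto_{\aspecification}$ may be exponentially larger than $\aspecification$, whence $\amap_{{\rm \auto}}(\bauto_{\aspecification}, \acps)$ can be doubly exponential in $\size{\aspecification}$; this is harmless for the present qualitative bound and is exactly why the subsequent decision procedure constructs $\bauto_{\aspecification}$ on the fly rather than explicitly.
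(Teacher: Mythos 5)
Your proof is correct and follows exactly the paper's route: the paper presents this lemma as an immediate consequence of the nice B\"uchi property (which gives $\alang(\aspecification)=\alang(\bauto_{\aspecification})$) combined with Theorem~\ref{thm:stuttering-buchi}, and your two-step reduction spells out precisely that composition, including the observation that $k\leq\size{\acps}$ reconciles the exponent in the theorem with the definition of $\amap_{{\rm \auto}}$. One small correction to your closing remark: since $\size{\bauto_{\aspecification}}$ is at most exponential in $\size{\aspecification}$, the quantity $\size{\bauto_{\aspecification}}^{k}$ equals $2^{\mathtt{pol}(\size{\aspecification})\cdot k}$, which is singly exponential in $\size{\aspecification}+\size{\acps}$ rather than doubly exponential; this matters because the paper's framework requires $\amap_{\aspeclanguage}$ to be at most exponential so that the guessed iteration counts $\vect{n}[i]$ fit in polynomially many bits in the subsequent on-the-fly \pspace\ procedure.
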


Note that this implies that $\amap_{\mathcal{L}}(\aspecification,\acps)=\amap_{{\rm \auto}}(\bauto_{\aspecification},\acps)$.

So, checking whether  $\alang(\acps) \cap \alang(\aspecification)$  is non-empty amounts to guess some
$\vect{n} \in \interval{0}{\amap_{\mathcal{L}}(\aspecification,\acps)}^{k-1}$ and check
whether 
$$\aword = \aseg_1 (\aloop_1)^{\vect{n}[1]} \cdots \aseg_{k-1} (\aloop_{k-1})^{\vect{n}[k-1]} \aseg_k (\aloop_k)^{\omega} \in 
\alang(\acps) \cap \alang(\aspecification).
$$ 
Checking whether $\aword \in \alang(\acps)$ can be done in polynomial time in $log(\amap_{\mathcal{L}}(\aspecification,\acps))$
(and therefore in polynomial time in $\size{\aspecification} + \size{\acps}$) since this amounts to verify whether $\vect{n} \models \aformula$.
Checking whether $\aword \in \alang(\aspecification)$ can be done in exponential space in $\size{\aspecification} + \size{\acps}$ by converting
$\aspecification$ to an exponential size \buchi~automaton, $\bauto_{\aspecification}$ using 
the nice \buchi~ property
and applying the result of
the membership problem of \buchi~automata. Hence, this leads to an exponential space decision procedure
for the intersection non-emptiness problem but it is possible to get down to nondeterministic polynomial space as explained below.

Clearly, to obtain a nondeterministic polynomial space algorithm we can not construct $\bauto_{\aspecification}$ explicitly since its size could be
exponential. Thus, we would like to do an ``on-the-fly'' construction of $\bauto_{\aspecification}$ where we need to store only polynomial size
information at any point of the algorithm. To facilitate this we use the properties of $\bauto_\aspecification$ described in
the definition of the nice \buchi~ property. 

Let $\acps$,  $\aspecification$ and $\vect{n} \in \Nat^{k-1}$ be an instance of the problem.
For $i \in \interval{1}{k-1}$, let $\vect{n}'[i] \egdef \mathtt{min}(\vect{n}[i],\amap_{\mathcal{L}}(\aspecification,\acps))$. 
By  
the nice \buchi~ property
and Theorem~\ref{thm:stuttering-buchi},  the propositions below are equivalent:
\begin{itemize}
\item  $\aseg_1 (\aloop_1)^{\vect{n}[1]} \cdots \aseg_{k-1} (\aloop_{k-1})^{\vect{n}[k-1]} \aseg_k (\aloop_k)^{\omega} \in 
\alang(\aspecification)$,
\item $\aseg_1 (\aloop_1)^{\vect{n}'[1]} \cdots \aseg_{k-1} (\aloop_{k-1})^{\vect{n}'[k-1]} \aseg_k (\aloop_k)^{\omega} \in 
\alang(\bauto_{\aspecification})$.
\end{itemize}
Without any loss of generality, let us assume then that 
$\vect{n} \in \interval{0}{\amap_{\mathcal{L}}(\aspecification,\acps)}^{k-1}$. 

We would actually check that $\aseg_1 (\aloop_1)^{\vect{n}[1]} \cdots \aseg_{k-1} (\aloop_{k-1})^{\vect{n}[k-1]} \aseg_k (\aloop_k)^{\omega} \in
\alang(\bauto_{\aspecification})$. For this we try to find a ``lasso'' structure in \buchi~automaton. Let $\aword=\aseg_1 (\aloop_1)^{\vect{n}[1]}\linebreak[0] \cdots
\aseg_{k-1} (\aloop_{k-1})^{\vect{n}[k-1]} \aseg_k (\aloop_k)^{\omega}$ and
$\bauto_{\aspecification}=\tuple{\states,\aalphabet,\edges,\states^i,F}$. Since $F$ is finite and in an accepting run we should see infinitly many times
a state from $F$, at least one $\astate_f\in F$ appears infinitely many times in the accepting run of $\aword$ if
$\aword\in\bauto_{\aspecification}$. Again, only $\aloop_k$ is taken infinitly many times in $\aword$ and so there exists at least one position $j$ in
$\aloop_k$ such that the transition from $\astate_f$ reading $\aloop_k(j)$ is taken infinitely many times. For this reason, we try to find a run in
$\bauto_{\aspecification}$ such that $\aseg_1 (\aloop_1)^{\vect{n}[1]} \cdots \aseg_{k-1}(\aloop_{k-1})^{\vect{n}[k-1]} \aseg_k \aloop_k[1,j]$ has a run
from an initial state to $\astate_f$ and $\aloop_k[j+1,\length{\aloop_k}](\aloop_k)^*\aloop_k[1,j]$ has a run from $\astate_f$ to $\astate_f$, this
is done in the following algorithm: 
\begin{algorithm}
\caption{Inputs $\aspecification$, $\acps=\pair{\cpschema}{\aconstraintsystem}$, $\vect{y}\in[0,\amap_{\mathcal{L}}(\aspecification,\acps)]^{k-1}$}
\label{algo:gen-pspace}
\begin{algorithmic}[1]
  \STATE Let $\bauto_{\aspecification}=\tuple{\states,\aalphabet,\edges,\astate_i,F}$
  \FOR{each $\astate_f\in F$ and each $j\in[1,\length{\aloop_k}]$}
     \STATE Construct $\fsa_1=\tuple{\states,\aalphabet,\astate_0,\edges,\{\astate_f\}}$ and
         $\fsa_2=\tuple{\states,\aalphabet,\astate_f,\edges,\{\astate_f\}}$.  
     \STATE Check $\aseg_1(\aloop_1)^{\vect{y}[1]}\ldots\aseg_{k-1}(\aloop_{k-1})^{\vect{y}[k-1]}\aseg_k\aloop_k[1,j]\in\alang(\fsa_1)$
     \STATE Check $\alang(\fsa_2)\cap\alang(\aloop_k[j+1,\length{\aloop_k}]\aloop_k^{*}\aloop_k[1,j])\neq\emptyset$
     \IF {both the checks return true}
        \STATE return TRUE.
     \ENDIF
  \ENDFOR
  \STATE return FALSE.
\end{algorithmic}
\end{algorithm}

\iftechreport
\begin{lemma}
Algorithm~\ref{algo:gen-pspace} returns true iff
$\aword=\aseg_1(\aloop_1)^{\vect{y}[1]}\ldots\aseg_{k-1}(\aloop_{k-1})^{\vect{y}[k-1]}\aseg_k\aloop_k^{\omega}\in\alang(\aspecification)$.
\end{lemma}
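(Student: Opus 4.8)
The plan is to prove both implications through the standard ``lasso'' characterisation of \buchi~acceptance specialised to the ultimately periodic word $\aword$. Writing $\awordbis = \aseg_1(\aloop_1)^{\vect{y}[1]}\cdots\aseg_{k-1}(\aloop_{k-1})^{\vect{y}[k-1]}\aseg_k$ for the finite prefix and $\awordter = \aloop_k$ for the period, we have $\aword = \awordbis\,\awordter^{\omega}$. Since the nice \buchi~property gives $\alang(\aspecification) = \alang(\bauto_{\aspecification})$, it suffices to relate the two checks of Algorithm~\ref{algo:gen-pspace} to the existence of an accepting run of $\bauto_{\aspecification}$ on $\awordbis\,\awordter^{\omega}$. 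The guiding observation is that every visit of an accepting state along such a run occurs at a well-defined offset $j \in \interval{1}{\length{\aloop_k}}$ inside some copy of $\awordter$, and that this offset together with the accepting state $\astate_f$ is exactly what the outer loop of the algorithm enumerates; Check~1 fixes how $\astate_f$ is first reached and Check~2 fixes how the run loops back to $\astate_f$ along whole copies of $\aloop_k$.

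For the \textbf{soundness} direction ($\Leftarrow$), suppose the algorithm returns true for some $\astate_f \in F$ and $j \in \interval{1}{\length{\aloop_k}}$. Check~1 yields a finite run $\arun_1$ of $\bauto_{\aspecification}$ from $\astate_0$ to $\astate_f$ reading $\awordbis\,\aloop_k[1,j]$, and Check~2 yields, for some $m \geq 0$, a run $\arun_2$ from $\astate_f$ back to $\astate_f$ reading $\aloop_k[j+1,\length{\aloop_k}](\aloop_k)^{m}\aloop_k[1,j]$. I would concatenate $\arun_1$ with infinitely many copies of $\arun_2$ and check that the resulting infinite run reads exactly $\awordbis\,\awordter^{\omega}$: using the identity $\aloop_k[1,j]\cdot\aloop_k[j+1,\length{\aloop_k}] = \aloop_k$, each iteration of $\arun_2$ consumes $m+1$ further complete copies of $\aloop_k$, so after $t$ iterations the word read is $\awordbis\,(\aloop_k)^{t(m+1)}\aloop_k[1,j]$, whose limit is $\aword$. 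As $\astate_f \in F$ is visited at the start of every copy of $\arun_2$, the run is accepting, whence $\aword \in \alang(\bauto_{\aspecification}) = \alang(\aspecification)$.

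For the \textbf{completeness} direction ($\Rightarrow$), I start from an accepting run $\arun$ of $\bauto_{\aspecification}$ on $\aword$. Because $\awordbis$ is finite while $F$ is visited infinitely often, all but finitely many visits to $F$ lie in the periodic part; by the pigeonhole principle there is a single offset $j \in \interval{1}{\length{\aloop_k}}$ and a single accepting state $\astate_f$ that co-occur infinitely often, say at copies $c_1 < c_2 < \cdots$. Two consecutive such occurrences give a run from $\astate_f$ to $\astate_f$ reading $\aloop_k[j+1,\length{\aloop_k}](\aloop_k)^{c_2-c_1-1}\aloop_k[1,j]$, establishing Check~2, while the initial segment of $\arun$ up to copy $c_1$ gives a run from $\astate_0$ to $\astate_f$ reading $\awordbis\,(\aloop_k)^{c_1-1}\aloop_k[1,j]$, which realises (the word tested by) Check~1.

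The step I expect to be the \emph{main obstacle} is precisely the bookkeeping in this last paragraph: matching the offset/copy data of the infinite run $\arun$ with the fixed finite words fed to $\fsa_1$ and $\fsa_2$, and in particular ensuring that the anchoring accepting state is reachable by a word of the form demanded by Check~1, given that $\astate_f$ may first appear at offset $j$ only in a copy $c_1 > 1$. I would isolate this as a separate reachability lemma about the sequence of states of $\arun$ at a fixed offset across successive copies, exploiting the fact that deleting a full-loop cycle (i.e.\ reading one fewer block $(\aloop_k)$ between two equal offset states) leaves the read word $\awordbis\,\awordter^{\omega}$ unchanged since $\awordter^{\omega}$ still contains infinitely many copies of $\aloop_k$. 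This contraction argument, controlled by $\card{\states}$, is what lets one pass from the run's natural anchor to the form required by the algorithm; the remaining verifications (that each $\fsa_i$ is a sub-automaton of $\bauto_{\aspecification}$ and that the polynomial-size representation guaranteed by the nice \buchi~property makes every check decidable) are routine and needed only for the accompanying complexity bound, not for this correctness lemma.
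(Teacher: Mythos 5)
Your decomposition into the two checks and your soundness direction coincide with the paper's own argument, and that half is correct. The completeness direction, however, is where the proof genuinely breaks, and it breaks at exactly the step you single out as the main obstacle: the accepting run of $\bauto_{\aspecification}$ on $\aword$ only gives you $\astate_f$ after reading $\awordbis\,(\aloop_k)^{c_1-1}\aloop_k[1,j]$, whereas Check~1 of Algorithm~\ref{algo:gen-pspace} tests membership of the \emph{fixed} word $\awordbis\,\aloop_k[1,j]$, with no leading copies of $\aloop_k$. Your proposed repair --- contracting the run between two equal states occurring at offset $j$ in different copies of $\aloop_k$ --- cannot bridge this gap: each contraction removes at least one full copy of $\aloop_k$ from the stem, but it is only available while the sequence of offset-$j$ states contains a repetition, so it bottoms out with up to $\card{\states}$ leading copies and gives no way to force $c_1=1$.

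That this is not merely a bookkeeping nuisance is shown by a small example: take $\bauto_{\aspecification}$ with transitions $s_0 \xrightarrow{b} t_0$, $t_0 \xrightarrow{a} t_1$, $t_1 \xrightarrow{a} t_2$, $t_2 \xrightarrow{a} t_1$, accepting set $\set{t_2}$, and the path schema whose $\omega$-regular expression is $b\,(a)^{\omega}$. The word $b\,a^{\omega}$ is accepted (the unique run visits $t_2$ at every even position of the loop part), yet for the only possible offset $j=1$ the word $ba$ of Check~1 reaches only $t_1$, so the algorithm answers false; the stated equivalence therefore fails for the algorithm as literally written. The paper's own proof asserts at this point that $\awordbis\,\aloop_k[1,j]$ admits a run from the initial state to $\astate_f$, which is exactly the unjustified step you flagged. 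The correct repair is to weaken Check~1 to a non-emptiness test of $\alang(\fsa_1)$ against the language $\awordbis\,(\aloop_k)^{*}\,\aloop_k[1,j]$ (equivalently, to existentially quantify over the state reached after the stem instead of insisting that it already be $\astate_f$). With that change both of your directions go through: the soundness computation $\awordbis\,\aloop_k[1,j]\,(\aloop_k[j+1,\length{\aloop_k}](\aloop_k)^{m}\aloop_k[1,j])^{\omega}=\aword$ is unaffected because absorbing finitely many copies of $\aloop_k$ into the stem does not change $\aword$, the completeness direction needs no contraction at all, and the complexity analysis is unchanged since the modified Check~1 is still an emptiness test for a product of two on-the-fly automata.
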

\begin{proof}
First observe that by the nice \buchi~ property,
we know that $\aword\in\alang(\aspecification)$ iff $\aword\in\bauto_{\aspecification}$. Thus, in the rest we
will prove that Algorithm~\ref{algo:gen-pspace} returns true iff $\aword\in\alang(\bauto_{\aspecification})$. 

First let us assume that
$\aword=\aseg_1(\aloop_1)^{\vect{y}[1]}\ldots\aseg_{k-1}(\aloop_{k-1})^{\vect{y}[k-1]}\aseg_k\aloop_k^{\omega}\in\alang(\bauto_{\aspecification})$. Thus,
there is an accepting run $\arun\in\edges^{\omega}$ for $\aword$ in $\bauto_{\aspecification}$. According to the \buchi~acceptance condition there exists
a state $\astate_f\in F$ which is visited infinitely often. In $\aword$ only $\aloop_k$ is taken infinitely many times. Thus, $\aloop_k$ being of
finite size, there exists a position $j\in[1,\length{\aloop_k}]$ such that transitions of the form $\astate\labtrans{\aloop_k(j)}\astate_f$ for some
$\astate\in\states$ occurs infinitely many times in $\arun$. Thus, for $\arun$ to be an accepting run, there exists
$\aword'=\alang(\aseg_1(\aloop_1)^{\vect{y}[1]}\ldots\aseg_{k-1}(\aloop_{k-1})^{\vect{y}[k-1]}\aseg_k\aloop_k[1,j])$, which has a run in $\bauto_{\aspecification}$ from
$\astate_i$ to $\astate_f$ and there must exists words $\aword''\in\alang(\aloop_k[j+1,\length{\aloop_k}]\aloop_k^{*}\aloop_k[1,j])$ which has a run
from $\astate_f$ to $\astate_f$. In effect, $\aword'\in\alang(\fsa_1)$ and $\alang(\fsa_{2})\cap\aword''\neq\emptyset$. Thus, there
exists at least one choice of $\astate_f$ and $j$, for which both the checks return true and hence the algorithm returns true.

Now let us assume that the algorithm returns true. Thus, there exists $\astate_f\in F$ and $j\in[1,\length{\aloop_k}]$ such that
$\aword'\in\alang(\fsa_1)$ and $\alang(\fsa_{2})\cap\aword''\neq\emptyset$.
Since, $\alang(\fsa_{2})\cap\aword''\neq\emptyset$, there exists a word
$\aword_1=\aloop_k[j+1,\length{\aloop_k}]\aloop_k^{n}\aloop_k[1,j]\in\alang(\fsa_{loop})$ for some $n$. Consider the word
$\aword_2=\aword'.(\aword_1)^{\omega}$. Clearly, $\aword_2\in\alang(\acps)$. Also, note that $\aword'$ has a run $\arun'$ starting from $\astate_i$ to
$\astate_f$ and $\aword_1$ has a run $\arun''$ starting from $\astate_f$ to $\astate_f$. Thus, we can compose $\arun=\arun'\arun''^{\omega}$ to obtain
a run for $\aword_2$. Note that $\arun$ follows transitions from $\edges$ and visits $\astate_f\in F$ infinitely many times. Thus,
$\aword=\aseg_1(\aloop_1)^{\vect{y}[1]}\ldots\aseg_{k-1}(\aloop_{k-1})^{\vect{y}[k-1]}\aseg_k\aloop_k^{\omega}\in\alang(\bauto_{\aspecification})$.  
\qed
\end{proof}
\else
The proof of correctness of the algorithm can be found in the Appendix.
\fi

\begin{lemma}
Algorithm~\ref{algo:gen-pspace} runs in nondeterministic polynomial space.
\end{lemma}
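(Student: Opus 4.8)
The plan is to argue that, although both the $\omega$-word $\aword = \aseg_1 (\aloop_1)^{\vect{y}[1]} \cdots \aseg_k (\aloop_k)^{\omega}$ under scrutiny and the B\"uchi automaton $\bauto_{\aspecification}$ may be of exponential size, every single operation of Algorithm~\ref{algo:gen-pspace} can be carried out while keeping only a polynomial amount of information in memory, and the two \textbf{for} loops together with the two checks only ever require a polynomial-size witness at any given moment. Concretely, the loops are realized nondeterministically: the machine guesses $\astate_f$ and $j$, and accepts iff some computation path succeeds, which matches the existential semantics of ``return TRUE for some $\astate_f,j$''.

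First I would invoke the nice B\"uchi property: each state of $\bauto_{\aspecification}$ has a representation of size polynomial in $\size{\aspecification}$, and the predicates \emph{$\astate$ is initial}, \emph{$\astate \in F$} and \emph{$\tuple{\astate,\aletter,\astate'} \in \edges$} are all decidable in polynomial space. Hence we may guess a polynomial-size string $\astate_f$ and verify $\astate_f \in F$, and guess $j \in \interval{1}{\length{\aloop_k}}$ (which is stored in polynomial space since $\length{\aloop_k} \leq \size{\acps}$). The automata $\fsa_1$ and $\fsa_2$ are never materialized: they share the transition structure of $\bauto_{\aspecification}$ with only the initial/accepting states changed, so they inherit its polynomial-space manipulability.

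The key step is the membership test $\aword' \in \alang(\fsa_1)$, where $\aword' = \aseg_1 (\aloop_1)^{\vect{y}[1]} \cdots \aseg_{k-1}(\aloop_{k-1})^{\vect{y}[k-1]}\aseg_k\aloop_k[1,j]$. Its length $\sum_{i} (\length{\aseg_i} + \vect{y}[i]\cdot\length{\aloop_i}) + \length{\aseg_k} + j$ is exponential but bounded by $\size{\acps}\cdot \amap_{\mathcal{L}}(\aspecification,\acps)$, hence representable in polynomially many bits. I would therefore simulate $\fsa_1$ on $\aword'$ letter by letter, maintaining a binary position counter $t$, the current (polynomial-size) state of $\fsa_1$, and nothing else: at each step we compute the $t$-th letter of $\aword'$ in polynomial time from $\vect{y}$ and the lengths of the $\aseg_i,\aloop_i$ (locating the block into which $t$ falls), guess the next state, and verify the transition via the nice B\"uchi property, finally checking that the reached state is $\astate_f$. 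This uses only polynomial space even though the number of simulated steps is exponential.

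For the second check, $\alang(\fsa_2) \cap \alang(\aloop_k[j+1,\length{\aloop_k}]\aloop_k^{*}\aloop_k[1,j]) \neq \emptyset$, I would take a polynomial-size NFA $\fsa_3$ for the regular expression (it mentions only $\aloop_k$, a part of $\acps$ of polynomial size) and test non-emptiness of the product $\fsa_2 \times \fsa_3$ by guessing a path from the initial product state to an accepting one, keeping in memory only the current product state, i.e.\ a pair of one polynomial-size $\fsa_2$-state and one polynomial-size $\fsa_3$-state; this is decidable in nondeterministic polynomial space (and could be made deterministic by Savitch's theorem). The main obstacle, and the crux of the whole argument, is precisely this on-the-fly discipline: we must never construct the exponential-length word $\aword'$ nor the exponential-size automaton $\bauto_{\aspecification}$ explicitly, and the lemma rests on combining the binary position counter with on-the-fly letter computation on the word side and the nice B\"uchi property on the automaton side.
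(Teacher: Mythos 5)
Your proof is correct and follows essentially the same route as the paper's: on-the-fly simulation of $\fsa_1$ on the exponentially long word using a binary position counter and polynomial-size state representations guaranteed by the nice B\"uchi property, plus a nondeterministic reachability test in the (implicit) product $\fsa_2\times\fsa_3$ for the lasso part. The only cosmetic difference is that you replace the outer \textbf{for} loop by a nondeterministic guess of $\astate_f$ and $j$, which is equally valid for a nondeterministic polynomial-space bound; your explicit treatment of how the $t$-th letter of $\aword'$ is computed from $\vect{y}$ and the block lengths is in fact slightly more detailed than the paper's.
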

\begin{proof}
The proof is standard and we recall standard arguments in the presence
of the nice \buchi~ property.
Indeed, by the nice \buchi~ property, 
$\bauto_{\aspecification}$ can be of exponential size in the size of the input, so we cannot  
construct the transition relation of
$\bauto_{\aspecification}$ explicitly, instead we do it on-the-fly.
Let us consider each step of the algorithm and show that each iteration of the {\bf for} loop 
can be done in polynomial space.
\begin{enumerate}
\itemsep 0 cm 
  \item $\fsa_1$ and $\fsa_2$ are essentially copies of $\bauto_{\aspecification}$ and hence their transition 
  relations are also not constructed
    explicitly. But, 
    by the nice \buchi~ property, 
    their states can be represented in polynomial space.
  \item Checking $\aword'\in\alang(\fsa_1)$ can be done by simulating $\fsa_1$ on $\aword'$. Note that for simulating $\fsa_1$, at any position we
    only need to store the previous state and the letter 
    at current position to obtain the next state of $\fsa_1$. Thus, this can be performed in
    polynomial space in $\size{\aspecification}$ and $\size{\acps}+\size{\vect{y}}$.
  \item Checking $\alang(\fsa_2)\cap\alang(\aword'')\neq\emptyset$ can be done by constructing a finite state automata $\fsa_{loop}$ for
    $\alang(\aword'')$ and checking for reachability of final state in the automaton 
     $\fsa_2\times\fsa_{loop}$. Note that $\size{\fsa_{loop}}$ is
    polynomial, but since $\size{\fsa_2}$ is can be of exponential size, 
    $\size{\fsa_2\times\fsa_{loop}}$ can be also of exponential magnitude. 
    However, the graph accessibility problem (GAP) is in \nlogspace, so
    $\alang(\fsa_2)\cap\alang(\aword'')\neq\emptyset$ can also be done in nondeterministic 
    polynomial space.
\end{enumerate}
Thus, the whole procedure can be completed in nondeterministic polynomial space.
\qed
\end{proof}

\begin{lemma} 
\label{lemma-membership-gen}
Let $\aspeclanguage$ be a specification language satisfying the nice \buchi~ property.
Membership problem for $\aspeclanguage$ is in \pspace.
\end{lemma}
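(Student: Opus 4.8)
The plan is to recognize the membership problem as essentially one run of Algorithm~\ref{algo:gen-pspace} followed by an application of Savitch's theorem, since the correctness and the nondeterministic space analysis of that algorithm have already been carried out and may be assumed. Given an instance consisting of a constrained path schema $\acps = \pair{\cpschema}{\aformula}$, a specification $\aspecification \in \aspeclanguage$, and iteration counts $\vect{n} = (n_1,\ldots,n_{k-1})$ written in binary, I would first use the nice B\"uchi property to associate with $\aspecification$ the automaton $\bauto_{\aspecification}$ satisfying $\alang(\bauto_{\aspecification}) = \alang(\aspecification)$, stressing that it is never materialized but accessed only through its polynomial-size state encoding and its polynomial-space initiality, acceptance, and transition oracles. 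Running Algorithm~\ref{algo:gen-pspace} on $(\aspecification, \acps, \vect{n})$ then answers the question: by its established correctness the algorithm accepts exactly when $\aseg_1 (\aloop_1)^{n_1} \cdots \aseg_{k-1} (\aloop_{k-1})^{n_{k-1}} \aseg_k (\aloop_k)^{\omega}$ lies in $\alang(\bauto_{\aspecification}) = \alang(\aspecification)$, which is precisely the membership query.

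For the space bound I would invoke the already-proved fact that Algorithm~\ref{algo:gen-pspace} runs in nondeterministic polynomial space, the two crucial points being that the possibly exponential-size copies $\fsa_1,\fsa_2$ of $\bauto_{\aspecification}$ are simulated on the fly so that only a single polynomial-size state is ever stored, and that $\fsa_1$ is run over the prefix $\aseg_1 (\aloop_1)^{n_1} \cdots \aseg_k \aloop_k[1,j]$ — of length exponential in the binary encoding of $\vect{n}$ — by keeping a polynomially encoded position counter together with the current automaton state and computing the letter at each position on demand, exactly as in the first-order case. Because the $n_i$ are given in binary, their bit-length is already polynomial in the size of the input, so the position counter and the loop bookkeeping both fit in polynomial space and no prior truncation of $\vect{n}$ is required. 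Since $\npspace = \pspace$ by Savitch's theorem, the overall procedure places the membership problem for $\aspeclanguage$ in \pspace.

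The step I expect to be the main obstacle is the on-the-fly lasso detection at the heart of Algorithm~\ref{algo:gen-pspace}: one must certify that an accepting B\"uchi run over the ultimately periodic word can be located by guessing a repeated accepting state $\astate_f$ and a cut position $j$ inside $\aloop_k$, and then discharging the two finite reachability tasks — a run of $\fsa_1$ from the initial state to $\astate_f$ on the exponential-length prefix, and a nonempty intersection of $\fsa_2$ with $\aloop_k[j+1,\length{\aloop_k}]\,\aloop_k^{*}\,\aloop_k[1,j]$ — all within polynomial space despite both the automaton and the prefix being of exponential size. Verifying that the second task remains an instance of graph accessibility on a product whose states admit a polynomial-space representation, so that it stays in \nlogspace\ relative to the oracle and hence in nondeterministic polynomial space overall, is the delicate point that makes the whole reduction go through.
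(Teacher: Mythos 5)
Your proposal is correct and follows essentially the same route as the paper: reduce to the on-the-fly lasso search of Algorithm~\ref{algo:gen-pspace} (guess $\astate_f$ and a cut position $j$ in $\aloop_k$, check the prefix with $\fsa_1$ by simulation with a binary position counter, and check the loop via graph accessibility in the product with $\fsa_2$), then conclude with Savitch's theorem. Your observation that no prior truncation of $\vect{n}$ is needed because the binary encoding already bounds the position counter polynomially is also consistent with how the paper's membership argument works.
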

\begin{corollary} \label{corollary-intersection-gen}
Let $\aspeclanguage$ be a specification language satisfying the nice \buchi~ property. 
Then, the intersection non-emptiness problem for $\aspeclanguage$ is in \pspace.
\end{corollary}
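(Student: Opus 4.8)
The plan is to turn the two preparatory results into a single guess-and-check procedure and then collapse the nondeterminism with Savitch's theorem. Given a constrained path schema $\acps = \pair{\cpschema}{\aformula}$ and a specification $\aspecification \in \aspeclanguage$, I would first nondeterministically guess a vector $\vect{n} \in \interval{0}{\amap_{\mathcal{L}}(\aspecification,\acps)}^{k-1}$, then verify two things: that $\vect{n} \models \aformula$, so that the associated word lies in $\alang(\acps)$, and that $\aword = \aseg_1 (\aloop_1)^{\vect{n}[1]} \cdots \aseg_{k-1} (\aloop_{k-1})^{\vect{n}[k-1]} \aseg_k (\aloop_k)^{\omega}$ lies in $\alang(\aspecification)$. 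Accepting exactly when both checks succeed yields a nondeterministic algorithm for the intersection non-emptiness problem.

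Correctness would follow immediately from Lemma~\ref{map-ETL}: if $\alang(\acps) \cap \alang(\aspecification) \neq \emptyset$, then some witness word visits each loop at most $\amap_{\mathcal{L}}(\aspecification,\acps)$ times, so a suitable $\vect{n}$ is in the guessing range and both checks pass; conversely any accepting computation exhibits an explicit word in the intersection. The substance of the argument is therefore the space analysis. The key observation I would exploit is that, although $\amap_{\mathcal{L}}(\aspecification,\acps)$ is exponential in $\size{\aspecification}+\size{\acps}$, it is bounded by $2^{\mathtt{poly}(\size{\aspecification}+\size{\acps})}$; hence each component $\vect{n}[i]$ can be written in binary with only polynomially many bits, so the whole vector $\vect{n}$ is stored in polynomial space, and checking $\vect{n} \models \aformula$ reduces to evaluating a guard on a polynomially-sized argument, which costs only polynomial time.

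The main obstacle, and the reason the bound lands at \pspace\ rather than lower, is that the candidate word $\aword$ has exponential length and so cannot be materialised, while the equivalent B\"uchi automaton $\bauto_{\aspecification}$ provided by the nice B\"uchi property may itself be of exponential size. This is precisely the difficulty already absorbed by Lemma~\ref{lemma-membership-gen}: the membership test reads the letters of $\aword$ on demand (each computable in polynomial time from $\acps$ and $\vect{n}$), simulates $\bauto_{\aspecification}$ on-the-fly so that only a single polynomially-sized state is ever kept, and performs the lasso search of Algorithm~\ref{algo:gen-pspace} within \nlogspace\ of that automaton. I would therefore invoke Lemma~\ref{lemma-membership-gen} as a black-box subroutine on the triple $(\acps,\aspecification,\vect{n})$, which runs in polynomial space.

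Assembling the pieces, the overall procedure is nondeterministic and uses polynomial space throughout: a polynomial-space guess of $\vect{n}$, a polynomial-time constraint check, and a polynomial-space membership call. Since $\npspace = \pspace$ by Savitch's theorem, the intersection non-emptiness problem for every specification language enjoying the nice B\"uchi property lies in \pspace, as claimed. The only genuinely delicate point is the bit-size bound on $\amap_{\mathcal{L}}(\aspecification,\acps)$, which guarantees that the otherwise exponential iteration counts remain representable, and everything heavy in the membership step has already been discharged in Lemma~\ref{lemma-membership-gen}.
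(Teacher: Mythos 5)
Your proposal is correct and follows essentially the same route as the paper: the paper likewise combines the loop-iteration bound of Lemma~\ref{map-ETL} with a nondeterministic guess of a binary-encoded vector $\vect{n}$, delegates the membership check to the on-the-fly \pspace\ procedure behind Lemma~\ref{lemma-membership-gen}, and closes with $\npspace = \pspace$. No gaps.
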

\iftechreport
Let us recall consequences of results from the literature.
\begin{itemize}
\itemsep 0 cm
\item ETL has the nice \buchi~ property by~\cite{Vardi&Wolper94}.
\item Linear $\mu$-calculus has the nice \buchi~ property by~\cite{Vardi88}.
      Note that the construction in~\cite{Vardi88} uses a (one-way) B\"uchi automaton 
     of exponential size to check for the satisfaction of the formula in the model whereas two B\"uchi automata  check the 
     non-foundedness of the model. A close inspection reveals
     that all of these steps can be done  on-the-fly and in fact this is used in~\cite{Vardi88} 
     to obtain the \pspace \ upper bound for model-checking Kripke structures with $\lmc$~\cite[Corollary 4.6]{Vardi88}. 
\item ABA has the nice \buchi~ property by~\cite{MiyanoHayashi84}.
\end{itemize}
The results for ETL and ABA can be also obtained thanks to existing translations into  linear $\mu$-calculus.
\else
Let us recall consequences of results from the literature.
ETL has the nice BA property by~\cite{Vardi&Wolper94}, 
linear $\mu$-calculus has the nice BA property by~\cite{Vardi88}
and  ABA has the nice BA property by~\cite{MiyanoHayashi84}.
Note that the results for ETL and ABA can be also obtained thanks to translations into  linear $\mu$-calculus.
\fi 
By Lemma~\ref{lemma-membership-gen}, Lemma~\ref{lemma-complexity-subalphabet} and the above-mentioned results, we
obtain the following results. 
\begin{theorem} \label{theorem-pspace-all}
$\mcpb{\ABA}{\flatcs}$, $\mcpb{\ETL}{\flatcs}$ and $\mcpb{\lmc}{\flatcs}$ are in \pspace.
\end{theorem}
It is also possible to get lower bounds.
\begin{theorem} \label{theorem-hardness}
(I) The intersection non-emptiness problem for $\ABA$ [resp. $\lmc$] is \pspace-hard.
(II) $\mcpb{\ABA}{\flatcs}$ and $\mcpb{\lmc}{\flatcs}$ are \pspace-hard.
\end{theorem}
\ifconference
See Appendix~\ref{section-proof-theorem-hardness}. 
According to the proof of Theorem~\ref{theorem-hardness},
\pspace-hardness already holds true for a fixed Kripke structure, that is actually a simple
path schema. Hence for linear $\mu$-caluclus, there is a complexity gap between model-checking
unconstrained path schemas with two loops (in UP$\cap$co-UP~\cite{Jurdzinski98}) and model-checking rudimentary unconstrained path 
schemas (Kripke structures) made of  
a single loop, which is in contrast to Past LTL, where model-checking unconstrained path schemas with a bounded number of loops 
can be done in polynomial time~\cite[Theorem 9]{DemriDharSangnier12}.
\else

\begin{proof} (II) is  direct consequence of (I).

First, we recall that an \defstyle{alternating finite automaton} is a structure of the form $\aautomaton = (Q,\aalphabet, \transitions, q_0,F)$ 
such that $Q$ and
$\aalphabet$ are finite nonempty sets, $\transitions: Q \times \aalphabet \rightarrow \Boolplus(Q)$ 
is the transition function 
($\Boolplus(Q)$ is the set of positive Boolean formulae built over $Q$), $q_0 \in Q$ and $F \subseteq Q$. The \defstyle{acceptance predicate} 
$Acc \subseteq Q \times
\aalphabet^*$ is defined by induction on the length of the second component so that (1) $\pair{q_f}{\varepsilon} \in Acc$ whenever $q_f \in F$ and (2)
$\pair{q}{\aletter \cdot \aword} \in Acc$ iff $v \models \transitions(q,\aletter)$ where $v$ is the Boolean assignment such that $v(q') = \top$ iff
$\pair{q'}{\aword} \in Acc$.  We write $\alang(\aautomaton)$ to denote the language $\set{\aword \in \aalphabet^*: \pair{q_0}{\aword} \in Acc}$ and
more generally, $\alang(\aautomaton,q) \egdef \set{\aword \in \aalphabet^*: \pair{q}{\aword} \in Acc}$.  It has been shown in~\cite{JancarSawa07} that
checking whether an alternating finite automaton $\aautomaton$ with a singleton alphabet has a non-empty language $\alang(\aautomaton)$
is \pspace-hard.
Without loss of generality, we can assume that ($\star$) $q_0 \not \in F$,
($\star \star$) for every $q_f \in F$, $\transitions(q_f,\aletter) = \perp$
and ($\star \star \star$) for every $q \in Q$,  $\transitions(q,\set{\aletter}) \neq \top$ , assuming that
$\aletter$ is the only letter, and still preserves \pspace-hardness.
Indeed, let $\aautomaton = (Q,\set{\aletter}, \transitions, q_0,F)$ be an alternating finite automaton
and  $\aautomaton' = (Q',\set{\aletter}, \transitions', q_0^{\rm new},\set{q_f^{\rm new}})$ be its variant such that
$Q' = Q \uplus \set{ q_0^{\rm new}, q_f^{\rm new}}$, 
$\transitions'(q_0^{\rm new},\aletter) = q_0$,
$\transitions'(q_f^{\rm new}, \aletter) = \perp$ and for every $q  \in Q$, $\transitions'(q,\aletter)$
is obtained from $\transitions(q,\aletter)$ by simultaneously replacing every
occurrence of $q_f \in F$ by $(q_f \vee q_f^{\rm new})$. In the case,  $\transitions(q,\aletter) = \top$ with $q \in Q$,
$\transitions'(q,\aletter)$ is defined as $q \vee q_f^{\rm new}$. 
It is clear from the construction that $\aautomaton'$ follows the conditions of our assumption
and $\alang(\aautomaton') = \aletter \cdot \alang(\aautomaton)$; whence 
$\alang(\aautomaton)$ is non-empty iff $\alang(\aautomaton')$
is non-empty. 

In order to prove the result for $\ABA$,
it is sufficient to observe that given an alternating finite automaton $\aautomaton$ built over the singleton alphabet
$\set{\aletter}$, one can build in logarithmic space an  alternating B\"uchi automaton $\aautomaton'$ 
over the alphabet $\set{\aletter,\aletterbis}$
such that $\alang(\aautomaton') = \alang(\aautomaton) \cdot \set{\aletterbis}^{\omega}$. 
Roughly speaking, the reduction consists in taking the accepting states of $\aautomaton$ and in
letting them accept $\set{\aletterbis}^{\omega}$ in $\aautomaton'$. \pspace-hardness 
of the intersection non-emptiness problem for $\ABA$ is obtained by noting that 
$\alang(\aautomaton)$ is non-empty iff $\alang(\pair{\aletter^{*} \cdot \aletterbis^{\omega}}{\top}) \cap \alang(\aautomaton') \neq \emptyset$.

Now, let us deal with $\lmc$.
The \pspace-hardness is essentially obtained by reducing  nonemptiness problem for
alternating finite automata with a singleton alphabet (see e.g.~\cite{JancarSawa07})
into the vectorial linear $\mu$-calculus with a fixed simple constrained path schema.
Reduction in polynomial-time into linear $\mu$-calculus is then possible
when formula sizes are measured in terms of numbers of subformulae. 
This is a standard type of reduction (see e.g.~\cite[Section 5.4]{Walukiewicz01}); 
we provide details below
not only to be self-contained but also because we need a limited number of resources:
no greatest fixed point operator (e.g. no negation of least fixed
point operator) and we then use a simple path schema. 
In the sequel, for ease of presentation, we consider this latter class of alternating finite automata
and we present a logarithmic-space reduction into the intersection non-emptiness problem with linear $\mu$-calculus.
More precisely, for every alternating finite automaton $\aautomaton$ built over the
singleton alphabet $\set{ \set{\avarprop}}$, we build a formula $\aformula_{\aautomaton}$ in the linear $\mu$-calculus
 (without $\myprevious$ and the greatest fixed-point operator $\nu$) such that 
 $\alang(\aautomaton)$ is non-empty iff
there is $\set{\avarprop} \cdot \set{\avarprop}^{n_1} \cdot
\emptyset^{\omega}$ in $\alang(\acps)$ with the constrained path schema $\acps = \pair{\set{\avarprop} \cdot \set{\avarprop}^* \cdot \emptyset^{\omega}}{\top}$
and $\set{\avarprop} \cdot \set{\avarprop}^{n_1} \cdot
\emptyset^{\omega} \models \aformula_{\aautomaton}$.
In order to define $\aformula_{\aautomaton}$, we build first an intermediate formula
in the vectorial version of the linear $\mu$-calculus, see e.g. similar developments 
in~\cite[Section 5.4]{Walukiewicz01},
and then we translate it into an equivalent formula in the linear $\mu$-calculus by using the well-known
\defstyle{Beki\v{c}'s Principle}.

Let $\aautomaton = (Q,\set{\set{\avarprop}}, \transitions, q_0,F)$ be
a alternating  finite automaton with a singleton alphabet such that $q_0 \not \in F$,
and for every $q_f \in F$, $\transitions(q_f,\set{\avarprop}) = \perp$. 
We order the states of $Q \setminus F$ with $q_1, \ldots, q_{\alpha}$ such that
$q_1$ is the initial state.

We define the formulae in the vectorial version of linear $\mu$-calculus
$\aformulabis_1^0$, \ldots, $\aformulabis_{\alpha}^0$, 
$\aformulabis_1^1$, \ldots, $\aformulabis_{\alpha-1}^1$,
\ldots,
$\aformulabis_1^i$, \ldots, $\aformulabis_{\alpha-i}^i$,
\ldots,
$\aformulabis_{1}^{\alpha-1}$ and such that 
$ \mu \ \avar_1 \ \cdot \aformulabis_{1}^{\alpha-1}$ belongs to the (standard) linear $\mu$-calculus. 
Such formulae will satisfy the following conditions. 
\begin{itemize}
\itemsep 0 cm
\item[(I)] For all $n \geq 1$, 
      $\set{\avarprop}^n \in \alang(\aautomaton)$ iff
      $\set{\avarprop}^n \cdot \emptyset^{\omega} \models \mu \mytuple{\avar_1}{\avar_{\alpha}}
      \ \mytuple{\aformulabis_1^0}{\aformulabis_{\alpha}^0} \cdot \avar_1$.
\item[(II)] For all $j \in \interval{0}{\alpha-1}$,
      $\mu \mytuple{\avar_1}{\avar_{\alpha-j}}
      \ \mytuple{\aformulabis_1^j}{\aformulabis_{\alpha-j}^j} \cdot \avar_1$ is equivalent to
     $\mu \mytuple{\avar_1}{\avar_{\alpha-j-1}}
      \ \mytuple{\aformulabis_1^{j+1}}{\aformulabis_{\alpha-j-1}^{j+1}} \cdot \avar_1$.
\item[(III)] Consequently, 
       for all $n \geq 1$, 
      $\set{\avarprop}^n \in \alang(\aautomaton)$ iff
      $\set{\avarprop}^n \cdot \emptyset^{\omega} \models  \mu \ \avar_1 \ \cdot \aformulabis_{1}^{\alpha-1}$
      and we pose $\aformula_{\aautomaton} = \mu \ \avar_1 \ \cdot \aformulabis_{1}^{\alpha-1}$.
\end{itemize}

Let us define below the formulae: the substitutions are simple and done  hierarchically. 
\begin{description}
\itemsep 0 cm 
\item[(init)] For every $i \in \interval{1}{\alpha}$, $\aformulabis_i^0$ is obtained from
$\transitions(q_i, \set{\avarprop})$ by substituting each $q_j \in Q \setminus F$ by $\mynext \avar_j$
and each $q_f \in F$ by $\mynext \neg \avarprop$, and then by taking the conjunction with $\avarprop$.
So,  $\aformulabis_i^0$ can be written schematically 
as $\avarprop \wedge \transitions(q_i, \set{\avarprop})[q_j \leftarrow \mynext \avar_j,
q_f \leftarrow \mynext \neg \avarprop]$. 
\item[(ind)] For every $j \in \interval{1}{\alpha-1}$, for every $i \in \interval{1}{\alpha-j}$,
              $\aformulabis_i^{j}$ is obtained from  $\aformulabis_i^{j-1}$ by substituting every occurrence of
            $\avar_{\alpha - j + 1}$ by $\mu \avar_{\alpha -j + 1} \ \aformulabis_{\alpha - j +1}^{j-1}$. 
\end{description}
Note that $\mu \ \avar_1 \ \cdot \aformulabis_{1}^{\alpha-1}$ can be built in logarithmic space in
the size of $\aautomaton$ since formulae are represented as DAGs (their size is the number of subformulae)
and for all $j \in \interval{1}{\alpha-1}$ and $i \in \interval{1}{\alpha-j}$,
$\aformulabis_i^j$ has no free occurrences of $\avar_{\alpha -j+1}$, \ldots, $\avar_{\alpha}$. 

It remains to check that (I)--(III) hold. First, observe that (III) is a direct consequence of
(I) and (II). By \defstyle{Beki\v{c}'s Principle}, see e.g.~\cite[Section 1.4.2]{Arnold&Niwinski01}, 
$\mu \mytuple{\avar_1}{\avar_{j}}
      \ \mytuple{\aformulater_1(\avar_1, \ldots,\avar_j)}{\aformulater_j(\avar_1, \ldots,\avar_j)} \cdot \avar_1$ is equivalent to
     $$\mu \mytuple{\avar_1}{\avar_{j-1}}
      \ \mytuple{\aformulater_1(\avar_1, \ldots,\avar_{j-1}, \aformulater')}{
      \aformulater_{j-1}(\avar_1, \ldots, \avar_{j-1},\aformulater')} \cdot \avar_1$$
where $\aformulater' = \mu \avar_j \ \aformulater_j(\avar_1, \ldots,\avar_j)$. 
Note that the substitution performed to build the formula follows exactly the same principle.
For every $j \in \interval{1}{\alpha-1}$, we obtain 
$\mytuple{\aformulabis_1^{j}}{\aformulabis_{\alpha-j}^{j}}$ by replacing $\avar_{\alpha - j+1}$ by 
$\mu \avar_{\alpha -j+1} \ \aformulabis_{\alpha - j+1}^{j-1}$ in $\mytuple{\aformulabis_1^{j-1}}{\aformulabis_{\alpha-j+1}^{j-1}}$. 
Thus by Beki\v{c}'s Principle,  
      $$\mu \mytuple{\avar_1}{\avar_{\alpha-j}}
      \ \mytuple{\aformulabis_1^j}{\aformulabis_{\alpha-j}^j} \cdot \avar_1 \Leftrightarrow \mu \mytuple{\avar_1}{\avar_{\alpha-j-1}}
      \ \mytuple{\aformulabis_1^{j+1}}{\aformulabis_{\alpha-j-1}^{j+1}}\cdot \avar_1$$
is valid for  all $j \in \interval{0}{\alpha-1}$. It remains to verify that (I) holds true. 

In vectorial linear $\mu$-calculus, formulae with outermost fixed-point operators are of the form
$\mu \mytuple{\avar_1}{\avar_{\beta}} \mytuple{\aformula_1}{\aformula_{\beta}} \cdot \avar_j$ with $j \in \interval{1}{\beta}$. 
Whereas fixed points in linear $\mu$-calculus are considered for monotone functions over
the complete lattice $\pair{\powerset{\Nat}}{\subseteq}$, 
fixed points in vectorial linear 
$\mu$-calculus are considered for monotone functions over
the complete lattice $\pair{(\powerset{\Nat})^{\beta}}{\subseteq}$,
where $\mytuple{\asetbis_1}{\asetbis_{\beta}} \subseteq  \mytuple{\asetbis_1'}{\asetbis_{\beta}'}$
iff for every $i \in \interval{1}{\beta}$, we have $\asetbis_i \subseteq \asetbis'_i$. 
So, the satisfaction relation is defined as follows.
Given a model $\sigma \in (\powerset{\varprop})^{\omega}$, $\sigma, i \models_{\amap} 
\mu \mytuple{\avar_1}{\avar_{\beta}} \mytuple{\aformula_1}{\aformula_{\beta}} \cdot \avar_j$ (assuming that the
variables $\avar_k$ occurs positively in the $\aformula_l$'s) iff
$i \in \asetter_j^{\mu}$ where $\mytuple{\asetter_1^{\mu}}{\asetter_{\beta}^{\mu}}$
is the least fixed point of the monotone function $\mathcal{F}_{\amap, \sigma}: (\powerset{\Nat})^{\beta} \rightarrow (\powerset{\Nat})^{\beta}$
defined by $\mathcal{F}_{\amap, \sigma}(\mathcal{Y}_1, \ldots,\mathcal{Y}_{\beta}) =  \mytuple{\mathcal{Y}_1'}{\mathcal{Y}_{\beta}'}$
where 
$$
\mathcal{Y}_{l}'  \egdef \set{i' \in \Nat:
\sigma, i' \models_{\amap[\avar_1 \leftarrow \mathcal{Y}_1, \ldots,\avar_{\beta} \leftarrow \mathcal{Y}_{\beta}]} \aformula_l
}
$$
It is well-known that the least fixed point $\mytuple{\asetter_1^{\mu}}{\asetter_{\beta}^{\mu}}$
can be obtained by an iterative process:
$\mytuple{\asetter_1^{0}}{\asetter_{\beta}^{0}} \egdef \mytuple{\emptyset}{\emptyset}$,
$\mytuple{\asetter_1^{i+1}}{\asetter_{\beta}^{i+1}} \egdef \mathcal{F}_{\amap, \sigma}(\asetter_1^{i},\ldots,\asetter_{\beta}^{i})$
for all $i \geq 0$ 
and, 
$\mytuple{\asetter_1^{\mu}}{\asetter_{\beta}^{\mu}} = \bigcup_{i} \mytuple{\asetter_1^{i}}{\asetter_{\beta}^{i}}$. 

Let $\sigma_n$ be the model $\set{\avarprop}^{n} \cdot \emptyset^{\omega}$ with $n > 0$, $\amap_{\emptyset}$ be the constant assignment equal to
$\emptyset$ everywhere and $\mathcal{F}_{\amap_{\emptyset}, \sigma_n}$ be the monotone function 
$\mathcal{F}_{\amap_{\emptyset}, \sigma_n}: (\powerset{\Nat})^{\alpha} \rightarrow (\powerset{\Nat})^{\alpha}$
defined from $ \mu \mytuple{\avar_1}{\avar_{\alpha}} \ \mytuple{\aformulabis_1^0}{\aformulabis_{\alpha}^0} \cdot \avar_1$.  

Let us show  by induction that for every $i \in \interval{1}{n}$,
the $i$th iterated tuple $\mytuple{\asetter_1^{i}}{\asetter_{\alpha}^{i}}$ verifies that for every $l \in \interval{1}{\alpha}$, 
 $u \in \asetter_l^{i}$ iff $u \in \interval{n-i}{n-1}$ and $\set{\avarprop}^{n-u} \in
\alang(\aautomaton,q_l)$. 

\noindent
{\em Base Case}: $i = 1$. The propositions below are equivalent ($l \in \interval{1}{\alpha}$):
\begin{itemize}
\itemsep 0 cm
\item $u \in  \asetter_l^1$, 
\item $\sigma_n, u \models_{\amap_{\emptyset}[\avar_1 \leftarrow \emptyset, \ldots,\avar_{\alpha} \leftarrow \emptyset]}   \aformulabis_l^0$
     (by definition of $\mathcal{F}_{\amap_{\emptyset}, \sigma_n}$),
\item  $\sigma_n, u \models_{\amap_{\emptyset}[\avar_1 \leftarrow \emptyset, \ldots,\avar_{\alpha} \leftarrow \emptyset]}  
        \avarprop \wedge \transitions(q_i, \set{\avarprop})[q_j \leftarrow \mynext \avar_j,
        q_f \leftarrow \mynext \neg \avarprop]$
       (by definition of $\aformulabis_l^0$),
\item  $\sigma_n, u \models 
        \avarprop \wedge \transitions(q_l, \set{\avarprop})[q_j \leftarrow \perp,
        q_f \leftarrow \mynext \neg \avarprop]$
       (by definition of $\models$),
\item $\sigma_n, u \models \avarprop$ and there is a Boolean valuation $v:Q \rightarrow \set{\perp,\top}$
      such that for every $q \in (Q \setminus F)$, we have $v(q) = \perp$ and
      $v \models \transitions(q_l, \set{\avarprop})$,
\item  $\sigma_n, u \models \avarprop$, $\sigma_n, u+1 \models \neg \avarprop$ and 
       $\pair{q_l}{\set{\avarprop}} \in Acc$ (by definition of $Acc$ and by assumption ($\star \star \star$)),
\item  $u = n-1$ and $\set{\avarprop}^{n-u} \in \alang(\aautomaton,q_l)$ (by definition of $\sigma_n$ and 
$\alang(\aautomaton,q_l)$). 
\end{itemize}

Before proving the induction, we observe that we can also show by induction, that 
for all $l \in \interval{1}{\alpha}$ and $i$, $\asetter_l^i \subseteq \interval{n-i}{n-1}$ ($\dag$). 

\noindent
{\em Induction Step}: Now let us assume that for some $i \in \interval{1}{n}$, the $i$th
iterated tuple $\mytuple{\asetter_1^{i}}{\asetter_{\alpha}^{i}}$ verifies that for every $l \in \interval{1}{\alpha}$, 
$u \in \asetter_l^{i}$ iff $u \in \interval{n-i}{n-1}$ and $\set{\avarprop}^{n-u} \in
\alang(\aautomaton,q_l)$. 
We will show that the same holds true for $(i+1)$th iteration $\mytuple{\asetter_1^{i+1}}{\asetter_{\alpha}^{i+1}}$. 
Since $\asetter_l^{i} \subseteq \asetter_l^{i+1}$ (monotonicity), 
for every $u \in \asetter_l^{i+1} \cap \asetter_l^{i}$, we have 
$u \in \interval{n-i-1}{n-1}$ and $\set{\avarprop}^{n-u} \in
\alang(\aautomaton,q_l)$ (since $\interval{n-i}{n-1} \subseteq \interval{n-i-1}{n-1}$). 
Similarly, if  $u \in \interval{n-i}{n-1}$ and $\set{\avarprop}^{n-u} \in
\alang(\aautomaton,q_l)$, then $u \in \asetter_l^{i}$ by induction hypothesis and therefore $u \in \asetter_l^{i+1}$.
Hence, it remains to show that $u \in  (\asetter_l^{i+1} \setminus 
 \asetter_l^{i})$ iff  $u = n-i-1$ and $\set{\avarprop}^{n-u} \in
\alang(\aautomaton,q_l)$ (i.e. $\set{\avarprop}^{i+1} \in
\alang(\aautomaton,q_l)$). 
By ($\dag$), it is sufficient to show that $(n-i-1) \in \asetter_l^{i+1}$ iff 
$\set{\avarprop}^{i+1} \in
\alang(\aautomaton,q_l)$.

The propositions below are equivalent ($l \in \interval{1}{\alpha}$, $i \geq 1$, $n-i-1 \geq 0$):
\begin{itemize}
\itemsep 0 cm
\item $(n-i-1) \in  \asetter_l^{i+1}$, 
\item $\sigma_n, n-i-1 \models_{\amap_{\emptyset}[\avar_1 \leftarrow \asetter_1^{i}, \ldots,\avar_{\alpha} \leftarrow \asetter_1^{i}]}   \aformulabis_l^0$
     (by definition of $\mathcal{F}_{\amap_{\emptyset}, \sigma_n}^{i+1}$),
\item  $\sigma_n,  n-i-1 \models_{\amap_{\emptyset}[\avar_1 \leftarrow \asetter_1^{i}, \ldots,\avar_{\alpha} \leftarrow \asetter_1^{i}]}  
        \avarprop \wedge \transitions(q_i, \set{\avarprop})[q_j \leftarrow \mynext \avar_j,
        q_f \leftarrow \mynext \neg \avarprop]$
       (by definition of $\aformulabis_l^0$),
\item $\sigma_n,  n-i-1 \models \avarprop$ and there is a Boolean valuation $v:Q \rightarrow \set{\perp,\top}$
      such that
      \begin{enumerate}
      \itemsep 0 cm 
      \item for every $q_{l'} \in (Q \setminus F)$, we have $v(q_{l'}) = \top$ iff $n-i \in  \asetter_{l'}^{i}$,
      \item for every $q_f \in F$, $v(q_f) = \perp$,
      \end{enumerate}
       $v \models \transitions(q_l, \set{\avarprop})$ (by definition of $\models$ and $i \geq 1$),
\item there is  $v:Q \rightarrow \set{\perp,\top}$
      such that 
      \begin{enumerate}
      \itemsep 0 cm 
      \item for every $q_{l'} \in (Q \setminus F)$, we have $v(q_{l'}) = \top$ iff 
            $n-i \in \interval{n-i}{n-1}$ and $\pair{q_{l'}}{\set{\avarprop}^{i}} \in Acc$,
      \item for every $q_f \in F$, $v(q_f) = \perp$,
      \end{enumerate}
       and $v \models \transitions(q_l, \set{\avarprop})$ (by induction hypothesis and since $n-i-1 \in \interval{0}{n-1}$),
\item there is  $v:Q \rightarrow \set{\perp,\top}$
      such that 
      \begin{enumerate}
      \itemsep 0 cm 
      \item for every $q_{l'} \in (Q \setminus F)$, we have $v(q_{l'}) = \top$ iff 
            $\pair{q_{l'}}{\set{\avarprop}^{i}} \in Acc$,
      \item for every $q_f \in F$, $v(q_f) = \perp$ 
      \end{enumerate}
       and $v \models \transitions(q_l, \set{\avarprop})$ (by propositional reasoning),
\item there is  $v:Q \rightarrow \set{\perp,\top}$
      such that 
      \begin{enumerate}
      \itemsep 0 cm 
      \item for every $q_{l'} \in (Q \setminus F)$, we have $v(q_{l'}) = \top$ iff 
            $\pair{q_{l'}}{\set{\avarprop}^{i}} \in Acc$,
      \item for every $q_f \in F$, $v(q_f) = \top$  iff $\pair{q_f}{\set{\avarprop}^{i}} \in Acc$,
      \end{enumerate}
       and $v \models \transitions(q_l, \set{\avarprop})$ (since $i \geq 1$, $\transitions(q_f, \set{\avarprop}) = \perp$ 
       and $\pair{q_f}{\set{\avarprop}^{i}} \not \in Acc$),
\item $\pair{q_l}{\set{\avarprop}^{i+1}} \in Acc$ (by definition of $Acc$),
\item $\set{\avarprop}^{i+1} \in \alang(\aautomaton,q_l)$. 
\end{itemize}

Thus, for every $i \in \interval{1}{n}$,
the $i$th iterated tuple $\mytuple{\asetter_1^{i}}{\asetter_{\alpha}^{i}}$ verifies that for every $l \in \interval{1}{\alpha}$,  
$u \in \asetter_l^{i}$ iff $u \in \interval{n-i}{n-1}$ and $\set{\avarprop}^{n-u} \in
\alang(\aautomaton,q_l)$.
So, $\set{\avarprop}^{n} \in \alang(\aautomaton)$ iff $0 \in \asetter_1^{n}$.
Since $\mytuple{\asetter_1^{\mu}}{\asetter_{\alpha}^{\mu}}$ is precisely equal 
to  $\mytuple{\asetter_1^{n}}{\asetter_{\alpha}^{n}}$ because of the simple structure of
 $\sigma_n$ (see ($\dag$), we conclude that
$\set{\avarprop}^{n} \in \alang(\aautomaton)$ iff
$\sigma_n, 0 \models  \mu \mytuple{\avar_1}{\avar_{\alpha}}
      \ \mytuple{\aformulabis_1^0}{\aformulabis_{\alpha}^0} \cdot \avar_1$, whence (I) holds. 

From (III), we conclude that $\alang(\aautomaton)$ is non-empty iff
there is $\set{\avarprop} \cdot \set{\avarprop}^{n_1} \cdot
\emptyset^{\omega}$ in $\alang(\acps)$ with $\acps = \pair{\set{\avarprop} \cdot \set{\avarprop}^* \cdot \emptyset^{\omega}}{\top}$
such that $\set{\avarprop} \cdot \set{\avarprop}^{n_1} \cdot
\emptyset^{\omega} \models \aformula_{\aautomaton}$. Since $\acps$ and $\aformula_{\aautomaton}$ can be computed in logarithmic space in
the size of $\aautomaton$, this provides a reduction from the nonemptiness problem for  alternating
finite automata with a singleton
alphabet to the intersection non-emptiness problem with linear $\mu$-calculus. 
Hence, the intersection non-emptiness problem is \pspace-hard (we use only  
 a fixed constrained 
path schema and a formula without past-time operators and without greatest fixed-point operator).
\qed
\end{proof}

According to the proof of Theorem~\ref{theorem-hardness},
\pspace-hardness already holds true for a fixed Kripke structure, that is actually a simple
path schema. This contrasts with the fact that model-checking loops with linear $\mu$-calculus
is equivalent to model-checking finite graphs with modal $\mu$-calculus~\cite{Markey&Schnoebelen06}, as far as 
worst-case complexity is concerned (in UP$\cap$co-UP~\cite{Jurdzinski98}).
Hence, for linear $\mu$-calculus, there is already a complexity gap between model-checking
unconstrained path schemas with two loops and model-checking rudimentary unconstrained path 
schemas (Kripke structures) made of  
a single loop. This illustrates also an interesting difference with Past LTL for which
model-checking unconstrained path schemas  (from Kripke structures) with a bounded number of loops 
can be done in polynomial time~\cite[Theorem 9]{DemriDharSangnier12}. 
\fi

As an additional corollary, we can solve the global model-checking problem with existential
Presburger formulae. We knew that Presburger formulae exist for global model-checking~\cite{demri-model-10} 
for Past LTL (and therefore for FO) but we can conclude that they are structurally simple and we provide an 
alternative proof. Moreover, the question has been open for $\lmc$  since the decidability status of 
$\mcpb{\lmc}{\flatcs}$ has been only resolved in the present work.

\begin{corollary}  \label{corollary-global-model-checking}
Let $\aspeclanguage$ be a specification language among FO, BA, ABA, ETL or $\lmc$.
Given a flat counter system $\asys$, a control state $\astate$ and a specification
$\aspecification$ in  $\mathcal{L}$,  one can effectively build an existential Presburger formula
 $\aformula(\avariableter_1, \ldots,\avariableter_n)$
that represents the initial counter values $\avect$ such that there is
an infinite run $\arun$ starting at $\pair{\astate}{\vec{\avect}}$ and
$\arun \models \aspecification$.
\end{corollary}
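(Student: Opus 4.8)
The plan is to describe the set of admissible initial counter values as a finite union of existentially quantified Presburger conditions, one per constrained path schema attached to the control state $\astate$. First I would apply Theorem~\ref{theorem-reduction} to $\astate$, obtaining a finite set $\aset$ of constrained path schemas $\acps = \pair{\aseg_1 (\aloop_1)^* \cdots \aseg_{k-1} (\aloop_{k-1})^* \aseg_k (\aloop_k)^{\omega}}{\aformula(\avariable_1, \ldots, \avariable_{k-1})}$. The decisive observation for the \emph{global} version is that the skeletons (the $\aseg_i$'s, the $\aloop_i$'s and their constrained alphabets) are produced from the control graph and from the finitely many truth assignments to the atomic guards, hence are \emph{independent} of the concrete initial vector; the initial values enter only through the arithmetic constraint. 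Re-parametrizing by free variables $\vec{\avariableter} = (\avariableter_1, \ldots, \avariableter_n)$ therefore turns each schema constraint into a parametric formula, and it suffices to prove the admissible set Presburger-definable schema by schema.

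For a fixed $\acps \in \aset$ I would build two ingredients. First, a formula $\Phi_{\acps}(\vec{\avariableter}, \vec{\avariable})$ stating that the run respecting $\acps$ from $\pair{\astate}{\vec{\avariableter}}$ with loop-iteration vector $\vec{\avariable} = (\avariable_1, \ldots, \avariable_{k-1})$ is a \emph{valid} infinite run: along the schema every counter value is an affine function of $\vec{\avariableter}$ and $\vec{\avariable}$, so guard satisfaction and non-negativity at each visited configuration are linear (in)equalities; since the effect of a loop is linear in its iteration count, it is enough to impose the relevant guards at the first and last copy of each loop, and the final $\omega$-loop contributes the condition that its counter effect is componentwise non-negative and its guard holds throughout. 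This yields a quantifier-free (hence existential) Presburger $\Phi_{\acps}$, exactly as in the run-characterisation of \cite{DemriDharSangnier12} but with the initial vector kept symbolic. Second, a formula $\mathrm{Acc}_{\acps}(\vec{\avariable})$ stating that the symbolic word $\aseg_1 (\aloop_1)^{\avariable_1} \cdots \aseg_k (\aloop_k)^{\omega}$ satisfies $\aspecification$. Here the stuttering and small-loop results --- Theorem~\ref{theorem-stuttering-fo} and Lemma~\ref{lemma:fo-cps-small-loop} for $\FO$, and Theorem~\ref{thm:stuttering-buchi} together with Lemma~\ref{map-ETL} for BA, ABA, ETL and $\lmc$ --- guarantee that membership depends only on the values $\avariable_i$ capped at the threshold $\amap_{\mathcal{L}}(\aspecification, \acps)$. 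This partitions $\Nat^{k-1}$ into finitely many cells, each described by a conjunction of atoms $\avariable_i = c$ or $\avariable_i \geq T$; collecting the cells on which the (finitely many) capped symbolic words satisfy $\aspecification$ gives a quantifier-free Presburger $\mathrm{Acc}_{\acps}$.

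The desired formula is then
$$
\aformula(\avariableter_1, \ldots, \avariableter_n) \egdef
\bigvee_{\acps \in \aset} \ \exists \avariable_1 \cdots \exists \avariable_{k-1} \
\big( \Phi_{\acps}(\vec{\avariableter}, \vec{\avariable}) \wedge \mathrm{Acc}_{\acps}(\vec{\avariable}) \big),
$$
which is existential Presburger, being a finite disjunction of existentially quantified quantifier-free formulae. Correctness follows from Theorem~\ref{theorem-reduction}: a vector $\avect$ satisfies $\aformula$ iff some schema $\acps$ admits a valid run from $\pair{\astate}{\avect}$ whose associated $\omega$-word meets $\aspecification$, which by the soundness and completeness of $\aset$ is equivalent to the existence of \emph{any} infinite run $\arun$ from $\pair{\astate}{\avect}$ with $\arun \models \aspecification$. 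The main obstacle is the second ingredient: making $\mathrm{Acc}_{\acps}$ uniformly Presburger-definable across all five formalisms, which is precisely where the threshold bounds $\amap_{\mathcal{L}}$ extracted from the stuttering theorems are indispensable; a secondary point is verifying that keeping the initial counter values symbolic in $\Phi_{\acps}$ does not disturb its quantifier-free Presburger form.
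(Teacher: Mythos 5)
Your overall architecture is the right one and is essentially the paper's: parameterize the constrained path schemas of Theorem~\ref{theorem-reduction} by the initial counter values (the skeletons are independent of them, only the arithmetic constraint changes), and output a disjunction over schemas of existentially quantified formulae in the loop counts. For $\FO$ your construction of $\mathrm{Acc}_{\acps}$ is also correct, because Theorem~\ref{theorem-stuttering-fo} is a genuine invariance result: above the threshold $2^{N+1}$, adding or removing one copy of a loop does not change satisfaction, so the acceptance set really is a finite union of cells of the form $\avariable_i = c$ or $\avariable_i \geq T$. This is exactly how the paper handles $\FO$.

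The gap is in the automata-based cases. Theorem~\ref{thm:stuttering-buchi} is a \emph{small-witness} statement: \emph{if} $\alang(\acps) \cap \alang(\bauto)$ is non-empty, \emph{then} some witness has loop counts below the exponential bound. It does not assert that membership of $\aseg_1 (\aloop_1)^{n_1} \cdots \aseg_k (\aloop_k)^{\omega}$ in $\alang(\bauto)$ is unchanged once each $n_i$ is capped at $\amap_{\aspeclanguage}(\aspecification,\acps)$, and that assertion is false: a B\"uchi automaton can accept exactly those words in which $\aloop_1$ is iterated an even number of times, so the set of admissible iteration vectors is \emph{periodic} rather than eventually constant, and cannot be written as a finite union of your cells. (The underlying pumping argument removes loop iterations in multiples of some period $K$ bounded by the number of states --- a periodicity statement, not a threshold statement.) Hence your $\mathrm{Acc}_{\acps}$ does not exist in the claimed form for BA, ABA, ETL and $\lmc$. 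The paper's proof avoids this by expressing acceptance arithmetically instead of by capping: it guesses a state $\astate$ of the sub-alphabet-restricted specification automaton $\aspecification'$ from which $(\aloop_k)^{\omega}$ is B\"uchi-accepted, takes the existential Presburger formula for the Parikh image of the finite words leading from the initial state to $\astate$, and equates those letter counts with the affine functions of the loop counts determined by the skeleton. This yields a semilinear (existential Presburger) description of the acceptance set, which is what is actually needed; your threshold cells only capture the special case of specifications that cannot count modulo a period.
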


\cut{
\subsection{Extended Temporal Logic}

Extended temporal logics(ETL) are temporal logics where we use automata connectives in the formula. ETL formulas as introduced in
\cite{Wolper&Vardi&Sistla83,Piterman00}, is defined in following way,
$$
\begin{array}{lcl}
\aformula & ::= & a \mid~ \neg \aformula~ \mid~ \aformula \wedge \aformula'~ \mid~ \fsa(\aformula_1,\aformula_2,\cdots,\aformula_n)
\end{array}
$$ where $p\in \varprop$ and $\fsa=\tuple{\states,\aalphabet,\edges,\states^i,F}$ is a nondeterministic finite automaton. A run $\arun$ satisfies the
ETL formula $\fsa(\aformula_1,\aformula_2,\cdots,\aformula_n)$ at position $i$ iff there exists a word $\aword\in\alang(\fsa)$ over the alphabet set
$\{b_1,b_2,\cdots,b_n\}$ such that for each position $j\in[1,\length{\aword}]$, if $\aword[j]=b_k$ then $\arun,i+j\models\aformula_k$.
We will now characterize the complexity of the model-checking problem over flat counter systems with ETL formulas. By following the development in
\cite{Wolper&Vardi&Sistla83}, we obtain the following theorem just like the specification language $M$ considered in Section~\ref{general}.
\begin{theorem}\cite{Wolper&Vardi&Sistla83}
\label{thm:ETL}
ETL formulas satisfy Property~\ref{prop}
\end{theorem}
This allows us to apply directly the result from Section~\ref{general} for ETL. Thus, applying Algorithm~\ref{algo:gen-pspace} and using
Theorem~\ref{thm:ETL} and Lemma~\ref{lemma-membership-gen}, we get the following ,

\begin{lemma} 
Membership problem with ETL is in \pspace and it can be solved in polynomial space in $\size{\acps} + \size{\aformulabis}$.
\end{lemma}

\begin{corollary} \label{corollary-intersection-ETL}
The intersection non-emptiness problem with ETL is in \pspace. 
\end{corollary}  

Similarly, following the reasoning and construction in Section~\ref{section-pspace-fo}, we know that ETL satisfies the nice sub-alphabet
property. Applying Lemma~\ref{lemma-complexity-subalphabet} and Corollary~\ref{corollary-intersection-ETL}, we can deduce the following.
\begin{theorem}  \label{theorem-etl-inpspace-complete}
MC(ETL, $\flatcs$) is in \pspace. 
\end{theorem}


\subsection{Alternating \buchi~Automata}

Alternating \buchi~automata generalises the normal \buchi~automata by extending the transitions relation to include conjunction alongwith
disjunction. We use the definition provided in ~\cite{MiyanoHayashi84}. Also, using the construction provided in ~\cite{MiyanoHayashi84}, allows us to have a property of ABA similar to Property~\ref{prop} for the specification $M$ in Section~\ref{general}. Thus, using Corollary~\ref{corollary-intersection-gen} we obtain the following lemma,

\begin{lemma} \label{corollary-intersection-ABA}
The intersection non-emptiness problem with ABA is in \pspace. 
\end{lemma}  

We will now prove the \pspace-hardness of the intersection problem for ABA. \pspace-hardness comes from a reduction from the emptiness problem of
1-letter alternating finite automaton to the membership problem of ABA. Alternating finite automata are defined similarly as the alternating
\buchi~automata except for the acceptance condition where each branch is finite and ends with a final state. We use the following theorem.

\begin{theorem}\cite{JancarSawa07}
Checking the non-emptyness of an alternating finite automata over a singleton alphabet is \pspace-hard.
\end{theorem}
We wil construct in \logspace, an alternating \buchi~automata from a given alternating finite automata over single letter alphabet. Let
$\fsa_f=\tuple{\states,\aalphabet,\edges,\astate_i,F}$ with $\aalphabet=\{a\}$ be an alternating finite automata. We construct an alternating
\buchi~automata $\fsa_B=\tuple{\states',\aalphabet',\edges',\astate_i,F'}$ where $\states'=\states\cup\{\astate_b\}$,
$\aalphabet'=\aalphabet\cup\{b\}$ and $F'=\{\astate_b\}$. We also define an additional set of transitions
$\anedge=\{(\astate_b,b,\astate_b\}\cup\{(\astate,b,\astate_b)|\astate\in F\}$ and $\edges'=\edges\cup\anedge$. Intuitively, we add a final state
$\astate_b$ and let the automata take the transition with $b$ to this state from the last final state in each accepting branch. Once, in $\astate_b$,
it can only loop on $b$. Thus, from construction clearly, $\alang(\tuple{a^*b^{\omega},\true})\cap\alang(\fsa_B)\neq\emptyset$ iff $\alang(\fsa_f)$ is
non-empty. Thus, the membership problem for ABA is \pspace-hard.

\begin{theorem}
The intersection non-emptiness problem with ABA is \pspace-complete.
\end{theorem}

In case of flat counter systems, an alternating \buchi~automaton is defined over the set of alphabet $at\cup ag_n$ whereas the flat counter system is
labelled by the set of alphabet $\aalphabet=2^{at\cup ag_n}$. For an alternating \buchi~automaton $\fsa$ built over the alphabet $at\cup ag_n$, we
construct another alternating \buchi~automaton $\fsa'$ built over $\aalphabet$. For this, we replace each transition of the form
$\tuple{\astate,a,\anedge}$ from $\fsa$ with the set $\{\tuple{\astate,b,\anedge}|\forall b\in\aalphabet, a\in b\}$ in $\fsa'$.  Similarly, following
the reasoning in Section~\ref{buchi} for \buchi~automata, we know that ABA has nice sub-alphabet property. 
{\bf Ste: for each specification language, we need to explain explicitly how are encoded the specifications.
I am not sure the nice subalphabet property for ABA is that obvious. Please elaborate. How do you
code alternating automata?.}
Combining Lemma~\ref{lemma-complexity-subalphabet} and
Corollary~\ref{corollary-intersection-ba} we have the following.
\begin{theorem} \label{theorem-aba-pspace-complete}
MC(ABA, $\flatcs$) is in \pspace-Complete. 
\end{theorem}

The \pspace-hardness comes from the fact that intersection non-emptiness problem with ABA is \pspace-hard.

\subsection{Linear $\mu$-calculus}
For defining linear $\mu$-calculus, we assume a countably infinite set of variables $\varset$. Variables will be typically denoted by
$\avar_1,\avar_2,\ldots$. We also assume the same set of propositional symbols denoted as $\varprop$.  A formula in \defstyle{linear $\mu$-calculus ($\lmc$)} is
constructed as following:
$$
\begin{array}{lcl}
\aformula & ::= & a \mid~ \avar_1 \mid~ \neg \aformula~ \mid~ \aformula \wedge \aformula'~ \mid~ \mynext \aformula~ \mid~ \myprevious\aformula ~\mid~
\mu\avar_1.\aformula(\avar_1)
\end{array}
$$ where $p\in \varprop$ and $\avar_1$ occurs positively in $\aformula(\avar_1)$ in the definition of $\mu\avar_1.\aformula(\avar_1)$ and
$\nu\avar_1.\aformula(\avar_1)$.  The operators $\mu\avar_1.\aformula(\avar_1)$ and $\nu\avar_1.\aformula(\avar_1)$ are called \defstyle{fixpoint
  operator}. For defining the satisfaction relation of a $\lmc$ formula over a word $\aword\in\powerset{\varprop}$, we need a labelling function
$\blabelling:\nat\rightarrow\powerset{\varset}$ which returns the set of propositional variables that are true at the specified position.
$$
\begin{array}{rcl}
\aword,i \models_{\blabelling}~ \avarprop & ~\defeq~ & \avarprop \in \aword(i)\\
\aword,i \models_{\blabelling}~ \avar_1 & ~\defeq~ & \avar_1\in\blabelling(i) \\
\aword,i \models_{\blabelling}~ \mynext \aformula & ~\defeq~ & \aword,i+1 \models_{\blabelling}~ \aformula \\
\aword,i \models_{\blabelling}~ \myprevious \aformula & ~\defeq~ & i>0 \mbox{ and } \aword,i-1 \models_{\blabelling}~ \aformula \\
\end{array}
$$

For defining the satisfaction of the \defstyle{fixpoint operator}, we need to define some more notions about the labelling function.  
For a given fixpoint formula $\mu\avar_1.\aformula(\avar_1)$, we define the function $\mathcal{F}_{\amap, \sigma,\aformula}: \powerset{\Nat} \rightarrow \powerset{\Nat}$
defined by $\mathcal{F}_{\amap, \sigma,\aformula}(\mathcal{Y}_1) =  \mathcal{Y}_1'$ where 
$$
\mathcal{Y}_{1}'  \egdef \set{i' \in \Nat:
\sigma, i' \models_{\amap[\avar_1 \leftarrow \mathcal{Y}_1]} \aformula
}
$$
The least fixed point $\asetter_1^{\mu}$
can be obtained by an iterative process:
$\asetter_1^{0} \egdef \emptyset$,
$\asetter_1^{j+1} \egdef \mathcal{F}_{\amap, \sigma,\aformula}(\asetter_1^{j})$.
for all $j \geq 0$ 
and, 
$\asetter_1^{\mu} = \bigcup_{j} \asetter_1^{j}$. 
We say that 
$$
\begin{array}{rcl}
\aword,i \models_{\blabelling}~ \mu\avar_1.\aformula(\avar_1) & ~\defeq~ & i\in\asetter_1^{\mu}\mbox{ where }\asetter_1^{\mu}\mbox{ is the least fixed point of }\mathcal{F}_{\amap, \sigma,\aformula}\\
\end{array}
$$

Note that for a sentence $\aformula$, the satisfaction of $\aformula$ on $\aword$ at position $i$ does not depend on the labelling function
$\blabelling$. Thus, for a sentence $\aformula$ and a word $\aword$ we say, $\aword,0\models\aformula$ if for some $\blabelling$,
$\aword,0\models_{\blabelling}\aformula$. For an ETL formula $\aformulabis$ we denote by $\size{\aformulabis}$ as
the number of subformulas of $\aformulabis$.




\subsection{\pspace~Algorithm for model-checking of $\lmc$ sentences}

The goal of this section is to characterize the complexity of the model-checking problem over flat counter systems with linear $\mu$ calculus. Thanks to \cite{Vardi88} we can transform a given $\lmc$ sentence $\aformula$ to an equivalent \buchi~automaton using the following
theorem,
\begin{theorem}\cite{Vardi88}
\label{thm:vardi}
Given an $\lmc$ sentence $\aformulabis$, we can construct a \buchi~automaton $\bauto_{\aformulabis}$ over the alphabet $\aalphabet=\powerset{\varprop}$,
whose number of states is bounded by $\powerset{19.\size{\aformulabis}^4}$ such that for any $\omega$-word $\aword\in\aalphabet^{\omega}$,
$\aword\in\alang(\bauto_{\aformulabis})$ iff $\aword\models\aformulabis$.
\end{theorem}
Also just like the specification language $M$ considered in Section~\ref{general}, we would like to have a theorem for $\lmc$ corresponding to
Property~\ref{prop}. For this, we use the construction provided in \cite{Vardi88}. The construction uses a one-way \buchi~automata of exponential size
to check for satisfaction of the formula in a model whereas 2 \buchi~automata to check for non-foundedness of the model. A close inspection reveals
that all of these steps can be done ``on-the-fly'' and in fact is used in \cite{Vardi88} to obtain the upperbound of \pspace for the problem
MC($\lmc$,KS). This gives us the following equivalent lemma for Property~\ref{prop} in Section~\ref{general}.

\begin{lemma}\cite{Vardi88}
\label{thm:mu}
Linear $\mu$-caluclus formula satisfy Property~\ref{prop}.
\end{lemma}
This allows us to apply directly the result from Section~\ref{general} for $\lmc$. Thus, applying Algorithm~\ref{algo:gen-pspace} and using
Theorem~\ref{thm:mu}, Theorem~\ref{thm:vardi} and Lemma~\ref{lemma-membership-gen}, we get the following ,

\begin{lemma} \label{lemma-mu-calculus-membership}
Membership problem with linear $\mu$ calculus is in \pspace and it can be solved in polynomial space in $\size{\acps} + \size{\aformulabis}$.
\end{lemma}

\begin{corollary} \label{corollary-intersection-mu}
The intersection non-emptiness problem with linear $\mu$-calculus is in \pspace. 
\end{corollary}  


\newcommand{\aautomaton}{\mathcal{A}}
\newcommand{\transitions}{\delta}
\newcommand{\mytuple}[2]{\langle #1, \ldots,#2 \rangle}

%
%
It remains to establish complexity lower bounds, which we perform in the rest of this section. 

\begin{theorem} \label{theorem-membership-linear-mu-calculus}
The intersection non-emptiness problem with linear $\mu$-calculus is \pspace-complete. 
\end{theorem}

The \pspace-hardness is essentially obtained by reducing  nonemptiness problem for
alternating finite automata with a singleton alphabet (see e.g.~\cite{JancarSawa07})
into the vectorial linear $\mu$-calculus with a fixed simple constrained path schema.
Reduction in polynomial-time into linear $\mu$-calculus is then possible
when formula sizes are measured in terms of numbers of subformulae. 

\begin{proof} 

\cut{
Otherwise, let us assume we have $\aautomaton = (Q,\aalphabet, \transitions, q_0,F)$ such that $q_0 \in F$ and
for every $q_f \in F$, $\transitions(q_f,\set{\avarprop})\neq \perp$. We construct $\aautomaton' = (Q',\aalphabet, \transitions', q'_0,F')$, where
$Q'=Q\cup\{ q'_0\}\cup\{q^1_f | \forall q_f\in F \}$, $F'=\{q^1_f | \forall q_f\in F \}$. We construct transitions in $\transitions'$ as following,
\begin{enumerate}
  \item $\transitions'(\astate'_0,\aletter)=\astate_0$.
  \item for any state $\astate\in Q'$ and $\astate\not\in F'\cup\{\astate'_0\}$, we have
    $\transitions'(\astate,\aletter)=\transitions(\astate,\aletter)[\astate_f\leftarrow\astate_f\vee\astate^1_f]$ (i.e. replacing $\astate_f$ with
    $\astate_f\vee\astate^1_f$ in $\transitions(\astate,\aletter)$) for every $\astate_f\in F$.
  \item for any state $\astate^1_f\in F'$, $\transitions(\astate^1_f,\aletter)=\perp$.
\end{enumerate}
It is clear from the construction that $\aautomaton'$ follows the conditions of our assumption. We will now show that
$\alang(\aautomaton')\{\aletter.\aword|\aword\in\alang(\aautomaton)\}$. Note that for any state $\astate\in Q$, $\pair{\astate}{\aletter}\in Acc$ the
acceptance predicate for $\aautomaton$, iff $\pair{\astate}{\aletter}\in Acc'$ the acceptance predicate of $\aautomaton'$ as every occurance of
$\astate_f\in F$ is replaced by $\astate_f\vee\astate^1_f$ and although $\pair{\astate_f}{\varepsilon}\not\in Acc'$ but
$\pair{\astate^1_f}{\varepsilon}\in Acc'$. Since, $\transitions'(\astate'_0,\aletter)=\astate_0$, by induction we obtain that for every word $\aword$,
$\pair{\astate_0}{\aword}\in Acc$ iff $\pair{\astate'_0}{\aletter.\aword}\in Acc'$. 
In other words, $\alang(\aautomaton)$ is non-empty iff $\alang(\aautomaton')$. Hence, even with the assumptions, the problem
remains \pspace-hard. Also, note that $\size{\aautomaton'}=2.\size{\aautomaton}+1$
}
We recall that an \defstyle{alternating finite automaton} is a tuple $\aautomaton = (Q,\aalphabet, \transitions, q_0,F)$ 
such that $Q$ and
$\aalphabet$ are finite nonempty sets, $\transitions: Q \times \aalphabet \rightarrow Bool^{+}(Q)$ is the transition function 
($Bool^{+}(Q)$ is the set of positive Boolean formulae built over $Q$), $q_0 \in Q$ and $F \subseteq Q$. The \defstyle{acceptance predicate} 
$Acc \subseteq Q \times
\aalphabet^*$ is defined by induction on the length of the second component so that (1) $\pair{q_f}{\varepsilon} \in Acc$ whenever $q_f \in F$ and (2)
$\pair{q}{\aletter \cdot \aword} \in Acc$ iff $v \models \transitions(q,\aletter)$ where $v$ is the Boolean assignment such that $v(q') = \top$ iff
$\pair{q'}{\aword} \in Acc$.  We write $\alang(\aautomaton)$ to denote the language $\set{\aword \in \aalphabet^*: \pair{q_0}{\aword} \in Acc}$ and
more generally, $\alang(\aautomaton,q) \egdef \set{\aword \in \aalphabet^*: \pair{q}{\aword} \in Acc}$.  It has been shown in~\cite{JancarSawa07} that
checking whether an alternating finite automaton $\aautomaton$ with a singleton alphabet has a non-empty language $\alang(\aautomaton)$
is \pspace-hard.
Without loss of generality, we can assume that ($\star$) $q_0 \not \in F$,
($\star \star$) for every $q_f \in F$, $\transitions(q_f,\set{\avarprop}) = \perp$
and ($\star \star \star$) for every $q \in Q$,  $\transitions(q_f,\set{\avarprop}) \neq \top$ , assuming that
$\set{\avarprop}$ is the only letter, which preserves \pspace-hardness.
Indeed, let $\aautomaton = (Q,\set{\aletter}, \transitions, q_0,F)$ be an alternating finite automaton
and  $\aautomaton' = (Q',\set{\aletter}, \transitions', q_0^{\rm new},\set{q_f^{\rm new}})$ be its variant such that
$Q' = Q \uplus \set{ q_0^{\rm new}, q_f^{\rm new}}$, 
$\transitions'(q_0^{\rm new},\aletter) = q_0$,
$\transitions'(q_f^{\rm new}, \aletter) = \perp$ and for every $q  \in Q$, $\transitions'(q,\aletter)$
is obtained from $\transitions(q,\aletter)$ by simultaneously replacing every
occurrence of $q_f \in F$ by $(q_f \vee q_f^{\rm new})$. In the case,  $\transitions(q,\aletter) = \top$ with $q \in Q$,
$\transitions'(q,\aletter)$ is defined as $q \vee q_f^{\rm new}$. 
It is clear from the construction that $\aautomaton'$ follows the conditions of our assumption
and $\alang('\aautomaton') = \aletter \cdot \alang(\aautomaton)$; whence 
$\alang(\aautomaton)$ is non-empty iff $\alang(\aautomaton')$
is non-empty. 
In the sequel, for ease of presentation, we consider this latter class of alternating finite automata
and we present a logarithmic-space reduction into the intersection non-emptiness problem with linear $\mu$-calculus.
More precisely, for every alternating finite automaton $\aautomaton$ built over the
singleton alphabet $\set{ \set{\avarprop}}$, we build a formula $\aformula_{\aautomaton}$ in the linear $\mu$-calculus
 (without $\myprevious$ and the greatest fixed-point operator $\nu$) such that 
 $\alang(\aautomaton)$ is non-empty iff
there is $\set{\avarprop} \cdot \set{\avarprop}^{n_1} \cdot
\emptyset^{\omega}$ in $\alang(\acps)$ with the constrained path schema $\acps = \pair{\set{\avarprop} \cdot \set{\avarprop}^* \cdot \emptyset^{\omega}}{\top}$
and $\set{\avarprop} \cdot \set{\avarprop}^{n_1} \cdot
\emptyset^{\omega} \models \aformula_{\aautomaton}$.
In order to define $\aformula_{\aautomaton}$, we build first an intermediate formula
in the vectorial version of the linear $\mu$-calculus, see e.g. similar developments in~\cite{Walukiewicz01},
and then we translate it into an equivalent formula in the linear $\mu$-calculus by using the well-known
\defstyle{Beki\v{c}'s Principle}.

Let $\aautomaton = (Q,\set{\set{\avarprop}}, \transitions, q_0,F)$ be
a alternating  finite automaton with a singleton alphabet such that $q_0 \not \in F$,
and for every $q_f \in F$, $\transitions(q_f,\set{\avarprop}) = \perp$. 
We order the states of $Q \setminus F$ with $q_1, \ldots, q_{\alpha}$ such that
$q_1$ is the initial state.

We define the formulae in the vectorial version of linear $\mu$-calculus
$\aformulabis_1^0$, \ldots, $\aformulabis_{\alpha}^0$, 
$\aformulabis_1^1$, \ldots, $\aformulabis_{\alpha-1}^1$,
\ldots,
$\aformulabis_1^i$, \ldots, $\aformulabis_{\alpha-i}^i$,
\ldots,
$\aformulabis_{1}^{\alpha-1}$ and such that 
$ \mu \ \avar_1 \ \cdot \aformulabis_{1}^{\alpha-1}$ belongs to the (standard) linear $\mu$-calculus. 
Such formulae will satisfy the following conditions. 
\begin{itemize}
\itemsep 0 cm
\item[(I)] For all $n \geq 1$, 
      $\set{\avarprop}^n \in \alang(\aautomaton)$ iff
      $\set{\avarprop}^n \cdot \emptyset^{\omega} \models \mu \mytuple{\avar_1}{\avar_{\alpha}}
      \ \mytuple{\aformulabis_1^0}{\aformulabis_{\alpha}^0} \cdot \avar_1$.
\item[(II)] For all $j \in \interval{0}{\alpha-1}$,
      $\mu \mytuple{\avar_1}{\avar_{\alpha-j}}
      \ \mytuple{\aformulabis_1^j}{\aformulabis_{\alpha-j}^j} \cdot \avar_1$ is equivalent to
     $\mu \mytuple{\avar_1}{\avar_{\alpha-j-1}}
      \ \mytuple{\aformulabis_1^{j+1}}{\aformulabis_{\alpha-j-1}^{j+1}} \cdot \avar_1$.
\item[(III)] Consequently, 
       for all $n \geq 1$, 
      $\set{\avarprop}^n \in \alang(\aautomaton)$ iff
      $\set{\avarprop}^n \cdot \emptyset^{\omega} \models  \mu \ \avar_1 \ \cdot \aformulabis_{1}^{\alpha-1}$
      and we pose $\aformula_{\aautomaton} = \mu \ \avar_1 \ \cdot \aformulabis_{1}^{\alpha-1}$.
\end{itemize}

Let us define below the formulae: the substitutions are simple and done  hierarchically. 
\begin{description}
\itemsep 0 cm 
\item[(init)] For every $i \in \interval{1}{\alpha}$, $\aformulabis_i^0$ is obtained from
$\transitions(q_i, \set{\avarprop})$ by substituting each $q_j \in Q \setminus F$ by $\mynext \avar_j$
and each $q_f \in F$ by $\mynext \neg \avarprop$, and then by taking the conjunction with $\avarprop$.
So,  $\aformulabis_i^0$ can be written schematically 
as $\avarprop \wedge \transitions(q_i, \set{\avarprop})[q_j \leftarrow \mynext \avar_j,
q_f \leftarrow \mynext \neg \avarprop]$. 
\item[(ind)] For every $j \in \interval{1}{\alpha-1}$, for every $i \in \interval{1}{\alpha-j}$,
              $\aformulabis_i^{j}$ is obtained from  $\aformulabis_i^{j-1}$ by substituting every occurrence of
            $\avar_{\alpha - j + 1}$ by $\mu \avar_{\alpha -j + 1} \ \aformulabis_{\alpha - j +1}^{j-1}$. 
\end{description}
Note that $\mu \ \avar_1 \ \cdot \aformulabis_{1}^{\alpha-1}$ can be built in logarithmic space in
the size of $\aautomaton$ since formulae are represented as DAGs (their size is the number of subformulae)
and for all $j \in \interval{1}{\alpha-1}$ and $i \in \interval{1}{\alpha-j}$,
$\aformulabis_i^j$ has no free occurrences of $\avar_{\alpha -j+1}$, \ldots, $\avar_{\alpha}$. 

It remains to check that (I)--(III) hold. First, observe that (III) is a direct consequence of
(I) and (II). By \defstyle{Beki\v{c}'s Principle}, see e.g.~\cite[Section 1.4.2]{Arnold&Niwinski01}, 
$\mu \mytuple{\avar_1}{\avar_{j}}
      \ \mytuple{\aformulater_1(\avar_1, \ldots,\avar_j)}{\aformulater_j(\avar_1, \ldots,\avar_j)} \cdot \avar_1$ is equivalent to
     $$\mu \mytuple{\avar_1}{\avar_{j-1}}
      \ \mytuple{\aformulater_1(\avar_1, \ldots,\avar_{j-1}, \aformulater')}{
      \aformulater_{j-1}(\avar_1, \ldots, \avar_{j-1},\aformulater')} \cdot \avar_1$$
where $\aformulater' = \mu \avar_j \ \aformulater_j(\avar_1, \ldots,\avar_j)$. 
Note that the substitution performed to build the formula follows exactly the same principle.
For every $j \in \interval{1}{\alpha-1}$, we obtain 
$\mytuple{\aformulabis_1^{j}}{\aformulabis_{\alpha-j}^{j}}$ by replacing $\avar_{\alpha - j+1}$ by 
$\mu \avar_{\alpha -j+1} \ \aformulabis_{\alpha - j+1}^{j-1}$ in $\mytuple{\aformulabis_1^{j-1}}{\aformulabis_{\alpha-j+1}^{j-1}}$. 
Thus by Beki\v{c}'s Principle,  
      $$\mu \mytuple{\avar_1}{\avar_{\alpha-j}}
      \ \mytuple{\aformulabis_1^j}{\aformulabis_{\alpha-j}^j} \cdot \avar_1 \Leftrightarrow \mu \mytuple{\avar_1}{\avar_{\alpha-j-1}}
      \ \mytuple{\aformulabis_1^{j+1}}{\aformulabis_{\alpha-j-1}^{j+1}}\cdot \avar_1$$
is valid for  all $j \in \interval{0}{\alpha-1}$. It remains to verify that (I) holds true. 

In vectorial linear $\mu$-calculus, formulae with outermost fixed-point operators are of the form
$\mu \mytuple{\avar_1}{\avar_{\beta}} \mytuple{\aformula_1}{\aformula_{\beta}} \cdot \avar_j$ with $j \in \interval{1}{\beta}$. 
Whereas fixed points in linear $\mu$-calculus are considered for monotone functions over
the complete lattice $\pair{\powerset{\Nat}}{\subseteq}$, 
fixed points in vectorial linear 
$\mu$-calculus are considered for monotone functions over
the complete lattice $\pair{(\powerset{\Nat})^{\beta}}{\subseteq}$,
where $\mytuple{\asetbis_1}{\asetbis_{\beta}} \subseteq  \mytuple{\asetbis_1'}{\asetbis_{\beta}'}$
iff for every $i \in \interval{1}{\beta}$, we have $\asetbis_i \subseteq \asetbis'_i$. 
So, the satisfaction relation is defined as follows.
Given a model $\sigma \in (\powerset{\varprop})^{\omega}$, $\sigma, i \models_{\amap} 
\mu \mytuple{\avar_1}{\avar_{\beta}} \mytuple{\aformula_1}{\aformula_{\beta}} \cdot \avar_j$ (assuming that the
variables $\avar_k$ occurs positively in the $\aformula_l$'s) iff
$i \in \asetter_j^{\mu}$ where $\mytuple{\asetter_1^{\mu}}{\asetter_{\beta}^{\mu}}$
is the least fixed point of the monotone function $\mathcal{F}_{\amap, \sigma}: (\powerset{\Nat})^{\beta} \rightarrow (\powerset{\Nat})^{\beta}$
defined by $\mathcal{F}_{\amap, \sigma}(\mathcal{Y}_1, \ldots,\mathcal{Y}_{\beta}) =  \mytuple{\mathcal{Y}_1'}{\mathcal{Y}_{\beta}'}$
where 
$$
\mathcal{Y}_{l}'  \egdef \set{i' \in \Nat:
\sigma, i' \models_{\amap[\avar_1 \leftarrow \mathcal{Y}_1, \ldots,\avar_{\beta} \leftarrow \mathcal{Y}_{\beta}]} \aformula_l
}
$$
It is well-known that the least fixed point $\mytuple{\asetter_1^{\mu}}{\asetter_{\beta}^{\mu}}$
can be obtained by an iterative process:
$\mytuple{\asetter_1^{0}}{\asetter_{\beta}^{0}} \egdef \mytuple{\emptyset}{\emptyset}$,
$\mytuple{\asetter_1^{i+1}}{\asetter_{\beta}^{i+1}} \egdef \mathcal{F}_{\amap, \sigma}(\asetter_1^{i},\ldots,\asetter_{\beta}^{i})$
for all $i \geq 0$ 
and, 
$\mytuple{\asetter_1^{\mu}}{\asetter_{\beta}^{\mu}} = \bigcup_{i} \mytuple{\asetter_1^{i}}{\asetter_{\beta}^{i}}$. 

Let $\sigma_n$ be the model $\set{\avarprop}^{n} \cdot \emptyset^{\omega}$ with $n > 0$, $\amap_{\emptyset}$ be the constant assignment equal to
$\emptyset$ everywhere and $\mathcal{F}_{\amap_{\emptyset}, \sigma_n}$ be the monotone function 
$\mathcal{F}_{\amap_{\emptyset}, \sigma_n}: (\powerset{\Nat})^{\alpha} \rightarrow (\powerset{\Nat})^{\alpha}$
defined from $ \mu \mytuple{\avar_1}{\avar_{\alpha}} \ \mytuple{\aformulabis_1^0}{\aformulabis_{\alpha}^0} \cdot \avar_1$.  

Let us show  by induction that for every $i \in \interval{1}{n}$,
the $i$th iterated tuple $\mytuple{\asetter_1^{i}}{\asetter_{\alpha}^{i}}$ verifies that for every $l \in \interval{1}{\alpha}$, 
 $u \in \asetter_l^{i}$ iff $u \in \interval{n-i}{n-1}$ and $\set{\avarprop}^{n-u} \in
\alang(\aautomaton,q_l)$. 

\noindent
{\em Base Case}: $i = 1$. The propositions below are equivalent ($l \in \interval{1}{\alpha}$):
\begin{itemize}
\itemsep 0 cm
\item $u \in  \asetter_l^1$, 
\item $\sigma_n, u \models_{\amap_{\emptyset}[\avar_1 \leftarrow \emptyset, \ldots,\avar_{\alpha} \leftarrow \emptyset]}   \aformulabis_l^0$
     (by definition of $\mathcal{F}_{\amap_{\emptyset}, \sigma_n}$),
\item  $\sigma_n, u \models_{\amap_{\emptyset}[\avar_1 \leftarrow \emptyset, \ldots,\avar_{\alpha} \leftarrow \emptyset]}  
        \avarprop \wedge \transitions(q_i, \set{\avarprop})[q_j \leftarrow \mynext \avar_j,
        q_f \leftarrow \mynext \neg \avarprop]$
       (by definition of $\aformulabis_l^0$),
\item  $\sigma_n, u \models 
        \avarprop \wedge \transitions(q_l, \set{\avarprop})[q_j \leftarrow \perp,
        q_f \leftarrow \mynext \neg \avarprop]$
       (by definition of $\models$),
\item $\sigma_n, u \models \avarprop$ and there is a Boolean valuation $v:Q \rightarrow \set{\perp,\top}$
      such that for every $q \in (Q \setminus F)$, we have $v(q) = \perp$ and
      $v \models \transitions(q_l, \set{\avarprop})$,
\item  $\sigma_n, u \models \avarprop$, $\sigma_n, u+1 \models \neg \avarprop$ and 
       $\pair{q_l}{\set{\avarprop}} \in Acc$ (by definition of $Acc$ and by assumption ($\star \star \star$)),
\item  $u = n-1$ and $\set{\avarprop}^{n-u} \in \alang(\aautomaton,q_l)$ (by definition of $\sigma_n$ and 
$\alang(\aautomaton,q_l)$). 
\end{itemize}
\cut{
Consider the first iterated tuple $\mytuple{\asetter_1^{1}}{\asetter_{\alpha}^{1}}$. Clearly, from the definition of
$\mathcal{F}$ for any $l\in[1,\alpha]$ $\asetter_l^1$ consists of the single position $n-1$ only if $\aformula_l$ is of the form $\avarprop\wedge
Bool^+(\mynext\neg\avarprop)$. By construction, $\aformula_l$ is of the specified form iff $\pair{\astate_l}{\avarprop}\in Acc$. Thus, indeed for all $u
\in \asetter_l^{i}$, $\set{ \set{\avarprop} }^{n-u} \in \alang(\aautomaton,q_l)$.  
}

Before proving the induction, we observe that we can also show by induction, that 
for all $l \in \interval{1}{\alpha}$ and $i$, $\asetter_l^i \subseteq \interval{n-i}{n-1}$ ($\dag$). 

\noindent
{\em Induction Step}: Now let us assume that for some $i \in \interval{1}{n}$, the $i$th
iterated tuple $\mytuple{\asetter_1^{i}}{\asetter_{\alpha}^{i}}$ verifies that for every $l \in \interval{1}{\alpha}$, 
$u \in \asetter_l^{i}$ iff $u \in \interval{n-i}{n-1}$ and $\set{\avarprop}^{n-u} \in
\alang(\aautomaton,q_l)$. 
We will show that the same holds true for $(i+1)$th iteration $\mytuple{\asetter_1^{i+1}}{\asetter_{\alpha}^{i+1}}$. 
Since $\asetter_l^{i} \subseteq \asetter_l^{i+1}$ (monotonicity), 
for every $u \in \asetter_l^{i+1} \cap \asetter_l^{i}$, we have 
$u \in \interval{n-i-1}{n-1}$ and $\set{\avarprop}^{n-u} \in
\alang(\aautomaton,q_l)$ (since $\interval{n-i}{n-1} \subseteq \interval{n-i-1}{n-1}$). 
Similarly, if  $u \in \interval{n-i}{n-1}$ and $\set{\avarprop}^{n-u} \in
\alang(\aautomaton,q_l)$, then $u \in \asetter_l^{i}$ by induction hypothesis and therefore $u \in \asetter_l^{i+1}$.
Hence, it remains to show that $u \in  (\asetter_l^{i+1} \setminus 
 \asetter_l^{i})$ iff  $u = n-i-1$ and $\set{\avarprop}^{n-u} \in
\alang(\aautomaton,q_l)$ (i.e. $\set{\avarprop}^{i+1} \in
\alang(\aautomaton,q_l)$). 
By ($\dag$), it is sufficient to show that $(n-i-1) \in \asetter_l^{i+1}$ iff 
$\set{\avarprop}^{i+1} \in
\alang(\aautomaton,q_l)$.

The propositions below are equivalent ($l \in \interval{1}{\alpha}$, $i \geq 1$, $n-i-1 \geq 0$):
\begin{itemize}
\itemsep 0 cm
\item $(n-i-1) \in  \asetter_l^{i+1}$, 
\item $\sigma_n, n-i-1 \models_{\amap_{\emptyset}[\avar_1 \leftarrow \asetter_1^{i}, \ldots,\avar_{\alpha} \leftarrow \asetter_1^{i}]}   \aformulabis_l^0$
     (by definition of $\mathcal{F}_{\amap_{\emptyset}, \sigma_n}^{i+1}$),
\item  $\sigma_n,  n-i-1 \models_{\amap_{\emptyset}[\avar_1 \leftarrow \asetter_1^{i}, \ldots,\avar_{\alpha} \leftarrow \asetter_1^{i}]}  
        \avarprop \wedge \transitions(q_i, \set{\avarprop})[q_j \leftarrow \mynext \avar_j,
        q_f \leftarrow \mynext \neg \avarprop]$
       (by definition of $\aformulabis_l^0$),
\item $\sigma_n,  n-i-1 \models \avarprop$ and there is a Boolean valuation $v:Q \rightarrow \set{\perp,\top}$
      such that
      \begin{enumerate}
      \itemsep 0 cm 
      \item for every $q_{l'} \in (Q \setminus F)$, we have $v(q_{l'}) = \top$ iff $n-i \in  \asetter_{l'}^{i}$,
      \item for every $q_f \in F$, $v(q_f) = \perp$,
      \end{enumerate}
       $v \models \transitions(q_l, \set{\avarprop})$ (by definition of $\models$ and $i \geq 1$),
\item there is  $v:Q \rightarrow \set{\perp,\top}$
      such that 
      \begin{enumerate}
      \itemsep 0 cm 
      \item for every $q_{l'} \in (Q \setminus F)$, we have $v(q_{l'}) = \top$ iff 
            $n-i \in \interval{n-i}{n-1}$ and $\pair{q_{l'}}{\set{\avarprop}^{i}} \in Acc$,
      \item for every $q_f \in F$, $v(q_f) = \perp$,
      \end{enumerate}
       and $v \models \transitions(q_l, \set{\avarprop})$ (by induction hypothesis and since $n-i-1 \in \interval{0}{n-1}$),
\item there is  $v:Q \rightarrow \set{\perp,\top}$
      such that 
      \begin{enumerate}
      \itemsep 0 cm 
      \item for every $q_{l'} \in (Q \setminus F)$, we have $v(q_{l'}) = \top$ iff 
            $\pair{q_{l'}}{\set{\avarprop}^{i}} \in Acc$,
      \item for every $q_f \in F$, $v(q_f) = \perp$ 
      \end{enumerate}
       and $v \models \transitions(q_l, \set{\avarprop})$ (by propositional reasoning),
\item there is  $v:Q \rightarrow \set{\perp,\top}$
      such that 
      \begin{enumerate}
      \itemsep 0 cm 
      \item for every $q_{l'} \in (Q \setminus F)$, we have $v(q_{l'}) = \top$ iff 
            $\pair{q_{l'}}{\set{\avarprop}^{i}} \in Acc$,
      \item for every $q_f \in F$, $v(q_f) = \top$  iff $\pair{q_f}{\set{\avarprop}^{i}} \in Acc$,
      \end{enumerate}
       and $v \models \transitions(q_l, \set{\avarprop})$ (since $i \geq 1$, $\transitions(q_f, \set{\avarprop}) = \perp$ 
       and $\pair{q_f}{\set{\avarprop}^{i}} \not \in Acc$),
\item $\pair{q_l}{\set{\avarprop}^{i+1}} \in Acc$ (by definition of $Acc$),
\item $\set{\avarprop}^{i+1} \in \alang(\aautomaton,q_l)$. 
\end{itemize}

\cut{
Indeed,
by following the construction of $\aformula_l$, we know that $\aformula_l$ is of the form $\avarprop\wedge
Bool^+(\mynext\avar_1,\cdots,\mynext\avar_{\alpha})$. Also, on the model $\set{\avarprop}^n\emptyset^{\omega}$, for any position $u\leq n$,
$u\in\asetter_l^{i+1}$ iff the assignement $v$, where $v(\mynext\avar_j)=\top$ when $u+1\in\asetter_j^{i}$ for $j\in[1,\alpha]$, satisfies
$Bool^+(\mynext\avar_1,\cdots,\mynext\avar_{\alpha})$ from $\aformula_l$. By our assumption, for any $j\in[1,\alpha]$, $u-1\in\asetter_j^{i}$ implies
$\set{ \set{\avarprop} }^{n-u+1} \in\alang(\aautomaton,q_j)$. Again, by construction of $\aformula_l$, we have that $\transitions(\astate_l,\avarprop)$
is satisfied by the asignment function $v'$ where $v'(\astate)=\top$ iff $\pair{\astate}{\set{\avarprop}^{n-u+1}}\in Acc$. Hence,
$\pair{\astate_l}{\set{\avarprop}^{n-u}}\in Acc$ for all $u\in\asetter_l^{i+1}$. 
}

Thus, for every $i \in \interval{1}{n}$,
the $i$th iterated tuple $\mytuple{\asetter_1^{i}}{\asetter_{\alpha}^{i}}$ verifies that for every $l \in \interval{1}{\alpha}$,  
$u \in \asetter_l^{i}$ iff $u \in \interval{n-i}{n-1}$ and $\set{\avarprop}^{n-u} \in
\alang(\aautomaton,q_l)$.
So, $\set{\avarprop}^{n} \in \alang(\aautomaton)$ iff $0 \in \asetter_1^{n}$.
Since $\mytuple{\asetter_1^{\mu}}{\asetter_{\alpha}^{\mu}}$ is precisely equal 
to  $\mytuple{\asetter_1^{n}}{\asetter_{\alpha}^{n}}$ because of the simple structure of
 $\sigma_n$ (see ($\dag$), we conclude that
$\set{\avarprop}^{n} \in \alang(\aautomaton)$ iff
$\sigma_n, 0 \models  \mu \mytuple{\avar_1}{\avar_{\alpha}}
      \ \mytuple{\aformulabis_1^0}{\aformulabis_{\alpha}^0} \cdot \avar_1$, whence (I) holds. 

From (III), we conclude that $\alang(\aautomaton)$ is non-empty iff
there is $\set{\avarprop} \cdot \set{\avarprop}^{n_1} \cdot
\emptyset^{\omega}$ in $\alang(\acps)$ with $\acps = \pair{\set{\avarprop} \cdot \set{\avarprop}^* \cdot \emptyset^{\omega}}{\top}$
such that $\set{\avarprop} \cdot \set{\avarprop}^{n_1} \cdot
\emptyset^{\omega} \models \aformula_{\aautomaton}$. Since $\acps$ and $\aformula_{\aautomaton}$ can be computed in logarithmic space in
the size of $\aautomaton$, this provides a reduction from the nonemptiness problem for  alternating
finite automata with a singleton
alphabet to the intersection non-emptiness problem with linear $\mu$-calculus. 
Hence, the intersection non-emptiness problem is \pspace-hard (we use only  
 a fixed constrained 
path schema and a formula without past-time operators and without greatest fixed-point operator).

In order to get the \pspace \ upper bound, we take advantage of the methods used so far 
(see Corollary~\ref{corollary-intersection-mu}). \qed 
\end{proof}

\begin{theorem} \label{theorem-membership-linear-mu-calculus}
Membership problem with linear $\mu$-calculus is \pspace-complete.
\end{theorem}

\begin{proof} Let $\aautomaton = (Q,\set{\set{\avarprop}}, \transitions, q_0,F)$ be
a alternating finite  automaton with a singleton alphabet such that $q_0 \not \in F$,
and for every $q_f \in F$, $\transitions(q_f,\set{\avarprop}) = \perp$. 
Let 
$\aformula_{\aautomaton}$ be the formula from linear $\mu$-calculus
such that $\set{\avarprop}^n \in \alang(\aautomaton)$ iff
          $\set{\avarprop}^n \cdot \emptyset^{\omega} \models \aformula_{\aautomaton}$  for all $n \geq 1$. 
Note that if $\alang(\aautomaton)$ is non-empty, then there is $N \leq 2^{\length{\aautomaton}}$
such that $\set{\avarprop}^N \in \alang(\aautomaton)$. Indeed, $\aautomaton$ can be transformed into
a nondeterministic finite automaton with an exponential number of states by using a standard subset
construction. One can check 
that $\alang(\aautomaton)$ is non-empty iff $\set{\avarprop}^{2^{\length{\aautomaton}}} \cdot \emptyset^{\omega} \models \sometimes
\ \aformula_{\aautomaton}$.  Indeed $\alang(\aautomaton)$ is non-empty iff $\set{\avarprop}^n \in \alang(\aautomaton)$ for some $n\leq
2^{\length{\aautomaton}}$ and consequently $\set{\avarprop}^n \cdot \emptyset^{\omega},0 \models \aformula_{\aautomaton}$. This is equivalent to,
$\set{\avarprop}^{2^{\length{\aautomaton}}} \cdot \emptyset^{\omega},2^{\length{\aautomaton}}-n \models \aformula_{\aautomaton}$ and hence implies
$\set{\avarprop}^{2^{\length{\aautomaton}}} \cdot \emptyset^{\omega},0 \models \sometimes \ \aformula_{\aautomaton}$.
$\set{\avarprop}^{2^{\length{\aautomaton}}} \cdot \emptyset^{\omega} \models \sometimes \ \aformula_{\aautomaton}$ is precisely an instance of the
membership problem with linear $\mu$-calculus that can be computed in logarithmic space in the size of $\aautomaton$.  The temporal operator
$\sometimes$ admits a simple definition in linear $\mu$-calculus by using the least fixed point operator.  Hence, membership problem with linear
$\mu$-calculus is \pspace-hard.  In order to get the \pspace \ upper bound, we take advantage of the 
methods used so far (see Lemma~\ref{lemma-mu-calculus-membership}). \qed
\end{proof}
We will now proof the complexity of MC($\lmc$, $\flatcs$).  Let us start by briefly explaining what we mean by $\lmc$ formulas when runs from flat
counter systems are involved. In case of flat counter systems, the nodes are labelled by alphabets from the set $\aalphabet=\powerset{at\cup ag_n}$
whereas the atomic propositions in $\lmc$ formulas are from the alphabet set $at \cup ag_n$. Note that the complexity results for the intersection
non-emptiness problem for $\lmc$ formulas hold for unconstrained alphabet where the flat counter system is labelled with the same set of alphabet as
the atomic propositions in $\lmc$ formulas. Thus, to take advantage of the complexity characterization of the intersection non-emptiness
problem, we would transform $\lmc$ formula to have atomic propositions from $\aalphabet$. Following similar development and reasoning as in
Section~\ref{section-pspace-fo} for FO, we can conclude that $\lmc$ formulas have the nice sub-alphabet property.
\begin{theorem} 
\label{theorem-linear-mu-calculus-pspace-complete}
MC($\lmc$, $\flatcs$) is \pspace-complete. 
\end{theorem}

\begin{proof}
Given a $\lmc$ formula $\aformula$ and a counter system $\asys$, we can obtain a formula $\aformula'$ such that the atomic propositions are from the
set $\powerset{at\cup ag_n}$ and $\size{\aformula'}$ is polynomial in $\size{\aformula}$ due to the fact that $\lmc$ formulas have nice sub-alphabet
property.  By using Lemma~\ref{lemma-complexity-subalphabet} and by using the fact that the intersection non-emptiness problem with linear
$\mu$-calculus is in \pspace \ (Theorem~\ref{theorem-membership-linear-mu-calculus}), we can obtain the \pspace~upper bound for MC($\lmc$, $\flatcs$).
Now, let us establish the lower bound.  Let $\aautomaton = (Q,\set{\set{\avarprop}},
\transitions, q_0,F)$ be an alternating finite automaton with a singleton alphabet such that $q_0 \not \in F$, and for every $q_f \in F$,
$\transitions(q_f,\set{\avarprop}) = \perp$.  Let $\aformula_{\aautomaton}$ be the formula from linear $\mu$-calculus such that $\set{\avarprop}^n \in
\alang(\aautomaton)$ iff $\set{\avarprop}^n \cdot \emptyset^{\omega} \models \aformula_{\aautomaton}$ for all $n \geq 1$.  Let $\asys$ be the flat
Kripke structure (no need for counters) with three states $q_0,q,q_f$, with transitions $q_0 \rightarrow q$, $q \rightarrow q$ (self-loop), $q
\rightarrow q_f$ and $q_f \rightarrow q_f$ (self-loop) and such that $\alabelling(q_0) = \alabelling(q) = \set{\avarprop}$ and $\alabelling(q_f) =
\emptyset$.  One can check that $\alang(\aautomaton)$ is non-empty iff there is an infinite run from $q_0$ satisfying the formula
$\aformula_{\aautomaton}$, which is precisely an instance of MC($\lmc$, $\flatcs$).  Hence, MC($\lmc$, $\flatcs$) is \pspace-hard.  \qed
\end{proof}

According to the proof of Theorem~\ref{theorem-linear-mu-calculus-pspace-complete},
\pspace-hardness already holds true for a fixed Kripke structure, that is actually a simple
path schema. This contrasts with the fact that model-checking loops with linear $\mu$-calculus
is equivalent to model-checking finite graphs with modal $\mu$-calculus~\cite{Markey&Schnoebelen06}, as far as 
worst-case complexity is concerned (in UP$\cap$co-UP~\cite{Jurdzinski98}).
Hence, for linear $\mu$-calculus, there is already a complexity gap between model-checking
unconstrained path schemas with two loops and model-checking rudimentary unconstrained path 
schemas (Kripke structures) made of  
a single loop. This illustrates also an interesting difference with Past LTL for which
model-checking unconstrained path schemas  (from Kripke structures) with a bounded number of loops 
can be done in polynomial time~\cite[Theorem 9]{DemriDharSangnier12}. 



}%
\else \section{Taming Linear $\mu$-calculus and Other Languages}
We now consider several specification languages defining $\omega$-regular properties
on atomic propositions and arithmetical constraints. First, we deal with BA  by establishing 
Theorem~\ref{thm:stuttering-buchi}
and then deduce results for ABA, ETL and $\lmc$.

\begin{theorem} \label{thm:stuttering-buchi}
Let $\bauto =\tuple{\states,\aalphabet,\astate_0,\edges,F}$
be a B\"uchi automaton (with standard definition)
and $\acps=\pair{\cpschema}{\aconstraintsystem}$ be a constrained path schema over $\Sigma$. We have
$\alang(\acps)\cap\alang(\bauto)\neq \emptyset$ iff there exists 
$\vect{y}\in[0,\powerset{\mathtt{pol}_1(\size{\acps})}+2.\card{\states}^{k}\times
  \powerset{\mathtt{pol}_1(\size{\acps})+\mathtt{pol}_2(\size{\acps})}]^{k-1}$ such that
$\aseg_1(\aloop_1)^{\vect{y}[1]}\linebreak[0]\ldots\aseg_{k-1} \linebreak[0] 
(\aloop_{k-1})^{\vect{y}[k-1]}\aseg_k\aloop_k^{\omega}\in\alang(\bauto)\cap\alang(\acps)$
($\mathtt{pol}_1$ and $\mathtt{pol}_2$ are from Theorem~\ref{theorem-pottier91}). 
\end{theorem}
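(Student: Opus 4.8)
The right-to-left direction is immediate: any $\vect{y}$ witnessing the stated membership already shows $\alang(\acps)\cap\alang(\bauto)\neq\emptyset$. The plan for the converse is a small-model argument that repeatedly shrinks an arbitrary witness until every loop is iterated a bounded number of times. First I would fix some $\aword\in\alang(\acps)\cap\alang(\bauto)$ and let $\vect{y}\in\Nat^{k-1}$ be its loop-count vector, so that $\vect{y}\models\aconstraintsystem$. Since $\aconstraintsystem\in\guards(\counters_{k-1})$, Theorem~\ref{theorem-pottier91}(I) gives a semilinear presentation of its solution set: there are $\abasis\in\basis\subseteq\interval{0}{2^{\mathtt{pol}_1(\size{\acps})}}^{k-1}$ and periods $\aperiod_1,\ldots,\aperiod_\alpha\in\interval{0}{2^{\mathtt{pol}_1(\size{\acps})}}^{k-1}$ with $\alpha\leq 2^{\mathtt{pol}_2(\size{\acps})}$ such that $\vect{y}=\abasis+\sum_{i=1}^{\alpha}\vect{a}[i]\,\aperiod_i$ for some $\vect{a}\in\Nat^{\alpha}$. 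If every $\vect{a}[i]\leq 2\card{\states}^k$, then each component of $\vect{y}$ is bounded by $2^{\mathtt{pol}_1(\size{\acps})}+2\card{\states}^k\cdot 2^{\mathtt{pol}_2(\size{\acps})}\cdot 2^{\mathtt{pol}_1(\size{\acps})}$, which is exactly the stated range, and we are done.

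The core of the argument treats a coefficient $\vect{a}[j]>2\card{\states}^k$, and I would reduce it while preserving membership in both languages. The tool is a pumping lemma for $\bauto$: if an accepted word contains $2\card{\states}^k$ consecutive copies of a factor $u$, a pigeonhole analysis of the run over the first $\card{\states}+1$ block boundaries yields a state repetition, hence a removable sub-run spanning between $1$ and $\card{\states}$ copies of $u$; iterating produces $\card{\states}^{k-1}$ such removable blocks, and a second pigeonhole gives one common block-length $K\in\interval{1}{\card{\states}}$ realized by at least $\card{\states}^{k-2}$ of them, so that one may delete any $N\in\interval{1}{\card{\states}^{k-2}}$ of them and keep acceptance. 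The delicate point is that decreasing $\vect{a}[j]$ by a single unit simultaneously removes $\aperiod_j[m]$ copies of each loop $\aloop_m$, so the deletions across the $k-1$ loops must be synchronized.

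To synchronize them, I would apply the pumping lemma to each loop $\aloop_m$ with factor $r_m=(\aloop_m)^{\aperiod_j[m]}$ (legitimate since $\vect{a}[j]>2\card{\states}^k$ guarantees enough consecutive copies of $r_m$, the empty case $\aperiod_j[m]=0$ being vacuous), obtaining block-lengths $K_m$, and then set the per-loop deletion count to $N_m=\prod_{m'\neq m}K_{m'}\leq\card{\states}^{k-2}$, which is admissible for the lemma. With this choice every loop loses exactly $a:=\prod_{m=1}^{k-1}K_m$ units, and the resulting word has loop-count vector $\vect{y}'=\abasis+\sum_{i\neq j}\vect{a}[i]\,\aperiod_i+(\vect{a}[j]-a)\,\aperiod_j$. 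Since $a\leq\card{\states}^{k-1}<\vect{a}[j]$, the coefficient stays nonnegative, so $\vect{y}'$ lies in the same semilinear set and still satisfies $\aconstraintsystem$; by construction it is still accepted by $\bauto$, hence $\aword'\in\alang(\acps)\cap\alang(\bauto)$.

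Repeating this reduction on each oversized coefficient strictly decreases $\sum_i\vect{a}[i]$, so it terminates and yields a witness with every $\vect{a}[i]\leq 2\card{\states}^k$, giving the bound above. I expect the synchronization step --- ensuring a uniform deletion amount across all $k-1$ loops while keeping the Presburger constraint satisfied --- to be the main obstacle, with the per-loop pigeonhole pumping serving as its technical engine.
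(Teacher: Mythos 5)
Your proposal is correct and follows essentially the same route as the paper: a Pottier semilinear decomposition of the solution set of $\aformula$, a pigeonhole pumping lemma for the B\"uchi automaton yielding a common removable block length $K_m\in\interval{1}{\card{\states}}$ per loop with up to $\card{\states}^{k-2}$ deletions, and the synchronization $N_m=\prod_{m'\neq m}K_{m'}$ so that every loop loses exactly $\prod_m K_m$ iterations, decrementing the period coefficient while preserving both acceptance and the constraint. The iteration-until-bounded argument and the resulting bound match the paper's proof.
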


Theorem~\ref{thm:stuttering-buchi} can be viewed as a pumping lemma involving an automaton and semilinear sets. Thanks to it we  obtain an
exponential bound for the map $\amap_{{\rm BA}}$ so that $\amap_{{\rm
    BA}}(\bauto,\acps)=\powerset{\mathtt{pol}_1(\size{\acps})}+2.\card{\states}^{\size{\acps}}\times \powerset{\mathtt{pol}_1(\size{\acps})+
  \mathit{pol}_2(\size{\acps})}$.  So checking $\alang(\acps) \cap \alang(\bauto)\neq\emptyset$ amounts to guess some $\vect{n} \in
\interval{0}{\powerset{\mathtt{pol}_1(\size{\acps})}+2.\card{\states}^{\size{\acps}}\times
  \powerset{\mathtt{pol}_1(\size{\acps})+\mathtt{pol}_2(\size{\acps})}}^{k-1}$ and to verify whether the word $\aword = \aseg_1 (\aloop_1)^{\vect{n}[1]}
\cdots \aseg_{k-1} (\aloop_{k-1})^{\vect{n}[k-1]} \aseg_k (\aloop_k)^{\omega} \in \alang(\acps) \cap \alang(\bauto).  $ Checking whether $\aword \in
\alang(\acps)$ can be done in polynomial time in $\size{\bauto} + \size{\acps}$ since this amounts to check $\vect{n} \models \aformula$.  Checking
whether $\aword \in \alang(\bauto)$ can be also done in polynomial time by using the results from~\cite{Markey&Schnoebelen03}.  Indeed, $\aword$ can
be encoded in polynomial time as a pair of straight-line programs and by~\cite[Corollary 5.4]{Markey&Schnoebelen03} this can be done in polynomial
time.  So, the membership problem for \buchi~automata is in \ptime.  By using that $\BA$ has the nice subalphabet property and that we can create a
polynomial size \buchi~automata from a given BA specification and $\acps$, we get the following result.
\begin{lemma} \label{corollary-intersection-ba}
The intersection non-emptiness problem with BA is \np-complete. 
\end{lemma}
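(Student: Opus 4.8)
The plan is to prove the two directions separately: an \np \ upper bound via a guess-and-check procedure built on Theorem~\ref{thm:stuttering-buchi}, and an \np-hardness lower bound by a direct reduction from the consistency problem.

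For the upper bound, let $\acps = \pair{\cpschema}{\aformula(\avariable_1,\ldots,\avariable_{k-1})}$ be a constrained path schema over $\triple{at}{ag_n}{\aalphabet}$ and let $\aspecification$ be a specification in BA. First I would observe that, since $\alang(\acps) \subseteq \aalphabet^{\omega}$, only the behaviour of $\aspecification$ on words over the alphabet $\aalphabet$ of $\acps$ is relevant. Because $\card{\aalphabet} \leq \size{\acps}$ and, for a letter $\aletter \in \aalphabet$ and a Boolean edge-label $\aformulabis$, checking $\aletter \models \aformulabis$ is in polynomial time, one can build in polynomial time a genuine \buchi~automaton $\bauto$ over the finite alphabet $\aalphabet$ with $\alang(\bauto) = \alang(\aspecification) \cap \aalphabet^{\omega}$: for each edge $q \step{\aformulabis} q'$ of $\aspecification$ and each $\aletter \in \aalphabet$ with $\aletter \models \aformulabis$, add the transition $q \step{\aletter} q'$. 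This is the nice subalphabet property of BA (Lemma~\ref{lemma-nice-subalphabet-property}) made effective, and crucially $\bauto$ stays of polynomial size. Consequently $\alang(\acps) \cap \alang(\aspecification) = \alang(\acps) \cap \alang(\bauto)$.

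Next I would invoke Theorem~\ref{thm:stuttering-buchi}: this intersection is non-empty iff there is a witness vector whose components are bounded by the exponential quantity $\amap_{{\rm BA}}(\bauto,\acps)$. Hence the algorithm guesses $\vect{n}$ with each entry in $\interval{0}{\amap_{{\rm BA}}(\bauto,\acps)}$; since the bound is a simple exponential, each $\vect{n}[i]$ has a binary encoding of polynomial size, so the whole guess is polynomial. It then verifies $\aword = \aseg_1 (\aloop_1)^{\vect{n}[1]} \cdots \aseg_{k-1} (\aloop_{k-1})^{\vect{n}[k-1]} \aseg_k (\aloop_k)^{\omega} \in \alang(\acps) \cap \alang(\bauto)$. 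Membership in $\alang(\acps)$ reduces to testing $\vect{n} \models \aformula$, which is in polynomial time. Membership in $\alang(\bauto)$ is the (already established) polynomial-time membership problem for \buchi~automata: although the prefix of $\aword$ has exponential length, $\aword$ is encoded in polynomial time as a pair of straight-line programs (prefix and period $\aloop_k$), and the compressed test runs in polynomial time by~\cite[Corollary 5.4]{Markey&Schnoebelen03}. This yields \np \ membership.

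For \np-hardness I would reduce from the consistency problem for constrained path schemas, which is \np-complete (equivalently, from satisfiability of quantifier-free Presburger formulae via Theorem~\ref{theorem-pottier91}). Given $\acps$ over $\triple{at}{ag_n}{\aalphabet}$, take $\aspecification$ to be the universal BA specification accepting $\aalphabet^{\omega}$, namely a single accepting state with a $\top$-labelled self-loop. Then $\alang(\acps) \cap \alang(\aspecification) = \alang(\acps)$, so the intersection is non-empty iff $\alang(\acps) \neq \emptyset$, settling hardness. The main obstacle is keeping $\bauto$ of polynomial size: naively expanding the Boolean edge-labels of $\aspecification$ over the full alphabet $\powerset{at \cup ag_n}$ incurs an exponential blow-up, and it is precisely the restriction to the polynomially-many letters actually occurring in $\acps$ that keeps each verification step polynomial and makes the procedure \np \ rather than merely \pspace; the straight-line-program encoding of the exponentially long ultimately periodic word is the remaining technical ingredient and is handled directly.
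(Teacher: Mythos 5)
Your proposal is correct and follows essentially the same route as the paper: restrict the BA specification to the polynomially many letters of the constrained path schema via the nice subalphabet property to get a polynomial-size B\"uchi automaton, guess a witness vector bounded by the exponential map from Theorem~\ref{thm:stuttering-buchi} (polynomial in binary), check the guard for membership in $\alang(\acps)$, and use the straight-line-program encoding with~\cite[Corollary 5.4]{Markey&Schnoebelen03} for membership in the automaton; the lower bound by reducing consistency (i.e.\ quantifier-free Presburger satisfiability) through a universal specification is also the paper's argument. The only cosmetic issue is that your remark about avoiding the exponential blow-up of $\bauto$ belongs with the upper bound rather than the hardness paragraph.
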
  
Now, by  Lemma~\ref{lemma-nice-subalphabet-property},
Lemma~\ref{lemma-complexity-subalphabet} and Lemma~\ref{corollary-intersection-ba},
we get the  result below for which the lower bound is obtained from an easy reduction of SAT.
\begin{theorem} \label{theorem-ba-np-complete}
$\mcpb{\BA}{\flatcs}$ is \np-complete. 
\end{theorem}

We are now ready to deal with ABA, ETL and linear $\mu$-calculus. 
A  language $\aspeclanguage$ has the \defstyle{nice BA property} iff for every
specification $\aspecification$ from  $\aspeclanguage$, we can build a B\"uchi automaton 
$\bauto_\aspecification$ such that 
$\alang(\aspecification)=\alang(\bauto_\aspecification)$, 
each state of $\bauto_{\aspecification}$ is of polynomial size, 
it can be checked if a state is initial [resp. accepting] in polynomial space
and the transition relation can be decided in polynomial space too. 
So, given a language $\aspeclanguage$ having the nice BA property, a constrained path schema $\acps$
and a specification in $\aspecification \in \aspeclanguage$, 
if $\alang(\acps) \cap \alang(\aspecification)$ is non-empty, then there is an $\omega$-word in 
$\alang(\acps) \cap \alang(\aspecification)$ such that each loop is taken at most a number of times
bounded by $\amap_{{\rm BA}}(\bauto_{\aspecification},\acps)$.
So  $\amap_{\aspeclanguage}(\aspecification,\acps)$ is obviously bounded by 
$\amap_{{\rm BA}}(\bauto_{\aspecification},\acps)$.
Hence,  checking whether  $\alang(\acps) \cap \alang(\aspecification)$  is non-empty amounts to guess some
$\vect{n} \in \interval{0}{\amap_{\aspeclanguage}(\aspecification,\acps)}^{k-1}$ and check
whether 
$\aword = \aseg_1 (\aloop_1)^{\vect{n}[1]} \cdots \aseg_{k-1} (\aloop_{k-1})^{\vect{n}[k-1]} \aseg_k (\aloop_k)^{\omega} \in 
\alang(\acps) \cap \alang(\aspecification)
$.  
Checking whether $\aword \in \alang(\acps)$ can be done in polynomial time in 
$\size{\aspecification} + \size{\acps}$ since this amounts to check $\vect{n} \models \aformula$.
Checking whether $\aword \in \alang(\aspecification)$ can be done in
nondeterministic polynomial space by reading $\aword$ while guessing an accepting run 
for $\bauto_{\aspecification}$. Actually, one guesses a state $q$  from $\bauto_{\aspecification}$
and check whether  the prefix
$\aseg_1 (\aloop_1)^{\vect{n}[1]} \cdots \aseg_{k-1} (\aloop_{k-1})^{\vect{n}[k-1]} \aseg_k$ 
can reach it and then nonemptiness between $ (\aloop_k)^{\omega}$ and the 
B\"uchi automaton $\bauto_{\aspecification}^{q}$ 
in which $q$ is an initial state is checked. Again, this can be done in 
nondeterministic polynomial space thanks to the nice BA property. 
We obtain the  lemma below.

\begin{lemma} 
\label{lemma-membership-gen}
%
%
Membership problem and intersection non-emptiness problem for $\aspeclanguage$ having the nice BA property are in \pspace.
\end{lemma}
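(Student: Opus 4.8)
The plan is to establish the membership claim first and then obtain intersection non-emptiness from it together with the pumping bound of Theorem~\ref{thm:stuttering-buchi}. Fix an instance of the membership problem: a constrained path schema $\acps$, a specification $\aspecification$ from a language $\aspeclanguage$ enjoying the nice BA property, and exponents $n_1, \ldots, n_{k-1}$ written in binary. Set $\aword = u \cdot v^{\omega}$ with $u = \aseg_1 (\aloop_1)^{n_1} \cdots \aseg_{k-1} (\aloop_{k-1})^{n_{k-1}} \aseg_k$ and $v = \aloop_k$. By the nice BA property, $\aword \in \alang(\aspecification)$ iff $\aword \in \alang(\bauto_{\aspecification})$, where $\bauto_{\aspecification}$ may have exponentially many states but each state has a polynomial-size description and its set of initial states, its set of accepting states and its transition relation are decidable in polynomial space. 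The two ingredients that make the whole argument go through are: (i) $\length{u}$ is at most exponential, so a position in $\aword$ can be stored on a polynomial-size binary counter; and (ii) exactly as already observed for $\FO$, given a position $i$ in binary one computes the $i$-th letter of $\aword$ in polynomial time. Hence neither $\aword$ nor $\bauto_{\aspecification}$ is ever materialized; everything is done on-the-fly.

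Since $\aword$ is ultimately periodic with period $v = \aloop_k$, I would exploit the fact that an accepting run exists iff it has a \emph{lasso} shape through some accepting state that re-enters the period at the same offset: there are an accepting state $q_f$ and an offset $j \in \interval{1}{\length{\aloop_k}}$ such that (a) $\bauto_{\aspecification}$ reaches $q_f$ from an initial state while reading the prefix $\aseg_1 (\aloop_1)^{n_1} \cdots \aseg_k\,\aloop_k[1,j]$, and (b) there is a run from $q_f$ back to $q_f$ reading some word of $\aloop_k[j+1,\length{\aloop_k}]\,(\aloop_k)^*\,\aloop_k[1,j]$. Step~(a) is a reachability query in the exponential graph whose nodes are pairs (automaton state, position counter); each node is polynomial and each edge is testable in polynomial space by computing the current letter and checking one transition, so a nondeterministic walk decides it in polynomial space. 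Step~(b) is a finite-automaton intersection non-emptiness between $\bauto_{\aspecification}$ taken with $q_f$ as both initial and accepting state and a polynomial-size NFA for the small $\omega$-free language; this is again reachability of an accepting node in an exponential but succinctly presented product graph, solvable nondeterministically in polynomial space. Quantifying over $q_f$ and $j$ (both of polynomial description) stays within nondeterministic polynomial space, hence within \pspace\ by Savitch's theorem; this settles the membership part.

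For the intersection non-emptiness part I would reduce to membership. By Theorem~\ref{thm:stuttering-buchi}, if $\alang(\acps) \cap \alang(\aspecification) \neq \emptyset$ then the intersection contains a witness $\aseg_1 (\aloop_1)^{\vect{n}[1]} \cdots \aseg_{k-1} (\aloop_{k-1})^{\vect{n}[k-1]} \aseg_k (\aloop_k)^{\omega}$ with $\vect{n} \in \interval{0}{\amap_{\aspeclanguage}(\aspecification,\acps)}^{k-1}$, and $\amap_{\aspeclanguage}(\aspecification,\acps) \leq \amap_{{\rm BA}}(\bauto_{\aspecification},\acps)$ is at most exponential, so each $\vect{n}[i]$ has a polynomial-size binary encoding. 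The procedure guesses such a $\vect{n}$, checks $\vect{n} \models \aformula$ in polynomial time (which guarantees the word lies in $\alang(\acps)$), and then calls the membership routine to verify that the word lies in $\alang(\aspecification)$. Both the guess and the membership test run in nondeterministic polynomial space, so the combined procedure is in \pspace.

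The main obstacle will be maintaining a polynomial-space bound in the presence of two \emph{simultaneous} exponential blow-ups: the automaton $\bauto_{\aspecification}$ has exponentially many states and the word $\aword$ has exponential length. This is precisely why both the nice BA property (succinct state descriptions plus polynomial-space-decidable transition relation, initiality and acceptance) and the polynomial-time computation of individual letters of $\aword$ are indispensable: together they allow every reachability query above to be answered on-the-fly without ever building either object, after which Savitch's theorem upgrades the resulting nondeterministic polynomial-space bounds to deterministic \pspace.
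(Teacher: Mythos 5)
Your proposal is correct and follows essentially the same route as the paper's proof: the same lasso decomposition (guess an accepting state $q_f$ and an offset $j$ in $\aloop_k$, check reachability for the prefix and a $q_f$-to-$q_f$ cycle over $\aloop_k[j+1,\length{\aloop_k}]\,(\aloop_k)^*\,\aloop_k[1,j]$), the same on-the-fly simulation justified by the nice BA property together with polynomial-time computation of individual letters, the same appeal to Savitch, and the same reduction of intersection non-emptiness to membership via the exponential bound from Theorem~\ref{thm:stuttering-buchi} plus a separate polynomial-time check that $\vect{n}\models\aformula$.
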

Let us recall consequences of results from the literature.
ETL has the nice BA property by~\cite{Vardi&Wolper94}, 
linear $\mu$-calculus has the nice BA property by~\cite{Vardi88}
and  ABA has the nice BA property by~\cite{MiyanoHayashi84}.
Note that the results for ETL and ABA can be also obtained thanks to translations into  linear $\mu$-calculus.
By Lemma~\ref{lemma-membership-gen}, Lemma~\ref{lemma-complexity-subalphabet} and the above-mentioned results, we
obtain the following results. 
\begin{theorem} \label{theorem-pspace-all}
$\mcpb{\ABA}{\flatcs}$, $\mcpb{\ETL}{\flatcs}$ and $\mcpb{\lmc}{\flatcs}$ are in \pspace.
\end{theorem}
Note that for obtaining the \pspace~upper bound, we use the same procedure for all the logics.
Using that the emptiness problem for finite alternating automata over a single letter alphabet is \pspace-hard \cite{JancarSawa07}, we are also able to get lower bounds.
\begin{theorem} \label{theorem-hardness}
(I) The intersection non-emptiness problem for $\ABA$ [resp. $\lmc$] is \pspace-hard.
(II) $\mcpb{\ABA}{\flatcs}$ and $\mcpb{\lmc}{\flatcs}$ are \pspace-hard.
\end{theorem}

According to the proof of Theorem~\ref{theorem-hardness} 
(Appendix~\ref{section-proof-theorem-hardness}),
\pspace-hardness already holds for a fixed Kripke structure, that is actually a simple
path schema. Hence, for linear $\mu$-caluclus, there is a complexity gap between model-checking
unconstrained path schemas with two loops (in UP$\cap$co-UP~\cite{Jurdzinski98}) and model-checking 
 unconstrained path 
schemas (Kripke structures) made of  
a single loop, which is in contrast to Past LTL for which model-checking unconstrained path schemas with a bounded number of loops is in~\ptime~\cite[Theorem 9]{DemriDharSangnier12}.

As an additional corollary, we can solve the global model-checking problem with existential Presburger formulae. The global model-checking consists in
characterizing the set of initial configurations from which there exists a run satisfying a given specification. We knew that Presburger formulae
exist for global model-checking~\cite{demri-model-10} for Past LTL (and therefore for FO) but we can conclude that they are structurally simple and we
provide an alternative proof. Moreover, the question has been open for $\lmc$ since the decidability status of $\mcpb{\lmc}{\flatcs}$ has been only
resolved in the present work.

\begin{corollary}  \label{corollary-global-model-checking}
Let $\aspeclanguage$ be a specification language among FO, BA, ABA, ETL or $\lmc$.
Given a flat counter system $\asys$, a control state $\astate$ and a specification
$\aspecification$ in  $\mathcal{L}$,  one can effectively build an existential Presburger formula
 $\aformula(\avariableter_1, \ldots,\avariableter_n)$ such that for all $\avect \in \nat^n$. $\avect \models \aformula$ iff there is
a run $\arun$ starting at $\pair{\astate}{\vec{\avect}}$ verifying
$\arun \models \aspecification$.
\end{corollary}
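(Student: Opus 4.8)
The plan is to combine Theorem~\ref{theorem-reduction} with the fact that, for each \emph{individual} constrained path schema, the set of loop-count vectors that produce a specification-satisfying word is an effectively computable semilinear set, together with the classical fact that semilinear sets are exactly those definable by existential Presburger formulae. First I would produce the finite set $\aset$ of constrained path schemas attached to $\asys$ and $\astate$ as in Theorem~\ref{theorem-reduction}, but keeping the initial counter values as free variables $\avariableter_1, \ldots, \avariableter_n$. The construction underlying Theorem~\ref{theorem-reduction} (from~\cite{DemriDharSangnier12}) yields, for each $\acps = \pair{\aseg_1 (\aloop_1)^* \cdots \aseg_{k-1}(\aloop_{k-1})^* \aseg_k (\aloop_k)^{\omega}}{\aformula}$ in $\aset$, a quantifier-free Presburger formula $\aformula_{\acps}(\avariableter_1, \ldots, \avariableter_n, \avariable_1, \ldots, \avariable_{k-1})$ whose solutions $\pair{\avect}{\vec{n}}$ are exactly the pairs for which there is a genuine run from $\pair{\astate}{\avect}$ following $\acps$ with loop $\aloop_i$ iterated $n_i$ times; here the guards are affine in the initial values and in the iteration counts, and the ``first/last iteration suffice'' argument keeps the conjunction of guard constraints quantifier-free.

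Next, for the specification part, I would turn $\aspecification$ into a B\"uchi automaton $\bauto$ with $\alang(\bauto) = \alang(\aspecification)$: for BA, ABA, ETL and $\lmc$ this is the nice BA property recalled above, and for FO it follows from Kamp's Theorem together with the standard translation of Past LTL into B\"uchi automata. Only effectiveness is needed, so the non-elementary blow-up for FO is harmless. I would then show, for each $\acps \in \aset$, that
$$S_{\acps} = \set{\vec{n} \in \nat^{k-1} : \aseg_1 (\aloop_1)^{n_1} \cdots \aseg_{k-1}(\aloop_{k-1})^{n_{k-1}} \aseg_k (\aloop_k)^{\omega} \in \alang(\bauto)}$$
is an effectively computable semilinear set, via an interface-state decomposition: an accepting run of $\bauto$ is fixed by the states taken at the block boundaries; for a fixed finite word $\aloop_i$ and two states $p,p'$ the set of $m$ admitting a $p \to p'$ path labelled $(\aloop_i)^m$ is ultimately periodic, because the powers of the one-loop reachability relation over the finite state set are eventually periodic, and the final $(\aloop_k)^{\omega}$ contributes only an $\vec{n}$-independent lasso condition on the last boundary state. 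Disjoining over the finitely many boundary-state tuples and conjoining the per-block ultimately-periodic constraints yields a semilinear, hence existential-Presburger-definable, set $\chi_{\acps}(\avariable_1, \ldots, \avariable_{k-1})$.

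Finally I would output
$$\aformula(\avariableter_1, \ldots, \avariableter_n) = \bigvee_{\acps \in \aset} \exists \avariable_1 \cdots \exists \avariable_{k-1} \ \bigl( \aformula_{\acps}(\vec{\avariableter}, \vec{\avariable}) \wedge \chi_{\acps}(\vec{\avariable}) \bigr),$$
which is existential Presburger since both conjuncts are. Correctness is then immediate from Theorem~\ref{theorem-reduction}: $\avect \models \aformula$ holds iff some $\acps$ and some $\vec{n}$ give simultaneously a real run from $\pair{\astate}{\avect}$ (by $\aformula_{\acps}$) and a word in $\alang(\bauto)$ (by $\chi_{\acps}$), which by the soundness and completeness clauses (III)--(IV) of Theorem~\ref{theorem-reduction} is exactly the existence of a run $\arun$ from $\pair{\astate}{\avect}$ with $\arun \models \aspecification$.

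The main obstacle is the semilinearity of $S_{\acps}$: one must argue uniformly (through the B\"uchi encoding, for all five formalisms) that the whole set of satisfying loop-count vectors is effectively semilinear, not merely that it is non-empty --- the latter being all that Theorems~\ref{theorem-stuttering-fo} and~\ref{thm:stuttering-buchi} deliver. For FO one may instead bypass the B\"uchi route and read semilinearity directly off the Stuttering Theorem (Theorem~\ref{theorem-stuttering-fo}): satisfaction then depends only on each $n_i$ capped at $2^{\qheight{\aspecification}+1}+1$, so that $\chi_{\acps}$ is a finite union of linear sets. Everything else --- assembling the disjunction, the feasibility constraint, and the use of the nice subalphabet machinery to align the alphabet of $\bauto$ with that of $\acps$ --- is routine given the paper's constructions.
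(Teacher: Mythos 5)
Your proposal is correct, and it reaches the conclusion by a genuinely different route on the automaton side. The paper also works with parameterized constrained path schemas (keeping $\avariableter_1,\ldots,\avariableter_n$ free in the guard), but for BA, ABA, ETL and $\lmc$ it encodes acceptance of the prefix $\aseg_1(\aloop_1)^{\avariable_1}\cdots\aseg_k$ through the \emph{Parikh image} of the underlying automaton between a guessed initial and intermediate state, tying the letter counts to the loop counts by affine equations and handling $(\aloop_k)^{\omega}$ by a separate lasso test; for FO it instead enumerates the capped loop-count profiles supplied by the Stuttering Theorem, exactly as in your fallback. You replace the Parikh step by a boundary-state decomposition: guess the automaton states at the block borders and use the eventual periodicity of the powers of the per-loop reachability relation to show that $\{m : p \xrightarrow{(\aloop_i)^m} p'\}$ is ultimately periodic, so that the full set $S_{\acps}$ of accepting loop-count vectors is effectively semilinear. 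This is the right key lemma, and it is arguably the more robust one: membership of a tuple in the Parikh image of a language is in general only a necessary condition for acceptance of a \emph{specific} word with those letter counts, and it is precisely your interface-state argument that closes this gap rigorously. What the paper's route buys in exchange is succinctness --- the Parikh formula is polynomial in the size of the B\"uchi automaton, whereas your disjunction over boundary-state tuples (and the periods of the relation powers) is exponential; since the corollary only claims effectiveness, both are acceptable, and your uniform treatment of all five formalisms through the nice BA property is a clean way to organize the argument.
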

 
 \fi 
\section{Conclusion}
\label{section-conclusion}
We characterized the complexity of 
\iftechreport 
model-checking problems on flat counter systems for prominent linear-time specification languages
\else
$\mcpb{\aspeclanguage}{\flatcs}$ for prominent linear-time 
 specification languages $\aspeclanguage$
\fi 
whose letters are made of atomic propositions and 
linear constraints.  We proved the \pspace-completeness of the problem
with linear $\mu$-calculus (decidability was  open), for alternating B\"uchi automata and also for 
$\FO$.
When specifications are expressed with B\"uchi automata, the problem is
shown \np-complete. Global model-checking is also possible on flat counter systems with such specification languages.
Even though the core of our work relies on
small solutions of quantifier-free Presburger formulae, stuttering properties,
automata-based approach and on-the-fly algorithms, 
our approach is designed  to be  generic.
\iftechreport 
For instance, the introduction of constrained path schemas,
the abstract algorithm, the nice subalphabet property and the nice BA property (quite standard
for linear-time temporal logics in \pspace) have been important to reveal  the essential issues. 
\fi 
Not only this witnesses the robustness of our method but our complexity characterization justifies further
why verification of flat counter systems can be at the core of methods for model-checking counter systems.
Our main results are in the table below with useful comparisons 
(`Ult. periodic KS'  stands for ultimately periodic Kripke structures namely a path followed by a loop). 
\iftechreport

We already know that  model-checking flat counter systems 
with branching-time logic CTL$^{\star}$ is decidable~\cite{demri-model-10} but complexity characterization is still open.
More interestingly, the decidability status with modal $\mu$-calculus on  flat counter systems 
is unknown and we are not aware of any method that can deal with it, unfortunately including
the method we used  for linear $\mu$-calculus. 
\fi 
\cut{
In this paper, we investigate the decidability and complexity results for model-checking various type of specifications over flat counter systems in a
unified manner. Our main results are \pspace-completeness of the existential model-checking problem for linear $\mu$-calculus ($\lmc$), first order
logic (FO) and alternating \buchi~automata (ABA) over flat counter system. For \buchi~automata we show the problem to be \np-complete. We also obtain
a \pspace~upperbound for extended temporal logic (ETL), which was open till now. 

As far as proof technique is concerned, we develop a generic \defstyle{abstract algorithm} to obtain the upperbound for all of our results. The
novelty of our approach lies in the fact that for all of our specifications, with different expressive powers and being of different types
(e.g. automata, formula), we use the same algorithm to obtain optimal upperbounds. We only change for each specification, the method to
decide the membership problem. By reusing the results and techniques developed in \cite{DemriDharSangnier12} we have shown that we can obtain
the same upperbound for MC(PLTL,$\flatcs$) using the abstract algorithm. Thus, the method is robust and can be adopted to a varied family of
specification. The results obtained here are summarized in
Table~\ref{figure-final-summary}. We compare the model checking problems for $\flatcs$ with Kripke structures and ultimately periodic Kripke
structures (Kripke structures of the form a $\aseg.\aloop^{\omega}$). 

Though the work addresses many specifications, it still remains open to characterize the complexity of some related but more expressive logics (like
modal $\mu$-calculus).
Again, it would be interesting to check whether the same technique
could be applied for the said characterizations and investigate how the complexity of the problem changes with expressiveness and succintness.
}
\iftechreport
\begin{center}
\scalebox{0.78}{
\begin{tabular}{|c||c|c|c|c|}
\hline
& 
{\small Flat counter systems}
& 
{\small Kripke struct.}
 & 
{\small Flat Kripke struct.}
& 
{\small Ult. periodic KS} \\ 
\hline
\hline
$\lmc$ & \pspace-C (Thm.~\ref{theorem-pspace-all}) 
                      & \pspace-C~\cite{Vardi88} 
                      & \pspace-C (Thm.~\ref{theorem-pspace-all}) 
                      & in UP$\cap$co-UP~\cite{Markey&Schnoebelen06} \\ \hline
ABA & \pspace-C (Thm.~\ref{theorem-pspace-all}) 
    & \pspace-C 
    & \pspace-C (Thm.~\ref{theorem-pspace-all}) 
    & in \ptime \ (see e.g.~\cite[p.~3]{Kupferman&Vardi01}) \\ \hline
ETL & in \pspace~(Thm.~\ref{theorem-pspace-all})
    & \pspace-C~\cite{Sistla&Clarke85}  
    & in \pspace~\cite{Sistla&Clarke85} 
    & in \ptime \ (see e.g.~\cite{Piterman00,Kupferman&Vardi01}) \\ \hline
BA & \np-C (Thm.\ref{theorem-ba-np-complete}) 
   & in \ptime 
   & in \ptime 
   &  in \ptime \\ \hline\hline
FO & \pspace-C (Thm.~\ref{theorem-fo-pspace-complete}) 
   & Non-el.~\cite{Stockmeyer74} 
   & \pspace-C (Thm.~\ref{theorem-fo-pspace-complete}) 
   & \pspace-C~\cite{Markey&Schnoebelen03}\\ \hline
Past LTL & \np-C~\cite{DemriDharSangnier12} 
         & \pspace-C~\cite{Sistla&Clarke85} 
         & \np-C~\cite{Kuhtz&Finkbeiner11,DemriDharSangnier12} 
         & \ptime~\cite{Laroussinie&Markey&Schnoebelen02}\\
\hline
\end{tabular}
}
\end{center}
\else%
\begin{center}
\scalebox{0.74}{
\begin{tabular}{|c||c|c|c|c|}
\hline
& 
{\small Flat counter systems}
& 
{\small Kripke struct.}
 & 
{\small Flat Kripke struct.}
& 
{\small Ult. periodic KS} \\ 
\hline
\hline
$\lmc$ & \pspace-C (Thm.~\ref{theorem-pspace-all}) 
                      & \pspace-C~\cite{Vardi88} 
                      & \pspace-C (Thm.~\ref{theorem-pspace-all}) 
                      & in UP$\cap$co-UP~\cite{Markey&Schnoebelen06} \\ \hline
ABA & \pspace-C (Thm.~\ref{theorem-pspace-all}) 
    & \pspace-C 
    & \pspace-C (Thm.~\ref{theorem-pspace-all}) 
    & in \ptime \ (see e.g.~\cite[p.~3]{Kupferman&Vardi01}) \\ \hline
ETL & in \pspace~(Thm.~\ref{theorem-pspace-all})
    & \pspace-C~\cite{Sistla&Clarke85}  
    & in \pspace~\cite{Sistla&Clarke85} 
    & in \ptime \ (see e.g.~\cite{Piterman00,Kupferman&Vardi01}) \\ \hline
BA & \np-C (Thm.\ref{theorem-ba-np-complete}) 
   & in \ptime 
   & in \ptime 
   &  in \ptime \\ \hline\hline
FO & \pspace-C (Thm.~\ref{theorem-fo}) 
   & Non-el.~\cite{Stockmeyer74} 
   & \pspace-C (Thm.~\ref{theorem-fo}) 
   & \pspace-C~\cite{Markey&Schnoebelen03}\\ \hline
Past LTL & \np-C~\cite{DemriDharSangnier12} 
         & \pspace-C~\cite{Sistla&Clarke85} 
         & \np-C~\cite{Kuhtz&Finkbeiner11,DemriDharSangnier12} 
         & \ptime~\cite{Laroussinie&Markey&Schnoebelen02}\\
\hline
\end{tabular}
}
\end{center}
\fi


\bibliographystyle{abbrv}
\bibliography{biblio-icalp13}
\ifconference
\newpage
\appendix
\section{Proof of Theorem~\ref{theorem-reduction}}

\section{Proof of Lemma~\ref{lemma-nice-subalphabet-property}} 
\label{section-proof-lemma-nice-subalphabet-property}

\iftechreport
\section{Correctness of Algorithm~\ref{algorithm:abstact}}
\else
\section{Correctness of Algorithm 1}
\fi 
\label{section-proof-correctness-algo1}
\begin{proof}
First assume there exists a run $\arun$ of $\asys$ starting at $\aconf_0$ such that $\arun \models A$. By Theorem \ref{theorem-reduction},  there is a constrained path schema $\acps$ with an alphabet of the form 
      $\triple{at}{ag_n}{\aalphabet'}$ in $\aset$ and $\aword \in \alang(\acps)$ such that $\arun \models \aword$. Consequently we deduce that  $\aword \in \alang(A)$ and that $\alang(\acps) \cap \alang(A) \neq \emptyset$. Since $\alang(\acps) \subseteq (\aalphabet')^{\omega}$ and since $\alang(\aspecification) \cap (\aalphabet')^{\omega} = \alang(\aspecification')$, we deduce that $\alang(\acps)\cap\alang(\aspecification') \neq \emptyset$. Hence the Algorithm has an accepting run. 

Now if the \iftechreport Algorithm~\ref{algorithm:abstact}  \else Algorithm~1 \fi 
has an accepting run, we deduce that there exists a constrained path schema $\acps$ with an alphabet of the form $\triple{at}{ag_n}{\aalphabet'}$ in $\aset$ such that there exists a word $\aword$ in $\alang(\acps) \cap \alang(A')$. Using the nice subalphabet property we deduce that $\aword \in \alang(A)$ and by the last point of Theorem \ref{theorem-reduction}, we know that there exists a run $\arun$ from $\asys$ starting at $\aconf_0$ such that $\arun \models \aword$. This allows us to conclude that $\arun \models A$. \qed
\end{proof}

\section{Proof of Lemma~\ref{lemma-fo-subalphabet-property}}

\section{EF Games and Proof of Theorem~\ref{theorem-stuttering-fo}}
\label{appendix:ef-games}

\section{Proof of Lemma~\ref{lemma:fo-cps-small-loop}}
\begin{proof} Let $\acps = \pair{\aseg_1 (\aloop_1)^* \cdots \aseg_{k-1} (\aloop_{k-1})^* \aseg_k (\aloop_k)^{\omega}}{\aformula(\avariable_1, 
\ldots, \avariable_{k-1})}$ be a constrained path schema and $\aformulabis$ be a first-order sentence. 
Suppose that 
$$
\aseg_1 (\aloop_1)^{\vect{n}[1]} \cdots \aseg_{k-1} (\aloop_{k-1})^{\vect{n}[k-1]} \aseg_k (\aloop_k)^{\omega} 
\in \alang(\acps) \cap \alang(\aformulabis).
$$
Let  $\basis \subseteq  \interval{0}{2^{p_1(\size{\acps})}}^{k-1}$ and 
           $\aperiod_1, \ldots, \aperiod_{\alpha} \in \interval{0}{2^{p_1(\size{\acps})}}^{k-1}$ defined for the guard $\aformula$
following Theorem~\ref{theorem-pottier91}. 
Since $\vect{n} \models \aformula$,
there are $\abasis \in \basis$ and $\vect{a} \in \Nat^{\alpha}$ such that
           $\vect{n} = \abasis + \vect{a}[1] \aperiod_1 + \cdots + \vect{a}[\alpha] \aperiod_{\alpha}$.
Let  $\vect{a}' \in \Nat^{\alpha}$ defined from $\vect{a}$ such that
$\vect{a}'[i] = \vect{a}[i]$ if $\vect{a}[i] \leq 2^{\qheight{\aformulabis} +1} + 1$ otherwise 
$\vect{a}'[i] = 2^{\qheight{\aformulabis} +1} + 1$. Note that 
$\vect{n}'  = \abasis + \vect{a}'[1] \aperiod_1 + \cdots + \vect{a}'[\alpha] \aperiod_{\alpha}$ still satisfies
$\aformula$ and for every loop $i \in \interval{1}{k-1}$, 
$\vect{n}[i] >  2^{\qheight{\aformulabis} +1}$ iff $\vect{n}'[i] >  2^{\qheight{\aformulabis} +1}$.
By  Theorem~\ref{theorem-stuttering-fo}, 
$\aseg_1 (\aloop_1)^{\vect{n}'[1]} \cdots \aseg_{k-1} (\aloop_{k-1})^{\vect{n}'[k-1]} \aseg_k (\aloop_k)^{\omega} 
\in \alang(\aformulabis)$. 

Now, let us bound the values in  $\vect{n}'$. 
\begin{itemize}
\itemsep 0 cm
\item There are at most $2^{p_2(\size{\acps})}$ periods.
\item Each basis or period has values in $\interval{0}{2^{p_1(\size{\acps})}}$.
\item Each period in $\vect{n}'$ is taken at most $2^{\qheight{\aformulabis} +1} + 1$ times.
\end{itemize}
Consequently, each $\vect{n}'[i]$ is bounded by
$$
2^{p_1(\size{\acps})} + (2^{\qheight{\aformulabis} +1} + 1) 2^{p_2(\size{\acps})} \times 2^{p_2(\size{\acps})}
$$
which is itself bounded by $2^{(\qheight{\aformulabis} + 2) + p_1(\size{\acps}) + p_2(\size{\acps})}$. 
\qed
\end{proof}

\section{Proof of Lemma~\ref{lemma:fo-membership}}
\ifconference
\begin{proof}
\fi
We want to show that the membership problem with first-order logic (with unconstrained alphabets) can be solved in
polynomial space in $\size{\acps} + \size{\aformulabis}$. 
Let $\acps$,  $\aformulabis$ and $\vect{n} \in \Nat^{k-1}$ be an instance of the problem.
For $i \in \interval{1}{k-1}$, let $\vect{n}'[i] \egdef \mathtt{min}(\vect{n}[i],2^{\qheight{\aformulabis}+1} +1)$. 
By  Theorem~\ref{theorem-stuttering-fo},  the propositions below are equivalent:
\begin{itemize}
\item  $\aseg_1 (\aloop_1)^{\vect{n}[1]} \cdots \aseg_{k-1} (\aloop_{k-1})^{\vect{n}[k-1]} \aseg_k (\aloop_k)^{\omega} \in 
\alang(\aformulabis)$,
\item $\aseg_1 (\aloop_1)^{\vect{n}'[1]} \cdots \aseg_{k-1} (\aloop_{k-1})^{\vect{n}'[k-1]} \aseg_k (\aloop_k)^{\omega} \in 
\alang(\aformulabis)$.
\end{itemize}
Without any loss of generality, let us assume then that 
$\vect{n} \in \interval{0}{2^{\qheight{\aformulabis}+1} +1}^{k-1}$.

Let us decompose $\aword = \aseg_1 (\aloop_1)^{\vect{n}[1]} \cdots \aseg_{k-1} 
(\aloop_{k-1})^{\vect{n}[k-1]} \aseg_k (\aloop_k)^{\omega}$ as
$u \cdot (v)^{\omega}$ where $u = \aseg_1 (\aloop_1)^{\vect{n}[1]} \cdots \aseg_{k-1} (\aloop_{k-1})^{\vect{n}[k-1]} \aseg_k$
and $v = \aloop_k$. 
Note that the length of $u$ is exponential in the size of the instance.  
We write $\hat{\aformulabis}$ to denote the formula $\aformulabis$ in which every existential quantification is
relativized to positions less than $\length{u} + \length{v} \times 2^{\qheight{\aformulabis}}$. 
This means that every quantification '$\exists \ \avariable \ \cdots$' is replaced by
'$\exists \ \avariable <  (\length{u} + \length{v} \times 2^{\qheight{\aformulabis}}) \ \cdots$'. 
By~\cite{Markey&Schnoebelen03}, we know that $\aword \models \aformulabis$ iff 
$\aword \models \hat{\aformulabis}$. Now, checking $\aword \models \hat{\aformulabis}$ can be done in polynomial space 
by using a standard first-order model-checking algorithm by restricting ourselves to positions in
$\interval{0}{\length{u} + \length{v} \times 2^{\qheight{\aformulabis}}}$ for existential quantifications.
Such positions can be obviously encoded in polynomial space. Moreover, note that given 
$i \in \interval{0}{\length{u} + \length{v} \times 2^{\qheight{\aformulabis}}}$, one can check in polynomial time
what is the $i$th letter of $\aword$. Details are standard and omitted here. By way of example, the $i$th letter of $\aword$
is the first letter of $l_k$ iff 
$i \geq \alpha$ and $(i-\alpha) = 0 \ {\rm mod} \ \length{\aloop_k})$
with $\alpha = (\Sigma_{j \in \interval{1}{k-1}} (\length{p_j} + \length{l_j} \times \vect{n}[j])) + \length{p_j}$.  

\begin{algorithm}
{\footnotesize
\caption{FOSAT$(\acps,\vect{n},\aformulabis,\amap)$}
\label{algorithm:FOSAT}
\begin{algorithmic}[1]
\IF{$\aformula = \aletter(\avar)$}
   \STATE Calculate the $\amap(\avar)$th letter $\aletterbis$ of $\aword$ and  return $\aletter = \aletterbis$.
\ELSIF{$\aformulabis$ is of the form $\neg \aformulabis'$}
      \STATE return not FOSAT$(\acps,\vect{n}, \aformulabis',\amap)$ 
\ELSIF{$\aformulabis = \aformulabis_1 \wedge \aformulabis_2$}
   \STATE return FOSAT$(\acps,\vect{n},\aformulabis_1,\amap)$ and FOSAT$(\acps,\vect{n},\aformulabis_2,\amap)$. 
\ELSIF{$\aformulabis$ is of the form $\exists \avar < m \  \aformulabis'$}
   \STATE guess a position $k\in\interval{0}{m-1}$.
   \STATE return FOSAT$(\acps,\vect{n},\aformulabis',\amap[\avar \mapsto k])$.
\ELSIF{$\aformula$ is of the form $R(\avar,\avar')$ for some $R \in \set{=,<,S}$}
   \STATE return $R(\amap(\avar),\amap(\avar'))$.
\ENDIF
\end{algorithmic}
}
\end{algorithm}

Polynomial space algorithm for membership problem is obtained by computing
FOSAT$(\acps,\vect{n}, \hat{\aformulabis}, \amap_0)$ with the algorithm FOSAT defined below ($\amap_0$ is a zero assignment
function). Note that the polynomial space bound is obtained since the recursion depth is linear in
$\size{\aformulabis}$ and positions in $\interval{0}{\length{u} + \length{v} \times 2^{\qheight{\aformulabis}}}$
can be encoded in polynomial space in $\size{\acps} + \size{\aformulabis}$. Furthermore, since model-checking ultimately periodic words with first-order logic is \pspace-hard~\cite{Markey&Schnoebelen03}, we deduce directly the lower bound for the membership problem with FO.
\ifconference
\qed \end{proof}
\fi

\section{Proof of Theorem~\ref{thm:stuttering-buchi}} 

First, we establish the result below. 
\begin{lemma}
\label{lemma-buchi-aux}
Let $\aword\in\alang(\bauto)$ for a \buchi~automaton $\bauto=\tuple{\states,\aalphabet,\astate_0,\edges,F}$, such that $\aword=\aword_1.u^{2.|\states|^k}.\aword_2$ for some $k$, then there exist an integer $K\in[1,|\states|]$
such that for all $N\in[1,|\states|^{k-2}]$, $\aword_1.u^{2.|\states|^k-(K\times N)}.\aword_2\linebreak[0]\in\alang(\bauto)$.
\end{lemma}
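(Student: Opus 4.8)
The plan is to exploit an accepting run of $\aword$ and to locate, inside the block $u^{2\card{\states}^k}$, a large family of pairwise independent sub-loops of the run, all reading the same number of copies of $u$, so that each of them may be excised while preserving both acceptance and the shape $\aword_1 u^{\bullet} \aword_2$ of the input. First I would fix an accepting run $\arun \in \states^{\omega}$ of $\bauto$ on $\aword$ and mark the positions $m_0, m_1, \ldots, m_{2\card{\states}^k}$ at which successive copies of $u$ begin; write $\astate^{(i)} = \arun(m_i)$ for the state entered at the start of the $i$-th copy.

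Next comes the first pigeonhole extraction. Partitioning the $2\card{\states}^k$ copies into consecutive blocks of $\card{\states}+1$ copies, each block contributes $\card{\states}+1$ starting states among only $\card{\states}$ possible states, so two of them coincide, say $\astate^{(i)} = \astate^{(j)}$ with $i < j$ inside that block. The corresponding segment of $\arun$ from $m_i$ to $m_j$ reads exactly $u^{\,\alpha}$ with $\alpha = j-i \in \interval{1}{\card{\states}}$ and returns the run to the same state, hence is a removable loop. Since $\card{\states}^{k-1}(\card{\states}+1) \le 2\card{\states}^k$, I obtain at least $\card{\states}^{k-1}$ such loops sitting in pairwise disjoint blocks, with lengths $\alpha_1, \ldots, \alpha_{\card{\states}^{k-1}} \in \interval{1}{\card{\states}}$.

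A second pigeonhole step then finishes the argument: among these $\card{\states}^{k-1}$ loop-lengths, each lying in a set of size $\card{\states}$, some common value $K \in \interval{1}{\card{\states}}$ is attained at least $\card{\states}^{k-2}$ times. This yields $\card{\states}^{k-2}$ loops of $\arun$, each reading $u^K$ and each contained in a distinct block, hence pairwise non-overlapping and individually excisable. Because each such loop returns the run to the state from which it departed and reads an integer number of copies of $u$, deleting any $N \in \interval{1}{\card{\states}^{k-2}}$ of them produces a genuine run of $\bauto$ on $\aword_1 u^{2\card{\states}^k - KN} \aword_2$; and since all excisions lie in the finite prefix while the infinite suffix $\aword_2$ (where an accepting state is visited infinitely often) is untouched, the new run is still accepting. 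Hence $\aword_1 u^{2\card{\states}^k - KN} \aword_2 \in \alang(\bauto)$ for every such $N$.

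I expect the main obstacle to be purely bookkeeping: getting the two counting inequalities exactly right, namely that $\card{\states}^{k-1}$ disjoint loops fit into $2\card{\states}^k$ copies and that a single length $K$ survives at least $\card{\states}^{k-2}$ times, together with checking that loops selected in different blocks are genuinely disjoint, so that the simultaneous removal of several of them is well defined and the removed lengths add up to $KN$. Preservation of the B\"uchi acceptance condition is then immediate from the infinitude of the untouched suffix.
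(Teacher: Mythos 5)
Your proof is correct and follows essentially the same route as the paper's: an accepting run is fixed, a first pigeonhole argument over groups of $\card{\states}+1$ consecutive copies of $u$ yields $\card{\states}^{k-1}$ disjoint excisable loops each reading $u^{\alpha}$ with $\alpha\in\interval{1}{\card{\states}}$, and a second pigeonhole extracts $\card{\states}^{k-2}$ of them sharing a common length $K$, any $N$ of which can be removed while preserving B\"uchi acceptance. The only cosmetic difference is that you partition the copies into blocks up front rather than extracting the loops iteratively as the paper does, which changes nothing in substance.
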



\section{Proof of Lemma~\ref{corollary-intersection-ba}}
\begin{proof}
Consider any given specification $\aspecification$ in BA, a constrained path schema $\acps$.
We would first construct the \buchi~automaton
$\bauto_{\aspecification}$ corresponding to $\aspecification$ as explained in Section~\ref{spec-def}. Recall, that $\aspecification$ in BA has
transitions labelled with Boolean combination over $at\cup ag_n$ whereas the equivalent $\bauto_{\aspecification}$ has transitions labelled with
elements of $\aalphabet=\powerset{at\cup ag_n}$. Hence, in effect, $\bauto_{\aspecification}$ could have an exponential number of transitions. On the
other hand by definition $\acps$ is defined over an alphabet $\aalphabet'\subseteq\aalphabet$. By Lemma~\ref{lemma-nice-subalphabet-property}, we know
that BA has the nice subalphabet property. Hence, we can transform $\aspecification$ over $\aalphabet$ to $\aspecification'$ over $\aalphabet'$ in
polynomial time. The \buchi~automata $\bauto_{\aspecification'}$ obtained from $\aspecification'$ following the construction in Section~\ref{spec-def}
has transitions labelled by letters from $\aalphabet'$. Clearly in this case the number of transitions in $\bauto_{\aspecification'}$ is polynomial in
$\size{\acps}$. 
We obtain the following equivalences,
\begin{itemize}
\item using Lemma~\ref{lemma-nice-subalphabet-property} and the fact that $\alang(\acps)\subseteq(\aalphabet')^{\omega}$,
  $\alang(\aspecification)\cap\alang(\acps)$ is non-empty iff $\aword\in\alang(\aspecification')\cap\alang(\acps)$ is non-empty.
\item Since, $\bauto_{\aspecification'}$ is obtained from $\aspecification'$ following the construction from Section~\ref{spec-def},
  $\alang(\bauto_{\aspecification'})\cap\alang(\acps)$ is non-empty iff $\aword\in\alang(\aspecification')\cap\alang(\acps)$ is non-empty.
\end{itemize}
Checking $\alang(\bauto_{\aspecification'})\cap\alang(\acps)$ amounts to guessing $\vect{n}\in\amap_{\rm BA}(\bauto_{\aspecification'},\acps)$ and checking for $\aword=\aseg(\aloop_1)^{\vect{n}[1]}\cdots\aseg_{k-1}(\aloop_{k-1})^{\vect{n}[k-1]}\aseg_k\aloop_k^{\omega}$, $\aword\in\alang(\bauto_{\aspecification'})\cap\alang(\acps)$.
We know that checking $\aword\in\alang(\bauto_{\aspecification'})\cap\alang(\acps)$ is in \ptime~and the construction of $\bauto_{\aspecification'}$
from $\aspecification$ takes only polynomial time. Thus, checking $\alang(\aspecification)\cap\alang(\acps)$ is non-empty can be done in polynomial time.
\qed
\end{proof}

\section{proof of Lemma~\ref{lemma-membership-gen}}
\begin{proof}
First we prove that the membership problem for $\aspeclanguage$ having the nice BA property is in \pspace. Let $\aspecification \in \aspeclanguage$ over the constrained alphabet $\triple{at}{ag_n}{\aalphabet}$ and let $\aword=\aseg_1 (\aloop_1)^{\vect{n}[1]} \cdots \aseg_{k-1} (\aloop_{k-1})^{\vect{n}[k-1]} \aseg_k (\aloop_k)^{\omega}$ be a word over $\aalphabet$. We would like to check whether $\aword \in \alang(\aspecification)$ which is equivalent, thanks to the nice BA property, to check whether $\aword \in \alang(\bauto_{\aspecification})$.

To verify if   $\aword \in \alang(\bauto_{\aspecification})$, we try to find a ``lasso'' structure in \buchi~automaton. Assume $\bauto_{\aspecification}=\tuple{\states,\aalphabet,\edges,\astate_i,F}$. We proceed as follows. First, we guess a state $\astate_f\in F$ and a position $j\in[1,\length{\aloop_k}]$. Then we consider the two finite state automata $\fsa_1=\tuple{\states,\aalphabet,\astate_0,\edges,\{\astate_f\}}$ and $\fsa_2=\tuple{\states,\aalphabet,\astate_f,\edges,\{\astate_f\}}$. And our method returns true iff both the following conditions are true:
\begin{enumerate}
  \item $\aseg_1(\aloop_1)^{\vect{y}[1]}\ldots\aseg_{k-1}(\aloop_{k-1})^{\vect{y}[k-1]}\aseg_k\aloop_k[1,j]\in\alang(\fsa_1)$.
  \item $\alang(\fsa_2)\cap\alang(\aloop_k[j+1,\length{\aloop_k}]\aloop_k^{*}\aloop_k[1,j])\neq\emptyset$.
\end{enumerate}

We will now show the correctness of the above procedure. First let us assume that
$\aword=\aseg_1(\aloop_1)^{\vect{y}[1]}\ldots\aseg_{k-1}(\aloop_{k-1})^{\vect{y}[k-1]}\aseg_k\aloop_k^{\omega}\in\alang(\bauto_{\aspecification})$. Thus,
there is an accepting run $\arun\in\edges^{\omega}$ for $\aword$ in $\bauto_{\aspecification}$. According to the \buchi~acceptance condition there exists
a state $\astate_f\in F$ which is visited infinitely often. In $\aword$ only $\aloop_k$ is taken infinitely many times. Thus, $\aloop_k$ being of
finite size, there exists a position $j\in[1,\length{\aloop_k}]$ such that transitions of the form $\astate\labtrans{\aloop_k(j)}\astate_f$ for some
$\astate\in\states$ occurs infinitely many times in $\arun$. Thus, for $\arun$ to be an accepting run, there exists
$\aword'=\alang(\aseg_1(\aloop_1)^{\vect{y}[1]}\ldots\aseg_{k-1}(\aloop_{k-1})^{\vect{y}[k-1]}\aseg_k\aloop_k[1,j])$, which has a run in $\bauto_{\aspecification}$ from
$\astate_i$ to $\astate_f$ and there must exists words $\aword''\in\alang(\aloop_k[j+1,\length{\aloop_k}]\aloop_k^{*}\aloop_k[1,j])$ which has a run
from $\astate_f$ to $\astate_f$. Hence we deduce that $\aword'\in\alang(\fsa_1)$ and $\alang(\fsa_2)\cap\alang(\aloop_k[j+1,\length{\aloop_k}]\aloop_k^{*}\aloop_k[1,j])\neq\emptyset$. Thus, there
exists at least one choice of $\astate_f$ and $j$, for which both the checks return true and hence the procedure returns true.

Now let us assume that the procedure returns true. Thus, there exists $\astate_f\in F$ and $j\in[1,\length{\aloop_k}]$ such that
$\aword_1= \aseg_1(\aloop_1)^{\vect{y}[1]}\ldots\aseg_{k-1}(\aloop_{k-1})^{\vect{y}[k-1]}\aseg_k\aloop_k[1,j]$ is in $\alang(\fsa_1)$ and $\alang(\fsa_{2})\cap\alang(\aloop_k[j+1,\length{\aloop_k}]\aloop_k^{*}\aloop_k[1,j])\neq\emptyset$.
From the second point, we deduce that there exists a word
$\aword_2=\aloop_k[j+1,\length{\aloop_k}]\aloop_k^{n}\aloop_k[1,j]\in\alang(\fsa_{2})$ for some $n$. Consider the word
$\aword=\aword_1.(\aword_2)^{\omega}$. First we have directly that $\aword \in\alang(\acps)$. And by construction of $\fsa_1$ and $\fsa_2$ we know that $\aword_1$ has a run in $\bauto_{\aspecification}$ starting from $\astate_i$ to
$\astate_f$ and $\aword_2$ has a run in $\bauto_{\aspecification}$ starting from $\astate_f$ to
$\astate_f$. Since $\astate_f$ is an accepting state of $\bauto_{\aspecification}$, we deduce that  $\aword \in \alang(\bauto_{\aspecification})$.

The proof that the above procedure belongs to \pspace~is standard and used the nice BA property which allows us to perform the procedure "on-the-fly". First note that for $\aspecification$ in $\aspeclanguage$ having the nice BA property, the corresponding B\"uchi automaton  $\bauto_{\aspecification}$ can be of exponential size in the size of the $\aspecification$, so we cannot  
construct the transition relation of
$\bauto_{\aspecification}$ explicitly, instead we do it on-the-fly.
We consider the different steps of the procedure and show that they can be done in polynomial space.
\begin{enumerate}
\itemsep 0 cm 
  \item $\fsa_1$ and $\fsa_2$ are essentially copies of $\bauto_{\aspecification}$ and hence their transition 
  relations are also not constructed
    explicitly. But, 
    by the nice BA property, 
    their states can be represented in polynomial space.
  \item Checking $\aseg_1(\aloop_1)^{\vect{y}[1]}\ldots\aseg_{k-1}(\aloop_{k-1})^{\vect{y}[k-1]}\aseg_k\aloop_k[1,j]\in\alang(\fsa_1)$ can be done by simulating $\fsa_1$ on this word. Note that for simulating $\fsa_1$, at any position we
    only need to store the previous state and the letter 
    at current position to obtain the next state of $\fsa_1$. Thus, this can be performed in
    polynomial space in $\size{\aspecification}$ and $\size{\acps}+\size{\vect{y}}$.
  \item Checking $\alang(\fsa_2)\cap\alang(\aloop_k[j+1,\length{\aloop_k}]\aloop_k^{*}\aloop_k[1,j])\neq\emptyset$ can be done by constructing a finite state automaton $\fsa_{loop}$ for
    $\alang(\aloop_k[j+1,\length{\aloop_k}]\aloop_k^{*}\aloop_k[1,j])$ and by checking for reachability of final state in the automaton 
     $\fsa_2\times\fsa_{loop}$. Note that $\size{\fsa_{loop}}$ is
    polynomial, but since $\size{\fsa_2}$ can be of exponential size, 
    $\size{\fsa_2\times\fsa_{loop}}$ can also be of exponential magnitude. 
    However, the graph accessibility problem (GAP) is in \nlogspace, so
    $\alang(\fsa_2)\cap\alang(\fsa_{loop})\neq\emptyset$ can also be done in nondeterministic 
    polynomial space.
\end{enumerate}
Thus, the whole procedure can be completed in nondeterministic polynomial space and by applying Savitch's theorem, we obtain that for any $\aspeclanguage$, satisfying the nice BA property, the membership problem for $\aspeclanguage$ having the nice BA property is in \pspace.\\

Now we will prove that the intersection non-emptiness problem for $\aspeclanguage$ having the nice BA property is in \pspace~ too. Let $\aspecification$ in $\aspeclanguage$ with the nice BA property and let  $\acps=\pair{\cpschema}{\aconstraintsystem}$ be a constrained path schema. Thanks to the nice BA property we have that $\alang(\acps)\cap\alang(\aspecification)\neq \emptyset$ iff $\alang(\acps)\cap\alang(\bauto_{\aspecification})\neq \emptyset$. Using Theorem \ref{thm:stuttering-buchi} we have
$\alang(\acps)\cap\alang(\bauto_{\aspecification})\neq \emptyset$ iff there exists 
$\vect{y}\in[0,\amap_{BA}(\bauto_{\aspecification},\acps)]^{k-1}$ such that
$\aseg_1(\aloop_1)^{\vect{y}[1]}\ldots\aseg_{k-1} \linebreak[0] 
(\aloop_{k-1})^{\vect{y}[k-1]}\aseg_k\aloop_k^{\omega}\in\alang(\bauto_{\aspecification})\cap\alang(\acps)$ where $\amap_{BA}(\bauto_{\aspecification},\acps)$ is equal to $\powerset{\mathtt{pol}_1(\size{\acps})}+2.\card{\states}^{\size{\acps}}\times \powerset{\mathtt{pol}_1(\size{\acps})+
  \mathit{pol}_2(\size{\acps})}$ ($Q$ being the set of states of $\bauto_{\aspecification}$ whose cardinality is, thanks to the nice BA property, at most exponential in the size of $\aspecification$). Hence our algorithm  amounts to guess some
$\vect{y} \in \interval{0}{\amap_{BA}(\bauto_{\aspecification},\acps)}^{k-1}$ and check
whether 
$\aword = \aseg_1 (\aloop_1)^{\vect{y}[1]} \cdots \aseg_{k-1} (\aloop_{k-1})^{\vect{y}[k-1]} \aseg_k (\aloop_k)^{\omega} \linebreak[0] \in 
\alang(\acps) \cap \alang(\aspecification)
$. Since the membership problem for $\aspecification$ can be done in \pspace~ and since the $\vect{y}[i]$ can be encoded in polynomial space in the size of $\aspecification$ and $\cps$, we deduce that the intersection non-emptiness problem for $\aspeclanguage$ with the nice BA property is in \pspace.
\qed
\end{proof}

\section{Proof of  Theorem~\ref{theorem-hardness}} 
\label{section-proof-theorem-hardness}

\section{Proof of Corollary~\ref{corollary-global-model-checking}}
\label{section-proof-corollary-global-model-checking}
\begin{proof} The proof takes advantage of a variant of Theorem~\ref{theorem-reduction} 
(whose proof is also based on developments from~\cite{DemriDharSangnier12}) in which
initial counter values are replaced by variables.
Below, we prove the results for BA, which immediately leads to a similar result for ABA, ETL and 
$\lmc$. 

Let  $\asys$  be a flat counter system  of dimension $n$ built over atomic constraints in $at \cup ag_{n}$, 
$\astate$ be  a control state and  $\aspecification$  be a specification in BA (i.e.
a B\"uchi automaton whose underlying constrained alphabet is $\triple{at}{ag_n}{\aalphabet}$).
A \defstyle{parameterized} constraint path schema (PCPS) is defined as a constrained path schema except
that the second argument (a guard) has also the free variables $\avariableter_1$, \ldots,$\avariableter_{n}$ 
dedicated to the initial counter values. Remember that a constrained path schema has already a constraint
about the number of times loops are visited. In its parameterized version, this constraint expresses also
a requirement on the initial counter values. 
Following the proof of Theorem~\ref{theorem-reduction}, one can 
construct in exponential time a set $\aset$ of parameterized constrained path schemas such that:
\begin{itemize}
\itemsep 0 cm 
\item Each parameterized constrained path schema $\apcps$ in $\aset$ has an alphabet of the form 
      $\triple{at}{ag_n}{\aalphabet'}$ ($\aalphabet'$ may vary) and 
     $\apcps$ is of polynomial size.
\item Checking  whether a parameterized constrained path schema belongs to $\aset$ can be done in polynomial time. 
\item For every run $\arun$ from $\pair{\astate}{\avect}$, 
      there is a parameterized constrained path schema $\apcps$ and $\aword \in \alang(\apcps[\avect])$ such that
      $\arun \models \aword$ where $\apcps[\avect]$ is the contrained path obtained from $\apcps$ by replacing
      the variables  $\avariableter_1$, \ldots,$\avariableter_{n}$ by the counter values from $\avect$. 
\item  For every parameterized constrained path schema $\apcps$, 
       for every counter values $\avect$,  
       for every $\aword \in \alang(\apcps[\avect])$,  
       there is a run $\arun$ from $\pair{\astate}{\avect}$ such that $\arun \models \aword$. 
\end{itemize}

The  existential Presburger formula $\aformula(\avariableter_1, \ldots,\avariableter_n)$ has the form below
$$
\bigvee_{\apcps = \pair{\cdot}{\aformulabis} \in \aset} \ 
\bigvee_{\astate_{init}, \astate, (\aloop_k)^{\omega} \in \alang(\aspecification'_{q})} \ 
(
\exists \ \avariablebis_1, \ldots,\avariablebis_M \ 
\exists  \ \avariable_1, \ldots,\avariable_{k-1} 
$$
$$
\aformulabis_{\astate_{init},\astate}(\avariablebis_1, \ldots, \avariablebis_M)
\wedge
(\avariablebis_1 = \alpha^1_0 + \alpha_1^1 \avariable_1 + \cdots + \alpha_{k-1}^1 \avariable_{k-1}) 
\wedge \cdots 
$$
$$
\ldots \wedge 
(\avariablebis_{M} = \alpha^M_0 + \alpha_1^M \avariable_1 + \cdots + \alpha_{k-1}^M \avariable_{k-1}) 
\wedge
\aformulabis(\avariable_1, \ldots, \avariable_{k-1},\avariableter_1, \ldots, \avariableter_n)  
)
$$
where
\begin{enumerate}
\itemsep 0 cm
\item $\triple{at}{ag_n}{\aalphabet'}$ is the alphabet of $\apcps$, $M = \card{\aalphabet'}$ and 
      by the \defstyle{nice subalphabet property}, there is a specification $\aspecification'$ such that
      $\alang(\aspecification') =  \alang(\aspecification) \cap (\aalphabet')^{\omega}$. 
\item $\astate_{init}$ is an initial state of $\aspecification'$ and $\astate$ is a state of $\aspecification'$.
\item $\aformulabis_{\astate_{init},\astate}(\avariablebis_1, \ldots, \avariablebis_M)$ 
      is the quantifier-free Presburger formula for the Parikh image of finite words over the alphabet
      $\aalphabet'$ accepted by $\aspecification'$ (viewed as a finite-state automaton) 
      with initial state $\astate_{init}$ and final state 
      $\astate$.  $\aformulabis_{\astate_{init},\astate}(\avariablebis_1, \ldots, \avariablebis_M)$ is of polynomial
      size in the size of B\"uchi automaton. 
\item $\apcps = \pair{\aseg_1 (\aloop_1)^* \cdots \aseg_{k-1} (\aloop_{k-1})^* \aseg_k (\aloop_k)^{\omega}}{
\aformulabis(\avariable_1, \ldots, \avariable_{k-1},\avariableter_1, \ldots, \avariableter_n)}$.
\item For each letter $\aletter_j$, we write $\alpha^M_0$, \ldots, $\alpha^M_{k-1}$ to denote
the natural numbers such that if each loop $i$ in $\apcps$ is taken $\avariable_i$ times, then
the letter $\aletter_j$ is visited  
$\alpha^j_0 + \alpha_1^j \avariable_1 + \cdots + \alpha_{k-1}^j \avariable_{k-1}$
times along  $\aseg_1 (\aloop_1)^* \cdots \aseg_{k-1} (\aloop_{k-1})^* \aseg_k$. 
Those coefficients can be easily computed from  $\aseg_1 (\aloop_1)^* \cdots \aseg_{k-1} (\aloop_{k-1})^* \aseg_k$
(for instance $\alpha_1^j$ is the number of times the letter $a_j$ is present in the first loop).
\item Finally, observe that checking whether
$(\aloop_{k})^{\omega} \in \alang(\aspecification'_{q})$ where 
$\aspecification'_{q}$ is defined as the specification $\aspecification'$ in which the unique initial state is $q$, 
amounts to perform a nonemptiness test between two B\"uchi automata. 
\end{enumerate}

FO admits a similar proof but it is based
on Theorem~\ref{theorem-stuttering-fo} (actually the proof is much simpler because the number of times
loops can be visited depends essentially on a threshold value). 
For FO,  it is sufficient to consider the formula below:
$$
\bigvee_{\apcps = \pair{\aseg_1 (\aloop_1)^* \cdots \aseg_{k-1} (\aloop_{k-1})^* \aseg_k (\aloop_k)^{\omega}}{\aformulabis} \in \aset} \ 
\bigvee_{\vec{y} \ s.t. \ \aseg_1 \aloop_1^{\vec{y}[1]} \aseg_2 \aloop_2^{\vec{y}[2]} \ldots \aloop_{k-1}^{\vec{y}[k-1]} \aseg_k \aloop_k^\omega \models \aspecification'} \ \
$$
$$
\exists \ \avariablebis_1 \cdots \ \avariablebis_{k-1} \
\aformulabis(\avariablebis_1, \ldots, \avariablebis_{k-1},\avariableter_1, \ldots,\avariableter_{n})
\wedge \aformulabis_1 \wedge \cdots \wedge \aformulabis_{k-1}
$$
where
\begin{enumerate}
\itemsep 0 cm
\item 
 $\apcps = \pair{\aseg_1 (\aloop_1)^* \cdots \aseg_{k-1} (\aloop_{k-1})^* \aseg_k (\aloop_k)^{\omega}}{
\aformulabis(\avariable_1, \ldots, \avariable_{k-1},\avariableter_1, \ldots, \avariableter_n)}$,
\item $\triple{at}{ag_n}{\aalphabet'}$ is the alphabet of $\apcps$ and 
      by the \defstyle{nice subalphabet property}, there is a specification $\aspecification'$ such that
      $\alang(\aspecification') =  \alang(\aspecification) \cap (\aalphabet')^{\omega}$. 
\item the third generalized disjunction deals with 
      $\vec{y} \in \interval{0}{2^{\size{\aspecification'}+1}+1}^{k-1}$.
\item For $i \in \interval{1}{k-1}$, 
      $\aformulabis_i \egdef (\avariablebis_1 = \alpha)$ if
      $\vec{y}[i] < 2^{\size{\aspecification'}+1}+1$ otherwise $\aformulabis_i \egdef 
      (\avariablebis_i \geq 2^{\size{\aspecification'}+1}+1$.
\end{enumerate}
\qed
\end{proof}

\fi 
\end{document}